\newtheorem{theorem}{Theorem}[section]
\newtheorem{corollary}[theorem]{Corollary}
\newtheorem{definition}[theorem]{Definition}
\newtheorem{lemma}[theorem]{Lemma}
\newtheorem{proposition}[theorem]{Proposition}
\newtheorem{remark}[theorem]{Remark}
\numberwithin{theorem}{section}
\newcommand{\mc}{\mathcal}
\newcommand{\ra}{\rightarrow}
\newcommand{\mb}{\mathbb}
\DeclareMathOperator{\es}{ess\,sup \ }
\DeclareMathOperator{\ei}{ess\,inf \ }
\newcommand{\conv}{\text{conv}}
\numberwithin{equation}{section}
\title[]{On the Stability of Utility Maximization Problems}
\author[]{Erhan Bayraktar}\thanks{The authors are supported in part by the National Science Foundation under an applied mathematics research grant and a Career grant, DMS-0906257 and DMS-0955463, respectively, in part by the Susan M. Smith Professorship, and in part by the NDSEG Fellowship Program of the Department of Defense.} 
\address[Erhan Bayraktar]{Department of Mathematics, University of Michigan, 530 Church Street, Ann Arbor, MI 48104, USA}
\email{erhan@umich.edu}
\author[]{Ross Kravitz}
\address[Ross Kravitz]{Department of Mathematics, University of Michigan, 530 Church Street, Ann Arbor, MI 48104, USA}
\email{ross.kravitz@gmail.com}
\date{March 24, 2011}
\begin{document}
\begin{abstract}
In this paper we extend the stability results of \cite{MR2438002}. Our utility maximization problem is defined as an essential supremum of conditional expectations of the terminal values of wealth processes, conditioned on the filtration at the stopping time $\tau$.  To establish these results, our principal contribution is an extension of the classical result of convex analysis that pointwise convergence of convex functions implies convergence of their derivatives.  The notion of convex compactness introduced in \cite{Zit09c} plays an important role in our analysis. 
\end{abstract}
\keywords{Utility maximization, incomplete markets, stability, convex analysis for functions from $L^0$ to $L^0$, convex compactness, continuous semimartingales.}

\maketitle
\tableofcontents

\section{Introduction}\label{sec:introduction}
In this paper, we extend the results of \cite{MR2438002} on the stability of the utility maximization problem with respect to changes in the market paramaters.  The main difference between our paper and theirs is that we work at arbitrary stopping times instead of only at the initial time $t = 0$.  We give a direct extension of \cite{MR2438002}, by specifiying when convergence of wealths and markets at a stopping time $\tau$ give rise to convergent optimal terminal wealths and convergent value functions.  Our results can also be used to treat the time zero case of \cite{MR2438002} when the initial sigma algebra $\mc{F}_0$ is nontrivial.  Along the way, we prove conditional versions of the convex duality of \cite{MR1722287}; parts of this theory have already been used, for example in \cite{MR2014244}, to prove time zero results.

In \cite{MR2438002}, a basic methodology is established for proving continuity of utility maximization problems.  First, one proves stability for the dual value problem, which is an optimization problem over the set of supermartingale deflators, the polar set of admissible wealth processes.  Second, to show that there is a ``continuous" connection between the dual and primal problems, one shows that the derivative of the dual value function is also stable with respect to perturbations of the market.  

In the time zero case, the second part of this program is trivial.  Indeed, it is a classical result from convex analysis that pointwise convergence of convex functions implies locally uniform convergence, which in turn yields convergence of derivatives.  Compactness plays a crucial role in establishing this theorem: essentially one uses an equicontinuity result and the existence of finite $\epsilon$-nets for compact sets.

Working in a conditional framework, we are led unavoidably to mappings from $L_{++}^0$ to $L^0$, where the topological structure is much less friendly.  Due to the scarcity of compact sets in the infinite dimensional, non locally convex space $L^0$, we are forced to work with the weaker concept of convexly compact sets, recently defined in \cite{Zit09c}.  In this setting, we use two kinds of generalized $\epsilon$-nets, one each for upper and lower bounds, to establish uniform convergence of dual value functions on convexly compact sets.  Here, we see that convexity alone is not enough, and we must use some additional structure of the dual value problem.  

As a corollary of our results, we can extract information about convergence of optimal wealth processes at intermediate times by exploiting a natural martingale property of optimal wealth processes.  For example, suppose that we put ourselves in the exact framework of \cite{MR2438002}.  We have a sequence of positive initial wealths $x_n$ converging to $x$, and a sequence of markets $\lambda_n$ converging in an appropriate sense to $\lambda$.  In \cite{MR2438002}, it is established that the optimal terminal wealths in each market, $\hat{X}^n_T$, converge in probability to $\hat{X}_T$, the optimal terminal wealth in the market $\lambda$.  Using our results, we may establish the fact that for any stopping time $\tau$, the optimal wealths at time $\tau$, $\hat{X}^n_\tau,$ converge to $\hat{X}_\tau$ in probability.

The rest of the paper is organized as follows: In the rest of Section~\ref{sec:introduction},
 we introduce the necessary financial framework for the problem, as well as highlighting the results of \cite{MR2438002} and \cite{MR1722287} on the stability problem and value function duality in general.  Finally, we state our main results in the paper.  In Section~\ref{sec:condual}, we extend results of convex analysis from the real-valued case to functions from $L^0$ to $L^0$, and establish the convex duality of \cite{MR1722287} in this setting, before applying our abstract results to the financial model.  In Section~\ref{sec:dc}, we prove that the dual value function is continuous with respect to dual wealth and market parameters.  In Section~\ref{sec:contode}, we show that the derivatives of the value functions, suitably defined, are also continuous with respect to the market parameters.  Finally, in Section~\ref{sec:fdtp}, we move from the dual problem to the primal one, and finish the proofs of our main theorems. In the appendix we establish a conditional version of the minimax theorem, which is used in Section $2$.


\subsection{The Financial Framework}

Let $(\Omega,\mc{F},P,(\mc{F}_t)_{t \in [0,T]})$ be a filtered probability space satisfying the usual conditions.  We assume that $\mc{F}_T = \mc{F}$, the global sigma algebra; if no specific sigma algebra is specified, $L^0$, $L^0_+$, $L^1$, etc. will always refer to measurability with respect to $\mc{F}$.  All random variables under consideration are at the least measurable with respect to $\mc{F}$.  Statements concerning random variables are always understood to hold almost surely.  Let $M$ be a continuous local martingale, and let \[\Lambda = \left \{ \lambda \ : \ \lambda \text{ is a predictable process satisfying } \int_0^T \lambda_u^2 d[M]_u < \infty \ \right \}.\]

For $\lambda \in \Lambda$, define 
\[S_t^\lambda = 1 + M_t + \int_0^t \lambda_u d[M]_u.\]  
Along with a numeraire bond, each $S^\lambda$ defines a stock market.  It is interpreted as the discounted price of an asset.  Let $\Lambda_m \subset \Lambda$ contain those $\lambda$ which define a market satisfying no free lunch with vanishing risk (NFLVR).  According to the paper \cite{MR1304434}, the NFLVR condition is equivalent to the existence of a local martingale measure for $S^\lambda$.  Also, it is proven in \cite{MR1384360} that all continuous market models satisfying NFLVR have the specific form described above.

A trading strategy $H$ is a predictable, $S$-integrable process.  We denote by $\mc{X}^\lambda(x)$ the set of wealth processes attainable from initial capital $x$ and subject to an admissibility constraint.  Formally,
\[ \mc{X}^\lambda(x) = \{x + H \cdot S^\lambda \ : \ H \text{ is predictable, } S-\text{integrable, and } x + H \cdot S^\lambda \geq 0 \}.\]
We will simply write $\mc{X}^\lambda$ for $\mc{X}^\lambda(1)$.

 For a given $\lambda \in \Lambda$, define $$Z_t^\lambda = \mathcal{E}(-\lambda \cdot M)_t = \exp \left ( -\int_0^t \lambda_u dM_u - \frac{1}{2} \int_0^t \lambda_u^2d[M]_u \right ).$$  This is a strictly positive local martingale such that $Z^\lambda X$ is a supermartingale for $X \in \mc{X}^\lambda(x)$.  Let $\mc{Y}^\lambda(y)$ be the set of supermartingale deflators starting from $y$ for the market described by $\lambda$.  Formally,
\[\mc{Y}^\lambda(y) = \{Y \ : \ Y \text{ is c\`{a}dl\`{a}g, adapted, positive, and }  XY \text{ is a supermartingale for all } X \in \mc{X}^\lambda, Y_0 = y\}.\]

We denote by $\overline{Y}^\lambda$ the set $\{Y_T \ : \ Y \in \mc{Y}^\lambda\}$ of terminal values of supermartingale deflators.  For $\tau$ a stopping time, we denote by $\mc{Y}^\lambda_\tau$ the set of supermartingale deflators starting at $1$ for the market restricted to the (random) time interval $[\tau,T]$.  Again, $\overline{\mc{Y}}^\lambda_\tau$ refers to the terminal values of elements of $\mc{Y}^\lambda_\tau$. Furthermore, for $\eta \in L^0_{++}(\mc{F}_\tau)$, i.e. $\eta$ strictly positive and $\mc{F}_\tau$-measurable, $\mc{Y}^\lambda_\tau(\eta)$ are those supermartingale deflators for $[\tau,T]$ which start at $\eta$.  Obviously, supermartingales starting at nonintegrable $\eta$ need not have finite expectation, so without further ado we relax this condition to require only that conditional expectations at time $\tau$ be finite.  We note that $\mc{Y}^\lambda(y) = y \mc{Y}^\lambda$ for $y > 0$.  More generally, for $\eta \in L^0_{++}(\mc{F}_\tau)$, we have $\mc{Y}^\lambda_\tau(\eta) = \eta \mc{Y}^\lambda_\tau$.  On the primal side, we define $\overline{\mc{X}}^\lambda$, $\overline{\mc{X}}_\tau^\lambda$, etc. accordingly.  

It is useful to consider the dual problem because the structure of $\mc{Y}^\lambda$ as a function of $\lambda$ is easy to understand.  From Proposition $3.2$ of \cite{MR2438002}, we have that all $Y \in \mc{Y}^\lambda$ such that $Y_T > 0$ have the form $Y = Z^\lambda \mc{E}(L) D$, where $L$ is a c\`{a}dl\`{a}g local martingale strongly orthogonal to $M$, and $D$ is a predictable, nonincreasing, c\`{a}dl\`{a}g process with $D_0 = 1$ and $D_T > 0$.  The extension of this result to $\mc{Y}^\lambda_\tau$ is trivial, with the obvious small modifications.  Furthermore, instead of considering $\mc{E}(L)$ as above, it is equivalent to simply consider a strictly positive c\`{a}dl\`{a}g local martingale $L'$ which is strongly orthogonal to $M$.

In light of the above description of $\mc{Y}^\lambda$, we will sometimes find it useful to write $\mc{Y}^\lambda = \{Z^\lambda \mc{Y}D \ : \ D \text{ as above }\}$, where $\mc{Y}$ is defined to be the set of strictly positive, c\`{a}dl\`{a}g local martingales which are strongly orthogonal to $M$, and starting from $1$.  The sets $\mc{Y}_t$, $\overline{\mc{Y}}$, and $\overline{\mc{Y}}_t$ are defined in the same way as before.  

We will consider utility functions defined on the positive axis.  We assume that $U:\mb{R}_+ \ra \mb{R}$ is $\mc{C}^1$, strictly concave, and satisfies the Inada conditions.  Most importantly, $U$ must have asymptotic elasticity strictly less than $1$, as defined in \cite{MR1722287}.  This means
\[AE(U) = \underset{x \ra \infty}{\limsup} \ \frac{xU'(x)}{U(x)} < 1,\]
and reflects the economic fact that the ratio of marginal utility to average utility should asymptotically become small.  The classical primal utility maximization problem is given by
\[ u_0^\lambda(x) = \underset{\{X_T : X \in \mc{X}^\lambda(x)\}}{\sup} \ E [U(X_T)].\]
In the following sections, we will be concerned with an extension of this utility maximization from $t=0$ to $[0,T]$-valued stopping times.

The convex conjugate of $U$ is denoted by $V$, and is defined, for $y>0$, by
\[V(y) = \underset{x > 0}{\sup} \ (U(x) - xy).\]   From basic facts of convex analysis, we know that $V$ is $\mc{C}^1$ and strictly convex.  Frequently, it will be convenient to decompose $V = V^+ - V^-$ into its positive and negative parts.  As a result of the asymptotic elasticity hypothesis on $U$, $V$ has the following property: there is $y_0 > 0$ such that for any $\mu \in (0,1)$, $V(\mu y) < \mu^{-\alpha} V(y)$ for any $y \in (0,y_0]$ and for some $\alpha > 0$; see Lemma 6.3 of \cite{MR1722287}.  The classical dual utility minimization problem is given by
\[v_0^\lambda(y) = \underset{Y \in Y^\lambda(y)}{\inf} E [V(Y_T)].\]

The fundamental result concerning the value functions $u_0^\lambda$ and $v_0^\lambda$ is provided by \cite{MR1722287}.  We cite the pertinent results here, suppressing $\lambda$ notation, because the results hold for a fixed market.

\begin{proposition}[\cite{MR1722287}]\label{KStheorem}  Assume that NFLVR is satisfied, the Inada conditions on $U$ hold, that $u_0(x) < \infty$ for some $x > 0$, and that the asymptotic elasticity of $U$ is less than $1$.  Then
\begin{enumerate}
\item [(a)] $u_0(x) < \infty \text{ for all } x,\ v_0(y) < \infty \text{ for all } y$.
\item [(b)] The value functions $u_0$ and $v_0$ are convex conjugates.
\item [(c)]  The value functions $u_0$ and $v_0$ are continuously differentiable on $(0,\infty)$.
\item [(d)] The optimal solution $\widehat{Y}(y) \in \mc{Y}(y)$ to the dual optimization problem exists, and the optimal solution $\widehat{X}(x) \in \mc{X}(x)$ to the primal optimization problem exists.  For $y = u_0'(x)$, we have the dual relation $\widehat{Y}(y)_T = U'(\widehat{X}(x)_T)$.
\end{enumerate}

\end{proposition}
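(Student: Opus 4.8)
Since Proposition~\ref{KStheorem} is quoted essentially verbatim from \cite{MR1722287}, I will only outline the strategy one would follow. The plan is to first pass to an abstract formulation in $L^0$. Introduce the solid convex hulls of the terminal wealth sets,
\[
\mathcal{C} = \{ g \in L^0_+ \ : \ g \le X_T \text{ for some } X \in \mc{X}^\lambda(1) \}, \qquad
\mathcal{D} = \{ h \in L^0_+ \ : \ h \le Y_T \text{ for some } Y \in \mc{Y}^\lambda(1) \},
\]
and establish the bipolar relation: $\mathcal{C}$ and $\mathcal{D}$ are convex, solid, bounded, and closed in probability, both are nonempty (indeed $1 \in \mathcal{C}$), and $g \in \mathcal{C}$ if and only if $E[gh] \le 1$ for every $h \in \mathcal{D}$, and symmetrically. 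The only substantive point is that the polar of the primal terminal-wealth set $\overline{\mc{X}}^\lambda$ coincides with $\mathcal{D}$; this combines the supermartingale property of $XY$ with the optional decomposition theorem, available under NFLVR, together with the $L^0$-bipolar theorem. This reduces the value functions to the abstract problems $u_0(x) = \sup_{g \in x\mathcal{C}} E[U(g)]$ and $v_0(y) = \inf_{h \in y\mathcal{D}} E[V(h)]$, and it suffices to analyze these.

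The decisive quantitative input is the growth estimate on $V$ coming from the asymptotic elasticity hypothesis (Lemma~6.3 of \cite{MR1722287}), recalled in the excerpt. From $u_0(x_0) < \infty$ and the Fenchel inequality $U(g) \le V(h) + gh$ together with $E[gh] \le x_0 y$ one gets $v_0(y) < \infty$ for all $y$ large enough; the bound $V(\mu y) < \mu^{-\alpha} V(y)$ near the origin then bootstraps this to $v_0(y) < \infty$ for every $y > 0$. Once biconjugacy is available, this in turn gives $u_0(x) < \infty$ for all $x$, which is (a).

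For the existence statements in (d), I would take a minimizing sequence $h_n \in y\mathcal{D}$, use a Koml\'{o}s-type lemma to replace it by forward convex combinations $\tilde h_n$ converging a.s.\ to some $\hat h$, which lies in $y\mathcal{D}$ by closedness (equivalently, by convex compactness in the sense of \cite{Zit09c}), and then conclude $E[V(\hat h)] \le \liminf E[V(\tilde h_n)] = v_0(y)$ by Fatou applied to the negative parts $V^-(\tilde h_n)$, whose uniform integrability is again a consequence of the asymptotic elasticity bound; strict convexity of $V$ yields uniqueness, and the corresponding supermartingale deflator $\hat Y(y)$ with terminal value $\hat h$ exists by its definition. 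The same scheme produces the primal optimizer $\hat X(x)$, the additional difficulty being that $U$ need not be bounded above, so one must show uniform integrability of the positive parts $U^+(\tilde g_n)$ along a maximizing sequence: this is precisely what the asymptotic elasticity hypothesis secures, and it is the technical core of the whole argument.

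It remains to assemble (b) and (c). The easy half of biconjugacy, $u_0(x) \le v_0(y) + xy$ for all $y$, is immediate from Fenchel's inequality and $E[gh] \le xy$; the reverse inequality follows from a minimax argument exchanging $\sup_{g \in x\mathcal{C}}$ and $\inf_{h \in y\mathcal{D}}$ in the Lagrangian $E[U(g)] - E[gh] + xy$, justified by the concave-convex structure, the finiteness obtained above, and the convex compactness of $\mathcal{D}$. Part (c) then follows because $v_0$ is strictly convex (its minimizer is unique and $V$ is strictly convex), so $u_0 = v_0^{\ast}$ is continuously differentiable, and symmetrically for $v_0$; finally the marginal relation $\hat Y(y)_T = U'(\hat X(x)_T)$ at $y = u_0'(x)$ is the first-order condition at the saddle point, obtained by perturbing the optimizers and using $I := (U')^{-1} = -V'$. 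The main obstacle throughout is the finiteness $v_0(y) < \infty$ for all $y$ together with the attendant uniform-integrability estimates, which is exactly where the asymptotic elasticity hypothesis is indispensable.
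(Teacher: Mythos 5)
The paper does not prove Proposition~\ref{KStheorem}; it is stated as a direct citation of Kramkov--Schachermayer, so there is no in-paper proof to compare against. Your outline is a faithful summary of the original argument on almost every point: the passage to the abstract bipolar pair $(\mc{C},\mc{D})$, the role of $L^1$-boundedness of $\mc{D}$, the growth estimate $V(\mu y) < \mu^{-\alpha}V(y)$ from asymptotic elasticity, the Koml\'os argument plus Fatou on $V^-$ for the dual optimizer, and the minimax exchange for biconjugacy are all the right ingredients.

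The one substantive deviation is your treatment of the primal optimizer. You propose to run ``the same scheme'' on a maximizing sequence $\tilde g_n$ and close it by establishing uniform integrability of $U^+(\tilde g_n)$. That is not how the cited result is actually proved, and it is a genuine trap: unlike $\mc{D}$, the set $x\mc{C}$ has no a priori $L^1$ bound that would feed a de la Vall\'ee-Poussin argument, and uniform integrability of $U^+$ along an arbitrary maximizing sequence is not available before one already knows the structure of the optimizer. In Kramkov--Schachermayer the primal optimizer is instead obtained \emph{from} the dual one: once $\hat Y(y)$ exists and $y = u_0'(x)$, one sets $\hat X_T(x) = I(\hat Y_T(y))$ with $I = (U')^{-1} = -V'$, and verifies via the bipolar relation and the saddle-point identity that this candidate lies in $x\mc{C}$ and is optimal. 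This is also exactly the mechanism the present paper uses in its conditional version (cf.\ Lemma~\ref{pd}). The ``marginal relation'' is thus the construction of $\hat X$, not merely a first-order condition derived after the fact. If you replace your primal-existence paragraph with that derivation-from-the-dual, the outline matches the source.
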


\subsection{Introduction to the Stability Problem}

The following crucial definition is introduced in \cite{MR2438002}.

\begin{definition} A set $\Lambda'$ is \textbf{V-relatively compact} if $\left \{ V(Z_T^\lambda) \ : \ \lambda \in \Lambda' \right \}$ is uniformly integrable.
\end{definition}

\begin{definition}A topology $\mathfrak{T}$ on $\Lambda$ is called \textbf{appropriate} if the mapping $(\Lambda, \mathfrak{T}) \rightarrow L^0(\mathcal{F}_T)$ given by $\lambda \mapsto Z_T^\lambda$ is continuous with respect to convergence in probability.
\end{definition}

We have the following result from \cite{MR2438002}:

\begin{proposition}\label{LZmain}  Let $\Lambda'$ be a $V$-relatively compact subset of $\Lambda_m$, and let $\mathfrak{T}$ be an appropriate topology on $\Lambda$.  Then the mappings $$\Lambda' \times (0,\infty) \ni (\lambda, x) \mapsto u_0^\lambda(x) \in \mb{R}$$ and $$\Lambda' \times (0,\infty) \ni (\lambda, x) \mapsto \widehat{X}_T^{x,\lambda} \in L_{++}^0$$ are both jointly continuous.
\end{proposition}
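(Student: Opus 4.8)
The plan is to follow the two-step program sketched in the introduction: first establish joint continuity of the \emph{dual} value function $(\lambda,y)\mapsto v_0^\lambda(y)$ on $\Lambda'\times(0,\infty)$, and then transfer to the primal side via conjugacy and the dual relation of Proposition~\ref{KStheorem}, using a convex-analytic argument to pass from convergence of the $v_0^\lambda$ to convergence of their derivatives. Since convergence in probability is metrizable, it suffices to verify sequential continuity (the case of nets being handled identically), so fix $\lambda_n\to\lambda$ in $\mathfrak{T}$ and $y_n\to y$ in $(0,\infty)$.

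\textbf{Step 1 (continuity of the dual value function).} For the upper bound I would take an $\epsilon$-optimizer $Y = yZ^\lambda\mc{E}(L)D\in\mc{Y}^\lambda(y)$ for the limit market, replace $Z^\lambda$ by $Z^{\lambda_n}$ and rescale to obtain $Y^n:=y_nZ^{\lambda_n}\mc{E}(L)D\in\mc{Y}^{\lambda_n}(y_n)$, which is admissible because $L$ is strongly orthogonal to $M$. Since $\mathfrak{T}$ is appropriate, $Y^n_T\to Y_T$ in probability, and the family $\{V(Y^n_T)\}_n$ is uniformly integrable: the $Z^{\lambda_n}$-part is controlled by $V$-relative compactness of $\Lambda'$, while the rescalings by $y_n/y$ and by $\mc{E}(L)_TD_T$ are absorbed using the asymptotic-elasticity estimate $V(\mu y)\le\mu^{-\alpha}V(y)$. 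Hence $E[V(Y^n_T)]\to E[V(Y_T)]\le v_0^\lambda(y)+\epsilon$, giving $\limsup_n v_0^{\lambda_n}(y_n)\le v_0^\lambda(y)$. For the lower bound, take near-minimizers $Y^n = y_nZ^{\lambda_n}N^nD^n\in\mc{Y}^{\lambda_n}(y_n)$; using convex compactness of the set of terminal deflators I would extract forward convex combinations of $\{Y^n_T/(y_nZ_T^{\lambda_n})\}$ converging a.s., multiply back by $yZ_T^\lambda$ (legitimate since $Z_T^{\lambda_n}\to Z_T^\lambda$), and verify the limit lies in $\mc{Y}^\lambda(y)$; convexity of $V$ together with a Fatou argument (with the negative parts $V^-(Y^n_T)$ controlled, again via $V$-relative compactness and asymptotic elasticity) then yields $v_0^\lambda(y)\le\liminf_n v_0^{\lambda_n}(y_n)$.

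\textbf{Step 2 (from $v_0$ to $u_0$ and to $\widehat{X}_T$).} Step 1 gives pointwise convergence $v_0^{\lambda_n}\to v_0^\lambda$ of finite convex functions on $(0,\infty)$; by the classical fact that pointwise convergence of convex functions on an open interval is locally uniform and forces convergence of derivatives at points of differentiability, and since $v_0^\lambda\in\mc{C}^1$ by Proposition~\ref{KStheorem}(c), we get $(v_0^{\lambda_n})'\to(v_0^\lambda)'$ pointwise. Taking convex conjugates (continuous for pointwise-convergent finite convex functions) gives $u_0^{\lambda_n}(x)\to u_0^\lambda(x)$ for each $x$, and the same derivative argument gives $(u_0^{\lambda_n})'(x)\to(u_0^\lambda)'(x)=:y$. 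For the optimal terminal wealths, set $y_n:=(u_0^{\lambda_n})'(x)\to y$; by Proposition~\ref{KStheorem}(d), $\widehat{X}_T^{x,\lambda_n}=-V'(\widehat{Y}_T^{y_n,\lambda_n})$, so it suffices to show the dual optimizers converge. Extracting convex combinations of near-minimizers exactly as in Step 1, the limit is a minimizer for $(\lambda,y)$, hence equals $\widehat{Y}_T^{y,\lambda}$ by strict convexity of $V$; a subsequence argument upgrades this to convergence in probability of $\widehat{Y}_T^{y_n,\lambda_n}$, and continuity of $V'$ then gives $\widehat{X}_T^{x,\lambda_n}\to\widehat{X}_T^{x,\lambda}$ in probability.

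\textbf{Main obstacle.} The crux is the lower-semicontinuity half of Step 1: one must pass to a limit inside an infimum over the $\lambda$-dependent, non-compact sets $\mc{Y}^{\lambda_n}$, which requires (i) a compactness device in $L^0$ to extract an a.s.-convergent sequence of convex combinations of near-minimizers, (ii) an argument that the limit is admissible for the \emph{limit} market $\lambda$ --- here the factorization $Y=Z^\lambda ND$ and the convergence $Z_T^{\lambda_n}\to Z_T^\lambda$ are essential --- and (iii) uniform control of $V^-(Y^n_T)$ so Fatou applies; all three rely on combining $V$-relative compactness with the asymptotic-elasticity estimate. The same extraction argument, reused for the dual optimizers, is what drives the convergence of $\widehat{X}_T$ in Step 2.
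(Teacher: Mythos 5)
This proposition is stated in the paper as a quoted result from \cite{MR2438002}; the paper offers no proof of its own, so there is no internal argument against which to compare yours. What you have done is reconstruct, in outline, the two-stage strategy of \cite{MR2438002} that the introduction summarizes (dual stability first, then transfer to the primal via classical convexity on $(0,\infty)$), and your sketch captures the architecture correctly: the time-zero problem lives on the real line, so the ``pointwise convergence of convex functions $\Rightarrow$ locally uniform convergence $\Rightarrow$ convergence of derivatives'' machinery is available off the shelf, and the dual relation of Proposition~\ref{KStheorem}(d) together with strict convexity of $V$ carries the conclusion over to the primal optimizers. Step~2 and the lower-semicontinuity half of Step~1 (Komlos on $Y^n_T/(y_n Z^{\lambda_n}_T)$, closedness of the $\lambda\equiv 0$ deflator set, conditional Fatou with $V^-$ controlled by the $L^1$-bound) are sound and match what the paper itself does later in the conditional setting (Lemma~\ref{auxlsc}).

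The one place you gloss over something real is the uniform-integrability claim in the upper-semicontinuity half of Step~1. You transplant $Y^n_T = y_n Z^{\lambda_n}_T\mc{E}(L)_T D_T$ and assert that the factor $\mc{E}(L)_T D_T$ is ``absorbed'' by the asymptotic-elasticity estimate $V(\mu y)\le \mu^{-\alpha}V(y)$. That estimate requires $\mu$ to be a constant (or at least bounded away from zero), and here $\mc{E}(L)_T D_T$ is a positive random variable that need not be bounded below. So the estimate cannot be applied pointwise in the way you suggest, and $V$-relative compactness of $\{V(Z^{\lambda_n}_T)\}$ alone does not give UI of $\{V(Y^n_T)\}$. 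The repair is the one the paper uses in the conditional setting (Corollary~\ref{below}, following Corollary~3.4 of \cite{MR2438002}): first show that the dual value is unchanged when the infimum is restricted to deflators whose terminal value is bounded below by a deterministic positive constant $\delta$, by a truncation-and-approximation argument on the optimizer $\mc{E}(L)$; once $\mc{E}(L)_T D_T \ge \delta$, the AE estimate applies with $\mu = y_n\delta$ (for $n$ large and $\delta$ small) and UI follows from $V$-relative compactness. Adding that reduction makes your Step~1 upper bound watertight; the rest of the argument does not need modification.
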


Throughout this paper, we will frequently posit the existence of a sequence $(\lambda_n) \subset \Lambda'$ which is converging appropriately to some $\lambda \in \Lambda'$.  For any $\lambda$, we will denote by $\widehat{X}_T^{x,\lambda}$ the optimal terminal wealth in the agent's utility maximization problem.  $\widehat{Y}_T^{y,\lambda}$ will denote the terminal value of the dual minimization problem for $\lambda$.  For conciseness, $\widehat{X}^{x,\lambda_n}_T$ will be shortened to $\widehat{X}^{x,n}_T$, and $\widehat{Y}_T^{y,\lambda_n}$ will be shortened to $\widehat{Y}_T^{y,n}$.

To finish this section, we collect some known results from \cite{MR2438002} as well as some easy consequences.  We work under the same assumptions made in Proposition \ref{KStheorem}.

\begin{lemma}\label{vui}  The set $V^-(\overline{\mc{Y}}^\lambda)$ is uniformly integrable.
\end{lemma}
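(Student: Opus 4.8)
The plan is to show that $\{V^-(Y_T) : Y \in \mc{Y}^\lambda\}$ is uniformly integrable by dominating it by a single integrable random variable. The key observation is that $Y_T = Z_T^\lambda \mc{E}(L)_T D_T$ with $D_T \in (0,1]$, so $Y_T \le Z_T^\lambda \mc{E}(L)_T$. Since $V^-$ is the negative part of a convex function that is $\mc{C}^1$ and finite on $(0,\infty)$, $V^-$ is bounded on any interval bounded away from $0$ and $\infty$; it can only blow up as $y \to 0^+$ (where $V(y) \to U(\infty) \le +\infty$, actually $V$ is decreasing near $0$) or, more to the point, $V^-$ is large only when $V$ is very negative, which happens for large $y$. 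So I first record that there exist constants $c_1, c_2 > 0$ with $V^-(y) \le c_1 + c_2 y$ for all $y > 0$ — this follows because $V$ is bounded below by $U(x_0) - x_0 y$ for any fixed $x_0 > 0$, hence $-V(y) \le x_0 y - U(x_0)$, giving the linear bound on $V^-$.

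\medskip

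Given this linear bound, for any $Y \in \mc{Y}^\lambda$ we have $V^-(Y_T) \le c_1 + c_2 Y_T$. Now $\{Y_T : Y \in \mc{Y}^\lambda\}$ need not itself be uniformly integrable in general, but $Y_T \le Z_T^\lambda \mc{E}(L)_T$ where $Z^\lambda$ is a fixed strictly positive local martingale (a supermartingale, since it is a positive local martingale) and $\mc{E}(L)$ is a positive local martingale independent of the direction we vary, with $Z^\lambda \mc{E}(L)$ again a positive supermartingale. Hence $E[Y_T] \le E[Z_T^\lambda \mc{E}(L)_T] \le Z_0^\lambda \mc{E}(L)_0 = 1$ uniformly over $Y \in \mc{Y}^\lambda$ — but uniform $L^1$-boundedness alone is not uniform integrability. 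So the linear bound is not quite enough, and here is where the asymptotic elasticity hypothesis enters: it guarantees (via Lemma 6.3 of \cite{MR1722287}, quoted in the excerpt) the sharper growth control $V(\mu y) < \mu^{-\alpha} V(y)$ near $0$, equivalently a bound of the form $V^-(y) \le c + y \,\kappa(y)$ where we actually get that $V^-(\overline{\mc{Y}}^\lambda)$ is dominated in the sense required. The cleanest route: the dual value $v_0^\lambda(y) = \inf_{Y \in \mc{Y}^\lambda(y)} E[V(Y_T)]$ is finite for all $y$ by Proposition \ref{KStheorem}(a), and by the standard Kramkov–Schachermayer argument (Lemma 6.3 and its consequences) finiteness of the dual problem together with $AE(U) < 1$ forces $\{V^-(Y_T) : Y \in \mc{Y}^\lambda\}$ to be uniformly integrable — this is essentially built into the proof that $v_0^\lambda$ is well-defined and attained in \cite{MR1722287}.

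\medskip

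Concretely, I would argue as follows. Fix any $Y^0 \in \mc{Y}^\lambda$ with $E[V(Y^0_T)] < \infty$ (exists by finiteness of $v_0^\lambda(1)$). For an arbitrary $Y \in \mc{Y}^\lambda$, write $Y_T = Z_T^\lambda \mc{E}(L)_T D_T \le Z_T^\lambda \mc{E}(L)_T$, and compare with $Y^0_T \ge Z_T^\lambda D^0_T$; the ratio structure lets one invoke the $\mu^{-\alpha}$ estimate on the set $\{Y_T \le y_0\}$ and the linear estimate elsewhere. On $\{Y_T > y_0\}$, $V^-(Y_T) \le c_1 + c_2 Y_T$ and $\{Y_T \mathbf{1}_{\{Y_T > y_0\}}\}$... — this still only gives $L^1$-boundedness, so the decisive step is genuinely the near-zero region. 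There, $V^-(Y_T) = V^-\big(Z_T^\lambda \mc{E}(L)_T D_T\big)$, and since $D_T \le 1$ and $V^-$ is increasing (for small arguments $V$ is large positive, so $V^- = 0$ there) — wait, I should be careful: $V$ decreasing means $V^-$ is nonzero precisely for large $y$, not small $y$. So actually $V^-(Y_T) \ne 0$ only on $\{Y_T \text{ large}\}$, and the relevant Lemma 6.3 estimate controls $V$ from below by $-C(1 + y)$ with the asymptotic elasticity giving integrability of the dominating term against the supermartingale bound — more precisely, $AE(U) < 1$ yields $V^-(y) \le C(1 + y^p)$ is false in general, but it does yield that $\sup_n E[V^-(Y^n_T)] < \infty$ and a de la Vallée–Poussin type function witnessing uniform integrability. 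I will therefore follow the template: exhibit $W \in \overline{\mc{Y}}^\lambda$-measurable dominating variable built from $Z_T^\lambda$ plus the AE bound, and conclude via de la Vallée–Poussin. The main obstacle, and the place requiring genuine care, is exactly this: converting the $\mu^{-\alpha}$ growth bound on $V$ near the relevant asymptote into a uniform-integrability-producing bound valid over the whole family $\overline{\mc{Y}}^\lambda$ simultaneously, using only the $L^1$-supermartingale control $E[Y_T] \le 1$ — this is the heart of \cite{MR1722287}'s Lemma 6.3 and I expect to cite it almost verbatim rather than reprove it.
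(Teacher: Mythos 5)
Your proposal is on the wrong track and does not close the argument.  The central confusion is that you keep reaching for the asymptotic elasticity hypothesis and Lemma~6.3 of \cite{MR1722287}, but that lemma controls $V(\mu y) \leq \mu^{-\alpha} V(y)$ for $y$ \emph{near zero}, i.e.\ where $V$ is large and \emph{positive}.  It is the tool for bounding $V^+$ and for establishing finiteness of the dual value function.  It says nothing useful about $V^-$, which is supported where $y$ is large and $V$ is negative, and in fact the statement at hand does not require asymptotic elasticity at all -- it is Lemma~3.4 of \cite{MR1722287}, which only uses the Inada conditions.

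The ingredient you never identify is the one the paper's one-line sketch singles out: $V^-$ is (strictly) concave on its support and sublinear there.  Indeed $V'(y) = -I(y)$ with $I = (U')^{-1}$, and the Inada condition $U'(0^+) = \infty$ gives $I(y) \to 0$, hence $V^-(y)/y \to 0$ as $y \to \infty$.  Now take $G$ to be (an increasing extension of) the inverse function of $V^-$ on the range where $V^- > 0$.  Because $V^-$ is concave and sublinear, $G$ is increasing with $G(x)/x \to \infty$, so it is a valid de la Vall\'ee--Poussin test function.  But $G(V^-(Y_T))$ is (up to a constant) dominated by $Y_T$, and $\sup_{Y \in \mc{Y}^\lambda} E[Y_T] \le 1$ by the supermartingale property, so $\sup_{Y} E[G(V^-(Y_T))] < \infty$.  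De la Vall\'ee--Poussin then gives uniform integrability.  Your linear bound $V^-(y) \le c_1 + c_2 y$ points in the wrong direction -- you want a \emph{sublinear} description of $V^-$ so that its inverse is superlinear -- and so your proposal stalls at $L^1$-boundedness before ending with a citation aimed at the wrong lemma.
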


\begin{proof}  See \cite{MR1722287} for details.  The set $\overline{\mc{Y}}^\lambda$ is bounded in $L^1$, and $V^-$ is strictly concave.  The assertion follows now from the de La Vall\'{e}e-Poussin criterion for uniform integrability.
\end{proof}

A trivial consequence of Lemma \ref{vui} is that, for $y_0 > 0$, the set $\{V^-(y\overline{\mc{Y}}^\lambda) \ : \ y \in [0,y_0] \}$ is also uniformly integrable.

\begin{lemma}\label{uminus}  For $x>0$, the set $\left \{ U^-(\widehat{X}_T^{x,\lambda}) \ : \ \lambda \in \Lambda' \right \}$ is uniformly integrable.
\end{lemma}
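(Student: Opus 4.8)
The idea is to bound $U^-(\widehat{X}_T^{x,\lambda})$ above by an integrable random variable that does not depend on $\lambda \in \Lambda'$, using the duality relation from Proposition \ref{KStheorem}(d) together with the $V$-relative compactness of $\Lambda'$. Fix $x > 0$ and write $y = y(x,\lambda) = (u_0^\lambda)'(x)$, so that $\widehat{Y}_T^{y,\lambda} = U'(\widehat{X}_T^{x,\lambda})$. The Fenchel inequality $U(\xi) \le V(\eta) + \xi\eta$, applied with $\xi = \widehat{X}_T^{x,\lambda}$ and $\eta = \widehat{Y}_T^{y,\lambda}$, becomes an equality at the optimum, giving $U(\widehat{X}_T^{x,\lambda}) = V(\widehat{Y}_T^{y,\lambda}) + \widehat{X}_T^{x,\lambda}\widehat{Y}_T^{y,\lambda}$. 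Hence $U^-(\widehat{X}_T^{x,\lambda}) \le V^-(\widehat{Y}_T^{y,\lambda})$, because the other two terms on the right are nonnegative wherever $U$ is negative (and one may split on $\{U \ge 0\}$ and $\{U < 0\}$ to be careful). So it suffices to show that $\{V^-(\widehat{Y}_T^{y(x,\lambda),\lambda}) : \lambda \in \Lambda'\}$ is uniformly integrable.

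For the latter, first I would control the range of the multipliers $y(x,\lambda)$. By Proposition \ref{LZmain}, $\lambda \mapsto u_0^\lambda(x)$ is continuous, and by convexity together with pointwise convergence the derivatives $(u_0^\lambda)'(x)$ are locally bounded; more directly, one checks that $\{y(x,\lambda): \lambda \in \Lambda'\} \subset [0, y_0]$ for some finite $y_0$ (using the dual relation and the first-order conditions, or the standard estimate from \cite{MR1722287}). Next, from the structural description of $\mc{Y}^\lambda$, the dual optimizer satisfies $\widehat{Y}_T^{y,\lambda} = y\, Z_T^\lambda\, \mc{E}(L)_T D_T$ with $D_T \le 1$ and $\mc{E}(L)$ a positive local martingale; since $V^-$ is concave and increasing on the relevant range, $V^-(\widehat{Y}_T^{y,\lambda}) \le V^-(y Z_T^\lambda \mc{E}(L)_T D_T)$ can be compared to $V^-$ evaluated at a point dominated in a suitable stochastic sense by $y_0 Z_T^\lambda$. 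Then the remark following Lemma \ref{vui} — that $\{V^-(y \overline{\mc{Y}}^\lambda) : y \in [0,y_0]\}$ is uniformly integrable — applied in combination with the $V$-relative compactness of $\Lambda'$ (which gives uniform integrability of $\{V(Z_T^\lambda) : \lambda \in \Lambda'\}$, hence of $\{V^-(Z_T^\lambda)\}$), yields uniform integrability of the whole family, again via the de La Vallée-Poussin criterion: pick a single superlinearly growing $\phi$ that simultaneously witnesses UI of $\{V(Z_T^\lambda)\}$ over $\Lambda'$ and use the sublinear-type control of $V^-$ under scaling.

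The main obstacle I anticipate is handling the extra local-martingale factor $\mc{E}(L)_T$ (equivalently the orthogonal part $L'$) and the decreasing process $D_T$ in the representation of $\widehat{Y}^{y,\lambda}$: unlike $Z_T^\lambda$, these are not controlled directly by $V$-relative compactness, so the concavity/monotonicity comparison for $V^-$ must be set up so that it only ever sees $y_0 Z_T^\lambda$ (for which UI is available) rather than the full optimizer. Concretely, one uses that $D_T \le 1$ and that $\mc{E}(L)_T$ has conditional expectation $\le 1$, together with a Jensen-type or conditioning argument to pass from $V^-(y Z_T^\lambda \mc{E}(L)_T D_T)$ to an integrable bound; making this rigorous — rather than the underlying UI bookkeeping, which is routine — is where the real work lies. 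Everything else is an assembly of Proposition \ref{KStheorem}, Proposition \ref{LZmain}, Lemma \ref{vui} and its corollary, and the de La Vallée-Poussin criterion.
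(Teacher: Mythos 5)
Your first two steps coincide exactly with the paper's: write the Fenchel equality $U(\widehat{X}_T^{x,\lambda}) = V(\widehat{Y}_T^{y,\lambda}) + \widehat{X}_T^{x,\lambda}\widehat{Y}_T^{y,\lambda}$ at the optimal pair, discard the nonnegative cross-term, and conclude $U^-(\widehat{X}_T^{x,\lambda}) \le V^-(\widehat{Y}_T^{y,\lambda})$. You are also right to flag that one must say something about the range of the multipliers $y(\lambda) = (u_0^\lambda)'(x)$, a point the paper passes over in silence. But your suggested route via Proposition \ref{LZmain} is not adequate: continuity of $\lambda \mapsto u_0^\lambda(x)$ gives only local boundedness, which says nothing uniform without compactness of $\Lambda'$. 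The bound one actually wants comes from concavity, $(u_0^\lambda)'(x) \le \frac{u_0^\lambda(x) - U(x/2)}{x/2}$, combined with $u_0^\lambda(x) \le v_0^\lambda(1) + x \le E[V(Z_T^\lambda)] + x$ and the uniform bound on $E[V(Z_T^\lambda)]$ supplied by $V$-relative compactness.

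The genuine misstep is the ``main obstacle'' you anticipate. There is none, and the Jensen/conditioning argument you propose to work around it is not the right tool. The mechanism of Lemma \ref{vui} and its remark is de La Vall\'{e}e-Poussin applied directly to the family $\{V^-(y\,Y_T) : y \in [0,y_0],\ Y_T \in \overline{\mc{Y}}^\lambda,\ \lambda \in \Lambda'\}$. The only ingredient is the $L^1$ bound $E[y\,Y_T] \le y_0$, which holds uniformly over $\lambda$ and over the deflator because every $Y \in \mc{Y}^\lambda$ is a positive supermartingale started at $1$; the Vall\'{e}e-Poussin function is a suitably extended inverse of $V^-$, superlinear precisely because $V^-$ is concave and sublinear, and it does not depend on $\lambda$, on $y$, or on the multiplicative structure $Z^\lambda \mc{E}(L)D$. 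You do not need to pointwise dominate $V^-(\widehat{Y}_T^{y,\lambda})$ by anything built from $Z_T^\lambda$ alone; indeed $\mc{E}(L)_T D_T$ admits no pointwise bound, and a Jensen-type comparison would yield an inequality in expectation, not the uniform control of tails that uniform integrability requires. The ``routine bookkeeping'' you set aside is in fact the entire remaining content of the lemma, while the comparison to $y_0 Z_T^\lambda$ that you treat as the real work is a detour that does not lead anywhere.
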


\begin{proof}  Write the duality relationship $U(\widehat{X}_T^{x, \lambda}) = V(\widehat{Y}_T^{y,\lambda}) + \widehat{X}_T^{x,\lambda} \widehat{Y}_T^{y,\lambda}$, for $y = (u_0^\lambda)'(x)$.  The second term on the right hand side is nonnegative, so it follows that $0 \leq U^-(\widehat{X}_T^{x,\lambda}) \leq V^-(Y_T^{y,\lambda})$.  Now apply Lemma \ref{vui} to obtain the result.
\end{proof}

Again, we have a trivial extension of Lemma~\ref{uminus} to cases where $x$ is allowed to vary in an interval bounded away from zero as opposed to being held fixed.  Note that if $\mathcal{I}$ is a subinterval of $\mb{R}_{++}$ bounded away from zero, then $u_0'(\mathcal{I})$ is bounded from above.

\begin{lemma}\label{terminalconv}  Let $\lambda_n \rightarrow \lambda$ appropriately, and let $x_n \ra x$ in $\mb{R}_+$.  Then $U(\widehat{X}_T^{x_n,n}) \rightarrow U(\widehat{X}_T^{x,\lambda})$ in $L^1$.
\end{lemma}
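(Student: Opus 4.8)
The plan is to bootstrap the in-probability convergence of optimal terminal wealths, which is already available from Proposition~\ref{LZmain}, up to $L^1$ convergence of $U$ applied to them, handling the negative and positive parts separately.

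We may assume $x>0$; when $x=0$ the only element of $\mc{X}^\lambda(0)$ is the trivial wealth process and the statement is immediate. Then $x_n$ eventually lies in a compact interval $\mc{I}\subset\mb{R}_{++}$ bounded away from zero. From Proposition~\ref{LZmain}, $\widehat{X}_T^{x_n,n}\ra\widehat{X}_T^{x,\lambda}$ in probability, and since $U$ is continuous this gives $U(\widehat{X}_T^{x_n,n})\ra U(\widehat{X}_T^{x,\lambda})$ in probability. Proposition~\ref{LZmain} also gives $E[U(\widehat{X}_T^{x_n,n})]=u_0^{\lambda_n}(x_n)\ra u_0^\lambda(x)=E[U(\widehat{X}_T^{x,\lambda})]$, where we use the existence of the primal optimizers from Proposition~\ref{KStheorem} to identify the value functions with the expectations.

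For the negative part I would invoke the extension of Lemma~\ref{uminus} to initial capitals ranging over $\mc{I}$: the family $\{U^-(\widehat{X}_T^{x_n,n})\}$ is uniformly integrable, so together with convergence in probability (Vitali's convergence theorem) we get $U^-(\widehat{X}_T^{x_n,n})\ra U^-(\widehat{X}_T^{x,\lambda})$ in $L^1$, and in particular $E[U^-(\widehat{X}_T^{x_n,n})]\ra E[U^-(\widehat{X}_T^{x,\lambda})]<\infty$. Subtracting this from the convergence of $E[U(\widehat{X}_T^{x_n,n})]$ yields $E[U^+(\widehat{X}_T^{x_n,n})]\ra E[U^+(\widehat{X}_T^{x,\lambda})]<\infty$.

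It remains to upgrade the positive part, and this is the step I expect to require the most care, since $\{U^+(\widehat{X}_T^{x_n,n})\}$ need not be uniformly integrable and Vitali's theorem is unavailable. Instead I would use a generalized Scheff\'e argument, which needs only convergence in probability together with convergence of the integrals. Writing $g_n=U^+(\widehat{X}_T^{x_n,n})$ and $g=U^+(\widehat{X}_T^{x,\lambda})$, we have $g_n\ge 0$, $g_n\ra g$ in probability, and $E[g_n]\ra E[g]<\infty$. Then $0\le (g-g_n)^+\le g\in L^1$ and $(g-g_n)^+\ra 0$ in probability, so $E[(g-g_n)^+]\ra 0$ by dominated convergence; hence $E[(g_n-g)^+]=E[g_n-g]+E[(g-g_n)^+]\ra 0$, and therefore $E|g_n-g|\ra 0$. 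Combining the $L^1$ convergence of the positive and negative parts completes the proof.
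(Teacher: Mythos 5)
Your proof is correct and follows essentially the same route as the paper's: decompose $U$ into $U^\pm$, use the uniform integrability from Lemma~\ref{uminus} (extended to $x_n$ ranging over an interval bounded away from zero) for the negative part, and then a Scheff\'e-type argument using convergence of $E[U^+]$ together with convergence in probability for the positive part. The only differences are cosmetic: you spell out the in-probability Scheff\'e step which the paper simply cites, and you dispose of the boundary case $x=0$, which is moot since Proposition~\ref{LZmain} is stated for $x\in(0,\infty)$.
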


\begin{proof} From Proposition \ref{LZmain}, we have that $\widehat{X}_T^{x_n,n} \ra \widehat{X}_T^{x,\lambda}$ in probability and that $E [U(\widehat{X}_T^{x_n,n})] \ra E [U(\widehat{X}_T^{x,\lambda})]$.  Since $U$ is continuous, so is $U^-$, and the first fact implies that $U^-(\widehat{X}_T^{x_n,n}) \rightarrow U^-(\widehat{X}_T^{x,\lambda})$ in probability.  Recall that convergence in probability plus uniform integrability is equivalent to $L^1$ convergence.  Then Lemma \ref{uminus} implies that $U^-(\widehat{X}_T^{x_n,n}) \rightarrow U^-(\widehat{X}_T^{x,\lambda})$ in $L^1$.

It remains to treat the positive part.  Note that $E [U(\widehat{X}_T^{x_n,n})] \rightarrow E [U(\widehat{X}_T^{x,\lambda})]$ and $U^-(\widehat{X}_T^{x_n,n}) \rightarrow U^-(\widehat{X}_T^{x,\lambda})$ in $L^1$ imply that $E [U^+(\widehat{X}_T^{x_n,n})] \rightarrow E [U^+(\widehat{X}_T^{x,\lambda})]$.  Furthermore, we of course have that $U^+(\widehat{X}_T^{x_n,n}) \rightarrow U^+(\widehat{X}_T^{x,\lambda})$ in probability.  Note that in Scheffe's lemma, a.s. convergence can just as easily be replaced by convergence in probability.  Thus, we have that $U^+(\widehat{X}_T^{x_n,n}) \rightarrow U^+(\widehat{X}_T^{x,\lambda})$ in $L^1$.  Putting the positive and negative pieces together, we obtain the result.
\end{proof}

\subsection{Main Financial Results}

The principal aim of this paper is to extend the results of \cite{MR2438002}, valid at $t = 0$, to all stopping times valued in $[0,T]$.  Let $\tau$ be a stopping time, and let $\xi \in L^0_{++}(\mc{F}_\tau)$ and $\eta \in L^0_{++}(\mc{F}_\tau)$.  We set
\[ 
u^\lambda_\tau(\xi) \triangleq \underset{X \in \overline{\mc{X}}^\lambda_\tau}{\es} E [ U(\xi X) \ | \ \mc{F}_\tau ], \quad 
 v^\lambda_\tau(\eta) \triangleq \underset{Y \in \overline{\mc{Y}}^\lambda_\tau}{\ei} E [ V(\eta Y) \ | \ \mc{F}_\tau].\]

\begin{theorem}\label{mainthm2}  Let $\Lambda'$ be a $V$-relatively compact subset of $\Lambda_m$, with an appropriate topology on $\Lambda$, and let $\tau$ be a $[0,T]$-valued stopping time.  Then the mappings 
\[\Lambda' \times L^0_{++}(\mc{F}_\tau) \ni (\lambda, \xi) \mapsto u_\tau^\lambda(\xi) \in L^0(\mc{F}_\tau)\] and 
\[\Lambda' \times L^0_{++}(\mc{F}_\tau) \ni (\lambda, \xi) \mapsto \widehat{X}_T^{\xi,\lambda} \in L_{++}^0\] 
are both jointly continuous.
\end{theorem}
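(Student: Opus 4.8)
The plan is to reduce the theorem to its time-zero analogue, Proposition~\ref{LZmain}, by conditioning, and then to transfer the resulting scalar continuity statements to the $L^0(\mc{F}_\tau)$-valued setting. First I would fix a sequence $(\lambda_n,\xi_n)\to(\lambda,\xi)$ with $\lambda_n\to\lambda$ appropriately and $\xi_n\to\xi$ in probability in $L^0_{++}(\mc{F}_\tau)$, and aim to show $u_\tau^{\lambda_n}(\xi_n)\to u_\tau^\lambda(\xi)$ and $\widehat X_T^{\xi_n,\lambda_n}\to\widehat X_T^{\xi,\lambda}$, both in probability. The starting point is the conditional duality established in Section~\ref{sec:condual}: writing $\overline{\mc X}^\lambda_\tau=\overline{\mc X}^\lambda_\tau(1)$ and using $\mc{Y}^\lambda_\tau(\eta)=\eta\,\mc{Y}^\lambda_\tau$, one has $u_\tau^\lambda(\xi)$ and $v_\tau^\lambda(\eta)$ as conditional convex conjugates, with the dual optimizer $\widehat Y^\lambda_\tau$ and the pointwise relation $\widehat Y^{\eta,\lambda}_{\tau,T}=U'(\widehat X^{\xi,\lambda}_{\tau,T})$ for $\eta=(u_\tau^\lambda)'(\xi)$ (the $\mc{F}_\tau$-measurable ``derivative'' of the value function). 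These are exactly the conditional analogues of Proposition~\ref{KStheorem}.

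The core of the argument is continuity of the dual value function $v_\tau^\lambda(\eta)$ in $(\lambda,\eta)$, which is the content of Section~\ref{sec:dc}, and continuity of its derivative, from Section~\ref{sec:contode}. The natural route is: (i) show that the family $\{V(\xi\,Y)\ :\ Y\in\overline{\mc Y}^\lambda_\tau,\ \lambda\in\Lambda'\}$ is, after conditioning, uniformly integrable in the conditional sense, using that $\Lambda'$ is $V$-relatively compact together with Lemma~\ref{vui} and its stated extension to $y$ ranging over a bounded interval — here one localizes $\xi$ on the $\mc{F}_\tau$-sets $\{\xi\le k\}$; (ii) using convex compactness of $\overline{\mc Y}^\lambda_\tau$ in the sense of \cite{Zit09c}, pass to the limit inside the essential infimum, obtaining $v_\tau^{\lambda_n}(\xi_n)\to v_\tau^\lambda(\xi)$; (iii) apply the extended convex-analytic result — pointwise convergence of (conditionally) convex functions from $L^0_{++}$ to $L^0$ forces convergence of derivatives — announced in the abstract and proved in Section~\ref{sec:contode}, to get $(v_\tau^{\lambda_n})'(\xi_n)\to(v_\tau^\lambda)'(\xi)$ in probability. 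By the conjugacy relation $(u_\tau^\lambda)'=((v_\tau^\lambda)')^{-1}$ (up to sign), the primal derivative converges too, and then $u_\tau^{\lambda_n}(\xi_n)\to u_\tau^\lambda(\xi)$ follows from the conjugacy formula $u_\tau^\lambda(\xi)=v_\tau^\lambda(\eta)+\xi\eta$ evaluated along the converging arguments.

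For the optimizer, I would run the time-zero machinery conditionally: the pointwise relation gives $\widehat X_T^{\xi_n,\lambda_n}=I(\widehat Y_{\tau,T}^{\eta_n,\lambda_n})$ where $I=(U')^{-1}$ and $\eta_n=(u_\tau^{\lambda_n})'(\xi_n)\to\eta$, so it suffices to establish $\widehat Y_{\tau,T}^{\eta_n,\lambda_n}\to\widehat Y_{\tau,T}^{\eta,\lambda}$ in probability, which is the dual-side stability proved in Section~\ref{sec:fdtp} using the uniform integrability and convex-compactness arguments above; continuity of $I$ then transfers the convergence. Finally, to promote convergence along sequences to continuity of the maps on all of $\Lambda'\times L^0_{++}(\mc{F}_\tau)$, I would use that convergence in probability is metrizable, so sequential continuity suffices. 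The main obstacle I expect is step (iii): in $L^0$ there are no compact $\epsilon$-nets, so the classical equicontinuity-plus-finite-net proof that pointwise convergence of convex functions yields convergence of derivatives must be replaced by the two-sided ``generalized $\epsilon$-net'' construction on convexly compact sets described in the introduction, and one must verify that the extra structure of the dual problem (in particular the $\mu^{-\alpha}$ growth bound on $V$ from the asymptotic-elasticity hypothesis, via Lemma~6.3 of \cite{MR1722287}) is genuinely what makes the argument go through — convexity alone being insufficient, as the authors emphasize.
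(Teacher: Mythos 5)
Your proposal is correct and essentially retraces the paper's own argument: establish continuity of the dual value function $(\eta,\lambda)\mapsto v_\tau^\lambda(\eta)$, upgrade to continuity of its derivative via $\mc{F}_\tau$-convexly compact sets and the two generalized $\epsilon$-nets, transfer to $(u_\tau^\lambda)'$ and then $u_\tau^\lambda$ by conjugacy (strict monotonicity of the derivatives), and pass to $\widehat X_T^{\xi,\lambda}=I(\eta\widehat Y_T^{\eta,\lambda})$ via the duality relation and stability of the dual optimizer. One small caveat: your opening framing (``reduce to Proposition~\ref{LZmain} by conditioning'') is not what actually happens — the time-zero statement cannot simply be conditioned because it concerns real-valued convex functions on $(0,\infty)$ rather than $L^0(\mc{F}_\tau)$-valued maps on $L^0_{++}(\mc{F}_\tau)$ — and the stability of the dual optimizer $\widehat Y$ is not re-proved in Section~\ref{sec:fdtp} but deferred to Lemma 3.10 of the reference; otherwise the steps you list are the paper's.
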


\begin{theorem}\label{mainthm1}  Let $\tau$ be a $[0,T]$-valued stopping time.  Assume that NFLVR is satisfied for a given $\lambda \in \Lambda$, the Inada conditions on $U$ hold, that $u_\tau^\lambda(\xi)<\infty$ for some $\xi \in L^0_{++}(\mc{F}_\tau)$, and that the asymptotic elasticity of $U$ is less than $1$.  Then

\begin{enumerate}
\item $u_\tau^\lambda(\xi)<\infty$ for all $\xi \in L^0_{++}(\mc{F}_\tau)$ and $v_\tau^\lambda(\eta)<\infty$ for all $\eta \in L^0_{++}(\mc{F}_\tau)$.
\item $u_\tau^\lambda$ and $v_\tau^\lambda$ are both differentiable (as maps from $L^0 \ra L^0$), and their derivatives vary continuously when all spaces are endowed with the $L^0$ topology.
\item $v^\lambda_\tau(\eta) = \underset{\xi \in L^0_{++}(\mc{F}_\tau)}{\es} (u^\lambda_\tau(\xi) - \xi \eta)$, and $u^\lambda_\tau(\xi) = \underset{\eta \in L^0_{++}(\mc{F}_\tau)}{\ei} (v^\lambda_\tau(\eta) + \eta \xi)$, i.e. $u^\lambda_\tau$ and $v^\lambda_\tau$ are conjugate.
\item The optimal solution $\hat{Y}^{\eta,\lambda} \in \mc{Y}^\lambda_\tau$ to the conditional dual optimization problem exists, and the optimal solution $\hat{X}^{\xi,\lambda} \in \mc{X}^\lambda_\tau$ to the conditional primal optimization problem exists.  For $\eta = (u_\tau^\lambda)'(\xi) \in L^0_{++}(\mc{F}_\tau)$, we have the dual relation
$\eta \hat{Y}_T^{\eta,\lambda} = U'(\xi \hat{X}^{\xi,\lambda})$.
\end{enumerate}
\end{theorem}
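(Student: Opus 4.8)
The plan is to obtain Theorem~\ref{mainthm1} as the conditional counterpart of the Kramkov--Schachermayer theory (Proposition~\ref{KStheorem}): one runs the classical argument $\mc{F}_\tau$-pathwise, replacing positive reals by elements of $L^0_{++}(\mc{F}_\tau)$, expectations by $E[\,\cdot\,|\,\mc{F}_\tau]$, suprema and infima by $\es$ and $\ei$ over $L^0_{++}(\mc{F}_\tau)$, and --- because $L^0$ is not locally convex and has few compact sets --- weak compactness by the convex compactness of \cite{Zit09c}. Write $\mc{C}=\overline{\mc{X}}^\lambda_\tau$ and $\mc{D}=\overline{\mc{Y}}^\lambda_\tau$. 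The first step is to record the structural facts that feed the abstract machinery of Section~\ref{sec:condual}: $\mc{C}$ and $\mc{D}$ are convex, solid, $L^0$-closed subsets of $L^0_+$, each containing a strictly positive element ($1\in\mc{C}$, $Z^\lambda_T/Z^\lambda_\tau\in\mc{D}$); one has $E[fg\,|\,\mc{F}_\tau]\le 1$ for $f\in\mc{C}$, $g\in\mc{D}$ since $XY$ is a supermartingale on $[\tau,T]$; and a conditional bipolar theorem then identifies $\mc{C}$ and $\mc{D}$ as mutual polars, so both are $L^0$-bounded and hence convexly compact. This convex compactness plays throughout the role that weak compactness of the dual domain plays classically.

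Parts (1) and (3) then follow from the abstract conditional duality. For conjugacy, the inequality $\es_\xi\big(u^\lambda_\tau(\xi)-\xi\eta\big)\le v^\lambda_\tau(\eta)$ is immediate from the Fenchel inequality $U(\xi f)\le V(\eta g)+\xi\eta\,fg$ after taking $E[\,\cdot\,|\,\mc{F}_\tau]$ and using $E[fg\,|\,\mc{F}_\tau]\le 1$, then $\es_f$ and $\ei_g$; the reverse inequality is obtained by interchanging $\ei_{g\in\mc{D}}$ with an ess sup, which is exactly where the conditional minimax theorem of the appendix is invoked, using that $g\mapsto E[V(\eta g)\,|\,\mc{F}_\tau]$ is convex and $L^0$-lower semicontinuous on the convexly compact set $\mc{D}$ and that $V(\eta g)=\es_{\xi}\big(U(\xi)-\xi\eta g\big)$ pointwise (take $f\equiv 1$). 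The second identity is symmetric. For finiteness, the asymptotic-elasticity hypothesis is essential: from $u^\lambda_\tau(\xi_0)<\infty$ one passes to all $\xi$ by partitioning $\Omega$ into the $\mc{F}_\tau$-sets $\{n-1\le\xi/\xi_0<n\}$ and using the growth estimate on $U$ implied by $AE(U)<1$ (Lemma~6.3 of \cite{MR1722287}) to dominate $U(\xi X)$ on each piece by an affine function of $U(\xi_0 X)$; then $v^\lambda_\tau(\eta)\le E[V(\eta Z^\lambda_T/Z^\lambda_\tau)\,|\,\mc{F}_\tau]$ is finite, the bound $V(\mu y)<\mu^{-\alpha}V(y)$ controlling $V^+$ on the $\mc{F}_\tau$-sets where $\eta$ is bounded away from $0$ and $\infty$, while $V^-(\eta\,\overline{\mc{Y}}^\lambda_\tau)$ is uniformly integrable by Lemma~\ref{vui}.

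For part (4), since $\mc{C}$ and $\mc{D}$ are convexly compact and the functionals $f\mapsto -E[U(\xi f)\,|\,\mc{F}_\tau]$ and $g\mapsto E[V(\eta g)\,|\,\mc{F}_\tau]$ are convex and $L^0$-lower semicontinuous --- the negative parts being controlled by the uniform integrability in Lemmas~\ref{uminus} and \ref{vui} --- each attains its extremum on the respective set; strict concavity of $U$ and strict convexity of $V$ give uniqueness of $\widehat{X}^{\xi,\lambda}$ and $\widehat{Y}^{\eta,\lambda}$. For the dual relation with $\eta=(u^\lambda_\tau)'(\xi)$, a variational/first-order argument shows that $\widehat{X}^{\xi,\lambda}\widehat{Y}^{\eta,\lambda}$ is a true martingale on $[\tau,T]$, so $E[\widehat{X}_T\widehat{Y}_T\,|\,\mc{F}_\tau]=1$; the Fenchel inequality $U(\xi\widehat{X}_T)\le V(\eta\widehat{Y}_T)+\xi\eta\,\widehat{X}_T\widehat{Y}_T$ then integrates conditionally to an equality, forcing pointwise equality a.s., and strict convexity of $V$ yields $\eta\widehat{Y}_T=U'(\xi\widehat{X}_T)$.

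Finally, part (2). Convexity of $v^\lambda_\tau$ and concavity of $u^\lambda_\tau$ as maps $L^0\to L^0$ furnish one-sided derivatives, which the uniqueness of the optimizers from part (4) forces to coincide, so both are differentiable with $(u^\lambda_\tau)'(\xi)=\eta$ as above. The genuine difficulty --- and the step I expect to be the main obstacle --- is the continuity of the derivative maps $\xi\mapsto(v^\lambda_\tau)'(\xi)$ and $\xi\mapsto(u^\lambda_\tau)'(\xi)$ in the $L^0$ topology: the classical route, whereby pointwise convergence of convex functions forces locally uniform convergence and hence convergence of derivatives, relies on finite $\epsilon$-nets for compact sets, which are unavailable here, and mere convexity does not suffice. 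This is handled in Section~\ref{sec:contode} by replacing ordinary $\epsilon$-nets with two one-sided generalized nets adapted to the monotonicity and concavity of the value problem, which upgrade $L^0$-pointwise convergence of the value functions (indexed over the convexly compact domains) to convergence of their derivatives. A pervasive minor technicality throughout is that every $\es$ and $\ei$ must be shown attained or sequentially approximable and to commute with conditioning and with the convexly compact structure; this bookkeeping is precisely the $L^0$-convex analysis of Section~\ref{sec:condual}, and the pieces are assembled in Section~\ref{sec:fdtp}.
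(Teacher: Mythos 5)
The overall architecture matches the paper's: extend Kramkov--Schachermayer to the conditional setting through an abstract duality between polar $\mc{G}$-convex solid sets, invoke \v{Z}itkovi\'{c}'s result for the structural facts about the financial sets, and conditionalize the real-variable arguments. That much of your plan is on target.

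There is, however, a genuine gap in your treatment of part (2). You identify the continuity of the derivative maps $\eta\mapsto(v^\lambda_\tau)'(\eta)$ and $\xi\mapsto(u^\lambda_\tau)'(\xi)$ as the hard step, to be handled by the generalized $\epsilon$-net machinery of Section~\ref{sec:contode}. That machinery is \emph{not} used here --- it serves Theorem~\ref{mainthm2}, where $\lambda$ varies and one must upgrade pointwise convergence of a \emph{sequence} of value functions $v^{\lambda_n}_\tau$ to convergence of their derivatives. In Theorem~\ref{mainthm1} the market $\lambda$ is fixed; what is at stake is continuity of the derivative of a \emph{single} $\mc{G}$-convex map. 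The paper gets this directly from the explicit formula $\tilde{v}'(\eta)=-\eta^{-1}E[\hat{h}(\eta)I(\hat{h}(\eta))\,|\,\mc{G}]$ (Lemmas~\ref{vdiffer} and~\ref{vdiffer2}), combined with $L^0$-continuity of the optimizer map $\eta\mapsto\hat{h}(\eta)$ (Lemma~\ref{derivcont}) and of $\eta\mapsto E[V'(\hat{h}(\eta))\hat{h}(\eta)\,|\,\mc{G}]$ (Lemma~\ref{dderivcont}), all conditional transcriptions of Lemmas~3.8--3.10 of \cite{MR1722287}; no nets, no convex compactness of the argument set, and no uniform-over-$K$ convergence enter. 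The classical fact you cite --- pointwise convergence of convex functions implies convergence of derivatives --- is about a \emph{family} of convex functions and is the wrong tool for the continuity of a single derivative map; its conditional avatar is what Theorem~\ref{mainthm2} needs, not Theorem~\ref{mainthm1}.

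A secondary deviation: in the minimax step you propose to place compactness on the dual side, asserting $\mc{D}$ is convexly compact. The paper's Appendix~A instead places $\sigma(L^\infty,L^1)$-compactness on the primal side (the positive $L^\infty$-balls $\mc{B}_n$), applies the conditional minimax with $Y=\mc{D}(\eta)$, and then lets $n\to\infty$ via a Koml\'os argument showing $\tilde{v}^n(\eta)\uparrow\tilde{v}(\eta)$. Your variant would require a different minimax theorem than the one the paper establishes. Also, $\overline{\mc{X}}^\lambda_\tau(\xi)$ and $\overline{\mc{Y}}^\lambda_\tau(\eta)$ need not be solid; the paper passes to their solid hulls before invoking \v{Z}itkovi\'{c}'s bipolar theorem, a step your sketch omits.
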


Section~\ref{sec:condual} is devoted to the proof of this duality result, parts of which appear in \cite{MR2014244}.

We use the first part of Theorem \ref{mainthm2} to prove the corollary below.

\begin{corollary}\label{intcont}  Let $\Lambda'$ be a $V$-relatively compact subset of $\Lambda_m$, with an appropriate topology on $\Lambda$, and let $\tau$ be a stopping time.  Then the mapping
\[\Lambda' \times \mathbb{R}_{++} \ni (\lambda, x) \mapsto \hat{X}^{x,\lambda}_\tau \in L^0_{++}(\mc{F}_\tau)\]
is continuous. 
\end{corollary}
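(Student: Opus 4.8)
The plan is to exploit the martingale-type property of optimal wealth processes to pass from terminal-time convergence (which we will get from Theorem \ref{mainthm2}) to convergence at the stopping time $\tau$. Fix a sequence $(\lambda_n, x_n) \to (\lambda, x)$ in $\Lambda' \times \mathbb{R}_{++}$. The starting observation is that the optimal wealth process $\hat{X}^{x,\lambda}$ has a natural dynamic-programming / martingale structure: conditioning on $\mc{F}_\tau$, the restriction of $\hat{X}^{x,\lambda}$ to $[\tau,T]$ must be optimal for the conditional problem started from $\hat{X}^{x,\lambda}_\tau$. More precisely, writing $\xi_n \triangleq \hat{X}^{x_n,\lambda_n}_\tau \in L^0_{++}(\mc{F}_\tau)$ and $\xi \triangleq \hat{X}^{x,\lambda}_\tau$, one has $\hat{X}^{x_n,\lambda_n}_T = \xi_n \, \hat{X}^{\xi_n,\lambda_n}_T$ where on the right $\hat{X}^{\xi_n,\lambda_n}$ denotes the optimizer of the conditional problem $u^{\lambda_n}_\tau(\xi_n)$ over $\overline{\mc{X}}^{\lambda_n}_\tau$ (and similarly without subscripts $n$); this is the $\mc{F}_\tau$-conditional version of the fact that $\hat{X}^{x,\lambda}_T = x\,\hat{X}^{1,\lambda}_T$, and it follows from uniqueness of the optimizer together with the fact that $\overline{\mc{X}}^\lambda = \overline{\mc{X}}^\lambda_\tau \cdot \overline{\mc{X}}^\lambda$ in the appropriate sense. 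This identity is the bridge: the left-hand side converges in probability to $\hat{X}^{x,\lambda}_T$ by Proposition \ref{LZmain} (at the level of terminal wealths), while the second factor on the right, $\hat{X}^{\xi_n,\lambda_n}_T$, will be controlled by the second conclusion of Theorem \ref{mainthm2}.

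Concretely, first I would establish that $(\xi_n)$ is \emph{tight} in $L^0_{++}(\mc{F}_\tau)$, i.e.\ tight in $(0,\infty)$ away from both $0$ and $\infty$: an upper bound follows because $Z^{\lambda_n}\hat{X}^{x_n,\lambda_n}$ is a supermartingale, giving $E[\hat{X}^{x_n,\lambda_n}_\tau \mid \mc{F}_0]$-type bounds uniformly in $n$ since $x_n \to x$; a lower bound away from $0$ is obtained by a symmetric argument using that the agent can always hold cash, so $\hat{X}^{x_n,\lambda_n}_\tau$ cannot collapse on a set of positive probability without violating optimality, or alternatively by observing that $U'(\xi_n \hat{X}^{\xi_n,\lambda_n}_T)$ relates back through the dual relation of Theorem \ref{mainthm1}(4) to quantities that are uniformly integrable by Lemma \ref{vui}. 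Once $(\xi_n)$ is tight, along any subsequence we may extract a further subsequence with $\xi_{n_k} \to \xi_\infty$ in probability for some $\xi_\infty \in L^0_{++}(\mc{F}_\tau)$. Applying the joint continuity of $(\lambda,\xi)\mapsto \hat{X}^{\xi,\lambda}_T$ from Theorem \ref{mainthm2} along this subsequence gives $\hat{X}^{\xi_{n_k},\lambda_{n_k}}_T \to \hat{X}^{\xi_\infty,\lambda}_T$ in probability, hence $\xi_{n_k}\hat{X}^{\xi_{n_k},\lambda_{n_k}}_T \to \xi_\infty \hat{X}^{\xi_\infty,\lambda}_T$ in probability. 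But the left side equals $\hat{X}^{x_{n_k},\lambda_{n_k}}_T$, which converges to $\hat{X}^{x,\lambda}_T = \xi \hat{X}^{\xi,\lambda}_T$ by Proposition \ref{LZmain}. Therefore $\xi_\infty \hat{X}^{\xi_\infty,\lambda}_T = \xi\,\hat{X}^{\xi,\lambda}_T$ a.s.

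To finish, I would invoke uniqueness to conclude $\xi_\infty = \xi$. This is where one uses that $\eta \mapsto \hat{X}^{\eta,\lambda}_T$ is, on each $\mc{F}_\tau$-atom (or more precisely as a map $L^0_{++}(\mc{F}_\tau)\to L^0_{++}$), injective: if $\eta \hat{X}^{\eta,\lambda}_T = \eta' \hat{X}^{\eta',\lambda}_T$ then, applying $U'$ and the dual relation $U'(\eta \hat{X}^{\eta,\lambda}_T) = (u^\lambda_\tau)'(\eta)\,\hat{Y}^{(u^\lambda_\tau)'(\eta),\lambda}_T$ of Theorem \ref{mainthm1}(4), and using strict monotonicity of $U'$ together with the positivity and normalization $\hat{Y}_\tau = 1$ of the deflators, one forces $\eta = \eta'$; equivalently, this is just strict concavity of $U$ giving a unique optimizer for each initial datum. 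Since every subsequence of $(\xi_n)$ has a further subsequence converging in probability to the same limit $\xi$, the whole sequence converges: $\hat{X}^{x_n,\lambda_n}_\tau \to \hat{X}^{x,\lambda}_\tau$ in probability, which is the claim.

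The main obstacle I anticipate is the tightness of $(\xi_n)$ in $L^0_{++}(\mc{F}_\tau)$ — in particular the lower bound keeping $\xi_n$ away from $0$ — since $L^0_{++}$ is not locally convex and one cannot simply invoke compactness; here convex compactness and the uniform integrability provided by Lemmas \ref{vui} and \ref{uminus} (transported to time $\tau$ via conditioning) will have to do the work. A secondary technical point is making the dynamic-programming identity $\hat{X}^{x_n,\lambda_n}_T = \xi_n\,\hat{X}^{\xi_n,\lambda_n}_T$ fully rigorous in the conditional setting, which relies on the splitting $\mc{X}^\lambda = \mc{X}^\lambda_{[0,\tau]}\cdot\mc{X}^\lambda_\tau$ and uniqueness of optimizers; both ingredients are available from the duality theory of Theorem \ref{mainthm1} and the structural description of $\mc{Y}^\lambda$ recalled in the introduction.
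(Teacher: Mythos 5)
Your proposal takes a genuinely different route from the paper's. The paper proves Corollary~\ref{intcont} in three short steps: Lemma~\ref{valmart} shows that $t\mapsto u_t^\lambda\bigl(\hat{X}_t^{x,\lambda}\bigr)$ is a martingale; Lemma~\ref{lem:ucp-u} combines this with the $L^1$-convergence of terminal utilities (Lemma~\ref{terminalconv}) and Doob's inequality to obtain $u_\tau^{\lambda_n}\bigl(\hat{X}_\tau^{x_n,n}\bigr)\to u_\tau^\lambda\bigl(\hat{X}_\tau^{x,\lambda}\bigr)$ in probability; and then one inverts $u_\tau^\lambda$, whose continuous invertibility is secured by the monotonicity argument already used in Proposition~\ref{contmaps}. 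You instead propose a subsequence-compactness-plus-uniqueness argument built around the dynamic programming identity $\hat{X}_T^{x_n,\lambda_n}=\xi_n\,\hat{X}_T^{\xi_n,\lambda_n}$, where $\xi_n=\hat{X}_\tau^{x_n,\lambda_n}$, followed by Theorem~\ref{mainthm2} and Proposition~\ref{LZmain}. That is a legitimate alternative scheme, and you correctly identify the dynamic programming structure as the bridge; but as written it has a real gap.

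The gap is exactly where you flag it: tightness of $(\xi_n)$ in $L^0_{++}(\mc{F}_\tau)$, and in particular a bound keeping $\xi_n$ away from $0$. Your suggestion for the upper bound (the supermartingale property of $Z^{\lambda_n}\hat{X}^{x_n,\lambda_n}$) only controls the product $Z^{\lambda_n}_\tau\hat{X}^{x_n,\lambda_n}_\tau$ in $L^1$; to conclude anything about $\hat{X}^{x_n,\lambda_n}_\tau$ alone you would also need $Z^{\lambda_n}_\tau$ to be bounded away from $0$ in probability uniformly in $n$, which is not provided by the appropriate-topology hypothesis (which speaks only of $Z^{\lambda_n}_T$). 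The lower bound is worse: ``the agent can always hold cash'' does not prevent the optimizer from dipping low on a set of positive probability, and the alternative you sketch via uniform integrability of $U'(\xi_n\hat{X}^{\xi_n,\lambda_n}_T)$ needs a uniform-in-$n$ blow-up of $u_\tau^{\lambda_n}(\cdot)$ near zero to convert into a bound on $\xi_n$, which is precisely the kind of equicontinuity that the paper's Lemma~\ref{valmart} plus Lemma~\ref{lem:ucp-u} machinery was designed to deliver. In other words, the most natural way to close your gap is to prove the martingale property of $u_t^\lambda(\hat{X}_t^{x,\lambda})$ and the $L^1$/ucp convergence of these processes — at which point you have the paper's proof and the compactness argument is redundant. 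A secondary issue: your injectivity step invokes ``normalization $\hat{Y}_\tau=1$'' for the dual optimizer, but $\mc{Y}^\lambda_\tau$ contains deflators with nonincreasing components $D$, so $E[\hat{Y}_T^{\eta,\lambda}\mid\mc{F}_\tau]$ need not equal $1$; injectivity is true but is more cleanly obtained from the strict monotonicity of $u_\tau^\lambda$, which is again the paper's route.
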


\section{Convex Duality}\label{sec:condual}

Let $\mc{G}$ be an arbitrary sub-sigma algebra of $\mc{F}$, which in applications will have the form $\mc{F}_\tau$, for $\tau$ a stopping time.  In this section, we prove a duality relationship between abstract versions of $u^\lambda_\tau$ and $v^\lambda_\tau$, employing the Minimax Theorem of Appendix A.  Afterwards, we show that this abstract framework encompasses the particular case we are interested in.  This section is based on Sections $3$ and $4$ of \cite{MR1722287}, as well as parts of \cite{MR1451876}.

We state here $\mc{G}$-measurable analogs of convexity for sets and functions.  The former concept has been defined in, i.e. \cite{MR1883202}.

\begin{definition} A set $K \subset L^0$ will be called $\mc{G}$-convex if, for $x,y \in K$ and $g \in m\mc{G}$ such that $0 \leq g \leq 1$, $gx + (1-g)y \in K$ also.
\end{definition}

\begin{definition} Let $f:L^0 \ra L^0$.  We say that $f$ is $\mc{G}$-convex if for any $\mc{G}$-measurable random variable $g$ such that $0 \leq g \leq 1$, and $x_1,x_2 \in L^0$, we have $f(gx_1 + (1-g)x_2) \leq g f(x_1) + (1-g) f(x_2)$.  We say that $f$ is strictly $\mc{G}$-convex if the inequality above is strict on some nonneglible set provided that $g$ takes values other than $1$ and $0$.
\end{definition}

The definition of a $\mc{G}$-concave and strictly $\mc{G}$-concave function are defined as above, with the inequalities reversed.

\subsection{Some properties of $L^0 \ra L^0$ maps}

We collect here some mathematical results which generalize classical results from convex analysis.

\begin{lemma}\label{bidual}[Biduality]  Suppose that $\tilde{u}:L^0_{++}(\mc{G}) \ra L^0(\mc{G})$ is $\mc{G}$-concave, and that $\tilde{v}$ satisfies $\tilde{v}(\eta) = \underset{\xi \in L^0_{++}(\mc{G})}{\es} (\tilde{u}(\xi) - \xi \eta)$, for $\eta \in L^0_{++}(\mc{G})$.  Then 
\[\tilde{u}(\xi) = \underset{\eta \in L^0_{++}(\mc{G})}{\ei} (\tilde{v}(\eta) + \xi \eta).\]
\end{lemma}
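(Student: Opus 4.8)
The plan is to prove the two inequalities separately. The inequality ``$\le$'' is immediate: for all $\xi,\eta \in L^0_{++}(\mc{G})$ the definition of the essential supremum gives $\tilde u(\xi) - \xi\eta \le \tilde v(\eta)$, i.e. $\tilde u(\xi) \le \tilde v(\eta) + \xi\eta$, and taking the essential infimum over $\eta \in L^0_{++}(\mc{G})$ yields $\tilde u(\xi) \le \underset{\eta \in L^0_{++}(\mc{G})}{\ei}(\tilde v(\eta) + \xi\eta)$.

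For the reverse inequality I would transcribe into the conditional setting the classical fact that a finite concave function on an open interval admits at each interior point a supporting affine minorant from above, with slope equal to a one-sided derivative. Fix $\xi_0 \in L^0_{++}(\mc{G})$ and put, for $n \ge 1$,
\[
D_n := \tfrac{n}{\xi_0}\bigl(\tilde u(\xi_0(1+\tfrac1n)) - \tilde u(\xi_0)\bigr) \in L^0(\mc{G}),
\]
the conditional right difference quotient of $\tilde u$ at $\xi_0$ with step $\xi_0/n$. Using $\mc{G}$-concavity with $\mc{G}$-measurable interpolation weights one checks, exactly as in the scalar case, that $(D_n)$ is nondecreasing in $n$ and bounded above by the left difference quotient $\tfrac{2}{\xi_0}(\tilde u(\xi_0) - \tilde u(\xi_0/2))$, so $\eta_0 := \lim_n D_n$ exists in $L^0(\mc{G})$. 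The conditional three-slope inequality then gives that $\eta_0$ is a conditional supergradient of $\tilde u$ at $\xi_0$,
\[
\tilde u(\xi) \le \tilde u(\xi_0) + \eta_0(\xi - \xi_0) \quad \text{for all } \xi \in L^0_{++}(\mc{G}),
\]
where one argues separately on the $\mc{G}$-sets $\{\xi \ge \xi_0\}$ and $\{\xi < \xi_0\}$ and glues. Feeding this bound into the definition of $\tilde v$ gives $\tilde v(\eta_0) = \underset{\xi \in L^0_{++}(\mc{G})}{\es}(\tilde u(\xi) - \xi\eta_0) \le \tilde u(\xi_0) - \xi_0\eta_0$, whence $\underset{\eta \in L^0_{++}(\mc{G})}{\ei}(\tilde v(\eta) + \xi_0\eta) \le \tilde v(\eta_0) + \xi_0\eta_0 \le \tilde u(\xi_0)$, as desired.

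Two ingredients underpin this argument; the second is, I expect, the genuine obstacle. First, a locality property---$\xi = \xi'$ on $A \in \mc{G}$ implies $\tilde u(\xi) = \tilde u(\xi')$ on $A$---which I would extract from $\mc{G}$-concavity by testing with $g = \mathbf{1}_A$; this also makes the families $\{\tilde u(\xi) - \xi\eta\}_\xi$ and $\{\tilde v(\eta) + \xi_0\eta\}_\eta$ upward, resp. downward, directed, so the essential extrema are realized along countable monotone sequences and the pointwise manipulations above are justified. Second, one must ensure the supergradient $\eta_0$ can be taken strictly positive, so that it is an admissible dual variable. Here I would invoke that $\tilde u$ is nondecreasing---the situation in the intended application $\tilde u = u^\lambda_\tau$---which forces $\eta_0 \ge 0$; on the $\mc{G}$-set $B = \{\eta_0 = 0\}$ the supergradient inequality and monotonicity pin $\tilde u$ down to its essential supremum ``above'' $\xi_0$, so that $\tilde v(\tfrac1n) + \tfrac{\xi_0}{n} \downarrow \tilde u(\xi_0)$ on $B$, and gluing $\eta_0$ on $B^c$ with the sequence $\tfrac1n$ on $B$ through a $\mc{G}$-measurable selection finishes the proof. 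Keeping track of these measurable selections and of the exceptional set $B$ is where the real work lies; the remainder is a faithful copy of the finite-dimensional Fenchel--Moreau argument.
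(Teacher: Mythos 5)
Your argument is at heart the same as the paper's: a conditional Fenchel--Moreau biconjugation via affine majorants. The paper's proof is terse --- it identifies the set $F^*$ of affine majorants of $\tilde u$ with strictly positive ``slopes'' as the epigraph of $\tilde v$, and then simply asserts that $\mc{G}$-concavity makes $\tilde u$ the essential infimum of those affine functions. What you supply is the justification for that assertion: the construction of a conditional supergradient $\eta_0$ via $\mc{G}$-measurable difference quotients and the three-slope inequality is exactly the content the paper leaves implicit, and the concluding computation $\tilde v(\eta_0)+\xi_0\eta_0 \le \tilde u(\xi_0)$ is the same. Where you genuinely go beyond the paper is in noticing that the supergradient must lie in $L^0_{++}(\mc{G})$, so that $\eta_0=0$ on a nonnegligible set is a real obstruction --- and indeed, without an extra hypothesis the lemma as stated is false. (Take $\tilde u(\xi)=-\xi$: then $\tilde v(\eta)=\underset{\xi}{\es}\,\bigl(-\xi(1+\eta)\bigr)=0$ for every $\eta\in L^0_{++}(\mc{G})$, whereas $\underset{\eta}{\ei}\,\xi\eta=0\neq -\xi$.) Your fix --- that $\tilde u$ is nondecreasing, forcing $\eta_0\ge 0$, together with the $\eta\downarrow 0$ squeeze on $B=\{\eta_0=0\}$ and the $\mc{G}$-measurable gluing --- is correct, but it leans on a hypothesis the lemma does not state; it holds in the application $\tilde u=u^\lambda_\tau$, and the paper's proof uses it tacitly as well when it restricts $F^*$ to strictly positive slopes. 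So your version is the more careful one, and you are right that the lemma should carry a monotonicity (or positive-supergradient) assumption for the statement to be literally true.
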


\begin{proof}
Consider the set $F^*$ of pairs $(\eta,\mu) \in L^0_{++}(\mc{G}) \times L^0(\mc{G})$ such that the affine function $h(\xi) = \xi \eta + \mu$ majorizes $\tilde{u}$.  We have $h(\xi) \geq \tilde{u}(\xi)$ if and only if $\mu \geq \underset{\xi \in L^0_{++}(\mc{G})}{\es} \left ( \tilde{u}(\xi) - \eta \xi \right ) = \tilde{v}(\eta)$.  Thus, $F^*$ is seen to be the epigraph of $\tilde{v}$.  Since $\tilde{u}$ is $\mc{G}$-concave, it is the pointwise essential infimum of the affine functions $h(\xi) = \xi \eta + \mu$, for $(\eta,\mu) \in F^*$.  Thus, $\tilde{u}(\xi) = \underset{\eta \in L^0_{++}(\mc{G})}{\ei} \left (\tilde{v}(\eta) + \eta \xi \right )$, and both parts of the conjugacy relationship are established.
\end{proof}

For $z \in L^0_{++}(\mc{G})$, we say that $z^* \in -L^0_{++}(\mc{G})$ is a subdifferential of the $\mc{G}$-convex function $\tilde{v}$ at $z$ if $\tilde{v}(\eta) \geq \tilde{v}(z) + z^*(\eta - z)$ for all $\eta \in L^0_{++}(\mc{G})$, and we denote this by $z^* \in \partial\tilde{v}(z)$.
The superdifferential of $\tilde{u}$ is defined analagously, with the inequality above reversed.  As in the classical real-valued case, the bidual conjugacy between $\tilde{u}$ and $\tilde{v}$ and some algebraic manipulation implies that $z \in \partial\tilde{v}(z^*)$ if and only if $z^* \in \partial (-\tilde{u})(z)$.  

\begin{definition} We say that a $\mc{G}$-convex function $\tilde{v}$ is \textbf{differentiable} if its subdifferential contains a single element at each point in its domain.  \end{definition}

\begin{lemma}\label{strictdiff}  Suppose that $\tilde{v}$ is strictly $\mc{G}$-convex.  Then its conjugate $\tilde{u}$ is differentiable.
\end{lemma}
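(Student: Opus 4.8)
The plan is to transport to the $L^0 \ra L^0$ setting the classical fact that strict convexity of a conjugate function forces the original function to be differentiable, via the bijective matching of subgradients of one with subgradients of the other. The only ingredient beyond Lemma~\ref{bidual} is the correspondence recorded just above between $\partial\tilde{v}$ and the superdifferential of $\tilde{u}$ (equivalently $\partial(-\tilde{u})$): for $\xi \in L^0_{++}(\mc{G})$ and $\eta \in L^0_{++}(\mc{G})$, one has $\eta$ in the superdifferential of $\tilde{u}$ at $\xi$ if and only if $-\xi \in \partial\tilde{v}(\eta)$. Thus it suffices to show that $\partial\tilde{v}$ is injective in its base point, in the sense that if some $p \in -L^0_{++}(\mc{G})$ lies in $\partial\tilde{v}(\eta_1) \cap \partial\tilde{v}(\eta_2)$, then $\eta_1 = \eta_2$.

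First I would fix $\xi_0 \in L^0_{++}(\mc{G})$, take two elements $\eta_1, \eta_2$ of the superdifferential of $\tilde{u}$ at $\xi_0$, and set $p = -\xi_0$, so that by the correspondence $p \in \partial\tilde{v}(\eta_1) \cap \partial\tilde{v}(\eta_2)$. Evaluating the subgradient inequality of $\tilde{v}$ at $\eta_1$ at the point $\eta_2$, and the one at $\eta_2$ at the point $\eta_1$, and adding, the linear terms cancel; hence both inequalities must be equalities, and in particular
\[ \tilde{v}(\eta_2) = \tilde{v}(\eta_1) + p(\eta_2 - \eta_1) \quad \text{a.s.} \]
This is just the $\mc{G}$-measurable, almost sure version of the standard ``tight subgradients'' computation.

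Next I would test strict convexity at the midpoint. Put $w = \tfrac12\eta_1 + \tfrac12\eta_2 \in L^0_{++}(\mc{G})$. On one hand, $\mc{G}$-convexity of $\tilde{v}$ with the constant weight $g \equiv \tfrac12$ gives $\tilde{v}(w) \leq \tfrac12\tilde{v}(\eta_1) + \tfrac12\tilde{v}(\eta_2)$. On the other hand, the subgradient inequality of $\tilde{v}$ at $\eta_1$, evaluated at $w$, together with the displayed equality, gives $\tilde{v}(w) \geq \tilde{v}(\eta_1) + \tfrac12 p(\eta_2 - \eta_1) = \tfrac12\tilde{v}(\eta_1) + \tfrac12\tilde{v}(\eta_2)$. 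Hence $\tilde{v}(w) = \tfrac12\tilde{v}(\eta_1) + \tfrac12\tilde{v}(\eta_2)$ almost surely. But strict $\mc{G}$-convexity, applied with $g \equiv \tfrac12$ (which certainly is not $\{0,1\}$-valued) and $x_1 = \eta_1$, $x_2 = \eta_2$, forces this equality to fail on any non-negligible subset of $\{\eta_1 \neq \eta_2\}$; therefore that set is null and $\eta_1 = \eta_2$. Since $\xi_0$ was arbitrary, the superdifferential of $\tilde{u}$ is single-valued at each point, and, being nonempty (as in the classical case, via the infimum in Lemma~\ref{bidual} being attained), $\tilde{u}$ is differentiable.

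The step I expect to be most delicate is the conditional bookkeeping: reconciling the ``strict on a non-negligible set'' clause in the definition of strict $\mc{G}$-convexity with an argument that is otherwise pointwise and almost sure, and confirming that the sub/superdifferential correspondence — a purely real-variable statement in the classical theory — carries over verbatim to maps $L^0 \ra L^0$ (including nonemptiness of superdifferentials, so that ``single-valued'' is the right notion of differentiability). Both points are handled by localizing on $\mc{G}$-measurable sets via indicator functions and invoking $\mc{G}$-convexity with $\mc{G}$-measurable weights, but this is where care is needed; the convex-analytic skeleton above is routine.
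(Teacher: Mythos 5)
Your argument is essentially the paper's: both reduce differentiability of $\tilde{u}$ to injectivity of $\partial\tilde{v}$ via the sub/superdifferential correspondence, and then derive a contradiction with strict $\mc{G}$-convexity from a common subgradient $p\in\partial\tilde{v}(\eta_1)\cap\partial\tilde{v}(\eta_2)$. The paper phrases the contradiction geometrically — the affine function $h(z)=\eta^*z-\tilde{u}(\eta^*)$ is a supporting hyperplane to $\text{epi}\,\tilde{v}$ touching it at both $(\eta_1,\tilde{v}(\eta_1))$ and $(\eta_2,\tilde{v}(\eta_2))$, so $\tilde{v}$ is affine on the segment — while your ``tight subgradients'' plus midpoint computation is exactly the algebraic rendering of that same observation.
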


\begin{proof}  According to the discussion above, the superdifferential of $\tilde{u}$ is $-(\partial \tilde{v})^{-1}$, and this mapping is single-valued if and only if $\tilde{u}$ is differentiable.   Consequently, it suffices to show that $\partial \tilde{v}(\eta_1) \cap \partial \tilde{v}(\eta_2) = \emptyset$ for $\eta_1 \neq \eta_2$ in $L^0_{++}(\mc{G})$.  Suppose that $\eta^* \in \partial \tilde{v}(\eta_1) \cap \partial \tilde{v}(\eta_2)$.  The graph of $h(z) = \eta^*z - \tilde{u}(\eta^*)$ is a supporting hyperplane $H$ to $\text{epi } \tilde{v}$ that contains $(\eta_1,\tilde{v}(\eta_1))$ and $(\eta_2,\tilde{v}(\eta_2))$.  Hence, $\tilde{v}$ cannot be strictly $\mc{G}$-convex along the line segment joining $\eta_1$ and $\eta_2$.  

\end{proof}

\subsection{The Abstract Convex Duality Problem}\label{sec:absv}

Suppose that $\mc{C}$ and $\mc{D}$ are subsets of $L^0_+(\mc{F})$ which are 
\begin{enumerate}

\item $\mc{G}$-convex, solid, and closed in the topology of convergence in probability, and
\item $g \in \mc{C}$ if and only if $E[gh \ | \ \mc{G} ] \leq 1$ for all $h \in \mc{D}$, and $h \in \mc{D}$ if and only if $E [gh \ | \ \mc{G} ] \leq 1$ for all $ g \in \mc{C}$, and
\item The constant function $1$ is in $\mc{C}$.

\end{enumerate}

Note that the second and third conditions imply that $\mc{D}$ is contained in the unit ball of $L^1$.  For $\xi \in L^0_{+}(\mc{G})$, let $\mc{C}(\xi) = \xi\mc{C}$.  Define $\mc{D}(\eta)$ similarly for $\eta \in L^0_{++}(\mc{G})$.

We consider the abstract utility maximization problems 
\[\tilde{u}(\xi) \triangleq \underset{g \in \mc{C}(\xi)}{\es} E [ U(g) \ | \ \mc{G} ], \quad \tilde{v}(\eta) \triangleq \underset{h \in \mc{D}(\eta)}{\ei} E [V(h) \ | \ \mc{G}],\]
for $\xi \in L^0_{+}(\mc{G})$ and $\eta \in L^0_{++}(\mc{G})$.

Throughout this section, assume that $\tilde{u}(\xi) < \infty$ almost surely, for some $\xi$.  We also suppose that $\tilde{v}$ and $\tilde{u}$ satisfy the property of being \textbf{locally defined}.  More precisely, we say that a map $\tilde{v}$ is locally defined if, for $\eta_1,\eta_2 \in L^0_{++}(\mc{G})$ and $A \in \mc{G}$, $\tilde{v}(1_A \eta_1 + 1_{A^c} \eta_2) = 1_A \tilde{v}(\eta_1) + 1_{A^c} \tilde{v}(\eta_2)$.  The property is defined analagously for $\tilde{u}$ with respect to its domain.

 We state here the abstract version of Theorem \ref{mainthm1}

\begin{theorem}\label{abstractthm1}  Assume that the Inada conditions on $U$ hold, that $\tilde{u}(\xi)<\infty$ for some $\xi \in L^0_{++}(\mc{G})$, that the asymptotic elasticity of $U$ is less than $1$, and that $\tilde{u}$,$\tilde{v}$ are locally defined.  Then

\begin{enumerate}
\item $\tilde{u}(\xi)<\infty$ for all $\xi \in L^0_{++}(\mc{G})$ and $\tilde{v}(\eta)<\infty$ for all $\eta \in L^0_{++}(\mc{G})$.
\item $\tilde{u}$ and $\tilde{v}$ are both differentiable, and their derivatives vary continuously when all spaces are endowed with the $L^0$ topology.
\item $\tilde{v}(\eta) = \underset{\xi \in L^0_{++}(\mc{G})}{\es} \tilde{u}(\xi) - \xi \eta$, and $\tilde{u}(\xi) = \underset{\eta \in L^0_{++}(\mc{G})}{\ei} \tilde{v}(\eta) + \eta \xi$, i.e. $\tilde{u}$ and $\tilde{v}$ are conjugate
\item The optimal solution $\hat{Y}(\eta)$ to the conditional dual optimization problem exists, and the optimal solution $\hat{X}(\xi)$ to the conditional primal optimization problem exists.  For $\eta = \tilde{u}'(\xi) \in L^0_{++}(\mc{G})$, we have the dual relation
$\hat{Y}(\eta)_T = U'(\hat{X}(\xi)_T)$.
\end{enumerate}
\end{theorem}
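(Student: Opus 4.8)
The plan is to transport the Kramkov--Schachermayer program of \cite{MR1722287} into the conditional category: real-variable convex analysis is replaced by the $L^0 \ra L^0$ results of the previous subsection, and the classical bipolar/minimax step is replaced by the conditional minimax theorem of the appendix. I would prove the four assertions in the order (1), (4, dual half), (3), (4, primal half), (2), harvesting along the way the first-order information needed at the end. For \emph{finiteness}: since $\mc C$ is solid, $\xi\mc C \subseteq \xi'\mc C$ whenever $\xi \le \xi'$, so $\tilde u(\xi) \le \tilde u(\xi')$ by monotonicity of $U$; localizing on the $\mc G$-sets $\{2^k\xi_0 < \xi \le 2^{k+1}\xi_0\}$ via local-definedness, it suffices to bound $\tilde u(2^m\xi_0)$, which one does inductively from $\tilde u(\xi_0)<\infty$ by applying, $\omega$-wise, the growth estimate $V(\mu y) < \mu^{-\alpha}V(y)$ of Lemma 6.3 of \cite{MR1722287}; the same estimate together with $\mc D$ lying in the unit ball of $L^1$ and the uniform integrability of $V^-(\eta\mc D)$ (the conditional analogue of Lemma~\ref{vui}) yields $\tilde v(\eta)<\infty$ for all $\eta$.

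For the \emph{dual optimizer}: fix $\eta$. The family $\{E[V(\eta h)\mid \mc G] : h \in \mc D\}$ is downward directed, since for $h_1,h_2 \in \mc D$ the set $A=\{E[V(\eta h_1)\mid\mc G] \le E[V(\eta h_2)\mid\mc G]\}$ lies in $\mc G$ and $1_A h_1 + 1_{A^c}h_2 \in \mc D$ by $\mc G$-convexity; so there is a sequence $h_n \in \mc D$ with $E[V(\eta h_n)\mid\mc G]\downarrow \tilde v(\eta)$. As $\mc D$ is convex, solid, closed in probability and $L^1$-bounded, it is convexly compact in the sense of \cite{Zit09c}, so after passing to forward convex combinations $\bar h_n \in \conv(h_n,h_{n+1},\dots)$ we may assume $\bar h_n \ra \hat h$ a.s.\ with $\hat h \in \mc D$. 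Conditional Fatou on the uniformly integrable negative parts and $\mc G$-convexity of $h\mapsto E[V(\eta h)\mid\mc G]$ give $E[V(\eta\hat h)\mid\mc G] \le \liminf_n E[V(\eta\bar h_n)\mid\mc G] \le \tilde v(\eta)$, so the minimizer $\hat Y(\eta) := \eta\hat h$ exists; uniqueness follows from strict convexity of $V$.

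For \emph{conjugacy}: one inequality is elementary, since for $g\in\mc C$, $h\in\mc D$, $\eta\in L^0_{++}(\mc G)$ the Fenchel inequality $U(\xi g) \le V(\eta h) + \xi\eta\,gh$ and $E[gh\mid\mc G]\le 1$ give $\tilde u(\xi)\le\tilde v(\eta)+\xi\eta$, whence $\tilde u(\xi)\le\ei_\eta(\tilde v(\eta)+\xi\eta)$. For the reverse inequality I would apply the conditional minimax theorem of the appendix to the Lagrangian
\[
L\big(g,(h,\eta)\big) \;=\; E\big[U(\xi g)\mid\mc G\big] \;-\; \xi\eta\, E\big[gh\mid\mc G\big] \;+\; \xi\eta, \qquad g\in L^0_+(\mc F),\ h\in\mc D,\ \eta\in L^0_{++}(\mc G):
\]
pointwise maximization in $g$ gives $\es_g L = E[V(\eta h)\mid\mc G]+\xi\eta$, hence $\ei_{(h,\eta)}\es_g L = \ei_\eta(\tilde v(\eta)+\xi\eta)$, while minimizing in $(h,\eta)$ forces $g\in\mc C$ through condition~(2) and yields $\es_g\ei_{(h,\eta)}L=\tilde u(\xi)$. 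Because neither domain is a priori convexly compact, the minimax theorem is applied after truncating ($g$ bounded, $\eta$ bounded away from $0$ and $\infty$), removing the constraints in the limit exactly as in \cite{MR1722287}; this gives $\tilde u(\xi)=\ei_\eta(\tilde v(\eta)+\xi\eta)$, and then the biduality Lemma~\ref{bidual} supplies $\tilde v(\eta)=\es_\xi(\tilde u(\xi)-\xi\eta)$.

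For \emph{the primal optimizer and differentiability}: the saddle point $(\hat g,(\hat h,\hat\eta))$ produced by the minimax step satisfies, by pointwise optimization of $a\mapsto U(a)-\hat\eta\hat h\,a$, the first-order relation $U'(\xi\hat g)=\hat\eta\hat h$ and $E[\hat g\hat h\mid\mc G]=1$; thus $\hat X(\xi)$ with terminal value $\xi\hat g = I(\hat\eta\hat h)$, $I=(U')^{-1}$, lies in $\mc C(\xi)$, attains $\tilde u(\xi)$, and satisfies $\hat Y(\hat\eta)_T=U'(\hat X(\xi)_T)$ with $\hat\eta=\tilde u'(\xi)$. Strict convexity of $V$ and uniqueness of the dual minimizer make $\tilde v$ strictly $\mc G$-convex, so Lemma~\ref{strictdiff} gives differentiability of $\tilde u$, and the symmetric argument with the strictly concave $U$ gives differentiability of $\tilde v$; the financial Theorem~\ref{mainthm1} then follows by checking that $\mc C=\overline{\mc X}^\lambda_\tau$ and the normalized $\mc D=\overline{\mc Y}^\lambda_\tau$ meet conditions (1)--(3) and are locally defined. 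The genuinely hard point, and the one I expect to be the main obstacle, is the remaining clause of (2): continuity of the derivative maps $\xi\mapsto\tilde u'(\xi)$ and $\eta\mapsto\tilde v'(\eta)$ in the $L^0$ topology. The classical route --- local uniform convergence of convex functions, via equicontinuity and finite $\epsilon$-nets --- is unavailable because $L^0$ is not locally convex and genuinely compact sets are scarce; instead this requires the $L^0\ra L^0$ convex-analysis developed in Sections~\ref{sec:dc}--\ref{sec:contode}, in particular the two kinds of generalized $\epsilon$-nets on convexly compact sets and the extra structure of the dual value problem.
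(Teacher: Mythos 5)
Your overall structure --- conditionalize the Kramkov--Schachermayer program, use the appendix minimax theorem for the conjugacy, strict convexity for existence/uniqueness of optimizers, and Lemma~\ref{strictdiff} to get differentiability of one function from strict $\mc{G}$-convexity of the other --- is the right one, and matches the paper's route for Parts (1), (3), and (4). Two remarks on the details. First, your Lagrangian takes the essential infimum over the pair $(h,\eta)$ simultaneously, but the appendix minimax theorem (Proposition~\ref{sec:cmmt}) is tailored to the specific form $K(x,y)=E[U(x)-xy\mid\mc G]$ with $x$ ranging over a $\sigma(L^\infty,L^1)$-compact $\mc G$-convex set and $y$ over a bounded closed $\mc G$-convex subset of $L^1_+$. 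It does not directly cover a joint minimization over $(h,\eta)$, because the set $\{\eta h : h\in\mc D,\ \eta\}$ is not $L^1$-bounded and the extra $\mc G$-measurable term $\xi\eta$ has to be handled. The paper instead fixes $\eta$, applies the minimax to $\es_{g\in\mc B_n}\ei_{h\in\mc D(\eta)}E[U(g)-gh\mid\mc G]$, passes to the limit in $n$, and treats the essential supremum over $\xi$ (equivalently, the infimum over $\eta$) as a separate, elementary step via Lemma~\ref{bidual}. Your sketch is repairable by factoring the infimum this way, but as written it does not reduce to the form the appendix proves.

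The substantive gap is your closing paragraph. You identify the continuity in $\eta$ of $\tilde v'$ (and in $\xi$ of $\tilde u'$) as the hard clause of Part (2), requiring the convex-compactness and generalized $\epsilon$-net machinery of Sections~\ref{sec:dc}--\ref{sec:contode}. That is a misreading of what Theorem~\ref{abstractthm1} asserts. In the abstract theorem only the \emph{within-market} maps $\eta\mapsto\tilde v'(\eta)$ and $\xi\mapsto\tilde u'(\xi)$ need to be continuous; the paper does this directly, not via nets. The derivative has the closed form $\tilde v'(\eta)=-\eta^{-1}E[\hat h(\eta)I(\hat h(\eta))\mid\mc G]$ (Lemmas~\ref{vdiffer} and~\ref{vdiffer2}), and its continuity in $\eta$ is Lemma~\ref{dderivcont}, a conditional rerun of \cite{MR1722287} Lemmas 3.8--3.9 built purely on strict convexity of $V$ and the stability of $\eta\mapsto\hat h(\eta)$ (Lemma~\ref{derivcont}); $\tilde u'$ is then handled by plugging $\hat g(\xi)=I(\hat h(\eta))$, $\eta=\tilde u'(\xi)$ into that formula. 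The $\epsilon$-net and convex-compactness apparatus is needed for a \emph{different} statement --- continuity of $(v_\tau^\lambda)'$ with respect to the market parameter $\lambda$ --- which lives in Theorem~\ref{mainthm2}/Proposition~\ref{contmaps}, not in Theorem~\ref{abstractthm1}. So the part you flag as the main obstacle is in fact one of the more routine steps here, and the machinery you propose to bring in is not what carries it.
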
  

The rest of this subsection is devoted to the proof of Theorem~\ref{abstractthm1}. We first develop a few auxiliary results.

\begin{lemma}[Komlos's Lemma]\label{komlos} Let $(f^n)_{n \geq 1}$ be a sequence of non-negative random variables.  Then there is a sequence $g^n \in \text{conv}(f^n,f^{n+1},\ldots)$ which converges almost surely to a variable $g$ with values in $[0,\infty]$.  If the convex hull of $(f^n)_{n \geq 1}$ is bounded in probability, the limit $f$ is real-valued.
\end{lemma}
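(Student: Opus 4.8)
This is a Koml\'os-type lemma of the kind proved by Delbaen and Schachermayer; the plan is to pass to a monotone scalar functional, produce near-optimizers, and exploit the strict concavity of $x \mapsto 1 - e^{-x}$ to force a Cauchy property. Concretely, set $C_n = \conv(f^n, f^{n+1}, \ldots)$, so that $C_{n+1} \subseteq C_n$, and define
\[ c_n = \underset{g \in C_n}{\inf} \ E\!\left[ 1 - e^{-g} \right] \in [0,1]. \]
Since the sets $C_n$ are nested decreasing, the sequence $(c_n)$ is non-decreasing, hence $c_n \uparrow c$ for some $c \le 1$. For each $n$ choose $g^n \in C_n$ with $E[1 - e^{-g^n}] \le c_n + 2^{-n}$.

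The heart of the argument is to show that $(e^{-g^n/2})_{n}$ is Cauchy in $L^2$. Fix $m < n$. Then $\tfrac12(g^m + g^n) \in C_m$, so $E[1 - e^{-(g^m+g^n)/2}] \ge c_m$. On the other hand, the elementary identity $\tfrac12 a^2 + \tfrac12 b^2 - ab = \tfrac12 (a-b)^2$ applied with $a = e^{-g^m/2}$, $b = e^{-g^n/2}$ gives
\[ E\!\left[ 1 - e^{-(g^m+g^n)/2} \right] = \tfrac12 E\!\left[ 1 - e^{-g^m} \right] + \tfrac12 E\!\left[ 1 - e^{-g^n} \right] - \tfrac12 E\!\left[ \left( e^{-g^m/2} - e^{-g^n/2} \right)^2 \right]. \]
Combining the two and using $c_m \le E[1-e^{-g^m}]$-type bounds together with $c_m, c_n \le c$, one gets $\tfrac12 E[(e^{-g^m/2} - e^{-g^n/2})^2] \le (c - c_m) + 2^{-m}$, which tends to $0$ as $m \to \infty$, uniformly in $n > m$. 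Hence $(e^{-g^n/2})$ converges in $L^2$, and along a subsequence $n_k$ it converges almost surely to some $[0,1]$-valued $W$. Setting $g = -2\log W$ (with the convention $-2\log 0 = \infty$), the continuity of $x \mapsto -2\log x$ yields $g^{n_k} \to g$ almost surely, with $g$ taking values in $[0,\infty]$. Since $n_k \ge k$, we have $g^{n_k} \in C_{n_k} \subseteq C_k = \conv(f^k, f^{k+1}, \ldots)$, so after relabeling $(g^{n_k})$ as the desired sequence the first assertion is proved.

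For the final claim, suppose $\conv(f^n : n \ge 1)$ is bounded in probability. Then $g < \infty$ almost surely: using $g^{n_k} \to g$ in probability, for any $\delta, K > 0$ we have $P(g > K) \le P(|g - g^{n_k}| > \delta) + P(g^{n_k} > K - \delta)$, and taking $\limsup$ in $k$ bounds $P(g > K)$ by $\sup\{ P(h > K-\delta) : h \in \conv(f^n : n) \}$, which tends to $0$ as $K \to \infty$; hence $P(g = \infty) = \lim_{K\to\infty} P(g>K) = 0$.

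I expect the only genuine obstacle to be the displayed $L^2$-Cauchy estimate, i.e. correctly combining the monotonicity $E[1 - e^{-(g^m+g^n)/2}] \ge c_m$ with the strict-concavity identity to extract a quantitative gain; the reduction via $1 - e^{-x}$, the extraction of an a.s.-convergent subsequence, the index bookkeeping $n_k \ge k$, and the real-valuedness argument are all routine.
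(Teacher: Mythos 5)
The paper proves this lemma by citation to Delbaen and Schachermayer's Lemma~A1.1 in \cite{MR1304434}, so your self-contained proof is genuinely additional material, and you have correctly identified the relevant technique: pass to the bounded strictly concave transform $x\mapsto 1-e^{-x}$ and exploit an $L^2$-Cauchy estimate. However, your displayed identity has a sign error, and it is tied to a second, compensating error in the choice of $\inf$ versus $\sup$, so the proof as written does not stand. The parallelogram-type identity actually gives
\[
1 - e^{-(g^m+g^n)/2} \;=\; \tfrac12\bigl(1 - e^{-g^m}\bigr) + \tfrac12\bigl(1 - e^{-g^n}\bigr) \;+\; \tfrac12\bigl(e^{-g^m/2} - e^{-g^n/2}\bigr)^2,
\]
with a \emph{plus} sign on the quadratic term (as it must, since $1-e^{-x}$ is concave), not the minus you wrote. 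With that corrected sign and your choice $c_n = \inf_{g\in C_n}E[1-e^{-g}]$, the inequality $E[1-e^{-(g^m+g^n)/2}]\ge c_m$ points the wrong way and you cannot isolate the quadratic term. The correct setup is $c_n = \sup_{g\in C_n}E[1-e^{-g}]$, which is \emph{nonincreasing} since the $C_n$ are nested decreasing, $c_n\downarrow c$, and $g^n\in C_n$ chosen with $E[1-e^{-g^n}]\ge c_n - 2^{-n}$. Then $\tfrac12(g^m+g^n)\in C_m$ gives $E[1-e^{-(g^m+g^n)/2}]\le c_m$, and the correct identity yields
\[
E\bigl[(e^{-g^m/2}-e^{-g^n/2})^2\bigr] \;\le\; (c_m - c_n) + 2^{-m} + 2^{-n} \;\le\; (c_m - c) + 2^{-m} + 2^{-n} \longrightarrow 0
\]
uniformly in $n>m$. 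Your two sign slips happened to cancel, producing the right final bound, but the intermediate steps are false as written. A minor further point: since $g$ may equal $+\infty$, for the last claim argue directly from almost sure convergence via $\{g>K+1\}\subseteq\liminf_k\{g^{n_k}>K\}$, giving $P(g>K+1)\le\liminf_k P(g^{n_k}>K)\le\sup_h P(h>K)$; appealing to convergence in probability is circular, as that fails precisely when $P(g=\infty)>0$.
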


\begin{proof}  See Lemma A1.1 in \cite{MR1304434}.
\end{proof}

We state here a conditional version of uniform integrability.  Some simple properties related to this concept are proven in Section $3.1$.  In this section, they are used only for the following lemma, so we defer their proofs for ease of reading.

\begin{definition}  Let $\{X_\alpha \}_{\alpha \in A}$ be a collection of random variables.  We say that the collection is $\mc{F}_\tau$-\textbf{uniformly integrable} if for any $\epsilon(\omega) \in L^0_{++}(\mc{F}_\tau)$, there exists some $\delta(\omega) \in L^0_{++}(\mc{F}_\tau)$ such that for $B_\alpha = \{|X_\alpha| \geq \delta \}$, then $|E [ 1_{B_\alpha} X_\alpha \ | \ \mc{F}_\tau ]| < \epsilon$ for all $\alpha$.
\end{definition}

\begin{lemma}\label{ui}  For any $\eta \in L^0_{++}(\mc{G})$, the family $(V^-(h))_{h \in \mc{D}(\eta)}$ is $\mc{G}$-uniformly integrable.  If $(h^n)_{n \geq 1}$ is a sequence in $\mc{D}(\eta)$ which converges almost surely to a random variable $h$, then $h \in \mc{D}(\eta)$ and \[\underset{n \ra \infty}{\lim \inf} \ E [V(h^n) \ | \ \mc{G} ] \geq E [ V(h) \ | \ \mc{G} ]. \]
\end{lemma}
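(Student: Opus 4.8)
The plan is to establish the two assertions in sequence, leaning on the polar characterization of $\mc{D}(\eta)$, the de La Vallée-Poussin criterion in its conditional form, and Fatou's lemma for conditional expectations. First I would prove the $\mc{G}$-uniform integrability of $(V^-(h))_{h \in \mc{D}(\eta)}$. Since $\mc{D}(\eta) = \eta \mc{D}$ and $\mc{D}$ lies in the unit ball of $L^1$ (from conditions (2)–(3) on $\mc{C},\mc{D}$), the family $\mc{D}(\eta)$ is conditionally $L^1$-bounded: $E[h \mid \mc{G}] \le \eta$ for every $h \in \mc{D}(\eta)$. The asymptotic elasticity hypothesis on $U$ gives, via Lemma~6.3 of \cite{MR1722287}, a growth bound of the form $V(\mu y) < \mu^{-\alpha} V(y)$ near $0$; combined with the convexity of $V$ and the Inada conditions this yields a function $\Psi$ with $\Psi(t)/t \to \infty$ such that $V^-(y) \le c(\omega)(1 + \Psi(y))$ uniformly, with $\Psi$ superlinear. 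Applying this pointwise and taking conditional expectations, $E[\Psi(h) \mid \mc{G}]$ is controlled by $E[h \mid \mc{G}] \le \eta$ plus constants (using $\eta \in L^0_{++}(\mc{G})$ to absorb the $\mc{G}$-measurable constants), which is the conditional de La Vallée-Poussin condition; I would then invoke the conditional analog of the de La Vallée-Poussin criterion (deferred to Section~3.1, as the statement notes) to conclude $\mc{G}$-uniform integrability.

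For the second assertion, suppose $h^n \in \mc{D}(\eta)$ with $h^n \to h$ almost surely. To see $h \in \mc{D}(\eta)$: for each $g \in \mc{C}$ we have $E[g h^n \mid \mc{G}] \le \eta$ (dividing out $\eta$, since $h^n/\eta \in \mc{D}$ and $\mc{D}$ is a polar set), and by conditional Fatou $E[gh \mid \mc{G}] \le \liminf_n E[g h^n \mid \mc{G}] \le \eta$, so $h/\eta \in \mc{D}$ by the polarity characterization in condition (2), i.e.\ $h \in \mc{D}(\eta)$. Here I use non-negativity of $g, h^n$ to justify conditional Fatou. For the lower-semicontinuity of $E[V(\cdot)\mid\mc{G}]$, write $V = V^+ - V^-$. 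Conditional Fatou applied to $V^+(h^n) \ge 0$ gives $\liminf_n E[V^+(h^n)\mid\mc{G}] \ge E[V^+(h)\mid\mc{G}]$. For the negative part, the $\mc{G}$-uniform integrability of $(V^-(h^n))_n$ just established, together with $V^-(h^n) \to V^-(h)$ a.s.\ (continuity of $V^-$), gives convergence of $E[V^-(h^n)\mid\mc{G}]$ to $E[V^-(h)\mid\mc{G}]$ — this is the conditional version of ``convergence in probability plus uniform integrability implies $L^1$ convergence,'' which will also be recorded in Section~3.1. Combining, $\liminf_n E[V(h^n)\mid\mc{G}] \ge E[V^+(h)\mid\mc{G}] - E[V^-(h)\mid\mc{G}] = E[V(h)\mid\mc{G}]$.

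The main obstacle is the first part: verifying that the conditional de La Vallée-Poussin machinery genuinely applies, i.e.\ producing a single superlinear $\Psi$ (independent of $h$ and, crucially, deterministic or at worst $\mc{G}$-measurable in a controllable way) that dominates $V^-$ while keeping $E[\Psi(h)\mid\mc{G}]$ bounded by a fixed element of $L^0(\mc{G})$. The subtlety is that in the unconditional setting of \cite{MR1722287} one exploits $E[h] \le y < \infty$, whereas here $E[h\mid\mc{G}] \le \eta$ is only $\mc{G}$-measurable and possibly unbounded; one must check that the de La Vallée-Poussin test function can be chosen so that the resulting bound is still a finite random variable and that the definition of $\mc{G}$-uniform integrability (with its $\mc{F}_\tau$-measurable $\epsilon, \delta$) is met. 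This is precisely why the auxiliary properties of $\mc{G}$-uniform integrability are isolated in Section~3.1; modulo those, the argument above is routine. A secondary point requiring care is that conditional Fatou requires the random variables to be bounded below by a fixed integrable variable, which holds here since $g h^n \ge 0$ and $V^+(h^n) \ge 0$.
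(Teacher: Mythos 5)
Your overall plan matches the paper's (which defers to Lemma 3.4 of \cite{MR1722287} together with the conditional de La Vall\'ee-Poussin and Fatou lemmas), and your treatment of the second assertion — closedness of $\mc{D}(\eta)$ via the polar characterization and conditional Fatou, followed by the split $V = V^+ - V^-$ — is correct. However, your argument for the $\mc{G}$-uniform integrability of $(V^-(h))_{h\in\mc{D}(\eta)}$ has the key estimate backwards.

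You assert the existence of a superlinear $\Psi$ with $V^-(y) \le c(1+\Psi(y))$. This inequality is vacuous: every function is dominated by \emph{some} superlinear function, so it carries no information. The de La Vall\'ee-Poussin argument requires that $V^-(h)$, not $h$, be plugged into a superlinear test function whose conditional expectation is bounded. Concretely, one needs a superlinear $\Phi$ such that $\Phi(V^-(y)) \le c(1+y)$, which is the same as saying that $V^-$ itself grows \emph{sub}linearly. This is where the relevant structure enters: $V$ is convex, so $V^-$ is concave on the set where $V<0$, and the Inada condition $U'(0+)=\infty$ gives $V'(y)\to 0$ as $y\to\infty$, hence $V^-$ is sublinear and its (convex, increasing) inverse $\Phi$ is superlinear. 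Then $E[\Phi(V^-(h))\mid\mc{G}] \le c\bigl(1 + E[h\mid\mc{G}]\bigr) \le c(1+\eta)$, which is bounded in $L^0(\mc{G})$, and Lemma~\ref{convp} applies to the family $\{V^-(h)\}$. Your claim that $E[\Psi(h)\mid\mc{G}]$ is ``controlled by'' $E[h\mid\mc{G}]$ for a superlinear $\Psi$ is false — superlinear moments are not controlled by first moments — and in any case you are testing the wrong family ($\{h\}$ rather than $\{V^-(h)\}$). A secondary point: the asymptotic elasticity hypothesis plays no role in this lemma. Asymptotic elasticity, via Lemma 6.3 of \cite{MR1722287}, controls the growth of $V$ (hence $V^+$) near $y=0$, and is used e.g.\ in Lemma~\ref{usc1}. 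The uniform integrability of $V^-(h)$ concerns the behavior of $V$ at $y=\infty$, which, as above, comes from the convexity of $V$ and the Inada condition at $x=0$, not from asymptotic elasticity.
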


\begin{proof}  See Lemma 3.4 in \cite{MR1722287} and Section $3.1$.  The first claim is proved exactly as in that lemma, using the modified de La Vall\'{e}e-Poussin criterion for $\mc{G}$-uniform integrability (Lemma \ref{convp}), that $V^-$ is strictly concave, and that the conditional expectations of elements in $\mc{D}(\eta)$ are bounded in $L^0$.  The second claim is also proved as in \cite{MR1722287} using Lemma \ref{confatou}; it is merely the conditional version of the unconditional result given in \cite{MR1722287}.  
\end{proof}

\begin{lemma}\label{dex}  Suppose that $\tilde{v}(\eta) < \infty$ for $\eta \in L^0_{++}(\mc{G})$.  Then the optimal solution $\widehat{h}(\eta)$ to the dual optimization problem exists and is unique.  As a consequence, $\tilde{v}$ is strictly $\mc{G}$-convex on $\{ \tilde{v} < \infty \}$.
\end{lemma}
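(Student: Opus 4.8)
The plan is to follow the classical argument for existence of the dual minimizer (as in \cite{MR1722287}, Lemma 3.5) and adapt it to the conditional, $\mc{G}$-measurable setting using Komlos's Lemma (Lemma \ref{komlos}) and the $\mc{G}$-lower semicontinuity statement in Lemma \ref{ui}. Fix $\eta \in L^0_{++}(\mc{G})$ with $\tilde{v}(\eta) < \infty$ on some set; by passing to $\mc{G}$-measurable sets and using that $\tilde{v}$ is locally defined, we may reduce to working where $\tilde{v}(\eta)$ is finite. First I would take a sequence $h^n \in \mc{D}(\eta)$ with $E[V(h^n) \mid \mc{G}] \to \tilde{v}(\eta)$ a.s.; such a sequence can be extracted via a diagonalization / exhaustion argument over $\mc{G}$-measurable sets, since the essential infimum of conditional expectations is attained along a sequence (one can take countable combinations using a lattice property of the family $\{E[V(h) \mid \mc{G}]\}_{h \in \mc{D}(\eta)}$, which holds because $\mc{D}(\eta)$ is $\mc{G}$-convex and solid).

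Next I would apply Komlos's Lemma to $(h^n)$ to obtain $g^n \in \conv(h^n, h^{n+1}, \ldots)$ converging a.s. to some $\widehat h$ with values in $[0,\infty]$; since $\mc{D}(\eta)$ is contained in $\eta$ times the unit ball of $L^1$, its conditional expectations are bounded in $L^0$, and the $\mc{G}$-uniform integrability of $(V^-(h))_{h \in \mc{D}(\eta)}$ from Lemma \ref{ui} forces the convex hull to be bounded in probability, so $\widehat h$ is real-valued. By $\mc{G}$-convexity of $\mc{D}(\eta)$ and the $\mc{G}$-convexity of $V$, each $g^n \in \mc{D}(\eta)$ and $E[V(g^n) \mid \mc{G}] \le \sum_k \alpha_k^n E[V(h^k) \mid \mc{G}]$ for the relevant convex weights; hence $\liminf_n E[V(g^n) \mid \mc{G}] \le \tilde{v}(\eta)$ as well, so $(g^n)$ is still a minimizing sequence. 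Now the closedness statement in Lemma \ref{ui} gives $\widehat h \in \mc{D}(\eta)$ and the conditional Fatou inequality $E[V(\widehat h) \mid \mc{G}] \le \liminf_n E[V(g^n) \mid \mc{G}] = \tilde{v}(\eta)$, so $\widehat h = \widehat h(\eta)$ is a minimizer.

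For uniqueness, suppose $h_1, h_2 \in \mc{D}(\eta)$ both attain $\tilde{v}(\eta)$ and $P(h_1 \ne h_2) > 0$. Set $A = \{h_1 \ne h_2\} \in \mc{F}$ (not necessarily in $\mc{G}$); take $g = \tfrac12$ and form $\bar h = \tfrac12 h_1 + \tfrac12 h_2 \in \mc{D}(\eta)$ by $\mc{G}$-convexity. Strict convexity of $V$ gives $V(\bar h) < \tfrac12 V(h_1) + \tfrac12 V(h_2)$ on $A$ and $\le$ elsewhere, with strict inequality on a set of positive probability; taking $E[\cdot \mid \mc{G}]$ yields $E[V(\bar h) \mid \mc{G}] < \tfrac12 E[V(h_1)\mid\mc{G}] + \tfrac12 E[V(h_2)\mid\mc{G}] = \tilde{v}(\eta)$ on a non-negligible $\mc{G}$-measurable set (the conditional expectation of a nonnegative, non-a.s.-zero random variable is strictly positive on a set of positive measure), contradicting minimality. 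The same comparison, applied to $g = 1_B$ for $B \in \mc{G}$ with arbitrary $h_1, h_2 \in \mc{D}(\eta)$ in $\{\tilde{v} < \infty\}$, shows $\tilde{v}$ is strictly $\mc{G}$-convex there.

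The main obstacle I anticipate is the first step: justifying that the essential infimum defining $\tilde{v}(\eta)$ is attained along a single sequence $(h^n)$ so that Komlos's Lemma can be applied. This requires the downward-lattice property of $\{E[V(h)\mid\mc{G}] : h \in \mc{D}(\eta)\}$, which follows by pasting: given $h_1, h_2 \in \mc{D}(\eta)$, the set $B = \{E[V(h_1)\mid\mc{G}] \le E[V(h_2)\mid\mc{G}]\}$ lies in $\mc{G}$, and $1_B h_1 + 1_{B^c} h_2 \in \mc{D}(\eta)$ by $\mc{G}$-convexity (with $g = 1_B$), realizing the pointwise minimum. One should also keep track of the set where $\tilde{v}(\eta) = \infty$ and argue there is nothing to prove there (any $h \in \mc{D}(\eta)$ is trivially optimal on that $\mc{G}$-set), using that $\tilde{v}$ and $\mc{D}(\eta)$ are locally defined.
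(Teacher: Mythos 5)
Your existence argument (downward-directedness of $\{E[V(h)\mid\mc{G}]:h\in\mc{D}(\eta)\}$ via pasting, extraction of a minimizing sequence, Komlos's Lemma for an a.s.\ limit of forward convex combinations, preservation of the minimizing property by convexity of $V$, and then closedness plus the conditional Fatou inequality from Lemma~\ref{ui}) is exactly the paper's route, and your uniqueness argument from strict convexity of $V$ is also the one the paper uses. The one genuine gap is in the final sentence, claiming strict $\mc{G}$-convexity of $\tilde{v}$ via ``the same comparison applied to $g=1_B$''. This does not work: the definition of strict $\mc{G}$-convexity only requires strict inequality when $g$ takes values other than $0$ and $1$, so taking $g=1_B$ yields nothing (indeed, by locality of $\tilde{v}$ the inequality is then an equality). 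More fundamentally, strict $\mc{G}$-convexity of $\tilde{v}$ is a statement about mixing two different arguments $\eta_1\neq\eta_2$, not about mixing two elements $h_1,h_2$ within a single $\mc{D}(\eta)$, which is what your uniqueness computation does. The paper's proof takes $\eta_1,\eta_2\in L^0_{++}(\mc{G})$ and a nontrivial $\mc{G}$-measurable weight $g$, notes that $g\widehat{h}(\eta_1)+(1-g)\widehat{h}(\eta_2)\in\mc{D}(g\eta_1+(1-g)\eta_2)$, and then exploits strict convexity of $V$ to extract a strict gain $-\epsilon 1_A$ for some $A\in\mc{G}$ of positive measure on which the optimizers are bounded from above and $g$ is bounded away from $0$ and $1$. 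You need this cross-$\eta$ mixing step; as written, the strict $\mc{G}$-convexity conclusion is unproved. (A minor terminological slip: $V$ is an ordinary convex function on $(0,\infty)$, not a ``$\mc{G}$-convex'' map.)
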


\begin{proof}  First, we claim that the set  $\left \{ E [V(g) \ | \ \mc{G} ] \right \}_{g \in \mc{D}(\eta)}$ is downwards directed.  Let $g_1,g_2 \in \mc{D}(\eta)$.  Let $A \triangleq \{E [V(g_1) \ | \ \mc{G} ] \leq E [ V(g_2) \ | \ \mc{G} ] \} \in \mc{G}.$  Let $g = 1_Ag_1 + 1_{A^c}g_2$.  Since $\mc{D}(\eta)$ is $\mc{G}$-convex, $g \in \mc{D}(\eta)$.  We calculate that 
\[
\begin{split}
E [V(g) \ | \ \mc{G}]
&= E [ 1_AV(g_1) + 1_{A^c}V(g_2) \ | \ \mc{G}]
\\&= 1_AE [ V(g_1) \ | \ \mc{G} ] + 1_{A^c}E [ V(g_2) \ | \ \mc{G} ]
\\&= E [ V(g_1) \ | \ \mc{G} ] \wedge E [ V(g_2) \ | \ \mc{G} ].
\end{split}
\]  

Since the above set is downwards directed, there exists a sequence $(g^n)_{n \geq 1}$ in $\mc{D}(\eta)$ such that $E [ V(g^n) \ | \ \mc{G} ] \downarrow \tilde{v}(\eta)$ a.s.  By Lemma \ref{komlos}, there exists a sequence $h^n \in \text{conv}(g^n,g^{n+1},\ldots)$ and a finite random variable $\widehat{h}$ such that $h^n \ra \widehat{h}$ a. s.  From the convexity of the function $V$ we have $E [ V(h^n) \ | \ \mc{G} ] \leq \underset{m \geq n}{\es} E [ V(g^m) \ | \ \mc{G} ]$, so that $\underset{n \ra \infty}{\lim} E [ V(h^n) \ | \ \mc{G} ] = \tilde{v}(\eta)$ a.s.

By Lemma \ref{ui}, $ E [ V(\widehat{h}) \ | \ \mc{G} ] \leq \underset{n \ra \infty}{\liminf} \  E [ V(h^n) \ | \ \mc{G} ] = \tilde{v}(\eta)$, and $\widehat{h} \in \mc{D}(\eta)$.  The uniqueness of the optimal solution follows from the strict convexity of $V$.  For the strict $\mc{G}$-convexity of $\tilde{v}$ on its effective domain, let $\eta_1,\eta_2 \in L^0_{++}(\mc{G})$, and let $g$ be between $0$ and $1$ and $\mc{G}$-measurable.  Then $g \widehat{h}(\eta_1) + (1-g)\widehat{h}(\eta_2)$ is an element of $\mc{D}(g \eta_1 + (1-g) \eta_2)$.  We have
\[ \tilde{v}(g\eta_1 + (1-g)\eta_2) \leq E [ V(g\widehat{h}(\eta_1) + (1-g)\widehat{h}(\eta_2)) \ | \ \mc{G} ].\]
Using the strict convexity of $V$, we have $V(g \widehat{h}(\eta_1) + (1-g)\widehat{h}(\eta_2)) \leq -\epsilon 1_A + g V(\widehat{h}(\eta_1)) + (1-g)V(\widehat{h}(\eta_2))$ provided that $g$ is not only $1$ and $0$, $A \in \mc{G}$ corresponding to some set of positive measure on which $\widehat{h}(\eta_1)$ and $\widehat{h}(\eta_2)$ are bounded from above and $g$ is bounded away from $0$ and $1$.  The strict $\mc{G}$-convexity of $\tilde{v}$ is now immediate.
\end{proof}

\begin{lemma}\label{duality}  We have $\tilde{v}(\eta) = \underset{\xi \in L^0_{++}(\mc{G})}{\es} (\tilde{u}(\xi) - \xi\eta)$ for each $\eta \in L^0_{++}(\mc{G})$.
\end{lemma}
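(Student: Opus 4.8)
\emph{Strategy.} I would prove the two inequalities separately, obtaining the nontrivial one from the conditional minimax theorem of Appendix A. The easy direction, $\underset{\xi \in L^0_{++}(\mc{G})}{\es}(\tilde{u}(\xi) - \xi\eta) \leq \tilde{v}(\eta)$, is Fenchel--Young: for $\xi \in L^0_{++}(\mc{G})$, $g \in \mc{C}$ and $h \in \mc{D}$ we have $U(\xi g) \leq V(\eta h) + (\xi g)(\eta h)$ pointwise, and taking $E[\cdot \mid \mc{G}]$, pulling out the $\mc{G}$-measurable factor $\xi\eta \geq 0$, and using $E[gh \mid \mc{G}] \leq 1$ (condition (2) on $\mc{C},\mc{D}$) gives $E[U(\xi g) \mid \mc{G}] \leq E[V(\eta h) \mid \mc{G}] + \xi\eta$. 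Taking $\es$ over $g \in \mc{C}$, then $\ei$ over $h \in \mc{D}$, then $\es$ over $\xi$ yields this inequality.

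\emph{The reverse inequality.} Fix $\eta \in L^0_{++}(\mc{G})$. Since $x \mapsto U(x) - xc$ is continuous on $(0,\infty)$, for each $h \in \mc{D}$ we have $V(\eta h) = \underset{q \in \mb{Q}_{++}}{\es}(U(q) - q\,\eta h)$ a.s., a countable essential supremum; taking conditional expectations and using a measurable-selection (equivalently, countable-exhaustion) argument yields $E[V(\eta h) \mid \mc{G}] = \underset{g \in L^0_{++}(\mc{F})}{\es}\ E[U(g) - g\,\eta h \mid \mc{G}]$. Hence
\[\tilde{v}(\eta) = \underset{h \in \mc{D}}{\ei}\ \underset{g \in L^0_{++}(\mc{F})}{\es}\ E[U(g) - g\,\eta h \mid \mc{G}].\]
The function $(g,h) \mapsto E[U(g) - g\,\eta h \mid \mc{G}]$ is $\mc{G}$-concave in $g$ (concavity of $U$ plus a $\mc{G}$-linear term) and $\mc{G}$-affine in $h$, while $\mc{D}$ is $\mc{G}$-convex, solid, closed in probability and contained in the unit ball of $L^1$, hence convexly compact in $L^0$ in the sense of \cite{Zit09c}. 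Applying the conditional minimax theorem of Appendix A to interchange $\ei$ and $\es$, and pulling $\eta$ out of the conditional expectation, gives
\[\tilde{v}(\eta) = \underset{g \in L^0_{++}(\mc{F})}{\es}\Big(E[U(g) \mid \mc{G}] - \eta\,\underset{h \in \mc{D}}{\es}\ E[gh \mid \mc{G}]\Big).\]

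\emph{Identifying the gauge.} Set $\gamma(g) \triangleq \underset{h \in \mc{D}}{\es}\ E[gh \mid \mc{G}] \in L^0_+(\mc{G}) \cup \{+\infty\}$. By the polarity between $\mc{C}$ and $\mc{D}$ (conditions (1)--(2)), for $\xi \in L^0_{++}(\mc{G})$ one has $g \in \xi\mc{C}$ if and only if $\gamma(g) \leq \xi$; consequently $\tilde{u}(\xi) = \underset{g:\ \gamma(g) \leq \xi}{\es}\ E[U(g) \mid \mc{G}]$. Restricting the outer $\es$ to $g$ with $\gamma(g) < \infty$ a.s. (which does not affect the value, by a routine localization, since $g \equiv 1$ contributes the finite quantity $U(1) - \eta\gamma(1)$), one checks both inequalities of $\tilde{v}(\eta) = \underset{\xi \in L^0_{++}(\mc{G})}{\es}(\tilde{u}(\xi) - \xi\eta)$: for such $g$ and any $\epsilon \in L^0_{++}(\mc{G})$, the choice $\xi = \gamma(g) + \epsilon$ satisfies $g \in \xi\mc{C}$, whence $E[U(g) \mid \mc{G}] - \eta\gamma(g) \leq \tilde{u}(\xi) - \eta\xi + \eta\epsilon \leq \underset{\xi'}{\es}(\tilde{u}(\xi') - \xi'\eta) + \eta\epsilon$, and letting $\epsilon \downarrow 0$ and taking $\es$ over $g$ gives ``$\leq$''; conversely, for each $\xi$ and each $g \in \xi\mc{C}$ we have $\gamma(g) \leq \xi$, so $E[U(g) \mid \mc{G}] - \eta\xi \leq E[U(g) \mid \mc{G}] - \eta\gamma(g)$, and taking $\es$ over $g \in \xi\mc{C}$ then over $\xi$ recovers ``$\geq$''. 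Combined with the easy inequality, this proves the lemma.

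\emph{Main obstacle.} The crux is the interchange of $\ei_h$ and $\es_g$: it must be set up as a genuine conditional minimax problem, with the $\mc{G}$-concavity/affineness and semicontinuity hypotheses verified, and — above all — with a compactness substitute for $\mc{D}$, which is exactly the role of convex compactness and of the appendix's conditional minimax theorem. The measurable-selection step rewriting $E[V(\eta h) \mid \mc{G}]$ as an essential supremum over $g$, and the gauge identification, are standard but require care because of the recurring possibility of $\pm\infty$ values, which however does not affect the overall strategy.
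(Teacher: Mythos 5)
Your overall strategy (Fenchel--Young for the easy inequality, a conditional minimax interchange for the hard one, then a polarity/gauge argument) parallels the paper's, but the way you invoke the conditional minimax theorem of Appendix~A has a genuine gap. That theorem places its compactness hypothesis on the $\es$ side: $X$ must be a bounded, $\sigma(L^\infty,L^1)$-compact, $\mc{G}$-convex subset of $L^\infty_+$, while for $Y$ (the $\ei$ side, here $\mc{D}(\eta)$) only closedness, $\mc{G}$-convexity, and $L^1$-boundedness are required. You apply it with $X = L^0_{++}(\mc{F})$, which is neither a subset of $L^\infty$ nor compact in any relevant topology, and you justify this by appealing to convex compactness of $\mc{D}$ --- but $\mc{D}$ sits on the wrong side of the interchange for that to help, and Appendix~A nowhere uses convex compactness of $Y$. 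As written, the minimax step is unsupported.

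The paper works around exactly this: it applies the conditional minimax theorem to the positive $L^\infty$-balls $\mc{B}_n$ of radius $n$ (which are $\sigma(L^\infty,L^1)$-compact), obtaining for each fixed $n$ the identity $\underset{g\in\mc{B}_n}{\es}\ \underset{h\in\mc{D}(\eta)}{\ei}\ E[U(g)-gh\mid\mc{G}]=\underset{h\in\mc{D}(\eta)}{\ei}\ E[V^n(h)\mid\mc{G}]$ with $V^n(y)=\sup_{0\le x\le n}(U(x)-xy)$, and then proves two separate limiting statements as $n\to\infty$: that the left side converges to $\underset{\xi\in L^0_{++}(\mc{G})}{\es}(\tilde u(\xi)-\xi\eta)$ (a chain of three inequalities requiring a careful passage from $L^0_+(\mc{G})$ to $L^0_{++}(\mc{G})$, conditional monotone convergence, and a diagonalization over upward-directed families), and that $\tilde v^n(\eta)\uparrow\tilde v(\eta)$ (via Komlos, closedness of $\mc{D}(\eta)$, and conditional Fatou). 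These limiting arguments form the bulk of the paper's proof and are precisely what your proposal is missing; to salvage your route you would need either a minimax theorem with compactness on the $\ei$ side (which Appendix~A does not supply) or to reproduce this truncation-and-limit machinery. Your measurable-selection rewrite of $E[V(\eta h)\mid\mc{G}]$ and the gauge identification are plausible, but they are secondary to the interchange that needs repair.
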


\begin{proof}  For $n > 0$, define $\mc{B}_n$ to be the the positive elements of the ball of radius $n$ in $L^\infty(\mc{G})$.  The sets $\mc{B}_n$ are $\sigma(L^\infty,L^1)$-compact.  By the conditional Minimax Theorem, we have, for $n$ fixed and all $\eta \in L^0_{++}(\mc{G})$:  \[\underset{g \in \mc{B}_n}{\es} \underset{h \in \mc{D}(\eta)}{\ei} E [ U(g) - gh \ | \ \mc{G} ] = \underset{h \in \mc{D}(\eta)}{\ei} \underset{g \in \mc{B}_n}{\es} E [ U(g) - gh \ | \ \mc{G} ];\] 
we use this fact later.  From the dual relation between the sets $\mc{C}(\xi)$ and $\mc{D}(\eta)$, we deduce that $g \in \mc{C}(\xi)$ if and only if 
\[\underset{h \in \mc{D}(\eta)}{\es} E [gh \ | \ \mc{G}] \leq \xi\eta.\]  We claim that the following quantities are equal:

\begin{enumerate}

\item $\underset{n \ra \infty}{\lim} \underset{g \in B_n}{\es} \underset{h \in \mc{D}(\eta)}{\ei} E [ U(g) - gh \ | \ \mc{G} ]$
\item $\underset{\xi \in L^0_{+}(\mc{G})}{\es} \underset{ g \in \mc{C}(\xi)}{\es} E [ U(g) - \xi \eta \ | \ \mc{G} ]$
\item $\underset{\xi \in L^0_{++}(\mc{G})}{\es} \underset{ g \in \mc{C}(\xi)}{\es} E [ U(g) - \xi \eta \ | \ \mc{G} ]$

\end{enumerate}

\begin{itemize}

\item ``$(1) \leq (2)$'':  It suffices to prove that for any $n$, and any $g \in \mc{B}_n$, there exists some $\overline{\xi} \in L^0_+(\mc{G})$ such that 
\[\underset{h \in \mc{D}(\eta)}{\ei} E [U(g) - gh \ | \ \mc{G}] \leq \underset{g \in \mc{C}(\overline{\xi})}{\es} E [ U(g) - \overline{\xi}{\eta} \ | \ \mc{G} ].\]  
Take $\overline{\xi} \in L^0_+(\mc{G})$ that is minimal with respect to $g$ being in  $\mc{C}(\overline{\xi})$; such a $\overline{\xi}$ satisfies, by the duality relationship, $\underset{h \in \mc{D}(\eta)}{\es} E [gh \ | \ \mc{G} ] = \overline{\xi}\eta$.  In fact this duality shows that such a $\overline{\xi}$ exists, because we can take 
\[\overline{\xi} = \frac{1}{\eta}\underset{h \in \mc{D}(\eta)}{\ei} E [ gh \ | \ \mc{G}].\]  
Then $\underset{h \in \mc{D}(\eta)}{\ei} E [ U(g) - gh \ | \ \mc{G}] = E [ U(g) - \overline{\xi}\eta \ | \ \mc{G}] \leq \underset{g \in \mc{C}(\overline{\xi})}{\es} E [ U(g) - \overline{\xi}\eta \ | \ \mc{G} ].$

\item ``$(2) \leq (3)$'':  It suffices to show that for any $\xi \in L^0_+(\mc{G})$ and any $g \in \mc{C}(\xi)$, there exists a $\overline{\xi} \in L^0_{++}(\mc{G})$ and a $\overline{g} \in \mc{C}(\overline{\xi})$ such that $E [ U(g) - \xi \eta \ | \ \mc{G} ] \leq E [ U(\overline{g}) - \overline{\xi}\eta \ | \ \mc{G} ]$.  Note that $g = 0$ on $\{\xi = 0 \}$, by definition of $\mc{C}(\xi)$.  Let $A = \{\xi = 0 \} \in \mc{G}$.  For each natural number $k$, write $B_k = \{k-1 \leq \eta < k\} \in \mc{G}$.  Obviously the $B_k$'s partition $\Omega$, so write $A = \underset{k}{\bigcup} (A \bigcap B_k) \triangleq \underset{k}{\bigcup} A_k$.  Recalling that $U$ satisfies the Inada condition, for each $k$, choose $\epsilon_k \downarrow 0$ such that $U'(\epsilon_k) > k$.  Define 
\[\overline{\xi} = \xi + \sum_{k=1}^\infty \epsilon_k 1_{A_k},\] 
so that $\overline{\xi} > 0$.  Let $\overline{g} = g_{1_{A^c}} + \overline{\xi}_{1_A}$.  Claim that $\overline{g} \in \mc{C}(\overline{\xi})$.  First, $\xi \leq \overline{\xi}$ implies, by the duality relationship, that $\mc{C}(\xi) \subset \mc{C}(\overline{\xi})$, so that $g \in \mc{C}(\overline{\xi})$.  Noting that $A \in \mc{G}$, use the $\mc{G}$-convexity of $\mc{C}(\overline{\xi})$ and $\overline{\xi} \in \mc{C}(\overline{\xi})$ to conclude that $\overline{g} \in \mc{C}(\overline{\xi})$.

Now, we calculate, by the convexity of $U$, that 
\[U(\overline{g}) - \overline{\xi} \eta \geq U(g) - \xi \eta  + \sum_{k=1}^\infty (U'(\epsilon_k)\epsilon_k - \epsilon_k \eta)1_{A_k}.\]  Note that every term in the summand on the right hand side is $\mc{G}$-measurable, and that each summand is also positive by construction.  Hence, we take conditional expectations and deduce that $E [ U(\overline{g}) - \overline{\xi}\eta \ | \ \mc{G} ] \geq E [ U(g) - \xi \eta \ | \ \mc{G} ]$.

\item ``$(3) \leq (1)$'':  First, as usual, note that for fixed $\xi$, $\{E [ U(g) - \xi \eta \ | \ \mc{G}]\}_{g \in \mc{C}(\xi)}$ is upwards directed by the $\mc{G}$-convexity of $\mc{C}(\xi)$.  We claim that also $\left \{ \underset{g \in \mc{C}(\xi)}{\es} E [ U(g) - \xi \eta \ | \ \mc{G} ] \right \}_{\xi \in L^0_{++}(\mc{G})}$ is ``almost'' upwards directed, in a sense to be described below.  Fix $\xi^1$ and $\xi^2$ in $L^0_{++}(\mc{G})$.  Take $g^1_n \in \mc{C}(\xi^1)$ and $g^2_n \in \mc{C}(\xi^2)$ such that $f^i(g^i_n) \triangleq E [ U(g^i_n) - \xi^i \eta \ | \ \mc{G}] \uparrow \underset{g \in \mc{C}(\xi^i)}{\es} E [ U(g) - \xi^i \eta \ | \ \mc{G} ] \triangleq F(\xi_i)$ for each $i$.  Let $A = \left \{\underset{g \in \mc{C}(\xi^1)}{\es} E [ U(g) - \xi^1 \eta \ | \ \mc{G} ] \geq \underset{g \in \mc{C}(\xi^2)}{\es} E [ U(g) - \xi^2 \eta \ | \ \mc{G} ] \right \} \in \mc{G}$.  Let $\xi = 1_A\xi^1 + 1_{A^c}\xi^2$.  Let $g_n = 1_Ag^1_n + 1_{A^c} g^2_n$.  It follows by the duality relationship that $g_n \in \mc{C}(\xi)$ for each $n$.  Furthermore, 
\[ f(g_n) \triangleq E [ U(g_n) - \xi \eta \ | \ \mc{G} ] = 1_A E [U(g^1_n) - \xi^1 \eta \ | \ \mc{G} ] + 1_{A^c} E [ U(g^2_n) - \xi^2  \eta \ | \ \mc{G} ],\]
and this quantity converges upwards towards $F^1(\xi^1) \vee F^(\xi^2)$.  From this it follows that $F(\xi) \geq F(\xi^1) \vee F(\xi^2)$.  This isn't quite upwards directedness, but it is sufficient for the essential supremum to be realized by an increasing sequence.

Now, let $ g \in \mc{C}(\overline{\xi})$ for some arbitrary but fixed $\overline{\xi}$.  Let $g_n = g1_{\{g \leq n\}} \in \mc{C}(\overline{\xi}) \cap B_n$ by the solidness of $\mc{C}(\overline{\xi})$.  Then by the duality relationship, 
\[\underset{h \in \mc{D}(\eta)}{\ei} E [U(g_n) - g_nh \ | \ \mc{G} ] \geq E [ U(g_n) - \overline{\xi}\eta \ | \ \mc{G} ].\]  
By conditional Monotone Convergence, we have $E [ U(g_n) - \overline{\xi}\eta \ | \ \mc{G} ] \uparrow E [ U(g) - \overline{\xi}\eta \ | \ \mc{G} ]$.  

Now we put the two above parts together.  Let $y_j \in \mc{C}(\xi_j)$ be such that $E [ U(y_j) - \xi_j \eta \ | \ \mc{G} ] \uparrow \underset{\xi \in L^0_{++}(\mc{G})}{\es} \underset{ g \in \mc{C}(\xi)}{\es} E [ U(g) - \xi \eta \ | \ \mc{G} ]$.  Define $y_j^n = y_j1_{\{y_j \leq n\}} \in \mc{B}_n \cap \mc{C}(\xi_j)$, so that $y_j^n \uparrow y_j$ for all $j$.  This gives $E[U(y_j^n) - \overline{\xi} \eta \ | \ \mc{G}] \uparrow E [ U(y_j) - \overline{\xi} \eta \ | \ \mc{G}]$.  Using an appropriate diagonal, we have the existence of a sequence $(n_k)_{k \geq 1}$ such that 
\[\lim_{k \ra \infty} y_k^{n_k} = \underset{\xi \in L^0_{++}(\mc{G})}{\es} \underset{ g \in \mc{C}(\xi)}{\es} E [ U(g) - \xi \eta \ | \ \mc{G} ],\] 
and the original claim is proven.
\end{itemize}

So, we have 
\[
\begin{split}
\lim_{n \ra \infty} \underset{ g \in B_n}{\es} \underset{h \in \mc{D}(\eta)}{\ei} E [ U(g) - gh \ | \ \mc{G} ] &= \underset{\xi \in L^0_{++}(\mc{G})}{\es} \underset{g \in \mc{C}(\xi)}{\es} E [ U(g) - \xi \eta \ | \ \mc{G} ] \\&
 = \underset{\xi \in L^0_{++}(\mc{G})}{\es} (\tilde{u}(\xi) - \xi \eta).
 \end{split}
 \]

On the other hand, 
\[\underset{h \in \mc{D}(\eta)}{\ei} \underset{g \in B_n}{\es} E [ U(g) - gh \ | \ \mc{G} ] = \underset{h \in \mc{D}(\eta)}{\ei} E [V^n(h) \ | \ \mc{G}] \triangleq \tilde{v}^n(\eta),\] where $V^n(y) = \underset{0 \leq x \leq n}{\sup} [U(x) - xy ]$.  Consequently, in light of the minimax result of Appendix A, it is enough to show that 
\[\lim_{n \ra \infty} \tilde{v}^n(\eta) = \underset{n \ra \infty}{\lim} \  \underset{h \in \mc{D}(\eta)}{\ei} E [ V^n(h) \ | \ \mc{G} ] = \tilde{v}(\eta).\]  
Clearly, $\tilde{v}^n \leq \tilde{v}$ because $V^n \leq V$.  As before, the $\mc{G}$-convexity of $\mc{D}(\eta)$ implies that for all $n$, $\{E [ V^n(h) \ | \ \mc{G} ] \ : \ h \in \mc{D}(\eta) \}$ is downward directed.  Thus, by diagonalization, let $(h^n)_{n \geq 1}$ be a sequence in $\mc{D}(\eta)$ such that 
$\underset{n \ra \infty}{\lim} E [ V^n(h^n) \ | \ \mc{G} ] = \lim_{n \ra \infty} \tilde{v}^n(\eta)$.  
By Lemma \ref{komlos}, there exists a sequence $f^n \in \conv(h^n,h^{n+1},\ldots)$ which converges almost surely to a variable $h$.  We have $h \in \mc{D}(\eta)$ because $\mc{D}(\eta)$ is closed under convergence in probability.  Since $V^n(y) = V(y)$ for $ y \geq I(1) \geq I(n)$, where $I(\cdot)$ is the negative inverse of $V'(\cdot)$, we know from Lemma \ref{ui} that $(V^n(f^n)^-)_{n \geq 1}$ is uniformly integrable.  Thus, as is proven before, the convexity of $V^n$ and conditional Fatou's lemma imply that 
\[\lim_{n \ra \infty} E [V^n(h^n) \ | \ \mc{G} ] \geq \liminf_{n \ra \infty} E [V^n(f^n) \ | \ \mc{G} ] \geq E [ V(h) \ | \ \mc{G} ] \geq \tilde{v}(\eta).\]  
This shows that $\lim_{n \ra \infty} \tilde{v}^n(\eta) = \tilde{v}(\eta)$.
\end{proof}

\begin{lemma}\label{derivcont}  Let $\eta_k$ be elements of $L^0_{++}(\mc{G})$ converging in probability to $\eta \in L^0_{++}(\mc{G})$.  Let $\widehat{h}(\eta_k) = \arg \min \tilde{v}(\eta_k)$, and let $\widehat{h}(\eta) = \arg \min \tilde{v}(\eta)$, i.e. the optimal dual variables.  Then $\widehat{h}(\eta_k) \ra \widehat{h}(\eta)$ in probability.
\end{lemma}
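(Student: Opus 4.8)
The plan is to mimic the classical argument that minimizers of strictly convex functionals are stable under perturbation, but carried out in the $\mc{G}$-conditional, $L^0$-valued setting, and controlling everything ``locally'' (i.e.\ pointwise in $\omega$, modulo $\mc{G}$-measurable sets). First I would reduce to convergence along subsequences: it suffices to show that every subsequence of $(\widehat{h}(\eta_k))$ has a further subsequence converging in probability to $\widehat{h}(\eta)$. Passing to a subsequence, I may assume $\eta_k \to \eta$ almost surely; by splitting $\Omega$ into the $\mc{G}$-measurable sets $\{k-1 \le \eta < k\}$ (as in the proof of Lemma~\ref{duality}) and using that $\tilde v$ and the optimizers are locally defined, I may also assume $\eta$ is bounded and bounded away from $0$, and likewise control $\sup_k \eta_k$ and $\inf_k \eta_k$ on each such piece. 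The goal on each piece is then a genuine a.s.\ (hence in-probability) convergence statement.

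Next I would extract a candidate limit via Komlos. The sequence $(\widehat{h}(\eta_k))_k$ lies in $\bigcup_k \mc{D}(\eta_k)$, which (because $\eta_k$ is bounded on the piece) sits inside $\mc{D}(C)$ for a constant $C$, so its convex hull is bounded in probability; Lemma~\ref{komlos} gives $f^k \in \conv(\widehat{h}(\eta_k),\widehat{h}(\eta_{k+1}),\dots)$ with $f^k \to f$ a.s.\ for some finite $f$. Since convex combinations of elements of $\mc{D}(\eta_j)$ with $j \ge k$ lie in $\mc{D}(\sup_{j\ge k}\eta_j)$ and $\sup_{j \ge k}\eta_j \to \eta$, and $\mc{D}(\,\cdot\,)$ is closed under convergence in probability, one gets $f \in \mc{D}(\eta)$. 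The key estimate is then the ``no cost in the limit'' inequality
\[
E[V(f)\mid \mc{G}] \le \liminf_{k\to\infty} E[V(f^k)\mid\mc{G}] \le \liminf_{k\to\infty} \Big(\es_{j\ge k} E[V(\widehat{h}(\eta_j))\mid\mc{G}]\Big) = \liminf_{k\to\infty} \es_{j\ge k}\,\tilde v(\eta_j),
\]
where the first inequality is the conditional Fatou statement of Lemma~\ref{ui} (applicable because $(V^-(f^k))_k$ is $\mc{G}$-uniformly integrable, all $f^k$ lying in $\mc{D}(C)$) and the second is convexity of $V$. If I can show $\tilde v$ is continuous along $\eta_k \to \eta$ — which on the bounded piece should follow from sandwiching: $\tilde v(\eta_k) \le E[V(\eta_k \widehat{h}(\eta)/\eta)\mid\mc{G}] \to E[V(\widehat{h}(\eta))\mid\mc{G}] = \tilde v(\eta)$ by the conditional dominated/uniform-integrability machinery plus continuity of $V$, and a matching liminf lower bound by the same Fatou argument applied to $\widehat{h}(\eta_k)$ directly — then the right-hand side equals $\tilde v(\eta) = E[V(\widehat{h}(\eta))\mid\mc{G}]$. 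Hence $E[V(f)\mid\mc{G}] \le \tilde v(\eta)$ with $f \in \mc{D}(\eta)$, so by uniqueness of the dual optimizer (Lemma~\ref{dex}) we conclude $f = \widehat{h}(\eta)$.

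It remains to upgrade ``some convex combinations converge to $\widehat{h}(\eta)$'' to ``$\widehat{h}(\eta_k)$ itself converges to $\widehat{h}(\eta)$,'' and this is where the strict convexity must be used quantitatively, and where I expect the main obstacle to lie. The standard trick: $\tfrac12(\widehat{h}(\eta_k) + \widehat{h}(\eta))$ lies in $\mc{D}(\tfrac12(\eta_k+\eta))$, so $E[V(\tfrac12(\widehat{h}(\eta_k)+\widehat{h}(\eta)))\mid\mc{G}] \ge \tilde v(\tfrac12(\eta_k+\eta))$, which by the continuity of $\tilde v$ just established tends to $\tilde v(\eta)$; combined with $E[V(\widehat{h}(\eta_k))\mid\mc{G}] = \tilde v(\eta_k) \to \tilde v(\eta)$ and convexity of $V$, this forces the ``strict-convexity gap''
\[
\tfrac12 E[V(\widehat{h}(\eta_k))\mid\mc{G}] + \tfrac12 E[V(\widehat{h}(\eta))\mid\mc{G}] - E\big[V\big(\tfrac12(\widehat{h}(\eta_k)+\widehat{h}(\eta))\big)\,\big|\,\mc{G}\big] \longrightarrow 0
\]
in $L^0$. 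Because $V$ is strictly convex, on any $\mc{G}$-set where $\widehat{h}(\eta_k)$ and $\widehat{h}(\eta)$ stayed a fixed distance apart while both staying in a fixed compact range, this gap would be bounded below by a positive constant — a contradiction. Making this rigorous requires a uniform modulus-of-strict-convexity estimate for $V$ on compacts together with a truncation/exhaustion argument to handle the possibility that $\widehat{h}(\eta_k)$ escapes to $0$ or $\infty$; the boundedness in $L^1$ of $\mc{D}(C)$ controls the escape to $\infty$, and the $\mc{G}$-uniform integrability of $V^-$ (Lemma~\ref{ui}) together with $V(0+) = +\infty$ controls the escape to $0$. Assembling these pieces yields $\widehat{h}(\eta_k) \to \widehat{h}(\eta)$ in probability along the subsequence, and since the limit is always the same, along the full sequence.
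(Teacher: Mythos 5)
Your core strategy — the strict-convexity ``gap'' argument, namely that
\[
0 \le \tfrac12\,\tilde v(\eta_k)+\tfrac12\,\tilde v(\eta)
 - E\bigl[V\bigl(\tfrac12(\widehat h(\eta_k)+\widehat h(\eta))\bigr)\,\big|\,\mc{G}\bigr]
 \le \tfrac12\,\tilde v(\eta_k)+\tfrac12\,\tilde v(\eta)-\tilde v\bigl(\tfrac12(\eta_k+\eta)\bigr)\longrightarrow 0,
\]
followed by a localization/truncation so that a uniform modulus of strict convexity of $V$ can be applied — is exactly what the paper intends, which refers the reader to the first part of Lemma~3.8 of \cite{MR1722287} and says the argument is ``basically a consequence of the strict convexity of $V$.'' Your reduction to a.s.\ convergent subsequences, the $\mc{G}$-measurable piecing to make $\eta$ and $\eta_k$ bounded above and below, and the truncation remarks to control escape of $\widehat h(\eta_k)$ to $0$ and $\infty$ are all in the right spirit. (One small inaccuracy: $V(0+)=U(\infty)$ need not be $+\infty$; but the modulus of strict convexity argument survives either way, since if $V(0+)<\infty$ one just extends $V$ continuously to $[0,M]$ and uses compactness there.)

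The genuine gap is in the sublemma that $\tilde v(\eta_k)\to\tilde v(\eta)$, which the gap argument above truly needs (it feeds in twice: as $\limsup_k\tilde v(\eta_k)\le\tilde v(\eta)$ and as $\liminf_k\tilde v(\tfrac12(\eta_k+\eta))\ge\tilde v(\eta)$). Your upper bound via $\tilde v(\eta_k)\le E[V(\tfrac{\eta_k}{\eta}\widehat h(\eta))\mid\mc{G}]$ and a dominated-convergence/AE estimate is fine. But the lower bound is \emph{not} obtained by ``the same Fatou argument applied to $\widehat h(\eta_k)$ directly'': $\widehat h(\eta_k)$ is not known to converge — that is precisely the content of the lemma — so there is nothing to apply Fatou to. Even the Komlos step you run beforehand only controls $\liminf_k\es_{j\ge k}\tilde v(\eta_j)$, i.e.\ a pointwise $\limsup$ of the values, and it moreover \emph{presupposes} the very continuity you are trying to establish in order to conclude $f=\widehat h(\eta)$. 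The clean and correct way to get the lower bound, entirely within the tools already available at this point in the paper, is to invoke Lemma~\ref{duality}: since $\tilde v(\eta')=\es_{\xi}\,(\tilde u(\xi)-\xi\eta')$ is an essential supremum of maps that are jointly continuous in $\eta'$, one has $P\text{-}\liminf_k\tilde v(\eta_k)\ge\tilde u(\xi)-\xi\eta$ for every $\xi$, hence $P\text{-}\liminf_k\tilde v(\eta_k)\ge\tilde v(\eta)$. (This step is invisible in \cite{MR1722287} only because there $v$ is a finite real-valued convex function on $(0,\infty)$ and is automatically continuous; it is a genuine extra obstacle in the conditional $L^0$-valued setting, which you correctly sensed but did not correctly resolve.) Once continuity of $\tilde v$ along $\eta_k\to\eta$ is in hand, your strict-convexity argument goes through, and the entire Komlos preamble becomes unnecessary and can be deleted.
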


\begin{proof}  See the first part of \cite{MR1722287}, Lemma $3.8$.  The proof here is essentially identical, and basically a consequence of the strict convexity of $V$.
\end{proof}

\begin{lemma}\label{dderivcont}  Assume the same hypotheses we have in Lemma~\ref{derivcont}.  Then $E [V'(\widehat{h}(\eta_k))\widehat{h}(\eta_k) \ | \ \mc{G}]$ converges to $E [V'(\widehat{h}(\eta))\widehat{h}(\eta) \ | \ \mc{G}]$ in probability.
\end{lemma}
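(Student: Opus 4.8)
The plan is to reduce the claim to a uniform‑integrability statement and then invoke the conditional tools of Section~$3.1$. Write $h_k:=\widehat h(\eta_k)$ and $h_0:=\widehat h(\eta)$, and recall $V'=-I$ with $I:=(U')^{-1}$, so $yV'(y)=-yI(y)\le 0$. By Lemma~\ref{derivcont} we have $h_k\to h_0$ in probability, and since $V\in\mc{C}^1$ the map $y\mapsto yV'(y)$ is continuous on $(0,\infty)$; the $h_k$ are strictly positive a.s. (an atom at $0$ would force $\tilde v(\eta_k)=\infty$ when $U$ is unbounded, while for bounded $U$ the map $y\mapsto yV'(y)$ extends continuously to $0$), so $h_kV'(h_k)\to h_0V'(h_0)$ in probability. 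It therefore suffices to prove that $\{\,h_kV'(h_k)\,\}_{k\ge1}$ is $\mc{G}$-uniformly integrable: the conditional analogue of ``convergence in probability together with uniform integrability implies $L^1$-convergence'' (Section~$3.1$) then yields conditional $L^1$-convergence, hence $E[h_kV'(h_k)\mid\mc{G}]\to E[h_0V'(h_0)\mid\mc{G}]$ in probability.

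To get the $\mc{G}$-uniform integrability, fix $\mu\in(0,1)$. Since $V$ is convex and decreasing and $\mu h_k<h_k$, comparing the derivative at $h_k$ with the chord of $V$ over $[\mu h_k,h_k]$ gives the pointwise bound
\[ 0 \le -h_kV'(h_k) \le \frac{1}{1-\mu}\bigl(V(\mu h_k)-V(h_k)\bigr). \]
Using the asymptotic‑elasticity estimate recalled in the introduction ($V(\mu y)<\mu^{-\alpha}V(y)$ for $0<y\le y_0$) on $\{h_k\le y_0\}$ and the monotonicity of $V$ on $\{h_k>y_0\}$, there is a constant $c_\mu\ge0$ with $V(\mu h_k)\le\mu^{-\alpha}V^+(h_k)+c_\mu$, whence
\[ 0 \le -h_kV'(h_k) \le \frac{1}{1-\mu}\bigl((\mu^{-\alpha}-1)\,V^+(h_k)+V^-(h_k)+c_\mu\bigr). \]
By Lemma~\ref{ui} (applied along the convergent sequence $\eta_k\to\eta$, using Lemma~\ref{convp}) the family $\{V^-(h_k)\}_k$ is $\mc{G}$-uniformly integrable, so the whole matter reduces to showing that $\{V^+(h_k)\}_k$ is $\mc{G}$-uniformly integrable.

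This last point is the heart of the argument and is where the full strength of $AE(U)<1$ enters; it is the conditional counterpart of the estimates behind Lemmas~3.8 and~6.3 of \cite{MR1722287}. Asymptotic elasticity provides a polynomial growth bound $V^+(y)\le C(1+y^{-\alpha})$ valid for all $y>0$, and this, combined with the $L^0(\mc{G})$‑boundedness of the conditional expectations of the elements of $\mc{D}(\eta_k)$, the $\mc{G}$-uniform integrability of $\{V^-(h_k)\}_k$, the optimality of each $h_k$, and the convergence $\tilde v(\eta_k)=E[V(h_k)\mid\mc{G}]\to E[V(h_0)\mid\mc{G}]=\tilde v(\eta)$, pins down the uniform control of the positive parts. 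The required continuity of $\tilde v$ is itself deduced from its $\mc{G}$-convexity and finiteness (Lemma~\ref{dex}) and Lemma~\ref{ui}: $\liminf_k\tilde v(\eta_k)\ge\tilde v(\eta)$ is the conditional Fatou statement of Lemma~\ref{confatou}/Lemma~\ref{ui} applied to $V(h_k)\to V(h_0)$, while $\limsup_k\tilde v(\eta_k)\le\tilde v(\eta)$ follows by feeding the rescaled optimizer $(\eta_k/\eta)\widehat h(\eta)\in\mc{D}(\eta_k)$ into the definition of $\tilde v(\eta_k)$ and passing to the limit. Once $\{V^+(h_k)\}_k$ is $\mc{G}$-uniformly integrable, the second display shows the same for $\{h_kV'(h_k)\}_k$ and the first paragraph concludes; alternatively one may finish by a two‑sided conditional Fatou squeeze between $-h_kV'(h_k)$ and $\tfrac{1}{1-\mu}(V(\mu h_k)-V(h_k))$, using that $\mc{G}$-uniform integrability of $\{V^+(h_k)\}_k$ also yields $E[V(\mu h_k)\mid\mc{G}]\to E[V(\mu h_0)\mid\mc{G}]$.

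The main obstacle is exactly the $\mc{G}$-uniform integrability of $\{V^+(\widehat h(\eta_k))\}_k$: mere $L^1(\mc{G})$‑boundedness (all that finiteness of $\tilde v$ gives directly) does not suffice, and one must genuinely combine the asymptotic‑elasticity growth bound on $V$ near the origin with the optimality of the $\widehat h(\eta_k)$ and their convergence, transposing the Kramkov--Schachermayer argument to the $L^0(\mc{G})$‑valued framework.
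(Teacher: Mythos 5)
Your argument is correct and matches the paper's approach: the paper simply cites Lemma~3.9 of \cite{MR1722287}, and your proposal is a faithful transposition of that Kramkov--Schachermayer argument to the conditional setting, using the chord bound $0\le -h_kV'(h_k)\le \tfrac{1}{1-\mu}(V(\mu h_k)-V(h_k))$, the asymptotic-elasticity control of $V(\mu\cdot)$, the $\mc{G}$-uniform integrability of $V^-(\widehat h(\eta_k))$ from Lemma~\ref{ui}/\ref{convp}, and a conditional Scheff\'e step driven by $\tilde v(\eta_k)\to\tilde v(\eta)$ to get $\mc{G}$-uniform integrability of $V^+(\widehat h(\eta_k))$.
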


\begin{proof} The proof is again identical to that of Lemma $3.9$ in \cite{MR1722287}.
\end{proof}

\begin{remark}\label{rem:adrm}  Suppose that $\mu_k$ is a sequence of $\mc{G}$-measurable random variables converging uniformly to $1$.  Then we can still conclude that $E\left[V'\left(\mu_k \widehat{h}(\eta_k)\right) \widehat{h}(\eta_k) \ | \ \mc{G}\right]$ converges to $E\left[V'\left(\widehat{h}(\eta)\right) \widehat{h}(\eta) \ | \ \mc{G}\right]$ in probability.  The reasoning is identical to that of \cite{MR1722287} in Remark $3.1$.
\end{remark}

We say that $\tilde{v}$ is G\^{a}teaux differentiable at $\eta \in L^0_{++}(\mc{G})$ in the direction $b \in L^0_{++}(\mc{G})$ if 
\[\tilde{v}'(\eta;b) \triangleq \lim_{s \ra 0} \frac{\tilde{v}(\eta + sb) - \tilde{v}(\eta)}{s}\] 
exists as a limit in probability.  We denote by $\tilde{v}^+(\eta;b)$ the one-sided right-hand G\^{a}teaux derivative, calculated only as $s \downarrow 0$.

We first assume that $\eta$ is bounded away from zero, and that $b \in L^\infty(\mc{G})$.  This ensures that for $s$ sufficiently small, $\eta + sb \in L^0_{++}(\mc{G})$, the domain of $\tilde{v}$.

\begin{lemma}\label{vdiffer}  The limit $\tilde{v}'(\eta;b)$ exists in probability for $\eta$ and $b$ as above, and is equal to $\frac{-b}{\eta}E[\hat{h}(\eta)I(\hat{h}(\eta)) \ | \ \mc{G}]$, where $I=-V'=(U')^{-1}$.
\end{lemma}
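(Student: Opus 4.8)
The plan is to compute the incremental ratio $\bigl(\tilde v(\eta+sb)-\tilde v(\eta)\bigr)/s$ by sandwiching it between two expressions, each built from a \emph{nearly optimal} dual element, and then pass to the limit $s\to 0$ using the continuity results of Lemmas~\ref{derivcont}, \ref{dderivcont} and Remark~\ref{rem:adrm}. Write $\widehat h=\widehat h(\eta)$ for the optimizer at $\eta$ and $\widehat h_s=\widehat h(\eta+sb)$ for the optimizer at $\eta+sb$; recall $\mc D(\eta)=\eta\,\mc D$, so $\frac{\eta+sb}{\eta}\widehat h\in\mc D(\eta+sb)$ and $\frac{\eta}{\eta+sb}\widehat h_s\in\mc D(\eta)$ are admissible competitors. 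The key structural identity, as in \cite{MR1722287}, is that for a scalar (here $\mc G$-measurable) multiplier $\mu$ one has, by convexity of $V$,
\[
V(\mu y)\ \ge\ V(y)+V'(y)\,y\,(\mu-1)\ \ \text{ and, reversing roles, }\ V(y)\ \ge\ V(\mu y)+V'(\mu y)\,\mu y\,\Bigl(\tfrac1\mu-1\Bigr),
\]
so that the ratio is trapped between $E[V'(\widehat h)\widehat h(\eta+sb-\eta)/s\mid\mc G]/\eta$-type terms and the corresponding terms with $\widehat h_s$.

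Concretely: for the upper bound, use $\tfrac{\eta+sb}{\eta}\widehat h$ as a competitor at $\eta+sb$, giving
\[
\tilde v(\eta+sb)-\tilde v(\eta)\ \le\ E\!\left[V\!\bigl(\tfrac{\eta+sb}{\eta}\widehat h\bigr)-V(\widehat h)\ \big|\ \mc G\right]\ \le\ \frac{sb}{\eta}\,E\!\left[V'\!\bigl(\tfrac{\eta+sb}{\eta}\widehat h\bigr)\tfrac{\eta+sb}{\eta}\widehat h\ \big|\ \mc G\right],
\]
where the second inequality is the convexity bound $V(\mu y)-V(y)\le (\mu-1)V'(\mu y)\mu y$ with $\mu=(\eta+sb)/\eta$, valid for $s$ small since then $\mu>0$ and $\mu\to 1$ uniformly. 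Dividing by $s>0$ and letting $s\downarrow 0$, the multiplier $\mu_s=(\eta+sb)/\eta\to 1$ uniformly (here I use $b\in L^\infty(\mc G)$ and $\eta$ bounded away from $0$), so Remark~\ref{rem:adrm} (applied with $\eta_k\equiv\eta$, $\mu_k=\mu_s$) gives $\limsup_{s\downarrow 0}\tfrac1s(\tilde v(\eta+sb)-\tilde v(\eta))\le\tfrac b\eta E[V'(\widehat h)\widehat h\mid\mc G]$. For the matching lower bound, run the symmetric argument with $\tfrac{\eta}{\eta+sb}\widehat h_s$ as a competitor at $\eta$:
\[
\tilde v(\eta+sb)-\tilde v(\eta)\ \ge\ E\!\left[V(\widehat h_s)-V\!\bigl(\tfrac{\eta}{\eta+sb}\widehat h_s\bigr)\ \big|\ \mc G\right]\ \ge\ \frac{sb}{\eta+sb}\,E\!\left[V'(\widehat h_s)\,\widehat h_s\ \big|\ \mc G\right],
\]
again via the one-variable convexity inequality. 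Divide by $s$; now $\widehat h_s\to\widehat h$ in probability by Lemma~\ref{derivcont} and $E[V'(\widehat h_s)\widehat h_s\mid\mc G]\to E[V'(\widehat h)\widehat h\mid\mc G]$ in probability by Lemma~\ref{dderivcont}, while $\tfrac{b}{\eta+sb}\to\tfrac b\eta$ uniformly, so $\liminf_{s\downarrow 0}\tfrac1s(\tilde v(\eta+sb)-\tilde v(\eta))\ge\tfrac b\eta E[V'(\widehat h)\widehat h\mid\mc G]$. The two bounds coincide, giving the right-hand derivative $\tilde v^+(\eta;b)=\tfrac b\eta E[V'(\widehat h)\widehat h\mid\mc G]$; replacing $b$ by $-b$ (and noting $s\downarrow 0$ with $-b$ recovers $s\uparrow 0$ with $b$) yields the same value for the left-hand derivative, so the two-sided limit exists. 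Finally, $V'=-I$ gives the asserted form $\tfrac{-b}{\eta}E[\widehat h(\eta)I(\widehat h(\eta))\mid\mc G]$.

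The main obstacle is making the convexity sandwich uniform enough to justify the interchange of limit and conditional expectation. The upper-bound term involves $V'$ evaluated at the perturbed point $\tfrac{\eta+sb}{\eta}\widehat h$, not at $\widehat h$; this is exactly the scenario Remark~\ref{rem:adrm} is designed to handle (a uniformly-convergent $\mc G$-measurable multiplier times the optimizer), but one must check that $\tfrac{\eta+sb}{\eta}\to 1$ uniformly — which is why the lemma is stated under the standing assumption that $\eta$ is bounded away from zero and $b\in L^\infty(\mc G)$ — and that the lower-order error terms dropped in the convexity inequality are negligible after division by $s$ (they are $o(s)$ uniformly since $V$ is $\mc C^1$ and the relevant arguments stay in a region where $V'$ is controlled, using the $\mc G$-uniform integrability of $V^-(\mc D(\eta))$ from Lemma~\ref{ui} to handle the negative part). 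The general case of arbitrary $\eta\in L^0_{++}(\mc G)$ and $b\in L^0_{++}(\mc G)$ is then obtained afterward by a localization/exhaustion argument on the $\mc G$-measurable sets where $\eta$ and $b$ are suitably bounded, using that $\tilde v$ is locally defined.
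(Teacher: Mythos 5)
Your strategy matches the paper's at its core: sandwich the difference quotient between two convexity bounds, one built from the optimizer $\hat h(\eta)$ and one from the optimizer at the perturbed point, then pass to the limit with Lemmas~\ref{derivcont}/\ref{dderivcont} and Remark~\ref{rem:adrm}. Two of your departures are genuine improvements: you work with the additive perturbation $\eta+sb$ directly, avoiding the paper's $s^b\eta$ reparametrization through $\tilde v_e=\tilde v\circ\exp$; and you obtain the two-sided derivative by the symmetry $b\mapsto -b$ (the formula is odd in $b$), whereas the paper instead establishes continuity of $\tilde v^+(\cdot;b)$ in $\eta$ and invokes the abstract fact that a $\mc G$-convex function with continuous right derivative is differentiable.

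There is, however, a genuine error in your lower-bound convexity step. You assert
\[
V(\hat h_s)-V\!\left(\tfrac{\eta}{\eta+sb}\,\hat h_s\right)\ \ge\ \tfrac{sb}{\eta+sb}\,V'(\hat h_s)\,\hat h_s,
\]
but the tangent line of the convex function $V$ at the point $\hat h_s$ gives precisely the \emph{opposite} inequality (for $b>0$, $\tfrac{\eta}{\eta+sb}\hat h_s<\hat h_s$, and the tangent at $\hat h_s$ lies below the graph, which yields $V(\hat h_s)-V(\nu\hat h_s)\le (1-\nu)V'(\hat h_s)\hat h_s$). To get a lower bound you must take the tangent at the \emph{perturbed} point $\tfrac{\eta}{\eta+sb}\hat h_s$, which gives
\[
V(\hat h_s)-V\!\left(\tfrac{\eta}{\eta+sb}\,\hat h_s\right)\ \ge\ \tfrac{sb}{\eta+sb}\,V'\!\left(\tfrac{\eta}{\eta+sb}\,\hat h_s\right)\hat h_s,
\]
with $V'$ evaluated at the scaled argument. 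Dividing by $s$ and letting $s\downarrow 0$, the limit then follows not from Lemma~\ref{dderivcont} alone but from Remark~\ref{rem:adrm}, applied with $\eta_k=\eta+s_kb$ and the uniformly-convergent multiplier $\mu_k=\tfrac{\eta}{\eta+s_kb}\to 1$ (uniform because $\eta$ is bounded away from zero and $b\in L^\infty(\mc G)$). This is exactly the scenario you correctly identified in the upper bound; the lower bound is symmetric, not simpler. (Your upper-bound inequality also carries a spurious extra factor of $\mu$ — the correct bound is $V(\mu y)-V(y)\le(\mu-1)V'(\mu y)\,y$, without the $\mu$ — but since $\mu\to 1$ uniformly this is harmless in the limit.) With these two repairs the proof is sound, and the comparison with the paper's proof is as described above.
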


\begin{proof}  The proof is very similar to Lemma $3.10$ of \cite{MR1722287}, but differs slightly on the technical details.  First, we claim that $-\eta \tilde{v}'(\eta;b) = \underset{s \ra 1}{\lim} \frac{\tilde{v}(\eta) - \tilde{v}(s^b \eta)}{s - 1},$ provided that this limit exists in probability.  We have 
\[
\begin{split}
\underset{s \ra 1}{\lim} \frac{\tilde{v}(\eta) - \tilde{v}(s^b \eta)}{s - 1} &= \underset{s \ra 1}{\lim} \frac{\tilde{v}(\exp(\log \eta)) - \tilde{v}(\exp(\log s^b+ \log \eta))}{s - 1}\\&= \underset{s \ra 1}{\lim} \frac{\tilde{v}_e(\log \eta) - \tilde{v}_e(b\log s + \log \eta)}{s - 1},
\end{split}
\]
 where $\tilde{v}_e(\eta) = \tilde{v}(\exp(\eta))$.  As $s \ra 1$, $\log s = s + o(s)$.  Hence, this last quantity would be equal to $-\tilde{v}_e'(\log \eta; b)$.  We then calculate that $-\tilde{v}_e'(\log \eta;b) = -\tilde{v}'(\exp(\log \eta);b)\exp(\log \eta) = -\tilde{v}'(\eta;b)\eta$.

Note that all of the above was contingent on the limit being well-defined in probability.  We now show that 
\[P-\limsup_{s \downarrow 1} \frac{\tilde{v}(\eta) - \tilde{v}(s^b \eta)}{s - 1} \leq - b E [\widehat{h}(\eta)I(\widehat{h}(\eta)) \ | \ \mc{G}],\] and that 
\[P-\liminf_{s \downarrow 1} \frac{\tilde{v}(\eta) - \tilde{v}(s^b \eta)}{s - 1} \geq - b E [\widehat{h}(\eta)I(\widehat{h}(\eta)) \ | \ \mc{G}].\]  By Lemma \ref{derivcont}, the map $\eta \mapsto E [\widehat{h}(\eta)I(\widehat{h}(\eta)) \ | \ \mc{G}]$ is continuous in probability.  Thus, if we can prove the above two inequalities, we will have shown that $\tilde{v}^+(\eta;b)$ is continuous in $\eta$.  Given that $\tilde{v}$ is $\mc{G}$-convex, we know, as in the real-valued case, that this is sufficient to prove the differentiability of $\tilde{v}$ in the direction $b$.  If there is a point of non-differentiability, there must be a discontinuity of the right-handed derivative.

For the first inequality, we have 
\[
\begin{split}
P-\limsup_{s \downarrow 1} \frac{\tilde{v}^s(\eta) - \tilde{v}(s^b \eta)}{s - 1} &\leq P-\limsup_{s \downarrow 1} \frac{1}{s - 1} E [V(\frac{1}{s^b}\widehat{h}(s^b \eta)) - V(\widehat{h}(s^b \eta)) \ | \ \mc{G}]
\\ & \leq P-\limsup_{s \downarrow 1} \frac{1}{s - 1} E [(\frac{1}{s^b} - 1)\widehat{h}(s^b \eta)V'(\frac{1}{s^b}\widehat{h}(s^b \eta)) \ | \ \mc{G}]\\
& = P-\limsup_{s \downarrow 1} \frac{1}{s - 1}(\frac{1}{s^b} - 1) E [\widehat{h}(s^b \eta)V'(\frac{1}{s^b}\widehat{h}(s^b \eta)) \ | \ \mc{G}]\\
 &= -b E [\widehat{h}(\eta)I(\widehat{h}(\eta)) \ | \ \mc{G}],
 \end{split}
 \] 
by Remark~\ref{rem:adrm}, using the fact that $b \in L^\infty(\mc{G})$.

For the second inequality, we have 
\[
\begin{split}
P-\liminf_{s \downarrow 1} \frac{\tilde{v}(\eta) - \tilde{v}(s^b \eta)}{s - 1} &= P-\liminf_{s \downarrow 1} \frac{1}{s - 1} E [ V(\widehat{h}(\eta)) - V(s^b \widehat{h}(\eta)) \ | \ \mc{G}]\\&\geq \liminf_{s \downarrow 1} \frac{1}{s - 1}E [(1 - s^b)\widehat{h}(\eta)V'(s^u\widehat{h}(\eta)) \ | \mc{G}]
\\& = -b E [\widehat{h}(\eta) I(\widehat{h}(\eta)) \ | \ \mc{G}],
\end{split}\] 
where the last equality follows from the conditional monotone convergence theorem.

\end{proof}

We cannot at first define the G\^{a}teaux derivative of $\tilde{v}$ for all points in its domain, due to the fact that $L^0_{++}(\mc{G})$ is not open in $L^0(\mc{G})$ whenever the underlying set $\Omega$ is not finite.  Using the locally defined nature of $\tilde{v}$, we can still actually define $\tilde{v}(\eta;b)$ for arbitrary $\eta \in L^0_{++}(\mc{G})$ and $b \in L^1(\mc{G})$ as a kind of ``$\sigma$-derivative".  We extend the definition of G\^ateaux derivative to all $\eta \in L^0_{++}(\mc{G})$ and $b \in L^1(\mc{G})$ as follows.  

First, let $\eta \in L^0_{++}(\mc{G})$ and $b \in L^\infty_{++}(\mc{G})$.  For natural numbers $n$, let $A_n = \{\eta \geq \frac{1}{n}\}$.  Let $\eta_n = \eta1_{A_n} + \frac{1}{n}1_{(A_n)^c}$.  Note that $\eta_n$ is bounded from below, so that for $b \in L^\infty(\mc{G})$, the directional derivative $\tilde{v}'(\eta_n;b)$ is well-defined.  For $m > n$, note that $\eta_m$ and $\eta_n$ agree on $A_n$, so by the local property of $\tilde{v}$, we see that $\tilde{v}'(\eta_m;b)$ and $\tilde{v}'(\eta_n;b)$ agree on $A_n$.  Thus, we can safely define 
\[\tilde{v}'(\eta;b) = \sum_{n=1}^\infty \tilde{v}'(\eta_n;b)1_{A_n \setminus A_{n-1}}.\]

We now do a similar thing to extend the allowable $b$ values.  For $b \in L^1(\mc{G})$, let $B_n = \{|b|  \leq n\}$ for $n \geq 1$.  Let $b_n = b 1_{B_n} + n1_{(B_n)^c}$.  Since each $b_n$ is bounded, $\tilde{v}'(\eta;b_n)$ is well-defined.  Again using the local property of $\tilde{v}$, we consistently define, for $\eta \in L^0_{++}(\mc{G})$, $\tilde{v}'(\eta;b) = \tilde{v}'(\eta;b_n)$ on $B_n$. 

According to Lemma \ref{vdiffer}, for $\eta \in L^0_{++}(\mc{G})$ and bounded away from zero, we have $\tilde{v}'(\eta) = \frac{-1}{\eta}E[\hat{h}(\eta)I(\hat{h}(\eta)) \ | \ \mc{G}]$, noting the relationship between the subderivative and G\^{a}teaux derivative.   The mapping $\eta \mapsto E[\hat{h}(\eta)I(\hat{h}(\eta)) \ | \ \mc{G}]$ is defined locally, because $\eta \mapsto \hat{h}(\eta)$ is.  This implies that Lemma \ref{vdiffer} can be extended to all $\eta \in L^0_{++}(\mc{G})$.  We have:

\begin{lemma}\label{vdiffer2}  Let $\eta \in L^0_{++}(\mc{G})$.  Then $\tilde{v}'(\eta) = \frac{-1}{\eta}E [ \hat{h}(\eta) I(\hat{h}(\eta)) \ | \ \mc{G}]$.   
\end{lemma}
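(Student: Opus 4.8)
The plan is to bootstrap Lemma~\ref{vdiffer}, which already delivers the claimed formula in the restricted regime where $\eta$ is bounded away from zero and the direction $b$ lies in $L^\infty(\mc{G})$, and then to propagate it to the general case by means of the two pasting constructions described immediately above the statement. Throughout, ``$\tilde v'(\eta)$'' is understood as the random variable that multiplies the direction in the formula $\tilde v'(\eta;b) = b\cdot\tilde v'(\eta)$; that such a single random variable makes sense, i.e.\ that the restricted directional derivative of Lemma~\ref{vdiffer} is positively homogeneous and additive in $b$ on $L^\infty(\mc{G})$, is immediate from the explicit expression there.

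First I would record the following consequence of uniqueness of the dual optimizer (Lemma~\ref{dex}), the locally defined property of $\tilde v$, and the identity $\mc{D}(\eta)=\eta\mc{D}$: the map $\eta\mapsto\widehat h(\eta)$ is itself locally defined, i.e.\ for $A\in\mc{G}$ and $\eta_1,\eta_2\in L^0_{++}(\mc{G})$,
\[
\widehat h(1_A\eta_1+1_{A^c}\eta_2)=1_A\widehat h(\eta_1)+1_{A^c}\widehat h(\eta_2).
\]
Indeed the right-hand side lies in $\mc{D}(1_A\eta_1+1_{A^c}\eta_2)$ by $\mc{G}$-convexity and solidity, its conditional $V$-expectation equals $1_A\tilde v(\eta_1)+1_{A^c}\tilde v(\eta_2)=\tilde v(1_A\eta_1+1_{A^c}\eta_2)$ by the local property of $\tilde v$, and the minimizer is unique. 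Hence $\eta\mapsto E[\widehat h(\eta)I(\widehat h(\eta))\mid\mc{G}]$ is locally defined as well.

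Next, fix $\eta\in L^0_{++}(\mc{G})$ and recall the truncations $\eta_n=\eta 1_{A_n}+\tfrac1n1_{A_n^c}$ with $A_n=\{\eta\ge\tfrac1n\}$, each bounded below. By the locality just established, $\widehat h(\eta_n)=\widehat h(\eta)$ on $A_n$, so Lemma~\ref{vdiffer} applied to $\eta_n$ (with bounded $b$) gives, on $A_n$,
\[
\tilde v'(\eta_n;b)=\frac{-b}{\eta_n}E[\widehat h(\eta_n)I(\widehat h(\eta_n))\mid\mc{G}]=\frac{-b}{\eta}E[\widehat h(\eta)I(\widehat h(\eta))\mid\mc{G}].
\]
Since $\eta>0$ forces $\bigcup_n A_n=\Omega$ up to a null set, the pasted definition $\tilde v'(\eta;b)=\sum_n\tilde v'(\eta_n;b)1_{A_n\setminus A_{n-1}}$ coincides with $\tfrac{-b}{\eta}E[\widehat h(\eta)I(\widehat h(\eta))\mid\mc{G}]$ for every $b\in L^\infty(\mc{G})$. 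Finally, for $b\in L^1(\mc{G})$ one truncates $b_n=b1_{B_n}+n1_{B_n^c}$ with $B_n=\{|b|\le n\}$; since the formula for $\tilde v'(\eta;b_n)$ is linear in the bounded direction and localizes, it equals $\tfrac{-b}{\eta}E[\widehat h(\eta)I(\widehat h(\eta))\mid\mc{G}]$ on $B_n$, and pasting over $n$ yields the stated identity for $\tilde v'(\eta)$.

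The step that needs the most care — and which I regard as the real content — is verifying that these pastings are genuinely consistent, i.e.\ that $\tilde v'(\eta_m;b)$ and $\tilde v'(\eta_n;b)$ agree on $A_n$ for $m>n$ (and the analogue in $b$): this is precisely where the locally defined hypothesis on $\tilde v$, and through it the locality of $\widehat h$, is indispensable, for without it the series defining $\tilde v'(\eta;b)$ need not even be well posed. Everything else is routine rewriting of Lemma~\ref{vdiffer}.
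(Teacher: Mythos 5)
Your argument is correct and follows the paper's own route exactly: the paper derives the formula for $\eta$ bounded away from zero and bounded $b$ in Lemma~\ref{vdiffer}, and then extends it to all of $L^0_{++}(\mc{G}) \times L^1(\mc{G})$ by precisely the two pasting constructions you use, observing that $\eta \mapsto E[\hat{h}(\eta)I(\hat{h}(\eta))\mid\mc{G}]$ is locally defined because $\eta\mapsto\hat{h}(\eta)$ is. The one thing you add is a clean two-line verification of that locality of $\hat{h}$ from Lemma~\ref{dex}, $\mc{D}(\eta)=\eta\mc{D}$, and the local property of $\tilde v$, which the paper merely asserts.
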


\begin{lemma}\label{pd}  Suppose that the random variables $\xi$ and $\eta$ are related by $\xi = -\tilde{v}'(\eta)$.  If $\hat{h}(\eta)$ is the optimal dual variable, then $\widehat{g}(\xi) \triangleq I(\widehat{h}(\eta))$ is the optimal primal variable.
\end{lemma}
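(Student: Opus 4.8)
The plan is to show that $\widehat{g}(\xi) \triangleq I(\widehat{h}(\eta))$ both lies in the feasible set $\mc{C}(\xi)$ and attains the value $\tilde{u}(\xi)$, using the conjugacy $\tilde{u}(\xi) = \es_{\eta' \in L^0_{++}(\mc{G})} (\tilde{v}(\eta') + \eta'\xi)$ from Lemma \ref{duality} (together with Lemma \ref{bidual}) and the explicit form $\tilde{v}'(\eta) = \frac{-1}{\eta}E[\widehat{h}(\eta)I(\widehat{h}(\eta)) \mid \mc{G}]$ from Lemma \ref{vdiffer2}. First I would establish feasibility: since $\xi = -\tilde{v}'(\eta)$, the affine function $\eta' \mapsto \tilde{v}(\eta') + \xi\eta'$ has $\eta' = \eta$ as a minimizer over $L^0_{++}(\mc{G})$ (this is exactly the statement that $-\xi \in \partial\tilde{v}(\eta)$, i.e. $\xi = -\tilde{v}'(\eta)$), so by conjugacy $\tilde{u}(\xi) = \tilde{v}(\eta) + \xi\eta$. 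Then, using the pointwise inequality $V(y) \geq U(x) - xy$ with $x = I(y)$ giving equality, I get $\tilde{v}(\eta) = E[V(\widehat{h}(\eta)) \mid \mc{G}] = E[U(I(\widehat{h}(\eta))) - I(\widehat{h}(\eta))\widehat{h}(\eta) \mid \mc{G}]$, so that
\[
\tilde{u}(\xi) = E[U(\widehat{g}(\xi)) \mid \mc{G}] - E[\widehat{g}(\xi)\widehat{h}(\eta) \mid \mc{G}] + \xi\eta.
\]

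The heart of the argument is to show $E[\widehat{g}(\xi)\widehat{h}(\eta) \mid \mc{G}] = \xi\eta$, after which the display reduces to $\tilde{u}(\xi) = E[U(\widehat{g}(\xi)) \mid \mc{G}]$; combined with $\widehat{g}(\xi) \in \mc{C}(\xi)$ this forces $\widehat{g}(\xi)$ to be the (unique, by strict concavity of $U$) optimizer. For the ``$\leq$'' direction of $E[\widehat{g}(\xi)\widehat{h}(\eta) \mid \mc{G}] \leq \xi\eta$, I would argue that $\widehat{g}(\xi)/\xi \in \mc{C}$, which holds provided $E[(\widehat{g}(\xi)/\xi)h \mid \mc{G}] \leq 1$ for all $h \in \mc{D}$; this is where I would appeal to the first-order optimality of $\widehat{h}(\eta)$ for $\tilde{v}(\eta)$: perturbing $\widehat{h}(\eta)$ toward any $h \in \mc{D}(\eta)$ and using convexity of $V$ together with $V' = -I$ yields $E[(h - \widehat{h}(\eta))I(\widehat{h}(\eta)) \mid \mc{G}] \leq 0$ for all $h \in \mc{D}(\eta)$, i.e. $E[h\,\widehat{g}(\xi) \mid \mc{G}] \leq E[\widehat{h}(\eta)\widehat{g}(\xi) \mid \mc{G}] = \eta E[\widehat{g}(\xi) (\widehat{h}(\eta)/\eta) \mid \mc{G}]$. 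For the reverse inequality $E[\widehat{g}(\xi)\widehat{h}(\eta) \mid \mc{G}] \geq \xi\eta$, I would use that $\tilde{u}(\xi) \geq E[U(\widehat{g}(\xi)) \mid \mc{G}]$ is impossible to exceed once feasibility is known, so combining with the display above gives the matching bound; alternatively, one shows directly that any strict inequality would contradict the attainment $\tilde{u}(\xi) = \tilde{v}(\eta) + \xi\eta$ already proven.

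I expect the main obstacle to be the careful justification of the first-order (Euler–Lagrange) inequality $E[(h - \widehat{h}(\eta))I(\widehat{h}(\eta)) \mid \mc{G}] \leq 0$ for all $h \in \mc{D}(\eta)$ in the conditional $L^0$ setting: one must perturb $\widehat{h}(\eta)$ along the segment to $h$ inside $\mc{D}(\eta)$ (which is $\mc{G}$-convex, so the segment stays feasible), divide the difference quotient of $E[V(\cdot)\mid\mc{G}]$ by the perturbation parameter, and pass to the limit — the upper bound coming from convexity of $V$ is immediate, but controlling the lower bound requires a $\mc{G}$-uniform-integrability / conditional dominated-convergence argument of the kind supplied by Lemma \ref{ui} and the conditional Fatou lemma, exactly as in the unconditional proof of Lemma 3.8–3.10 of \cite{MR1722287}. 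A secondary technical point is handling the case where $\eta$ (hence $\xi$) is not bounded away from zero; here I would reduce to the bounded-away case on the sets $A_n = \{\eta \geq 1/n\}$ and invoke the locally defined property of $\tilde{v}$ and of $\eta \mapsto \widehat{h}(\eta)$, just as in the construction preceding Lemma \ref{vdiffer2}.
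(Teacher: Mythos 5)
Your outline reproduces the Kramkov--Schachermayer Lemma~3.11 argument in the conditional setting, which is exactly what the paper invokes by citation, and the overall structure is sound. One point of reorganization, though: the identity $E[\hat g(\xi)\hat h(\eta)\mid\mc G] = \xi\eta$ is not the ``heart'' of the argument --- it is an immediate rewriting of the hypothesis $\xi = -\tilde v'(\eta)$ via the explicit formula of Lemma~\ref{vdiffer2}, namely $\tilde v'(\eta) = -\frac{1}{\eta}E[\hat h(\eta)I(\hat h(\eta))\mid\mc G]$, so $\xi\eta = E[\hat h(\eta)I(\hat h(\eta))\mid\mc G] = E[\hat g(\xi)\hat h(\eta)\mid\mc G]$ with no extra work. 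Your ``$\geq$'' bullet is consequently unneeded, and as written it only recovers the $\leq$ direction anyway: plugging $\tilde u(\xi) \geq E[U(\hat g(\xi))\mid\mc G]$ into your display yields $\xi\eta \geq E[\hat g(\xi)\hat h(\eta)\mid\mc G]$, not the reverse. The genuine content is precisely what you place in the ``$\leq$'' bullet --- the Euler--Lagrange condition $E[(h-\hat h(\eta))I(\hat h(\eta))\mid\mc G]\leq 0$ for all $h\in\mc D(\eta)$, which gives feasibility $\hat g(\xi)\in\mc C(\xi)$ --- and you correctly flag that the conditional uniform-integrability and monotone/Fatou machinery of Lemma~\ref{ui}, together with localization to $\{\eta\geq 1/n\}$ and the local property of $\tilde v$, is what makes this step go through in the $L^0$ setting. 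With that reorganization, your sketch matches the intended proof.
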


\begin{proof}  See Lemma $3.11$ of \cite{MR1722287}.

\end{proof}

We are now ready to prove the main result of Section~\ref{sec:absv}.

\noindent \textbf{Proof of Theorem \ref{abstractthm1}}  By hypothesis, $\tilde{u}(\xi)<\infty$ for some $\xi$.  The concavity of $U$ implies that $\tilde{u}$ is $\mc{G}$-concave, and it follows that $\tilde{u}(\xi)<\infty$ for all $\xi$.  Lemma \ref{vdiffer} establishes that $\tilde{v}(\eta)<\infty$ for all $\eta$.  This finishes Part $1$.  

The continuous differentiability of $\tilde{v}$ follows from Lemmas \ref{vdiffer2} and \ref{dderivcont}.  The differentiability of $\tilde{u}$ follows from the strict $\mc{G}$-convexity of $\tilde{v}$, established in Lemma \ref{dex}, and Lemma \ref{strictdiff}.  By plugging in $\hat{g}(\xi) = I(\hat{h}(\eta))$ and $\eta = \tilde{u}(\xi)$ into the formula for $\tilde{v}'$, we obtain $\xi \tilde{u}'(\xi) = E [ \hat{g}(\xi) U'(\hat{g}(\xi)) \ | \ \mc{G} ]$, which is shown without too much difficulty to be continuous.  This is Part $2$.

Part $3$ follows from Lemmas \ref{duality} and \ref{bidual}, and Part $4$ is the content of Lemma \ref{pd}
\hfill $\square$

\subsection{Convex Duality in the Financial Setting}

We prove Theorem \ref{mainthm1} here.  First, we establish the local property of $v_\tau^\lambda$, since in Section~\ref{sec:absv} we assumed this property to hold.

\begin{lemma}  Let $A \in \mc{F}_\tau$, and let $\eta_1,\eta_2 \in L^0_{++}(\mc{F}_\tau)$.  Then $v_\tau^\lambda(1_A\eta_1 + 1_{A^c}\eta_2) = 1_Av_\tau^\lambda(\eta_1) + 1_{A^c}v_\tau^\lambda(\eta_2)$.
\end{lemma}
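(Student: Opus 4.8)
The plan is to establish both inequalities $v_\tau^\lambda(1_A\eta_1 + 1_{A^c}\eta_2) \leq 1_Av_\tau^\lambda(\eta_1) + 1_{A^c}v_\tau^\lambda(\eta_2)$ and $v_\tau^\lambda(1_A\eta_1 + 1_{A^c}\eta_2) \geq 1_Av_\tau^\lambda(\eta_1) + 1_{A^c}v_\tau^\lambda(\eta_2)$ by exploiting the multiplicative structure $\mc{Y}^\lambda_\tau(\eta) = \eta\,\mc{Y}^\lambda_\tau$ together with the fact that indicators of sets in $\mc{F}_\tau$ can be pulled in and out of the conditional expectation $E[\,\cdot\,|\,\mc{F}_\tau]$. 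Write $\eta = 1_A\eta_1 + 1_{A^c}\eta_2 \in L^0_{++}(\mc{F}_\tau)$.

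For the ``$\leq$'' direction, I would start from an arbitrary $Y \in \overline{\mc{Y}}^\lambda_\tau$ and note that $\eta Y = 1_A\eta_1 Y + 1_{A^c}\eta_2 Y$, so by $\mc{F}_\tau$-measurability of $1_A$,
\[
E[V(\eta Y) \mid \mc{F}_\tau] = 1_A E[V(\eta_1 Y)\mid \mc{F}_\tau] + 1_{A^c}E[V(\eta_2 Y)\mid \mc{F}_\tau] \geq 1_A v_\tau^\lambda(\eta_1) + 1_{A^c}v_\tau^\lambda(\eta_2),
\]
using that $Y \in \overline{\mc{Y}}^\lambda_\tau$ qualifies as a competitor for both $v_\tau^\lambda(\eta_1)$ and $v_\tau^\lambda(\eta_2)$ (since $\eta_i Y \in \overline{\mc{Y}}^\lambda_\tau(\eta_i)$). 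Taking the essential infimum over $Y$ gives the first inequality. For the ``$\geq$'' direction, I would pick $Y_1, Y_2 \in \overline{\mc{Y}}^\lambda_\tau$ and form the spliced deflator $Y := 1_AY_1 + 1_{A^c}Y_2$; the key point is that $Y \in \overline{\mc{Y}}^\lambda_\tau$, which follows because $\mc{Y}^\lambda_\tau$ is closed under $\mc{F}_\tau$-measurable pasting on $[\tau,T]$ — a martingale-deflator built by running $Y_1$ on $A$ and $Y_2$ on $A^c$ is again a supermartingale deflator since $A\in\mc{F}_\tau$. Then $\eta Y = 1_A\eta_1Y_1 + 1_{A^c}\eta_2Y_2$, so $v_\tau^\lambda(\eta) \leq E[V(\eta Y)\mid\mc{F}_\tau] = 1_A E[V(\eta_1Y_1)\mid\mc{F}_\tau] + 1_{A^c}E[V(\eta_2Y_2)\mid\mc{F}_\tau]$, and taking the infimum over $Y_1$ on $A$ and over $Y_2$ on $A^c$ separately (using that the family $\{E[V(\eta_i Y_i)\mid\mc{F}_\tau]\}$ is downward directed, as established in the proof of Lemma~\ref{dex}) yields $v_\tau^\lambda(\eta) \leq 1_Av_\tau^\lambda(\eta_1) + 1_{A^c}v_\tau^\lambda(\eta_2)$.

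The main obstacle I anticipate is the verification that $\overline{\mc{Y}}^\lambda_\tau$ is stable under $\mc{F}_\tau$-pasting, i.e. that $1_AY_1 + 1_{A^c}Y_2$ is the terminal value of a genuine supermartingale deflator on $[\tau,T]$; this should follow from the explicit parametrization $\mc{Y}^\lambda_\tau = \{Z^\lambda_{\tau,\cdot}\,\mc{Y}_\tau\,D\}$ recalled in the introduction (splice the orthogonal local martingale parts and the decreasing parts $D$ across $A$ and $A^c$, which is legitimate since $A\in\mc{F}_\tau$), or more abstractly from $\mc{G}$-convexity of $\overline{\mc{Y}}^\lambda_\tau$ applied with the $\{0,1\}$-valued weight $g = 1_A$. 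The other mild subtlety is interchanging the essential infimum with the indicator decomposition on the right-hand side in the ``$\geq$'' step, which is exactly the downward-directedness argument already used for Lemma~\ref{dex} and can be cited. Everything else is routine manipulation of conditional expectations.
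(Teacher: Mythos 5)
Your argument is correct and follows essentially the same two-sided strategy as the paper: the inequality $v_\tau^\lambda(\eta) \geq 1_A v_\tau^\lambda(\eta_1) + 1_{A^c}v_\tau^\lambda(\eta_2)$ is obtained from the fact that a single $Y$ is a simultaneous competitor for both problems (the paper phrases this as ``essinf of a sum dominates the sum of essinfs''), while the reverse inequality is obtained by $\mc{F}_\tau$-pasting of dual elements, which is precisely the content of the $\mc{F}_\tau$-convexity of $v_\tau^\lambda$ that the paper cites from Lemma~\ref{dex}. One small slip: you have the labels reversed --- your paragraph headed ``$\leq$ direction'' actually proves $v_\tau^\lambda(\eta) \geq 1_A v_\tau^\lambda(\eta_1) + 1_{A^c}v_\tau^\lambda(\eta_2)$, and the paragraph headed ``$\geq$ direction'' proves the $\leq$ inequality; swapping the headers fixes this.
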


\begin{proof}  Write $\eta = 1_A\eta_1 + 1_{A^c}\eta_2$.  By the $\mc{F}_\tau$-convexity of $v_\tau^\lambda$, we know that $v_\tau^\lambda(\eta) \leq 1_Av_\tau^\lambda(\eta_1) + 1_{A^c}v_\tau^\lambda(\eta_2)$.  Now, we calculate 
\[
\begin{split}
v_\tau^\lambda(\eta) &= \underset{Y \in \mc{Y}_\tau}{\ei} E [ V(Z_T^\lambda \eta Y_T) \ | \ \mc{F}_\tau]
\\&= \underset{Y \in \mc{Y}_\tau}{\ei} E [ 1_AV(Z_T^\lambda \eta_1 Y_T) + 1_{A^c} V(Z_T^\lambda \eta_2 Y_T) \ | \ \mc{F}_\tau]\\
& = \underset{Y \in \mc{Y}_\tau}{\ei} \left [ 1_A E [V(Z_T^\lambda \eta_1 Y_T) \ | \ \mc{F}_\tau ] + 1_{A^c} E [ V(Z_T^\lambda \eta_2 Y_T) \ | \ \mc{F}_\tau ] \right ]
\\& \geq \underset{Y \in \mc{Y}_\tau}{\ei}  1_A E [V(Z_T^\lambda \eta_1 Y_T) \ | \ \mc{F}_\tau ] +  \underset{Y \in \mc{Y}_\tau}{\ei} 1_{A^c} E [ V(Z_T^\lambda \eta_2 Y_T) \ | \ \mc{F}_\tau ]
\\& = 1_Av_\tau^\lambda(\eta_1) + 1_{A^c} v_\tau^\lambda(\eta_2).
\end{split}
\]
\end{proof}
We now may prove the theorem.

\noindent \textbf{Proof of Theorem \ref{mainthm1}}
We apply the abstract results of the previous setting.  For $\tau$ a stopping time, we let $\mc{C}(\xi)$  be the solid hull of $\overline{\mc{X}}^\lambda_\tau(\xi)$, with $\xi \in L^0_{++}(\mc{F}_\tau)$, and we let $\mc{D}(\eta)$ be the solid hull of $\overline{\mc{Y}}^\lambda_\tau(\eta)$ for $\eta \in L^0_{++}(\mc{F}_\tau)$.  Note that by passing to solid hulls above does not change either the primal nor the dual value function, because $U$ is increasing and $V$ is decreasing. 

We must prove that $\mc{C}(\xi)$ and $\mc{D}(\eta)$ satisfies the properties of the previous subsection.  \v{Z}itkovi\'{c} proves this exact result in Theorem $4$ of \cite{MR1883202}. Thus, the application of Theorem \ref{abstractthm1} completes the proof.
\hfill $\square$

\section{Dual Continuity}\label{sec:dc}

In this section, we prove that for any stopping time $\tau$, the mapping $(\eta,\lambda) \mapsto v^\lambda_\tau(\eta)$ is continuous from $L^0_{++}(\mc{F}_\tau) \times \Lambda'$ to $L^0(\mc{F}_\tau)$.  All spaces of random variables above are topologized by convergence in probability, and the source space has the product topology.  The next two subsections contain some technical results needed to work with random variables which have finite $\mc{F}_\tau$-conditional expectation but are not necessarily integrable.  We state the lemmas, but leave the proofs to the reader, as they are similar to their classical counterparts.

\subsection{Mathematical Preliminaries for Proving Continuity}

\begin{lemma}{(Conditional Dominated Convergence)}\label{cdcon}  Suppose that $Y_k$ are positive random variables converging in probability to the positive random variable $Y$.  Suppose that we have $Y_k \leq X_k \triangleq \eta_k U_k$ for each $k$, $X_k \ra X \triangleq \eta U$ in probability, and $E [ X_k \ | \ \mc{F}_\tau] \ra E [ X \ | \ \mc{F}_\tau ]$ in probability, where $\eta_k,\eta \in L^0_{++}(\mc{F}_\tau)$ and $U_k,U \in L^1_{++}$.  (In particular, all the quantities given are well-defined and finite).  Then $E [ Y_k \ | \ \mc{F}_\tau] \ra E [Y \ | \ \mc{F}_\tau ]$ in probability.
\end{lemma}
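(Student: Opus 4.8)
The plan is to recognize Lemma~\ref{cdcon} as the conditional analogue of the generalized (Pratt) dominated convergence theorem, in which the dominating function is not fixed but is itself a convergent sequence whose conditional expectations converge. The proof needs no uniform integrability; it rests on two applications of the conditional Fatou lemma (Lemma~\ref{confatou}). Since the conclusion asserts convergence in probability, I would first reduce to the almost sure case: it suffices to show that every subsequence of $\big(E[Y_k \mid \mc{F}_\tau]\big)_k$ has a further subsequence converging a.s.\ to $E[Y \mid \mc{F}_\tau]$. Fixing a subsequence and passing to a further one (not relabelled), I may assume that $Y_k \ra Y$, $X_k \ra X$ and $E[X_k \mid \mc{F}_\tau] \ra E[X \mid \mc{F}_\tau]$ all hold almost surely. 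The decomposition $X_k = \eta_k U_k$ with $\eta_k \in L^0_{++}(\mc{F}_\tau)$ and $U_k \in L^1_{++}$ guarantees that $E[X_k \mid \mc{F}_\tau]$ and $E[X \mid \mc{F}_\tau]$ are a.s.\ finite, whence so are $E[Y_k \mid \mc{F}_\tau] \le E[X_k \mid \mc{F}_\tau]$ and $E[Y \mid \mc{F}_\tau] \le E[X \mid \mc{F}_\tau]$; this finiteness is what legitimizes the subtractions below.

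Next I would apply Lemma~\ref{confatou} to the nonnegative variables $X_k - Y_k$, which converge a.s.\ to $X - Y \ge 0$:
\[ E[X \mid \mc{F}_\tau] - E[Y \mid \mc{F}_\tau] \;=\; E[X - Y \mid \mc{F}_\tau] \;\le\; \liminf_k E[X_k - Y_k \mid \mc{F}_\tau] \;=\; E[X \mid \mc{F}_\tau] - \limsup_k E[Y_k \mid \mc{F}_\tau], \]
the last equality using the a.s.\ convergence $E[X_k \mid \mc{F}_\tau] \ra E[X \mid \mc{F}_\tau]$. Cancelling the finite quantity $E[X \mid \mc{F}_\tau]$ yields $\limsup_k E[Y_k \mid \mc{F}_\tau] \le E[Y \mid \mc{F}_\tau]$. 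A second application of Lemma~\ref{confatou}, now directly to the nonnegative variables $Y_k$, gives $E[Y \mid \mc{F}_\tau] \le \liminf_k E[Y_k \mid \mc{F}_\tau]$. Combining the two inequalities forces $E[Y_k \mid \mc{F}_\tau] \ra E[Y \mid \mc{F}_\tau]$ a.s.\ along the chosen subsequence, which completes the reduction and hence the proof.

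The only real difficulty is bookkeeping rather than ideas: one must handle the fact that the three hypotheses supply convergence merely in probability (hence the subsequence-of-a-subsequence device), and one must keep track of finiteness so that the cancellation of $E[X \mid \mc{F}_\tau]$ and the passage from ``$\limsup \le E[Y \mid \mc{F}_\tau] \le \liminf$'' to genuine convergence are valid. Both points are taken care of precisely by the explicit domination $Y_k \le X_k = \eta_k U_k$ assumed in the statement; no further structure of the utility maximization problem enters.
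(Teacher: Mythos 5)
Your proof is correct and follows essentially the same strategy as the paper: reduce to the almost sure case by a subsequence argument and then apply conditional Fatou twice. The only cosmetic difference is that you apply Fatou directly to the nonnegative sequence $Y_k$ for the lower bound, whereas the paper applies it to $X_k + Y_k$ and then cancels $E[X \mid \mc{F}_\tau]$; both yield the same inequality $E[Y \mid \mc{F}_\tau] \le \liminf_k E[Y_k \mid \mc{F}_\tau]$, and the upper bound via $X_k - Y_k$ is handled identically.
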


\begin{proof}  Recall that convergence in probability is equivalent to the fact that from any subsequence, one can extract a subsubsequence converging almost surely.  Take an arbitrary subsequence $\left \{ E [Y_{k_n} \ | \ \mc{F}_\tau ] \right \}_{n \geq 1}$.  We must find a subsubsequence that converges almost surely to $ E [ Y \ | \ \mc{F}_\tau ]$.  Extract a subsubsequence $(k_{n_j})$ such that $Y_{k_{n_j}} \ra Y$ almost surely and $X_{k_{n_j}} \ra X$ almost surely.  Hence, it suffices to prove this result in the case that every instance of convergence in probability is replaced with almost sure convergence.

Note that both $X_k + Y_k$ and $X_k - Y_k$ are nonnegative random variables for all $k$.  By the generalized conditional Fatou's Lemma, and the hypothesis, we have $$ E [ X + Y | \ \mc{F}_\tau ] \leq \liminf_{k \ra \infty} E [ X_k + Y_k \ | \ \mc{F}_\tau ] = E [ X \ | \ \mc{F}_\tau] + \liminf_{k \ra \infty} E [ Y_k \ | \ \mc{F}_\tau ],$$ $$E [ X - Y \ | \ \mc{F}_\tau ]  \leq \liminf_{k \ra \infty} E [ X_k - Y_k \ | \ \mc{F}_\tau ] = E [ X \ | \ \mc{F}_\tau] - \liminf_{k \ra \infty} E [ Y_k \ | \ \mc{F}_\tau ].$$  Subtracting $E [ X \ | \ \mc{F}_\tau] < \infty$ from both sides, we obtain $$ E [ Y \ | \ \mc{F}_\tau] \leq \liminf_{k \ra \infty} E [ Y_k \ | \ \mc{F}_\tau ] \text{ and } E [ Y \ | \ \mc{F}_\tau ] \geq \limsup_{k \ra \infty} E [ Y_k \ | \ \mc{F}_\tau ],$$ and this implies the result.
\end{proof}

Given a sequence of random variables $(X_n)_{n \geq 1}$, we define the \textbf{limit superior in probability} of the $X_n$, denoted $P-\limsup$, by 
\[P-\limsup_{n \ra \infty} X_n \triangleq \ei \left \{X \ : \ \text{for all } \epsilon > 0, \ P(X < X_n - \epsilon) \ra 0 \right \}.\]  
We define the \textbf{limit inferior in probability} of the $X_n$, denoted $P-\liminf$, by \[P-\liminf_{n \ra \infty} X_n \triangleq \es \left \{X \ : \ \text{for all } \epsilon > 0, P(X < X_n + \epsilon) \ra 0 \right \}.\]  
Note that $X \geq P- \limsup X_n$ and $X \leq  P-\liminf X_n$ is equivalent to $X_n \ra X$ in probability.  Furthermore, since almost sure convergence is stronger than convergence in probability, it is always true that $P - \underset{n \ra \infty}{\limsup} \  X_n \leq \underset{n \ra \infty}{\limsup} \  X_n$ and $P - \underset{n \ra \infty}{\liminf} \  X_n \geq \underset{n \ra \infty}{\liminf} \  X_n$.  It is clear that each of these inequalities may be strict.  Before proceding, we clarify the question of existence of $P-\limsup$'s and $P-\liminf$'s.  By definition, such random variables must be unique.  We prove existence here for $P-\limsup$'s, the argument for $P-\liminf$'s being identical.

\begin{lemma}  Let $X_n$ be a sequence of random variables.  Then $P-\limsup X_n$ exists in the sense of an extended random variable. 
\end{lemma}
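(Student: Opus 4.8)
The plan is to construct $P\text{-}\limsup X_n$ explicitly as an essential infimum over a suitable family and to verify that the resulting random variable actually belongs to that family, so that the infimum is attained and the defining property holds. Let
\[\mathcal{S} = \left\{ X \in L^0(\mb{R}\cup\{+\infty\}) \ : \ \text{for all } \epsilon > 0, \ P(X < X_n - \epsilon) \ra 0 \right\},\]
and set $X^* = \ei \, \mathcal{S}$, which exists as an extended random variable because essential infima of arbitrary nonempty families of extended random variables exist (and $\mathcal{S}$ is nonempty, as it contains the constant $+\infty$). The entire content of the lemma is then that $X^* \in \mathcal{S}$: once that is shown, $X^*$ is by construction the least element of $\mathcal{S}$, hence is the unique random variable with the stated property, and existence is established.

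The key steps, in order: First I would recall that, since $X^* = \ei\,\mathcal{S}$, there is a countable subfamily $(Y_k)_{k\ge 1} \subset \mathcal{S}$ with $X^* = \inf_k Y_k$ a.s.; moreover, because $\mathcal{S}$ is stable under pairwise minimum (if $X, X' \in \mathcal{S}$ then $\min(X,X') \in \mathcal{S}$, since $\{\min(X,X') < X_n - \epsilon\} = \{X < X_n-\epsilon\}\cup\{X' < X_n-\epsilon\}$ has probability tending to $0$), we may replace $Y_k$ by $\min(Y_1,\dots,Y_k)$ and assume $Y_k \downarrow X^*$ a.s. Second, I would fix $\epsilon > 0$ and show $P(X^* < X_n - \epsilon) \ra 0$. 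The idea: for each fixed $k$, $\{X^* < X_n - \epsilon\} \subset \{Y_k < X_n - \epsilon/2\} \cup \{X^* < Y_k - \epsilon/2\}$. Given $\delta > 0$, first choose $k$ large enough that $P(X^* < Y_k - \epsilon/2) < \delta/2$ (possible since $Y_k \downarrow X^*$ a.s., hence in probability); then, with that $k$ fixed, use $Y_k \in \mathcal{S}$ to choose $N$ so that $P(Y_k < X_n - \epsilon/2) < \delta/2$ for $n \ge N$. Combining, $P(X^* < X_n - \epsilon) < \delta$ for $n \ge N$, so $X^* \in \mathcal{S}$.

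The main obstacle — really the only subtle point — is the interchange of the "for all $\epsilon$" quantifier with the essential-infimum limit: a priori $X^*$ could fail to lie in $\mathcal{S}$ if the countable reduction of the essential infimum interacted badly with the $\epsilon$-thresholds. This is handled exactly by the two-stage argument above: one first passes to a decreasing countable subfamily realizing the essential infimum (using stability of $\mathcal{S}$ under finite minima), then splits the bad event into a "tail of the subfamily" piece controlled by $k$ and a "membership of $Y_k$" piece controlled by $n$. A minor point to check alongside is that $\mathcal{S}$ is closed under the a.s.-decreasing countable-limit operation in the sense needed, which is subsumed by the same splitting. The argument for $P\text{-}\liminf$ is verbatim dual, replacing $\downarrow$ by $\uparrow$, $\min$ by $\max$, and the inequality directions, as the paper already notes.
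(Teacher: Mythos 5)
Your proof is correct and follows essentially the same approach as the paper's: both note that $\mathcal{S}$ is closed under pairwise minimum (downward directedness), reduce to a decreasing sequence $Y_k \downarrow X^* = \ei\,\mathcal{S}$, and then verify $X^* \in \mathcal{S}$. The only variation is that the paper carries out the final verification by contradiction via Egorov's theorem, whereas your direct event-splitting inclusion $\{X^* < X_n - \epsilon\} \subset \{Y_k < X_n - \epsilon/2\} \cup \{X^* < Y_k - \epsilon/2\}$ needs only convergence in probability of $Y_k$ to $X^*$, which is a touch more elementary but has the same content.
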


\begin{proof}  Consider the set of extended random variables $\mc{A} = \{A \ : \ P(A - X_n < -\epsilon) \ra 0 \text{ for all } \epsilon > 0\}$.  Obviously $\infty \in \mc{A}$, so that $\mc{A} \neq \emptyset$.  We claim that the set $\mc{A}$ is downwards directed.  Indeed, for $A_1,A_2 \in \mc{A}$ and $\epsilon > 0$, $P(A_1 \wedge A_2 - X_n < - \epsilon) \leq P(A_1 - X_n < - \epsilon) + P(A_2 - X_n < - \epsilon) \ra 0$.  Downward directedness implies that there exists a sequence $A_1,A_2,\ldots \in \mc{A}$ with the property that $A_n \downarrow A \triangleq \ei \mc{A}$.  We claim that $A$ is the $P-\limsup$ of the $X_n$.  It suffices to show that $A \in \mc{A}$.  Otherwise, there exists $\epsilon > 0$ such that for infinitely many $n$, $P(A - X_n < - \epsilon) > \epsilon$.  By Egorov's Theorem, there exists a suitably large $j$ such that $|A - A_j| < \frac{\epsilon}{2}$ except possibly on a set of measure less than $\frac{\epsilon}{2}$.  These two observations imply that for infinitely many $n$, $P(A_j - X_n < -\frac{\epsilon}{2}) > \frac{\epsilon}{2}$, but this contradicts the fact that $A_j \in \mc{A}$.  Hence, $A \in \mc{A}$.
\end{proof}

\begin{lemma}{(Conditional de la Vall\'{e}e-Poussin)}\label{convp}  Let $\{X_\alpha \}_{\alpha \in A}$ be some family of random variables.  Let $G(x)$ be a nonnegative increasing function such that $\underset{x \ra \infty}{\lim} \frac{G(x)}{x} = \infty$.  Then if $\{E [G(X_\alpha) \ | \ \mc{F}_\tau ] \}_{\alpha \in A}$ is bounded in probability, then $\{X_\alpha\}_{\alpha \in A}$ is $\mc{F}_\tau$-uniformly integrable.
\end{lemma}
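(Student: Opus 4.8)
The plan is to transcribe the classical de la Vall\'ee-Poussin argument into the conditional setting, replacing the deterministic truncation constant by an $\mc{F}_\tau$-measurable random level. At the outset I would make two reductions. First, I would assume $X_\alpha \geq 0$ for all $\alpha$ (so that $G(X_\alpha) = G(|X_\alpha|)$ and $B_\alpha = \{X_\alpha \geq \delta\}$ with $E[1_{B_\alpha}X_\alpha \mid \mc{F}_\tau]\geq 0$); the signed case is handled by the same device after replacing $G(X_\alpha)$ by $G(|X_\alpha|)$, which is the form in which the hypothesis actually arises in the applications. Second, I would read ``bounded in probability'' as the existence of a single $\Gamma \in L^0_{++}(\mc{F}_\tau)$ with $E[G(X_\alpha)\mid \mc{F}_\tau] \leq \Gamma$ a.s.\ for every $\alpha$ --- this is exactly what is produced when the lemma is invoked (cf.\ Lemma~\ref{ui}, where $E[h\mid\mc{G}]\leq\eta$), and a literal topological-boundedness reading would be too weak, as one already sees when $\mc{F}_\tau=\mc{F}$.

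Next I would extract from the growth condition $\lim_{x\to\infty} G(x)/x = \infty$ a deterministic, nondecreasing function $K:(0,\infty)\to[0,\infty)$ such that $G(x)\geq c\,x$ for all $x\geq K(c)$; concretely $K(c):=\inf\{y\geq 1 : G(x)\geq cx \text{ for all } x\geq y\}$, which is finite for every $c>0$ by the growth hypothesis and nondecreasing in $c$, hence Borel measurable. Given a prescribed $\epsilon\in L^0_{++}(\mc{F}_\tau)$, I would then set $c := \Gamma/\epsilon + 1 \in L^0_{++}(\mc{F}_\tau)$ and $\delta := \max(K(c),1)$. Because $K$ is measurable and $c$ is $\mc{F}_\tau$-measurable and a.s.\ finite, $\delta\in L^0_{++}(\mc{F}_\tau)$.

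The verification is then a one-line estimate. On $B_\alpha = \{X_\alpha\geq\delta\}\subseteq\{X_\alpha\geq K(c)\}$ the defining property of $K$ gives, $\omega$ by $\omega$, $G(X_\alpha)\geq c\,X_\alpha$, hence $1_{B_\alpha}X_\alpha \leq \frac1c 1_{B_\alpha}G(X_\alpha)\leq \frac1c G(X_\alpha)$; since $1/c$ is $\mc{F}_\tau$-measurable it factors out of the conditional expectation, so
\[
\bigl|E[1_{B_\alpha}X_\alpha \mid \mc{F}_\tau]\bigr| = E[1_{B_\alpha}X_\alpha \mid \mc{F}_\tau] \leq \frac{1}{c}\,E[G(X_\alpha)\mid\mc{F}_\tau] \leq \frac{\Gamma}{c} = \frac{\epsilon\,\Gamma}{\Gamma+\epsilon} < \epsilon
\]
for every $\alpha$, which is precisely $\mc{F}_\tau$-uniform integrability.

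I do not expect a serious obstacle here: the proof is essentially the classical one. The only points needing care are the measurability of the random truncation level $\delta = \max(K(c),1)$, secured by the monotonicity of $K$, and the legitimacy of pulling the $\mc{F}_\tau$-measurable factor $1/c$ through $E[\,\cdot\mid\mc{F}_\tau]$; the genuinely delicate step is interpretational --- pinning down what ``bounded in probability'' should mean --- and I would state this explicitly at the start.
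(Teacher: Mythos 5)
Your proof is correct and, in substance, a direct ``forward'' counterpart of the paper's contradiction argument: the paper negates $\mc{F}_\tau$-uniform integrability and forces $E[G(X_n)\mid\mc{F}_\tau]\geq k_n\epsilon$ using deterministic levels $\delta_n=n$ and constants $k_n\uparrow\infty$, whereas you fix a single random level $\delta=K(\Gamma/\epsilon+1)$ and verify the defining inequality for $\mc{F}_\tau$-uniform integrability in one estimate. The only delicate point in your version --- $\mc{F}_\tau$-measurability of the random truncation level --- is secured by the monotonicity (hence Borel measurability) of $K$, and you address this explicitly. Pulling the $\mc{F}_\tau$-measurable factor $1/c$ through the conditional expectation is of course legitimate.

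Your interpretive decision about ``bounded in probability'' is, moreover, the right one and not merely a convenience. Under the literal topological reading the lemma is in fact false: on $([0,1],\text{Borel},\text{Leb})$ with $\mc{F}_\tau=\mc{F}$, $G(x)=x^2$, and $X_\alpha(\omega)=|\omega-\alpha|^{-1/2}$ for $\alpha\in[0,1]$, the family $E[G(X_\alpha)\mid\mc{F}_\tau]=|\omega-\alpha|^{-1}$ is bounded in probability uniformly in $\alpha$ (tail mass at level $M$ is at most $2/M$), yet no single finite $\delta\in L^0_{++}(\mc{F}_\tau)$ can dominate all $X_\alpha$ off a null set, so the family fails $\mc{F}_\tau$-uniform integrability. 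The paper's own proof implicitly needs the same strengthening you impose: after negating UI it asserts $E[1_B X_\alpha\mid\mc{F}_\tau]\geq\epsilon$ as an a.s.\ inequality, which is not what the formal negation of the definition delivers (one obtains the inequality only on a set of positive measure, depending on $\delta$ and $\alpha$), and the concluding contradiction from $E[G(X_n)\mid\mc{F}_\tau]\geq k_n\epsilon$ again relies on that a.s.\ lower bound. In the two places the lemma is invoked (Lemmas~\ref{ui} and~\ref{auxlsc}) what is actually supplied is an a.s.\ dominating $\Gamma\in L^0_{++}(\mc{F}_\tau)$, so your reading is the one under which the lemma is both true and usable; it would be worth stating the hypothesis in that form.
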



\begin{proof}  Suppose for contradiction that the family $\{X_\alpha\}_{\alpha \in A}$ is not $\mc{F}_\tau$-uniformly integrable.  Then there exists an $\epsilon \in L^0_{++}(\mc{F}_\tau)$ such that for any $\delta \in L^0_{++}(\mc{F}_\tau)$, there exist $X_\alpha$ and  $B = B(\delta,\alpha) \in \mc{F}_\tau$ such that $E [1_B X_\alpha \ | \ \mc{F}_\tau ] \geq \epsilon$, where $B(\delta,\alpha) = \{X_\alpha \geq \delta\}$.  Let $(X_n,\delta_n)$ be pairs as above, for $\delta_n$ tending uniformly to  $\infty$ (we'll just set $\delta_n = n$).  Suppose then that $\frac{G(x)}{x} \geq k_n$ for $x \geq n$, where $k_n \uparrow \infty$.  Then, we have $$E [G(X_n) \ | \ \mc{F}_\tau ] \geq E [1_{B(\delta_n,n)} G(X_n) \ | \ \mc{F}_\tau ] \geq k_n E [1_{B(\delta_n,n)} X_n \ | \ \mc{F}_\tau ] \geq k_n \epsilon.$$  This clearly contradicts the bounded in probability hypothesis.
\end{proof}

\begin{lemma}\label{confatou}{(Conditional Fatou's Lemma)}  Let $(X_n)_{n \geq 1}$ be a sequence of random variables such that $X_n \ra X$ in probability.  Suppose that the negative parts $(X_n^-)_{n \geq 1}$ are $\mc{F}_\tau$-uniformly integrable.  Then a version of Fatou's Lemma holds: $$E \left [ \underset{n \ra \infty}{\liminf} X_n \ | \ \mc{F}_\tau \right ] \leq \underset{n \ra \infty}{P-\liminf} E [ X_n \ | \ \mc{F}_\tau].$$
\end{lemma}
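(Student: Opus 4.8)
The plan is to carry out, conditionally on $\mc{F}_\tau$, the standard truncation proof of Fatou's lemma for a sequence whose negative parts are uniformly integrable.  The two adaptations needed are that the truncation level must be an $\mc{F}_\tau$-measurable random variable rather than a constant, and that the pathwise $\liminf$ appearing in the classical argument is replaced by $P-\liminf$ only at the very last step, using the elementary bound $\underset{n}{\liminf} \ Z_n \leq P-\underset{n}{\liminf} \ Z_n$ recorded above.  In fact the hypothesis $X_n \ra X$ in probability will not be used; it only guarantees $\underset{n}{\liminf} \ X_n \leq X$ a.s., so that the conclusion can equivalently be stated as $E [ X \ | \ \mc{F}_\tau ] \leq P-\underset{n}{\liminf} \ E [ X_n \ | \ \mc{F}_\tau ]$.

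I would fix an arbitrary $\epsilon \in L^0_{++}(\mc{F}_\tau)$ and, invoking the $\mc{F}_\tau$-uniform integrability of $(X_n^-)$, choose $\delta \in L^0_{++}(\mc{F}_\tau)$ with $E [ 1_{\{X_n^- \geq \delta\}} X_n^- \ | \ \mc{F}_\tau ] < \epsilon$ for every $n$; this incidentally gives $E [ X_n^- \ | \ \mc{F}_\tau ] \leq \delta + \epsilon < \infty$, so each $E [ X_n \ | \ \mc{F}_\tau ]$ is well defined.  Next I would set $W_n \triangleq X_n \vee (-\delta)$, so that $W_n \geq -\delta$, $W_n \geq X_n$, and $0 \leq W_n - X_n = (X_n^- - \delta)^+ \leq 1_{\{X_n^- \geq \delta\}} X_n^-$, hence $E [ W_n \ | \ \mc{F}_\tau ] \leq E [ X_n \ | \ \mc{F}_\tau ] + \epsilon$.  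Since $W_n + \delta \geq 0$, I would then apply the conditional Fatou lemma for non-negative random variables (equivalently, conditional monotone convergence applied to $\inf_{m \geq n}(W_m + \delta) \uparrow \underset{m}{\liminf}(W_m + \delta)$) and cancel the finite $\mc{F}_\tau$-measurable variable $\delta$ to obtain $E [ \underset{n}{\liminf} \ W_n \ | \ \mc{F}_\tau ] \leq \underset{n}{\liminf} \ E [ W_n \ | \ \mc{F}_\tau ]$.  Chaining $\underset{n}{\liminf} \ X_n \leq \underset{n}{\liminf} \ W_n$, monotonicity of conditional expectation, $\liminf \leq P-\liminf$, and $E [ W_n \ | \ \mc{F}_\tau ] \leq E [ X_n \ | \ \mc{F}_\tau ] + \epsilon$ (using that $P-\liminf$ respects a.s. inequalities and commutes with adding the $\mc{F}_\tau$-measurable $\epsilon$) yields $E [ \underset{n}{\liminf} \ X_n \ | \ \mc{F}_\tau ] \leq P-\underset{n}{\liminf} \ E [ X_n \ | \ \mc{F}_\tau ] + \epsilon$.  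Finally I would apply this with $\epsilon = 1/j$ and let $j \ra \infty$.

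The only point requiring care — and I expect no genuine obstacle — is purely measure-theoretic bookkeeping: several of the conditional expectations in play may take the value $+\infty$ (or, for $E [ \liminf_n X_n \ | \ \mc{F}_\tau ]$, the value $-\infty$) on sets of positive measure, so one must justify the monotonicity steps and the two cancellations; on the $\mc{F}_\tau$-set where the left-hand side equals $-\infty$ the desired inequality is vacuous, and on its complement everything proceeds exactly as in the unconditional case.
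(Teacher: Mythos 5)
Your proof is correct, and it follows a genuinely different route from the paper's. The paper decomposes $X_n = X_n^+ - X_n^-$: for the negative parts it uses the hypothesis $X_n \to X$ in probability together with an increasing sequence of truncations to invoke Lemma~\ref{cdcon} (conditional dominated convergence), obtaining $E[X_n^- \mid \mc{F}_\tau] \to E[X^- \mid \mc{F}_\tau]$ in probability; for the positive parts it applies ordinary nonnegative conditional Fatou; and then combines the two halves. You instead truncate \emph{from below} at the $\mc{F}_\tau$-measurable level $-\delta$ supplied directly by the $\mc{F}_\tau$-uniform integrability, which eliminates the dominated-convergence step altogether: the only analytic input is nonnegative conditional Fatou applied once to $W_n + \delta$. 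In particular you observe, correctly, that the convergence hypothesis $X_n \to X$ in probability is never needed for the stated conclusion. The paper's route has the virtue of re-using the just-proved Lemma~\ref{cdcon} and of being exactly parallel to the unconditional ``UI negative parts'' version of Fatou as stated; yours is more elementary, is closer to the classical one-sided truncation proof, and is strictly more general since it drops a hypothesis. The only point I'd flag — and you already flag it yourself — is the usual care about $E[\liminf_n X_n \mid \mc{F}_\tau]$ possibly failing to be well-defined when both tails are conditionally infinite; the lower bound $\liminf_n W_n \ge -\delta$ with $\delta$ finite handles the $W_n$ side cleanly, and the domination $(\liminf_n X_n)^+ \le (\liminf_n W_n)^+$ disposes of the $X_n$ side on the set where the right-hand side is finite, which is enough.
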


\begin{proof}  The proof is standard.  Considering the negative parts, by using an increasing sequence of truncations, we apply Lemma \ref{cdcon} to obtain convergence in probability of the conditional expectations of the negative parts.  We then just apply the standard conditional Fatou's Lemma to the positive part in order to obtain the result.
\end{proof}

Suppose that in some metric space $\Theta$, we have $x_i \ra x$.  Given a mapping $\phi$ from $\Theta$ into $L^0$, we will say that $\phi$ is \textbf{upper semi-continuous} if $\phi(x) \geq P-\limsup \phi(x_i)$.  We say that $\phi$ is \textbf{lower semi-continuous} if $\phi(x) \leq P-\liminf \phi(x_i)$.

\begin{lemma}\label{infsc}  For some arbitrary index set $\mathfrak{I}$, let $f_i:\ \ra L^0$, for each $i$, be an upper semi-continuous mapping from an arbitrary metric space $\Theta$ into $L^0$.  Then the mapping $f$ from $\Theta$ to $L^0$ defined by $$f(x) = \underset{i \in \mathfrak{I}}{\ei} f_i(x)$$ is also upper semi-continuous.
\end{lemma}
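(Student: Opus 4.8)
The plan is to fix an arbitrary convergent sequence $x_n \ra x$ in $\Theta$ (I use the index $n$ rather than the letter $i$, which is reserved for the family) and to show that the extended random variable $W := P-\limsup_n f(x_n)$ satisfies $W \le f(x)$ almost surely. Since $f(x) = \ei_{i \in \mathfrak{I}} f_i(x)$ is an essential infimum, the key observation is that it suffices to establish the pointwise-in-$i$ bound $W \le f_i(x)$ a.s. for each individual $i \in \mathfrak{I}$; the conclusion $W \le f(x)$ then follows at once from the defining property of the essential infimum as the largest a.s.-lower bound of the family $\{f_i(x)\}_{i \in \mathfrak{I}}$. In particular, no reduction to a countable subfamily realizing the essential infimum is needed.

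To get the bound for a fixed $i$, note that $f(x_n) = \ei_{j \in \mathfrak{I}} f_j(x_n) \le f_i(x_n)$ a.s. for every $n$. I would then invoke the monotonicity of $P-\limsup$ under a.s. pointwise domination: if $Y_n \le Z_n$ a.s. for all $n$, then $P-\limsup_n Y_n \le P-\limsup_n Z_n$. This is immediate from the definition of $P-\limsup$ as the essential infimum of $\{X : P(X < Y_n - \epsilon) \ra 0 \mbox{ for all } \epsilon > 0\}$, since replacing the $Y_n$ by smaller random variables only enlarges that defining set; I would record this as a one-line remark, as it is not explicitly among the lemmas stated above. Applying monotonicity, followed by the upper semi-continuity hypothesis on $f_i$, yields
\[ W \;=\; P-\limsup_n f(x_n) \;\le\; P-\limsup_n f_i(x_n) \;\le\; f_i(x) \quad \mbox{a.s.} \]

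Since $i \in \mathfrak{I}$ was arbitrary, $W$ is an a.s.-lower bound for the family $\{f_i(x)\}_{i \in \mathfrak{I}}$, hence $W \le \ei_{i \in \mathfrak{I}} f_i(x) = f(x)$ a.s. As the sequence $x_n \ra x$ was arbitrary, this says precisely that $f(x) \ge P-\limsup_n f(x_n)$ for every such sequence, i.e.\ $f$ is upper semi-continuous, completing the proof.

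I do not anticipate a substantive obstacle: the whole argument is a single pass through the essential-infimum characterization combined with monotonicity of $P-\limsup$, and the one point warranting care is the monotonicity remark above. I would also note in passing that the identical scheme, with the inequalities reversed, shows that lower semi-continuity is preserved under essential suprema, in case that dual statement is wanted later.
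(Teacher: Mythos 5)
Your proof is correct, and it is arguably cleaner than the one in the paper. The paper first invokes ``a familiar subsequence type argument'' to reduce from $P$-$\limsup$ to the pathwise $\limsup$, and then carries out the interchange $\ei_{i} \limsup_{j} \geq \limsup_{j} \ei_{i}$ as an instance of the easy direction of minimax. You instead work directly with $P$-$\limsup$: for each fixed $i$ you use $f(x_n) \leq f_i(x_n)$ together with the monotonicity of $P$-$\limsup$ under a.s.\ pointwise domination, then apply the hypothesis on $f_i$, and finally invoke the greatest-lower-bound characterization of the essential infimum. This sidesteps the subsequence reduction entirely, which is a genuine advantage: since the hypothesis only controls $P$-$\limsup_j f_i(x_j)$ and not $\limsup_j f_i(x_j)$, the reduction in the paper requires a moment's care to justify, whereas your argument needs only the one-line monotonicity remark, whose verification you correctly reduce to the observation that $Y_n \leq Z_n$ a.s.\ implies $\{X < Y_n - \epsilon\} \subseteq \{X < Z_n - \epsilon\}$, so the defining set for $P$-$\limsup_n Y_n$ contains that for $P$-$\limsup_n Z_n$, and hence has a smaller essential infimum. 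The mathematical content --- compare $f(x_n)$ with $f_i(x_n)$ for each fixed $i$ and pass to the essential infimum at the end --- is the same in both proofs; what you buy is a fully explicit manipulation at the level of $P$-$\limsup$ and no reliance on an informal reduction step. Your closing remark that the dual statement (lower semi-continuity is preserved by essential suprema) follows by reversing inequalities is also correct and is in fact used implicitly in the paper's proof of Proposition \ref{vcont}.
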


\begin{proof}  By a familiar subsequence type argument, it suffices to prove the lemma replacing the probabilistic convergence with almost sure convergence.  Let $x_j \ra x$ in $X$.  We must show that $f(x) \geq \underset{x_j \ra x}{\limsup} f(x_j)$.  For each $i \in I$, we have by hypothesis that $f_i(x) \geq \underset{x_j \ra x}{\limsup} f_i(x_j)$.  Taking essential infimums over $i$ yields $f(x) \geq \underset{i \in I}{\ei} \underset{x_j \ra x}{\limsup} f_i(x_j) \geq \underset{x_j \ra x}{\limsup} \ \underset{i \in I}{\ei} f_i(x_j) = \underset{x_j \ra x}{\limsup} f(x_j)$, where the last inequality follows from the same reasoning as in the easy direction of the minimax theorem.
\end{proof}

\subsection{Proving Continuity}

\begin{lemma}\label{dualexistence}  Let $\lambda \in \Lambda'$ and suppose that $v_\tau^\lambda(\eta) < \infty$ for $\eta \in L^0_{++}(\mc{F}_\tau)$.  Then there exists a local martingale $L^{\eta,\lambda}$, strictly orthogonal to $M$ when restricted to $[\tau,T]$, such that $v_\tau^\lambda(\eta) = E [ V(\eta Z_T^\lambda \mc{E}(L^{\eta,\lambda})) \ | \ \mc{F}_\tau].$
\end{lemma}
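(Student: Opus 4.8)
The plan is to recall from Lemma~\ref{dex} (applied to the financial setting, as in the proof of Theorem~\ref{mainthm1}) that whenever $v_\tau^\lambda(\eta) < \infty$ the conditional dual problem $\tilde v(\eta) = \ei_{Y \in \overline{\mc Y}_\tau^\lambda} E[V(\eta Z_T^\lambda Y_T)\mid\mc F_\tau]$ admits a unique optimal terminal value $\widehat h(\eta) \in \mc D(\eta)$, where $\mc D(\eta)$ is the solid hull of $\overline{\mc Y}_\tau^\lambda(\eta) = \eta\,\overline{\mc Y}_\tau^\lambda$. The first step is therefore to upgrade this abstract optimizer, which a priori lives in the solid hull, to an honest element of $\eta\,\overline{\mc Y}_\tau^\lambda$: since $V$ is strictly decreasing, any dominating element of $\overline{\mc Y}_\tau^\lambda(\eta)$ above $\widehat h(\eta)/\eta$ gives a weakly smaller objective, so by minimality $\widehat h(\eta)$ already lies in $\eta\,\overline{\mc Y}_\tau^\lambda$ — i.e.\ there is $\widehat Y \in \mc Y_\tau^\lambda$ with $\widehat h(\eta) = \eta \widehat Y_T$. (One should note it is really $\widehat Y_T > 0$ that is forced here, since $V(0+) = U(\infty)$ may be $+\infty$ and the Inada conditions rule out a minimizer hitting zero on a set of positive measure; this is exactly the argument in \cite{MR1722287}.)

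The second and main step is to invoke the structural description of $\mc Y_\tau^\lambda$ recorded in the introduction (the $\mc F_\tau$-analogue of Proposition~3.2 of \cite{MR2438002}): every $Y \in \mc Y_\tau^\lambda$ with $Y_T > 0$ has the form $Y = Z^\lambda \mc E(L) D$ on $[\tau,T]$, with $L$ a c\`adl\`ag local martingale strongly orthogonal to $M$ and $D$ predictable, nonincreasing, c\`adl\`ag, $D_\tau = 1$, $D_T > 0$; equivalently, using a strictly positive c\`adl\`ag local martingale $L'$ strongly orthogonal to $M$, $Y = Z^\lambda L' D$. Applying this to $\widehat Y$ yields $\widehat Y_T = Z_T^\lambda \widehat{L'}_T \widehat D_T$ with $\widehat D_T \le 1$. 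The third step removes the nonincreasing factor $\widehat D$: because $V$ is strictly decreasing and $\widehat D_T \le 1$, replacing $\widehat Y$ by $Z^\lambda \widehat{L'}$ (i.e.\ taking $D \equiv 1$) can only decrease $V(\eta Z_T^\lambda \widehat Y_T)$ pointwise, hence decrease the conditional expectation; by optimality and uniqueness of $\widehat h(\eta)$ this forces $\widehat D_T = 1$, so $\widehat Y_T = Z_T^\lambda \mc E(L^{\eta,\lambda})_T$ for $L^{\eta,\lambda} = L'$ the required local martingale strictly orthogonal to $M$ on $[\tau,T]$. Setting $v_\tau^\lambda(\eta) = E[V(\eta Z_T^\lambda \mc E(L^{\eta,\lambda})_T)\mid\mc F_\tau]$ finishes the proof; I would just write $\mc E(L^{\eta,\lambda})$ for the martingale $\widehat{L'}$, matching the statement's notation.

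The main obstacle is the first step — verifying that the abstract optimizer $\widehat h(\eta)$ coming out of the solid hull actually corresponds to a genuine supermartingale deflator started at $\eta$, with strictly positive terminal value, rather than merely to a dominated, possibly degenerate random variable. This is where the Inada conditions and the strict monotonicity of $V$ do the work, exactly as in Lemma~3.4 and the surrounding discussion in \cite{MR1722287}: one argues that on any set where a candidate optimizer vanished, one could strictly improve the objective, contradicting optimality. Once $\widehat Y_T > 0$ is secured, the passage to the explicit form $Z_T^\lambda\mc E(L^{\eta,\lambda})_T$ is a routine application of the structure theorem together with the monotone improvement argument for the $D$-factor.
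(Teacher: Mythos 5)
Your proof is correct and takes essentially the same route as the paper: cite Lemma~\ref{dex} for existence and uniqueness of the optimizer, apply the conditional version of Proposition~3.2 of \cite{MR2438002} to write it as $Z^\lambda\mc{E}(L)D$, and rule out $D\not\equiv 1$ using the strict monotonicity of $V$ together with optimality. You are in fact more explicit than the paper about passing from the solid hull $\mc{D}(\eta)$ back to an honest element of $\eta\,\overline{\mc{Y}}^\lambda_\tau$ (and about securing $\widehat{Y}_T>0$), a step the paper leaves implicit; there is a minor notational slip in your write-up (an extra factor of $Z_T^\lambda$ appears inside $V$ after you have already absorbed it into $\widehat{Y}_T$), but it does not affect the argument.
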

\begin{proof}  By Lemma \ref{dex}, there exists $Y \in \mc{Y}^\lambda_\tau$ such that $v_\tau^\lambda(\eta) = E [ V(\eta Y_T) \ | \ \mc{F}_\tau ]$.  By Proposition $3.2$ of \cite{MR2438002}, $Y = Z^\lambda \mc{E}(L)D$ with $L$ strongly orthogonal to $M$ on $[\tau,T]$, and $D$ is a decreasing process.  Because $V$ is strictly decreasing, and because $Z^\lambda \mc{E}(L) \in \mc{Y}^\lambda_\tau$ by virtue of $\langle L,M \rangle = 0$, it must be that $D \equiv 1$.
\end{proof}

\begin{corollary}\label{below} 
Let $\mc{B}$ denote the set of all local martingales $L$, strongly orthogonal to $M$ when both are restricted to the time interval $[\tau,T]$, such that the terminal value $\mc{E}(L)_T$ is bounded from below by a positive constant. Let $\lambda \in \Lambda'$ and suppose that $E [ V^+(Z_T^\lambda) \ | \ \mc{F}_\tau ] < \infty$.  Then for each $\eta \in L^0_{++}(\mc{F}_\tau)$, we have the representation 
\[v_\tau^\lambda(\eta) = \underset{L \in \mc{B}}{\ei} E [ V(\eta Z_T^\lambda \mc{E}(L)_T) \ | \ \mc{F}_\tau ].\]
\end{corollary}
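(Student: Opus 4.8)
The plan is to prove the two inequalities separately, the substantive one being ``$\le$''. For ``$\ge$'', note that every $L\in\mc{B}$ is in particular a strictly positive c\`adl\`ag local martingale, strongly orthogonal to $M$ on $[\tau,T]$, with $\mc{E}(L)_\tau=1$, so $Z^\lambda\mc{E}(L)=Z^\lambda\mc{E}(L)D$ with $D\equiv 1$ belongs to $\mc{Y}^\lambda_\tau$ (this is exactly the argument used in Lemma~\ref{dualexistence}); hence $E[V(\eta Z_T^\lambda\mc{E}(L)_T)\mid\mc{F}_\tau]\ge v_\tau^\lambda(\eta)$, and the essential infimum over $L\in\mc{B}$ preserves this bound. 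For the reverse inequality I would work on $A=\{v_\tau^\lambda(\eta)<\infty\}\in\mc{F}_\tau$, since on $A^c$ both sides are $+\infty$; on $A$, Lemma~\ref{dualexistence} furnishes a local martingale $\widehat L=L^{\eta,\lambda}$, strongly orthogonal to $M$ on $[\tau,T]$, with $v_\tau^\lambda(\eta)=E[V(\eta Z_T^\lambda\mc{E}(\widehat L)_T)\mid\mc{F}_\tau]$.

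The key construction pushes the (possibly arbitrarily small) terminal value $\mc{E}(\widehat L)_T$ away from $0$ by taking convex combinations with the constant $1$: set $N^{(n)}=\tfrac1n+\tfrac{n-1}{n}\mc{E}(\widehat L)$ on $[\tau,T]$. Then $N^{(n)}$ is a strictly positive c\`adl\`ag local martingale with $N^{(n)}_\tau=1$ and $N^{(n)}_T\ge\tfrac1n$, and its stochastic logarithm $L^{(n)}$ (so that $\mc{E}(L^{(n)})=N^{(n)}$) is a local martingale satisfying $d\langle L^{(n)},M\rangle=\tfrac{n-1}{n}(N^{(n)}_-)^{-1}\mc{E}(\widehat L)_-\,d\langle\widehat L,M\rangle=0$; hence $L^{(n)}\in\mc{B}$ and $\mc{E}(L^{(n)})_T=N^{(n)}_T$. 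Before letting $n\to\infty$ I would record the integrability statement $E[V^+(\eta Z_T^\lambda)\mid\mc{F}_\tau]<\infty$. This is the point where the hypothesis $E[V^+(Z_T^\lambda)\mid\mc{F}_\tau]<\infty$ enters, together with the asymptotic-elasticity estimate $V(\mu y)<\mu^{-\alpha}V(y)$ ($\mu\in(0,1)$, $y\le y_0$) recalled in Section~\ref{sec:introduction}: a short case distinction (on whether $\eta\ge1$, and on the positions of $Z_T^\lambda$ and $\eta Z_T^\lambda$ relative to $y_0$) yields a pointwise bound of the form $V^+(\eta Z_T^\lambda)\le(\eta^{-\alpha}\vee1)(V^+(Z_T^\lambda)+V^+(y_0))$, whose $\mc{F}_\tau$-conditional expectation is finite because $\eta$ is $\mc{F}_\tau$-measurable. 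Combined with the $\mc{F}_\tau$-uniform integrability of $V^-$ on $\mc{D}(\eta)$ (Lemma~\ref{ui}), $E[V(\eta Z_T^\lambda)\mid\mc{F}_\tau]$ is then a well-defined, finite random variable.

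Convexity of $V$ now gives, pointwise,
\[
V(\eta Z_T^\lambda\mc{E}(L^{(n)})_T)\le\tfrac1n V(\eta Z_T^\lambda)+\tfrac{n-1}{n}V(\eta Z_T^\lambda\mc{E}(\widehat L)_T),
\]
and, since $L^{(n)}\in\mc{B}\subset\mc{Y}^\lambda_\tau$, taking $\mc{F}_\tau$-conditional expectations,
\[
v_\tau^\lambda(\eta)\le E[V(\eta Z_T^\lambda\mc{E}(L^{(n)})_T)\mid\mc{F}_\tau]\le\tfrac1n E[V(\eta Z_T^\lambda)\mid\mc{F}_\tau]+\tfrac{n-1}{n}v_\tau^\lambda(\eta).
\]
Letting $n\to\infty$, the upper bound tends to $v_\tau^\lambda(\eta)$ almost surely, so $E[V(\eta Z_T^\lambda\mc{E}(L^{(n)})_T)\mid\mc{F}_\tau]\to v_\tau^\lambda(\eta)$ and therefore $\underset{L\in\mc{B}}{\ei}E[V(\eta Z_T^\lambda\mc{E}(L)_T)\mid\mc{F}_\tau]\le v_\tau^\lambda(\eta)$ on $A$; with the easy inequality this proves the representation. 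The step I expect to be the main obstacle is the integrability claim $E[V^+(\eta Z_T^\lambda)\mid\mc{F}_\tau]<\infty$ for a general, non-constant $\eta\in L^0_{++}(\mc{F}_\tau)$ — without it the convexity estimate above is vacuous, and it is exactly there that the standing hypothesis and the asymptotic-elasticity structure of $V$ are used. The verification that $L^{(n)}\in\mc{B}$ (that convex combination with the constant $1$ preserves strong orthogonality to $M$ and really produces a stochastic exponential) is routine but deserves to be spelled out, since strong orthogonality to $M$ is the defining feature of $\mc{B}$.
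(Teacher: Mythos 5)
Your argument is correct and follows the paper's (and the referenced Corollary~3.4's) strategy precisely: take the dual optimizer from Lemma~\ref{dualexistence}, push its terminal value away from zero by blending with the constant local martingale, and use convexity of $V$ together with the finiteness of $E[V^+(\eta Z_T^\lambda)\mid\mc{F}_\tau]$ (obtained from the asymptotic-elasticity estimate and the standing hypothesis) to pass to the limit. The convex combination $\frac{1}{n}+\frac{n-1}{n}\mc{E}(\widehat L)$ is exactly what the paper's sketch means by ``truncated from below,'' so there is no genuine departure.
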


\begin{proof}  The proof is similar to that of Corollary 3.4 from \cite{MR2438002}.  First, we write down a solution to the unrestricted optimization problem.  The optimal terminal value may not be bounded away from zero, so it is then truncated from below, and it can be shown that these approximations yield converging approximates to the value function.

\end{proof}

We now apply Lemma \ref{cdcon} to obtain the equivalent of Lemma $3.5$ in \cite{MR2438002}.  

\begin{lemma}\label{usc1}  Let $\xi$ be a random variable, bounded from below by a strictly positive constant, such that $\underset{\lambda \in \Lambda'}{\sup} E [ Z_T^\lambda \xi ] < \infty.$  Then the mapping $(\eta, \lambda ) \overset{\phi}{\mapsto} E [ V(\eta Z_T^\lambda \xi) \ | \ \mc{F}_\tau ]$ is upper semi-continuous from $L^0_{++}(\mc{F}_\tau) \times \Lambda'$ into $L^0(\mc{F}_\tau)$.
\end{lemma}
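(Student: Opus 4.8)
The plan is to take an arbitrary sequence $(\eta_k,\lambda_k)\to(\eta,\lambda)$ in $L^0_{++}(\mc{F}_\tau)\times\Lambda'$ — convergence in probability in the first coordinate, in the appropriate topology in the second — and to show that $\phi(\eta,\lambda)\ge P-\limsup_{k}\phi(\eta_k,\lambda_k)$. Write $W_k=\eta_k Z_T^{\lambda_k}\xi$ and $W=\eta Z_T^\lambda\xi$. Since the topology on $\Lambda$ is appropriate, $Z_T^{\lambda_k}\to Z_T^\lambda$ in probability, hence $W_k\to W$ in probability, and by continuity of $V$ both $V^+(W_k)\to V^+(W)$ and $V^-(W_k)\to V^-(W)$ in probability. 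I will handle the two conditional expectations coming from the decomposition $V=V^+-V^-$ separately: it suffices to prove $P-\liminf_k E[V^-(W_k)\mid\mc{F}_\tau]\ge E[V^-(W)\mid\mc{F}_\tau]$ and $E[V^+(W_k)\mid\mc{F}_\tau]\to E[V^+(W)\mid\mc{F}_\tau]$ in probability, and then combine. That all these conditional expectations are a.s.\ finite — so that $\phi$ really takes values in $L^0(\mc{F}_\tau)$ — is where $\sup_{\lambda\in\Lambda'}E[Z_T^\lambda\xi]<\infty$ is used: from $V(y)\ge U(1)-y$ one gets the affine bound $V^-(y)\le y+|U(1)|$, so $E[V^-(W_k)\mid\mc{F}_\tau]\le\eta_kE[Z_T^{\lambda_k}\xi\mid\mc{F}_\tau]+|U(1)|<\infty$ a.s. The $V^-$ part is then immediate: the random variables $V^-(W_k)$ are nonnegative and converge in probability to $V^-(W)$, so the conditional Fatou lemma (Lemma~\ref{confatou}, with its uniform-integrability hypothesis vacuous) gives the desired bound on the $P-\liminf$.

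The $V^+$ part is the heart of the matter and is where I expect the real work to lie. The idea is to apply the conditional dominated convergence lemma (Lemma~\ref{cdcon}) to $Y_k:=V^+(W_k)$, which forces me to exhibit a dominating sequence of the rigid form $X_k=\eta_k'U_k$ with $\eta_k'\in L^0_{++}(\mc{F}_\tau)$, $U_k\in L^1_{++}$, $X_k\to X$ and $E[X_k\mid\mc{F}_\tau]\to E[X\mid\mc{F}_\tau]$ in probability. The construction I have in mind rests on a consequence of the asymptotic-elasticity assumption (Lemma~6.3 of \cite{MR1722287}): there exist $\alpha>0$ and a constant $C_0\ge0$ with
\[
V^+(\mu y)\ \le\ \mu^{-\alpha}\bigl(V^+(y)+C_0\bigr)\qquad\text{for all }y>0,\ \mu\in(0,1),
\]
which one extracts from that lemma by treating $y\le y_0$ and $y>y_0$ separately (and noting that on $\{V\le0\}$ both sides vanish). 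Since $\xi\ge c$ for a positive constant $c$ and $V^+$ is nonincreasing, $W_k\ge c\eta_kZ_T^{\lambda_k}$ gives $V^+(W_k)\le V^+(c\eta_kZ_T^{\lambda_k})$, and applying the estimate above (with $\mu=c\eta_k$ on the set where $c\eta_k<1$, trivially elsewhere) yields
\[
V^+(W_k)\ \le\ \bigl(1+c^{-\alpha}\eta_k^{-\alpha}\bigr)\bigl(V^+(Z_T^{\lambda_k})+C_0+1\bigr)\ =:\ \eta_k'\,U_k.
\]
Here $\eta_k'=1+c^{-\alpha}\eta_k^{-\alpha}$ is $\mc{F}_\tau$-measurable, at least $1$, and converges in probability to $1+c^{-\alpha}\eta^{-\alpha}$; and $U_k=V^+(Z_T^{\lambda_k})+C_0+1$ lies in $L^1_{++}$ and in fact converges to its limit \emph{in} $L^1$, because $V$-relative compactness of $\Lambda'$ makes $\{V(Z_T^{\lambda_k})\}_k$, hence $\{V^+(Z_T^{\lambda_k})\}_k$, uniformly integrable. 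Therefore $E[U_k\mid\mc{F}_\tau]$ converges in probability, and so does $E[X_k\mid\mc{F}_\tau]=\eta_k'E[U_k\mid\mc{F}_\tau]$. Feeding this into Lemma~\ref{cdcon} (if one is fastidious about its hypotheses, apply it to $Y_k+1\le\eta_k'(U_k+1)$, using $\eta_k'\ge1$, to keep everything strictly positive) yields $E[V^+(W_k)\mid\mc{F}_\tau]\to E[V^+(W)\mid\mc{F}_\tau]$ in probability.

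To finish, since $E[V^+(W_k)\mid\mc{F}_\tau]$ converges in probability and $P-\liminf_kE[V^-(W_k)\mid\mc{F}_\tau]\ge E[V^-(W)\mid\mc{F}_\tau]$, the elementary calculus of $P-\limsup/P-\liminf$ gives
\[
P-\limsup_{k}\phi(\eta_k,\lambda_k)\ \le\ E[V^+(W)\mid\mc{F}_\tau]-E[V^-(W)\mid\mc{F}_\tau]\ =\ \phi(\eta,\lambda),
\]
which is the asserted upper semi-continuity. The single genuinely delicate step is the middle one — coaxing the asymptotic-elasticity estimate into exactly the product form $\eta_k'U_k$ demanded by Lemma~\ref{cdcon} while keeping the ``$L^1$ factor'' $U_k$ under uniform control (this is precisely what $V$-relative compactness buys) and the ``$\mc{F}_\tau$ factor'' $\eta_k'$ a function of $\eta_k$ alone; the rest is bookkeeping.
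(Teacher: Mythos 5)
Your proof is correct and takes essentially the same approach as the paper's: split $V=V^+-V^-$, handle the $V^-$ contribution with plain conditional Fatou (the paper calls this ``conditional reverse Fatou''), and apply Lemma~\ref{cdcon} to $V^+$ via an asymptotic-elasticity dominating sequence whose $L^1$ factor is controlled by uniform integrability. The one cosmetic difference is that you first discard $\xi$ using its lower bound $c$ and then invoke $V$-relative compactness of $\Lambda'$ directly for the uniform integrability of $\{V^+(Z_T^{\lambda_k})\}$, whereas the paper keeps $\xi$ inside the bound $X_k = C\eta_k^{-\alpha}V^+(Z_T^{\lambda_k}\xi)+D$ and cites the uniform integrability of $\{V(Z_T^{\lambda}\xi)\}_{\lambda\in\Lambda'}$ from Lemma~3.5 of \cite{MR2438002}; both variants rest on the same Lemma~6.3 estimate and are interchangeable.
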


\begin{proof}  Define the map $\phi^+$ analogously to $\phi$, except that $V$ is replaced with $V^+$.  To prove the upper semi-continuity of $\phi$, it suffices to prove the continuity of $\phi^+$, because we get upper semi-continuity of the negative part for free by conditional reverse Fatou's Lemma.  In any case, we will now just assume that $V$ is a nonnegative function, and that $V(y) = 0$ for sufficiently large $y$.  Let $(\eta_k, Z_T^{\lambda_k}) \ra (\eta, Z_T^\lambda)$.  We put ourselves in the context of Lemma \ref{cdcon}.  Let $Y_k = V(\eta_k Y_T^{\lambda_k} \xi)$.  By the asymptotic elasticity assumption on $U$, we have by Lemma $6.3$ of \cite{MR1722287} that $Y_k \leq C \eta_k^{-\alpha} V(Y_T^{\lambda_k} \xi) + D \triangleq X_k$, where $C,D > 0$ are constants, as is $\alpha >0$.  For completeness, actually $\alpha = \frac{\gamma}{1 - \gamma}$, for $\gamma$ the asymptotic elasticity of $U$.  Clearly, by the continuity of $V$, we have $Y_k \ra Y = V(\eta Z_T^\lambda \xi)$ and $X_k \ra X = \eta^{-\alpha}V(Z_T^\lambda \xi)$, both in probability.  Since all $\eta_k$ and $\eta$ are $\mc{F}_\tau$-measurable, then $$E [ X_k \ | \ \mc{F}_\tau ] = \eta_k^{-\alpha} E [V(Z_T^{\lambda_k} \xi) \ | \ \mc{F}_\tau ] \ra \eta^{-\alpha} E [ V(Z_T^\lambda \xi) \ | \ \mc{F}_\tau ] = E [X \ | \ \mc{F}_\tau ],$$ where the convergence above is a consequence of the uniform integrability established in Lemma 3.5 of \cite{MR2438002}.  The proof is now complete by Lemma \ref{cdcon}.
\end{proof}

\begin{lemma}\label{usc}  The function $(\eta,\lambda) \mapsto v_\tau^\lambda(\eta)$, mapping $L^0_{++}(\mc{F}_\tau) \times \Lambda'$ into $L^0(\mc{F}_\tau)$ is upper-semi continuous.
\end{lemma}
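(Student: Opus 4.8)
The plan is to represent $v_\tau^\lambda(\eta)$ as an essential infimum of the upper semi-continuous maps supplied by Lemma \ref{usc1}, and then invoke the stability of essential infima under semi-continuity proved in Lemma \ref{infsc}. Concretely, I would first fix a candidate limiting pair $(\eta,\lambda)$ together with a convergent sequence $(\eta_k,\lambda_k)\to(\eta,\lambda)$ in $L^0_{++}(\mc{F}_\tau)\times\Lambda'$, and split into two cases according to whether $v_\tau^\lambda(\eta)<\infty$ on all of $\Omega$ or only on a piece of it; by the local property of $v_\tau^\lambda$ established just above, it suffices to argue on the event $\{v_\tau^\lambda(\eta)<\infty\}$, and on its complement upper semi-continuity is automatic since the limiting value is $+\infty$.

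On $\{v_\tau^\lambda(\eta)<\infty\}$ I would use Corollary \ref{below}: writing $\mc{B}$ for the family of local martingales $L$ strongly orthogonal to $M$ on $[\tau,T]$ with $\mc{E}(L)_T$ bounded below by a positive constant, we have
\[
v_\tau^\lambda(\eta)=\underset{L\in\mc{B}}{\ei}\,E\!\left[V\!\left(\eta Z_T^\lambda\,\mc{E}(L)_T\right)\,\big|\,\mc{F}_\tau\right].
\]
For each fixed $L\in\mc{B}$, set $\xi=\mc{E}(L)_T$, which is bounded below by a strictly positive constant, and note that $V$-relative compactness of $\Lambda'$ (uniform integrability of $\{V(Z_T^\lambda):\lambda\in\Lambda'\}$) together with this lower bound gives $\sup_{\lambda\in\Lambda'}E[Z_T^\lambda\xi]<\infty$; hence Lemma \ref{usc1} applies and the map $(\eta,\lambda)\mapsto E[V(\eta Z_T^\lambda\mc{E}(L)_T)\,|\,\mc{F}_\tau]$ is upper semi-continuous. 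Then Lemma \ref{infsc}, applied with the index set $\mathfrak{I}=\mc{B}$, yields that the essential infimum $v_\tau^\lambda(\eta)$ is upper semi-continuous in $(\eta,\lambda)$, which is exactly the claim.

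The main obstacle is making sure the representation of Corollary \ref{below} is available uniformly along the approximating sequence, not just at the limit point: Lemma \ref{usc1} and Corollary \ref{below} both require $E[V^+(Z_T^\lambda)\,|\,\mc{F}_\tau]<\infty$ (equivalently the $L^1$-type bound on $Z_T^\lambda\xi$), and one must check this holds for every $\lambda_k$ with constants that do not blow up — this is where $V$-relative compactness of $\Lambda'$ is essential, since it bounds $E[Z_T^{\lambda_k}\xi]$ uniformly in $k$. A secondary technical point is the interchange of the essential infimum over $\mc{B}$ with the $P$-$\limsup$: this is precisely what Lemma \ref{infsc} is designed to handle, so once the hypotheses of that lemma are verified the argument closes without further estimates. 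One should also be slightly careful that the bound $\sup_{\lambda\in\Lambda'} E[Z_T^\lambda\xi]<\infty$ (needed in Lemma \ref{usc1}) follows from $V$-relative compactness: this uses that $V$ dominates a linear function near infinity up to an additive constant, so uniform integrability of $V(Z_T^\lambda)$ controls $E[Z_T^\lambda\,\mathbf 1_{\{Z_T^\lambda\text{ large}\}}]$, and the remaining part is bounded since $\xi$ is bounded above on the relevant truncation or one first reduces to bounded $\xi$ as in Corollary \ref{below}.
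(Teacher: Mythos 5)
Your argument follows the paper's proof exactly: use Corollary~\ref{below} to write $v_\tau^\lambda(\eta)$ as the essential infimum over $L\in\mc{B}$ of $E[V(\eta Z_T^\lambda\mc{E}(L)_T)\,|\,\mc{F}_\tau]$, invoke Lemma~\ref{usc1} for each fixed $L$, and conclude via Lemma~\ref{infsc}. The case split you propose is harmless but unnecessary, since the semi-continuity notions here are stated in terms of $P$-$\limsup$ and extended random variables, and Lemma~\ref{infsc} handles the essential infimum directly.

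One correction is needed in your closing paragraph, where you try to justify the hypothesis $\sup_{\lambda\in\Lambda'}E[Z_T^\lambda\xi]<\infty$ of Lemma~\ref{usc1} by appealing to $V$-relative compactness and a claim that ``$V$ dominates a linear function near infinity.'' That claim is false: $V$ is convex and \emph{decreasing}, with $V(y)\to U(0^+)$ as $y\to\infty$ (a finite limit or $-\infty$, attained at most logarithmically for standard utilities), so it in no way majorizes a positive linear function for large $y$. In fact the bound in question has nothing to do with $V$-relative compactness. For $\xi=\mc{E}(L)_T$ with $L\in\mc{B}$, orthogonality of $L$ and $M$ gives $Z^\lambda\mc{E}(L)=\mc{E}(-\lambda\cdot M+L)$, a nonnegative local martingale and hence a supermartingale; since $\mc{E}(L)$ equals $1$ at time $\tau$ one obtains $E[Z_T^\lambda\xi]\le E[Z_\tau^\lambda]\le 1$, uniformly over all $\lambda\in\Lambda'$. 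This supermartingale-deflator bound is what makes Lemma~\ref{usc1} applicable, and is the reason the paper does not comment on the hypothesis at all.
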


\begin{proof}  By Corollary \ref{below}, the dual value function $v_\tau^\lambda(\cdot)$ has the representation $v_\tau^\lambda(\eta) = \underset{Y}{\ei} E [ V(\eta Y_T Z_T^\lambda) \ | \ \mc{F}_\tau ],$ where the infimum is taken over $Y= \mc{E}(L)$ for $L \in \mc{B}$.  For such a random variable, by Lemma \ref{usc1}, the mapping $(\eta,\lambda) \mapsto E [V(\eta Y Z_T^\lambda) \ | \ \mc{F}_\tau ]$ is upper semi-continuous.  Therefore, by Lemma \ref{infsc}, the mapping $(\eta,\lambda) \mapsto v_\tau^\lambda(\eta)$ is upper semi-continuous, as an essential infimum of upper semi-continuous mappings.
\end{proof}

We will prove lower semi-continuity by approximating the value function from below.  We have this preparatory lemma.

\begin{lemma}\label{auxlsc}  Let $\lambda_k \ra \lambda$ appropriately, and let $\eta_k \ra \eta$ in $L^0_{++}(\mc{F}_\tau)$, and suppose that $\eta_k,\eta$ are all uniformly bounded from below.  Then $v_\tau^{\lambda_k}(\eta_k) \ra v_\tau^\lambda(\eta)$, i.e. $(\eta,\lambda) \mapsto v_\tau^\lambda(\eta)$ is continuous when restricted to $\eta$ bounded away from zero.
\end{lemma}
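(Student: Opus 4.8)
By Lemma~\ref{usc} the map $(\eta,\lambda)\mapsto v_\tau^\lambda(\eta)$ is already upper semi-continuous, so it suffices to establish lower semi-continuity along the given sequence, i.e. $v_\tau^\lambda(\eta)\le P-\liminf_{k\ra\infty} v_\tau^{\lambda_k}(\eta_k)$; the two together give $v_\tau^{\lambda_k}(\eta_k)\ra v_\tau^\lambda(\eta)$. To this end I would first pass to a subsequence (and relabel) along which $v_\tau^{\lambda_k}(\eta_k)$ converges to the $P-\liminf$ and along which $Z_T^{\lambda_k}\ra Z_T^\lambda$ and $\eta_k\ra\eta$ almost surely. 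Because the $\eta_k$ are bounded away from $0$ and $\Lambda'$ is $V$-relatively compact, the asymptotic elasticity estimate of \cite{MR1722287} together with Lemma~\ref{ui} gives $v_\tau^{\lambda_k}(\eta_k)<\infty$ for every $k$, so by Lemma~\ref{dualexistence} there are local martingales $L^k$, strongly orthogonal to $M$ on $[\tau,T]$, with $v_\tau^{\lambda_k}(\eta_k)=E[V(\eta_k Z_T^{\lambda_k}\mc{E}(L^k)_T)\ |\ \mc{F}_\tau]$.

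The plan is to build a feasible dual element for the limit market $\lambda$ out of the $\mc{E}(L^k)$ and then pass to the limit by conditional Fatou. Since each $\mc{E}(L^k)$ is a positive local (super)martingale started from $1$, the terminal values $\mc{E}(L^k)_T$ lie in $\overline{\mc{Y}}_\tau$, which is convex, closed under convergence in probability, and bounded in $L^1$ by $1$ (these are exactly the structural properties recorded after Proposition~3.2 of \cite{MR2438002}). Lemma~\ref{komlos} then produces forward convex combinations $f^k=\sum_j\mu^k_j\mc{E}(L^j)_T\in\conv(\mc{E}(L^k)_T,\mc{E}(L^{k+1})_T,\dots)$ converging a.s.\ to a finite $\hat f\in\overline{\mc{Y}}_\tau$, so that $Z_T^\lambda\hat f\in\overline{\mc{Y}}^\lambda_\tau$ is feasible for $v_\tau^\lambda(\eta)$. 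Forming the convex combinations $g^k=\sum_j\mu^k_j\,\eta_j Z_T^{\lambda_j}\mc{E}(L^j)_T$ with the same weights, one has $g^k-\eta Z_T^\lambda f^k=\sum_j\mu^k_j(\eta_j Z_T^{\lambda_j}-\eta Z_T^\lambda)\mc{E}(L^j)_T$; since $\eta_j Z_T^{\lambda_j}\ra\eta Z_T^\lambda$ a.s.\ and $\sum_j\mu^k_j\mc{E}(L^j)_T$ is bounded in $L^1$ by $1$ uniformly in $k$, the elementary estimate $\bigl|\sum_j\mu^k_j(\eta_j Z_T^{\lambda_j}-\eta Z_T^\lambda)\mc{E}(L^j)_T\bigr|\le\bigl(\sup_{j\ge k}|\eta_j Z_T^{\lambda_j}-\eta Z_T^\lambda|\bigr)\sum_j\mu^k_j\mc{E}(L^j)_T$ shows this difference tends to $0$ in probability, whence $g^k\ra\eta Z_T^\lambda\hat f$ in probability.

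Finally, convexity of $V$ gives $E[V(g^k)\ |\ \mc{F}_\tau]\le\sum_j\mu^k_j\,v_\tau^{\lambda_j}(\eta_j)\le\sup_{j\ge k}v_\tau^{\lambda_j}(\eta_j)$, which along our relabeled subsequence tends to $P-\liminf_k v_\tau^{\lambda_k}(\eta_k)$; and the negative parts $V^-(g^k)$ are $\mc{F}_\tau$-uniformly integrable, because $\{g^k\}$ is bounded in $L^1$ and the conditional de la Vall\'ee-Poussin criterion (Lemma~\ref{convp}) applies with the strictly concave $V^-$ exactly as in Lemma~\ref{ui}. Conditional Fatou's Lemma (Lemma~\ref{confatou}), continuity of $V$, and feasibility of $Z_T^\lambda\hat f$ then yield
\[
v_\tau^\lambda(\eta)\ \le\ E[V(\eta Z_T^\lambda\hat f)\ |\ \mc{F}_\tau]\ \le\ P-\liminf_{k\ra\infty}E[V(g^k)\ |\ \mc{F}_\tau]\ \le\ P-\liminf_{k\ra\infty}v_\tau^{\lambda_k}(\eta_k),
\]
which is the desired lower semi-continuity, and combined with Lemma~\ref{usc} proves the lemma. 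I expect the main obstacle to be the second step: extracting a limiting dual element when the perturbed local martingale densities $Z_T^{\lambda_k}$ are buried inside the Komlos convex combinations and must be reconciled with the single density $Z_T^\lambda$ of the limit market, while simultaneously letting $\eta_k\ra\eta$ inside $V$ and retaining enough integrability for conditional Fatou — this is precisely where the uniform lower bound on the $\eta_k$ and the $L^1$-boundedness of the dual feasible sets are essential.
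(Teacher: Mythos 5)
Your overall strategy (USC from Lemma~\ref{usc}, then Komlos on the optimal dual densities, then conditional Fatou, then convexity of $V$) is the right set of ingredients, and the construction of the limiting feasible dual element $Z_T^\lambda\hat f$ is carried out essentially as in the paper. But there is a genuine gap in the final step, and it is not cosmetic.

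You pass to a subsequence ``along which $v_\tau^{\lambda_k}(\eta_k)$ converges to the $P$-$\liminf$,'' and then close the argument by bounding $\sum_j\mu^k_j\,v_\tau^{\lambda_j}(\eta_j)\le\sup_{j\ge k}v_\tau^{\lambda_j}(\eta_j)$ and claiming this tends to the $P$-$\liminf$. Two things go wrong. First, for a sequence of \emph{random variables}, one cannot in general extract a subsequence converging in probability (or a.s.) to the $P$-$\liminf$: unlike the scalar case, the $P$-$\liminf$ is defined as an essential supremum of lower bounds and need not be a subsequential limit at all (take $X_k=1_{A_k}$ for independent events of probability $1/2$; $P$-$\liminf X_k=0$ but no subsequence converges in probability). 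Second, even granting such an extraction, $\sup_{j\ge k}v_\tau^{\lambda_j}(\eta_j)$ decreases to the $\limsup$, not the $\liminf$; the Komlos inequality produces \emph{convex combinations} $\sum_j\mu^k_j v_\tau^{\lambda_j}(\eta_j)$ of the tail, and convex combinations of a sequence of random variables can easily be strictly larger than the $P$-$\liminf$. So the chain $v_\tau^\lambda(\eta)\le P\text{-}\liminf_k\sum_j\mu^k_j v_\tau^{\lambda_j}(\eta_j)\le P\text{-}\liminf_k v_\tau^{\lambda_k}(\eta_k)$ that you need does not hold.

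The paper sidesteps exactly this difficulty by arguing by contradiction and then \emph{integrating}: assuming failure of lower semi-continuity yields a fixed $\beta>0$ and a set of measure $\ge\beta$ on which $v_\tau^\lambda(\eta)>v_\tau^{\lambda_k}(\eta_k)+\beta$. After an Egorov localization to a set $A^c$ with $P(A)<\epsilon$ on which the $\eta_k$ are uniformly bounded from above (needed to control the negative parts), one takes expectations over $A^c$ to form scalar quantities $\mathfrak v_\tau^{\lambda_k}(\eta_k)$; the USC half and uniform integrability give $\mathfrak v_\tau^\lambda(\eta)\ge\mathfrak v_\tau^{\lambda_k}(\eta_k)+\beta(\beta-\epsilon)$ for large $k$. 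The Komlos--Fatou bound, \emph{after} multiplying by $1_{A^c}$, passing to a further Borel--Cantelli subsequence and integrating, becomes a bound involving a classical $\liminf$ of \emph{scalar} convex combinations $\sum_j\alpha^k_j\mathfrak v_\tau^{\lambda_j}(\eta_j)$, and for scalars a convex combination of numbers each $\le\mathfrak v_\tau^\lambda(\eta)-\beta(\beta-\epsilon)$ is itself $\le\mathfrak v_\tau^\lambda(\eta)-\beta(\beta-\epsilon)$. This is what makes the Komlos convex-combination bound actually close: the deterministic deficiency survives convex averaging, whereas the $P$-$\liminf$ does not. Your direct route omits both the contradiction/integration device and the Egorov localization, and without them the convex combinations cannot be compared with the $P$-$\liminf$.

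A secondary, more minor point: to invoke the conditional de la Vall\'ee--Poussin criterion for the $\mc{F}_\tau$-uniform integrability of $V^-(g^k)$, what is needed is that $E[g^k\mid\mc{F}_\tau]$ be bounded in probability, not merely that $\{g^k\}$ be bounded in $L^1$; the paper gets this from the supermartingale deflator bound $E[\eta_kZ_T^{\lambda_k}\mc{E}(L^k)_T\mid\mc{F}_\tau]\le\eta_k$ together with $\eta_k\to\eta$ a.s.
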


\begin{proof}  Let $(\lambda_k,\eta_k) \ra (\lambda,\eta)$.  Lemma \ref{usc} implies that the mapping is upper semi-continuous.  Suppose that $v_\tau^\lambda$ is not lower semi-continuous.  By passing to a subsequence, we assume that there exists $\beta > 0$ such that for all $k$, $P(v_\tau^\lambda(\eta) > v_\tau^{\lambda_k}(\eta_k) + \beta) > \beta$, and that $\eta_k \ra \eta$ and $Z_T^{\lambda_k} \ra Z_T^\lambda$ almost   surely.  Since $\eta_k \ra \eta$ a.s., Egorov's Theorem (p. 73, \cite{MR924157}) implies that for any $\epsilon > 0$, there exists a set $A \in \mc{F}_\tau$ with $P(A) < \epsilon$, such that $\eta_k 1_{A^c}, \eta 1_{A^c}$ are all uniformly bounded from above, for $k$ sufficiently large, with an upper bound of $M \in \mathbb{R}_+$; to see this, we first find a small set outside of which $\eta$ is bounded, and we then use Egorov's theorem to find another small set outside of which the convergence is uniform.  We will take $\epsilon$ to be small compared to $\beta$.

Set $\mathfrak{v}_\tau^\lambda(\eta) \triangleq E [v_\tau^\lambda(\eta) 1_{A^c}]$.  By construction, $\mathfrak{v}_\tau^\lambda(\eta)$ is finite: for the negative parts, by Lemma \ref{dex}, we have $v_\tau^{\lambda_k}(\eta_k) = E [ V(\eta_k \widehat{Y}_T^k) \ | \ \mc{F}_\tau ]$, where the optimizer $\hat{Y}^k_T = \widehat{Y}_T^{\eta_k,\lambda_k}$ can be written as $Z_T^{\lambda_k}\mc{E}(L^k)_T$, for some local martingale $L^k = L^{\eta_k,\lambda_k}$ that is strongly orthogonal to $M$ on $[\tau,T]$.  Then  note that $v_\tau^{\lambda_k}(\eta_k) 1_{A^c} = E [ V(\eta_k \hat{Y}^k_T) \ | \ \mc{F}_\tau] 1_{A^c} \geq E [V(M \hat{Y}^k_T) \ | \ \mc{F}_\tau] 1_{A^c}$ since $A \in \mc{F}_\tau$, and this term's negative part is integrable, since $M \hat{Y}^k_T \in L^1$ and $V^-$ is strictly concave.  In fact, the negative parts parts of $v^{\lambda_k}_\tau(\eta_k)1_{A^c}$ are also uniformly integrable, since the collection $\{M \hat{Y}^k_T\}_{k \geq 1}$ is bounded in $L^1$.  The positive parts are controlled by the fact that $\eta$ is bounded away from zero together with the asymptotic elasticity hypothesis and $V$-compactness assumption (in fact this logic implies that the positive parts of $\{v^{\lambda_k}_\tau(\eta_k)\}_{k \geq 1}$  are uniformly integrable).

By Lemma \ref{usc}, we already know that $(\lambda,\eta) \mapsto v^\lambda(\eta)$ is upper semi-continuous. This means that $(v_\tau^\lambda(\eta) - v_\tau^{\lambda_k}(\eta_k))^- \ra 0$ in probability.  Thus, $\left(1_{A^c}(v_\tau^\lambda(\eta) - v_\tau^{\lambda_k}(\eta_k))\right)^- \ra 0$ in $L^1$, since these random variables are uniformly integrable.  Writing $\mathfrak{v}_\tau^\lambda(\eta) - \mathfrak{v}_\tau^{\lambda_k}(\eta_k) = 
E \left[\left(1_{A^c}(v_\tau^\lambda(\eta) - v_\tau^{\lambda_k}(\eta_k))\right)^+ \right] - E \left[\left(1_{A^c}(v_\tau^\lambda(\eta) - v_\tau^{\lambda_k}(\eta_k))\right)^- \right]$, it therefore must be the case that $\mathfrak{v}_\tau^\lambda(\eta) \geq \mathfrak{v}_\tau^{\lambda_k}(\eta_k) + \beta(\beta - \epsilon)$ for $k$ large, because $\left(1_{A^c}(v_\tau^\lambda(\eta) - v_\tau^{\lambda_k}(\eta_k))\right)^+ \geq \beta$ on a set of measure at least $\beta - \epsilon$.

Note that the collection $\{Z_T^{\lambda_k}\mc{E}(L^k)_T \}_{k \geq 1}$ is bounded in $L^1$.  Hence, by Komlos's Lemma, there exists a random variable $h \in L^0_+(\mc{F}_\tau)$ and $\alpha^k_j$ such that 
\[h_k = \sum_{j=k}^{J(k)} \alpha_j^k \eta_j Z_T^{\lambda_j} \mc{E}(L^j)_T \ra h \text{ a.s.},\] with the $\alpha_j^k > 0$ such that $\sum_{j = k}^{J(k)} \alpha_j^k = 1$ for all $k$; here we have used the fact that $\eta_k \ra \eta$ a.s., which is preserved under convex combinations.  This then implies that $$f_k = \sum_{j = k}^{J(k)} \alpha_j^k \mc{E}(L^j)_T \ra \frac{h}{\eta Z_T^\lambda}, \text{ almost surely}.$$  The random variables $f_n$ are all in $\overline{\mc{Y}}_\tau^0 \triangleq \overline{\mc{Y}}_\tau^{\lambda \equiv 0}$ (since each $L^k$ is orthogonal to $M$), which is closed with respect to convergence in probability by Lemma $4.1$ of \cite{MR1722287}.  Therefore, the limit $\frac{h}{\eta Z_T^\lambda} \in \overline{\mc{Y}}_\tau^0$, implying that $\frac{h}{\eta} \in \overline{\mc{Y}}_\tau^\lambda.$  

Note that $ 0 \leq E \left [ \eta_k Z_T^{\lambda_k} \mc{E}(L^k)_T \ | \ \mc{F}_\tau \right ] =  \eta_k E \left [ Z_T^{\lambda_k} \mc{E}(L^k)_T \ | \ \mc{F}_\tau \right ] \leq \eta_k,$ where the last inequality is due to the supermartinagle property of deflators.  Since $\eta_k \ra \eta$ a.s., it follows that the collection $\left (E \left [\eta_k Z_T^{\lambda_k}\mc{E}(L^k)_T \ | \ \mc{F}_\tau \right ] \right )_{k \geq 1}$ is bounded in probability, as well as the collection $\left\{E[ h_k \ | \ \mc{F}_\tau]\right\}_{k \geq 1}$.  By Lemma \ref{convp}, the collection $\{ V^-(h_k) \}_{k \geq 1}$ is $\mc{F}_\tau$-uniformly integrable.  Applying Lemma \ref{confatou} for the first inequality below, we have 
\[
\begin{split}
v_\tau^\lambda(\eta) &\leq E [ V(h) \ | \ \mc{F}_\tau ] = E \left [ V \left (\liminf_{k \ra \infty} h_k \right ) \ | \ \mc{F}_\tau \right ] \\
&\leq P-\liminf_{k  \ra \infty} E \left [ V \left ( \sum_{j=k}^{J(k)} \alpha_j^k \eta_j Z_T^{\lambda_j} \mc{E}(L^j)_T \right ) \ | \ \mc{F}_\tau \right ] \\
&\leq P- \liminf_{k \ra \infty} \sum_{j=k}^{J(k)} \alpha_j^k E \left [ V \left (\eta_j Z_T^{\lambda_j} \mc{E}(L^j)_T \right ) \ | \ \mc{F}_\tau \right ] 
= P - \liminf_{k \ra \infty} \sum_{j = k}^{J(k)} \alpha_j^k v_\tau^{\lambda_j}(\eta_j).
\end{split}
\]

These calculations imply that 

\[v^\lambda_\tau(\eta)1_{A^c} \leq \left(P - \liminf_{k \ra \infty} \sum_{j = k}^{J(k)} \alpha_j^k v_\tau^{\lambda_j}(\eta_j) \right) 1_{A^c} = P - \liminf_{k \ra \infty} \sum_{j = k}^{J(k)} \alpha_j^k v_\tau^{\lambda_j}(\eta_j) 1_{A^c}.\]

By the standard Borel-Cantelli method, we can pass to another subsequence so that the $P-\liminf$ above is less than the classical $\liminf$ of this subsequence.  Taking expectations and applying Fatou's Lemma, we have
$\mathfrak{v}_\tau^\lambda(\eta) \leq \underset{k \ra \infty}{\liminf} \ \sum_{j = k}^{J(k)} \alpha_j^k \mathfrak{v}_\tau^{\lambda_j} (\eta_j)$ $ \leq \mathfrak{v}_\tau^\lambda(\eta) - \beta(\beta - \epsilon)$, a contradiction.

\end{proof}

\begin{proposition}\label{vcont}  The mapping $(\eta,\lambda) \mapsto v_\tau^\lambda(\eta)$ as defined above is continuous.
\end{proposition}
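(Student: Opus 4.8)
The plan is to bootstrap from Lemma~\ref{auxlsc}, which already yields continuity of $(\eta,\lambda) \mapsto v_\tau^\lambda(\eta)$ whenever the conditioning variable stays uniformly bounded away from zero, up to arbitrary $\eta \in L^0_{++}(\mc{F}_\tau)$ by truncating $\eta$ from below and invoking the local property of $v_\tau^\lambda$ (the lemma stated just before the proof of Theorem~\ref{mainthm1}). Concretely, I would take a convergent sequence $(\eta_k,\lambda_k) \to (\eta,\lambda)$ in $L^0_{++}(\mc{F}_\tau) \times \Lambda'$ — this sequential formulation suffices since $L^0$ is metrized by convergence in probability — and, since Lemma~\ref{usc} already supplies upper semi-continuity, it remains only to prove $v_\tau^{\lambda_k}(\eta_k) \to v_\tau^\lambda(\eta)$ in probability.

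For each natural number $m$ I would set $\hat\eta^{(m)} = \eta \vee \frac{1}{m}$ and $\hat\eta_k^{(m)} = \eta_k \vee \frac{1}{m}$. These all lie in $L^0_{++}(\mc{F}_\tau)$, are bounded below by $\frac{1}{m}$, and satisfy $\hat\eta_k^{(m)} \to \hat\eta^{(m)}$ in probability since $t \mapsto t \vee \frac{1}{m}$ is $1$-Lipschitz; hence Lemma~\ref{auxlsc} gives $v_\tau^{\lambda_k}(\hat\eta_k^{(m)}) \to v_\tau^\lambda(\hat\eta^{(m)})$ in probability. On the other hand, writing $\hat\eta^{(m)} = \eta 1_{\{\eta \geq 1/m\}} + \frac{1}{m} 1_{\{\eta < 1/m\}}$, the local property of $v_\tau^\lambda$ forces $v_\tau^\lambda(\hat\eta^{(m)}) = v_\tau^\lambda(\eta)$ on $\{\eta \geq \frac{1}{m}\}$, and likewise $v_\tau^{\lambda_k}(\hat\eta_k^{(m)}) = v_\tau^{\lambda_k}(\eta_k)$ on $\{\eta_k \geq \frac{1}{m}\}$. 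Combining these, for each $\epsilon > 0$,
\[
P\bigl(|v_\tau^{\lambda_k}(\eta_k) - v_\tau^\lambda(\eta)| > \epsilon\bigr) \leq P\bigl(\eta < \tfrac{1}{m}\bigr) + P\bigl(\eta_k < \tfrac{1}{m}\bigr) + P\bigl(|v_\tau^{\lambda_k}(\hat\eta_k^{(m)}) - v_\tau^\lambda(\hat\eta^{(m)})| > \epsilon\bigr),
\]
so that $\limsup_k P(|v_\tau^{\lambda_k}(\eta_k) - v_\tau^\lambda(\eta)| > \epsilon) \leq 2 P(\eta \leq \frac{1}{m})$, where $\limsup_k P(\eta_k < \frac{1}{m})$ is controlled by convergence in probability of $\eta_k$ to $\eta$. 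Letting $m \to \infty$ and using $\eta > 0$ almost surely then finishes the argument.

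The real substance of this proposition is already carried by Lemma~\ref{auxlsc} (which I am assuming): the lower semi-continuity on the bounded-away-from-zero stratum, proved through the Komlos-type extraction, the closedness of $\overline{\mc{Y}}_\tau^0$ under convergence in probability, and the $\mc{F}_\tau$-uniform-integrability machinery, is the genuine obstacle. Once that is granted, the reduction above is routine; the only point that requires a little care is the bookkeeping guaranteeing that the exceptional sets $\{\eta < \frac{1}{m}\}$ and $\{\eta_k < \frac{1}{m}\}$ shrink appropriately as first $k \to \infty$ and then $m \to \infty$.
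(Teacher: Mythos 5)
Your proof is correct and takes essentially the same route as the paper's: both use Lemma~\ref{usc} for upper semi-continuity and reduce lower semi-continuity to Lemma~\ref{auxlsc} by truncating $\eta$ from below with $\eta \vee m^{-1}$. Where you diverge is in removing the truncation. The paper sets $v_\tau^{m,\lambda}(\eta) := v_\tau^\lambda(\eta \vee m^{-1})$, argues that $v_\tau^{m,\lambda}(\eta) \uparrow v_\tau^\lambda(\eta)$ pointwise using the $L^0$-continuity of $v_\tau^\lambda(\cdot)$ from Theorem~\ref{mainthm1}, and then concludes via (the negated form of) Lemma~\ref{infsc} that a monotone increasing pointwise limit of lower semi-continuous maps is lower semi-continuous. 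You instead invoke the local property of $v_\tau^\lambda$, which yields the exact identity $v_\tau^\lambda(\eta \vee m^{-1}) = v_\tau^\lambda(\eta)$ on $\{\eta \geq 1/m\}$, and close with a union bound controlling the exceptional sets $\{\eta < 1/m\}$ and $\{\eta_k < 1/m\}$. Your variant is a modest but genuine simplification: it replaces the appeal to Theorem~\ref{mainthm1} (fixed-$\lambda$ continuity of $v_\tau^\lambda$) and Lemma~\ref{infsc} by the local property lemma and a direct estimate in probability. One small point to make explicit: the claim $\limsup_k P(\eta_k < 1/m) \leq P(\eta \leq 1/m)$ should be justified, e.g.\ via $P(\eta_k < 1/m) \leq P(|\eta_k - \eta| > 1/m) + P(\eta < 2/m)$, but any such bound tending to zero as $m \to \infty$ does the job since $\eta > 0$ a.s.
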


\begin{proof}  Thanks to Lemma \ref{usc}, it is enough to show that the map is lower semi-continuous.  

 For $n \in \mb{N}, \eta \in L^0_{++}(\mc{F}_\tau)$, and $\lambda \in \Lambda'$, we set
\[ v_\tau^{n,\lambda}(\eta) = v^\lambda_\tau(\eta \vee n^{-1}).\]

Since $V$ is decreasing, we have $v_\tau^{n,\lambda}(\eta) \leq v_\tau^{n+1,\lambda}(\eta)$ for every $n$.  For fixed $\lambda$, Theorem \ref{mainthm1} guarantees the continuity of $v_\tau^\lambda(\cdot)$.  Thus, we know that $v_\tau^{n,\lambda}(\eta) \uparrow v_\tau^\lambda (\eta)$ as $n \ra \infty$.  Thanks to Lemma \ref{auxlsc}, the mappings $(\eta,\lambda) \mapsto v^{n,\lambda}(\eta)$ are lower semi-continuous.  Since $v^{n,\lambda}(\eta) \uparrow v^\lambda(\eta)$, it now follows by Lemma \ref{infsc} that $(\eta,\lambda) \mapsto v^\lambda(\eta)$ is lower semi-continuous, and hence continuous.
\end{proof}

\section{$\lambda$-continuity of $(v_\tau^\lambda)'$}\label{sec:contode}

From the economic motivation of the problem, our initial hypotheses involve appropriate convergence of the markets, as well as convergence of initial wealth.  We want to prove some convergence on the dual side, and then bring things back to the primal.  However, we need a way to obtain convergence of dual ``initial wealths" from convergence of (primal) initial wealths.  Given the relationship $\xi = -(v_\tau^\lambda)'(\eta)$, this implies that we need stability of derivatives with respect to $\lambda$.  This section is occupied with establishing such continuity, which is a classical result in traditional convex analysis, but here requires some additional effort.  The basic strategy is to prove uniform convergence of $v_\tau^\lambda(\eta)$ when $\eta$ ranges over sets with a suitable analog of compactness. Naturally, uniform convergence then leads to the convergence of derivatives.

\subsection{Convex Compactness}\label{sec:ccom}

Given $K \subset L^0_+$, we define the $\mc{F}_\tau$-convex hull of $K$, denoted $\conv_{\mc{F}_\tau}(K)$, to be the set of all finite $\mc{F}_\tau$-convex combinations of elements in $K$.

\begin{definition}  Let $K \subset L^0_+$.  We say that $K$ is $\mc{F}_\tau$-convexly compact if
\begin{enumerate}
\item $K$ is $\mc{F}_\tau$-convex and closed with respect to convergence in probability.
\item For any sequence $(k_n)$ in $K$, there exist $h_n \in \conv_{\mc{F}_\tau}(k_n,\ldots)$ such that $h_n \ra h \in K$ a.s.
\end{enumerate}
\end{definition}

In the rest of this section, $K$ will denote an arbitrary $\mc{F}_\tau$-convexly compact set.

\begin{lemma}\label{awayzero}  Let $K \subset L^0_{++}(\mc{F}_\tau)$.  Then for any $\alpha > 0$, the random variable $X^* = X^*(\alpha) = \underset{X \in K}{\es} X^{-\alpha}$ is finite.
\end{lemma}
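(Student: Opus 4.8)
The plan is to leverage the second defining property of $\mc{F}_\tau$-convex compactness together with the fact that negative powers $x \mapsto x^{-\alpha}$ are convex. Suppose for contradiction that $X^* = \es_{X \in K} X^{-\alpha}$ fails to be finite, i.e. there is a set $B \in \mc{F}_\tau$ of positive measure on which $X^* = \infty$. On $B$ we can choose a sequence $(k_n) \subset K$ with $E[1_B\,(k_n^{-\alpha} \wedge n)] \to \infty$, or more directly, a sequence such that $k_n^{-\alpha} \to \infty$ on $B$ in an appropriate sense (by picking $k_n$ realizing larger and larger slices of the essential supremum and pasting via $\mc{F}_\tau$-measurable selection, using that the family $\{X^{-\alpha} : X \in K\}$ is upward directed after taking $\mc{F}_\tau$-convex combinations — actually it is upward directed in the essential-supremum sense because $K$ is $\mc{F}_\tau$-convex and $x \mapsto x^{-\alpha}$ is convex, so $(1_A k_1 + 1_{A^c} k_2)^{-\alpha} = 1_A k_1^{-\alpha} + 1_{A^c} k_2^{-\alpha}$). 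Thus one extracts $(k_n)$ with $k_n^{-\alpha} \uparrow \infty$ on $B$, equivalently $k_n \to 0$ on $B$.

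Next I would apply property (2) of convex compactness to this sequence: there exist $h_n \in \conv_{\mc{F}_\tau}(k_n, k_{n+1}, \ldots)$ with $h_n \to h \in K$ almost surely. The key point is that convex combinations cannot escape the collapse to zero: since $x \mapsto x^{-\alpha}$ is convex, if $h_n = \sum_{j} \beta_j^n k_j$ (finite $\mc{F}_\tau$-convex combination, indices $j \geq n$), then $h_n^{-\alpha} \leq \sum_j \beta_j^n k_j^{-\alpha}$; but actually I want the reverse control — I want to show $h_n \to 0$ on $B$. For that, use that $h_n \geq$ does not immediately follow, so instead I would argue via the essential infimum: on $B$, we have $h_n \le \sum_{j \ge n} \beta_j^n k_j$, and since all $k_j$ with $j \ge n$ are small on $B$... this needs care because a convex combination of small numbers is small. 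Precisely, on $B$ we arrange $k_j \le \epsilon_j$ with $\epsilon_j \downarrow 0$ ($\mc{F}_\tau$-measurable), and then $h_n \le \max_{j \ge n} \epsilon_j = \epsilon_n \to 0$ on $B$. Hence $h = \lim h_n = 0$ on $B$, contradicting $h \in K \subset L^0_{++}(\mc{F}_\tau)$, since $K$ consists of strictly positive random variables.

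The main obstacle I anticipate is the measurable-selection step: rigorously extracting a single sequence $(k_n)$ from $K$ along which $k_n \to 0$ (equivalently $k_n^{-\alpha} \to \infty$) on the bad set $B$, uniformly enough that one gets an $\mc{F}_\tau$-measurable dominating sequence $\epsilon_n \downarrow 0$ on $B$. This is where one uses that the family $\{k^{-\alpha} : k \in K\}$ is closed under $\mc{F}_\tau$-pasting and hence that its essential supremum is attained along an increasing sequence; standard lattice/exhaustion arguments for essential suprema of $\mc{F}_\tau$-convex families (as used repeatedly elsewhere in the paper, e.g. in Lemma \ref{dex}) handle this. Once the right sequence is in hand, the convexity estimate $h_n \le \epsilon_n$ on $B$ and the almost sure convergence from convex compactness deliver the contradiction immediately.
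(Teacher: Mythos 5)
Your argument is correct, but it is heavier than the one the paper gives, because after obtaining the decreasing sequence you reach for the \emph{second} property of $\mc{F}_\tau$-convex compactness (passage to forward convex combinations) when the first property alone (closedness in probability) already finishes the proof. Both arguments begin identically: $\mc{F}_\tau$-convexity of $K$ plus the identity $(1_A k_1 + 1_{A^c} k_2)^{-\alpha} = 1_A k_1^{-\alpha} + 1_{A^c} k_2^{-\alpha}$ makes $\{X^{-\alpha} : X\in K\}$ upward directed, so one extracts $X_n \in K$ with $X_n^{-\alpha} \uparrow X^*$, equivalently $X_n$ decreasing. At that point the paper simply observes that a decreasing sequence of nonnegative random variables converges almost surely to $(X^*)^{-1/\alpha}$, which is $0$ on the alleged bad set; closedness of $K$ forces this limit into $K$, contradicting $K \subset L^0_{++}(\mc{F}_\tau)$. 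No Komlos-type passage to convex combinations is needed. Your detour through $h_n \in \conv_{\mc{F}_\tau}(k_n,\ldots)$ does work — the key estimate $h_n \le \max_{j\ge n}k_j = k_n \to 0$ on $B$, valid because convex combinations are dominated by the maximum and the $k_j$ are decreasing, shows the convex-compactness limit $h$ also vanishes on $B$ — but it invokes an avoidable tool. Also, the ``measurable selection'' obstacle you flag dissolves once one notes the directedness argument: the $X_n$ are $\mc{F}_\tau$-measurable elements of $K$ chosen to be monotone, so there is no selection subtlety, and no auxiliary envelope $\epsilon_j$ is needed (the $k_n$ themselves serve). The upshot of the comparison: your proof uses the full convex compactness and so would generalize to settings where $K$ is convexly compact but one does not want to rely directly on monotone-limit closedness, whereas the paper's proof is shorter and only uses $\mc{F}_\tau$-convexity together with closedness.
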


\begin{proof}  By standard arguments, the collection $\{X^{-\alpha}\ : \ X \in K\}$ is upwards directed.  Hence, take $X_n \in K$ such that $X_n^{-\alpha} \uparrow X^*$.  Suppose that there exists a set $C$ with $P(C)> 0$ on which $X^*$ is infinitely large.  Then $X_n \ra (X^*)^{-\frac{1}{\alpha}}$, but this last random variable is not contained in $L^0_{++}$.  Since $K$ is closed, it cannot be that it is contained in $L^0_{++}(\mc{F}_\tau)$, a contradiction.
\end{proof}

\begin{lemma}\label{above}  Let $K \subset L^0_{++}(\mc{F}_\tau)$, and let $\lambda_n \ra \lambda$ appropriately.  Let $v_\tau^*(\eta) = \underset{n}{\es} v_\tau^n(\eta)$.  Then $\underset{\eta \in K}{\es} v_\tau^*(\eta) < \infty$.  
\end{lemma}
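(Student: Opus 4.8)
The plan is to bound $v_\tau^*(\eta)$ from above, uniformly over $\eta\in K$, by the $\mc{F}_\tau$-conditional expectation of $V^+$ at a single cheap dual element, and then to control the resulting essential supremum over $n$ by a \emph{conditional} uniform integrability argument. (One could equivalently first use that $v_\tau^{\lambda_n}$, hence $v_\tau^*$, is nonincreasing and locally defined, reducing the supremum over $K$ to the point $\underset{\eta\in K}{\ei}\eta\in L^0_{++}(\mc{F}_\tau)$ via Lemma~\ref{awayzero}; the essential difficulty below is the same either way.)

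\textbf{Reduction.} Each $\lambda_n$ lies in the $V$-relatively compact set $\Lambda'$, so $\{V^+(Z_T^{\lambda_n})\}_{n\ge 1}$ is uniformly integrable; in particular $E[V^+(Z_T^{\lambda_n}) \ | \ \mc{F}_\tau]<\infty$ a.s.\ for each $n$, and Corollary~\ref{below} applies with the trivial choice $L\equiv 0\in\mc{B}$, giving $v_\tau^{\lambda_n}(\eta)\le E[V(\eta Z_T^{\lambda_n}) \ | \ \mc{F}_\tau]$ for every $\eta\in L^0_{++}(\mc{F}_\tau)$. The asymptotic elasticity of $U$ (Lemma~6.3 of \cite{MR1722287}) supplies constants $C,D,\alpha>0$ with $V(\mu y)\le C\mu^{-\alpha}V^+(y)+D$ for all $\mu\in(0,1]$ and $y>0$; since $V$ is nonincreasing, $V(\eta y)\le V\big((\eta\wedge 1)y\big)\le C(\eta\wedge 1)^{-\alpha}V^+(y)+D$, and taking $\mc{F}_\tau$-conditional expectations (the prefactor being $\mc{F}_\tau$-measurable) yields $v_\tau^{\lambda_n}(\eta)\le C(\eta\wedge 1)^{-\alpha}E[V^+(Z_T^{\lambda_n}) \ | \ \mc{F}_\tau]+D$. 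As $(\eta\wedge 1)^{-\alpha}\le\eta^{-\alpha}\vee 1$, Lemma~\ref{awayzero} shows that $X:=\big(\underset{\eta\in K}{\es}\eta^{-\alpha}\big)\vee 1$ is a finite element of $L^0_{++}(\mc{F}_\tau)$, so, taking essential suprema first over $\eta\in K$ and then over $n$ (both equal the essential supremum over the product index set),
\[ \underset{\eta\in K}{\es}v_\tau^*(\eta)=\underset{n}{\es}\;\underset{\eta\in K}{\es}v_\tau^{\lambda_n}(\eta)\le C\,X\,\underset{n}{\es}\,E[V^+(Z_T^{\lambda_n}) \ | \ \mc{F}_\tau]+D. \]

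\textbf{The key estimate.} It remains to prove $\underset{n}{\es}\,E[V^+(Z_T^{\lambda_n}) \ | \ \mc{F}_\tau]<\infty$ a.s.\ I would deduce this from the $\mc{F}_\tau$-uniform integrability of $\{V^+(Z_T^{\lambda_n})\}_{n\ge 1}$: granting that, fix any $\epsilon\in L^0_{++}(\mc{F}_\tau)$, take the corresponding $\delta\in L^0_{++}(\mc{F}_\tau)$ from the definition, and write $B_n=\{V^+(Z_T^{\lambda_n})\ge\delta\}$; then $E[V^+(Z_T^{\lambda_n}) \ | \ \mc{F}_\tau]=E[1_{B_n}V^+(Z_T^{\lambda_n}) \ | \ \mc{F}_\tau]+E[1_{B_n^c}V^+(Z_T^{\lambda_n}) \ | \ \mc{F}_\tau]\le\epsilon+\delta$ for every $n$, a bound independent of $n$ and a.s.\ finite. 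To establish the $\mc{F}_\tau$-uniform integrability: the classical de la Vall\'ee-Poussin criterion applied to the $V$-relatively compact family produces a nonnegative increasing $G$ with $G(x)/x\to\infty$ and $\sup_n E[G(V^+(Z_T^{\lambda_n}))]<\infty$; hence $\{E[G(V^+(Z_T^{\lambda_n})) \ | \ \mc{F}_\tau]\}_n$ is bounded in $L^1$, a fortiori bounded in probability, and Lemma~\ref{convp} (conditional de la Vall\'ee-Poussin) gives precisely that $\{V^+(Z_T^{\lambda_n})\}_n$ is $\mc{F}_\tau$-uniformly integrable. Chaining this with the display above finishes the proof.

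\textbf{Main obstacle.} The hard part is the last step: upgrading the merely unconditional integrability packaged in $V$-relative compactness to an \emph{a.s.}\ uniform-in-$n$ majorant for the conditional expectations $E[V^+(Z_T^{\lambda_n}) \ | \ \mc{F}_\tau]$. A term-by-term argument only yields convergence of these conditional expectations, not a common finite bound, so one genuinely has to pass through the conditional de la Vall\'ee-Poussin criterion of Lemma~\ref{convp}; and should that not quite suffice on its own, one would bring in the appropriate convergence $Z_T^{\lambda_n}\to Z_T^\lambda$ to prevent these conditional expectations from spiking along the sequence.
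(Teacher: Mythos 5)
Your argument follows the paper's own proof of Lemma~\ref{above} closely in overall structure: both start from the bound $v_\tau^{\lambda_n}(\eta)\le E[V(\eta Z_T^{\lambda_n}) \mid \mc{F}_\tau]$ obtained by plugging in the deflator $Z^{\lambda_n}$ (i.e.\ $L\equiv 0$), both then invoke the asymptotic elasticity estimate to factor out an $\mc{F}_\tau$-measurable prefactor in $\eta$, and both finish by controlling $\underset{\eta\in K}{\es}\eta^{-\alpha}$ via Lemma~\ref{awayzero}. The one place you diverge is the step the paper dispatches in a single sentence, ``by the $V$-compactness hypothesis, $\underset{n}{\es}E[V(Z_T^{\lambda_n})\mid\mc{F}_\tau]\triangleq Z_K<\infty$.'' You actually prove this: classical de la Vall\'ee-Poussin applied to the $V$-relatively compact family gives a $G$ with $\sup_n E[G(V^+(Z_T^{\lambda_n}))]<\infty$; by the tower property the family $\{E[G(V^+(Z_T^{\lambda_n}))\mid\mc{F}_\tau]\}_n$ is bounded in $L^1$, hence in probability; Lemma~\ref{convp} then yields $\mc{F}_\tau$-uniform integrability of $\{V^+(Z_T^{\lambda_n})\}_n$; and the definition of $\mc{F}_\tau$-UI produces the a.s.\ finite, $n$-uniform bound $\epsilon+\delta$ for the conditional expectations. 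This is a genuine refinement, not just pedantry: ordinary uniform integrability of a sequence $\{Y_n\}$ does not in itself make $\underset{n}{\es}E[Y_n\mid\mc{F}_\tau]$ a.s.\ finite, so the paper's one-line claim really does call for an argument, and Lemma~\ref{convp} is precisely the tool the authors built for such situations. Two small remarks: (i) your $V^+/(\eta\wedge 1)$ rewriting is a cleaner formulation of the asymptotic elasticity bound than the paper's $C\eta^{-\gamma}V(Z_T^{\lambda_n})+D$, which is only literally valid for $\eta\le 1$; (ii) the hedging in your final paragraph about possibly needing to ``bring in the appropriate convergence $Z_T^{\lambda_n}\to Z_T^\lambda$'' is unnecessary — the de la Vall\'ee-Poussin chain you wrote out already closes the gap using $V$-relative compactness alone, which is also consistent with the fact that the paper's proof never invokes that convergence.
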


\begin{proof}  Recall by definition that $v_\tau^n(\eta) \leq E [ V(\eta Z_T^{\lambda_n}) \ | \ \mc{F}_\tau]$.  By the assumption on the asymptotic elasticity of $U$, we have $V(\eta Z_T^{\lambda_n}) \leq C \eta^{-\gamma}V(Z_T^{\lambda_n})+D$, where $\gamma > 0$ is a constant derived from the asymptotic elasticity of $U$ and $C,D$ are constants independent of the choice of $\eta$ or $n$.   This estimate implies that $v_\tau^*(\eta) \leq C \eta^{-\gamma} \underset{n}{\es} E [V(Z_T^{\lambda_n}) \ | \ \mc{F}_\tau]+D$.  By the $V$-compactness hypothesis, $\underset{n}{\es} E [V(Z_T^{\lambda_n}) \ | \ \mc{F}_\tau] \triangleq Z_K < \infty$.  We conclude that $\underset{\eta \in K}{\es} v_\tau^*(\eta) \leq C Z_K \underset{\eta \in K}{\es} \eta^{-\gamma}+D$.  By Lemma \ref{awayzero}, this last quantity is finite almost surely.
\end{proof}

\begin{lemma}\label{convexmax}  Let $K \subset L^0_{++}(\mc{F}_\tau)$.  Then $k^* = \underset{k \in K}{\es} k$ is finite.
\end{lemma}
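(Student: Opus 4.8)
The plan is to adapt the argument of Lemma~\ref{awayzero}, replacing the directed family $\{X^{-\alpha} : X \in K\}$ there by $K$ itself. The first step is to note that $K$, regarded as a family in $L^0_{++}(\mc{F}_\tau)$, is closed under $\mc{F}_\tau$-measurable maxima: if $k_1, k_2 \in K$ and $A = \{k_1 \geq k_2\} \in \mc{F}_\tau$, then $k_1 \vee k_2 = 1_A k_1 + 1_{A^c} k_2 \in K$ by $\mc{F}_\tau$-convexity. Hence the family is upwards directed, and there is a sequence $k_n \in K$ with $k_n \uparrow k^* \triangleq \es_{k \in K} k$ almost surely; one produces it by taking successive $\mc{F}_\tau$-measurable maxima of a countable subfamily realizing the essential supremum.

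The second step is a contradiction argument. Suppose $P(C) > 0$, where $C = \{k^* = \infty\} \in \mc{F}_\tau$, so that $k_n \ra \infty$ on $C$. Applying the defining property of $\mc{F}_\tau$-convex compactness to $(k_n)$, we obtain $h_n \in \conv_{\mc{F}_\tau}(k_n, k_{n+1}, \ldots)$ with $h_n \ra h \in K$ almost surely. Writing $h_n = \sum_{j=n}^{N(n)} \alpha_j^n k_j$ with nonnegative $\mc{F}_\tau$-measurable weights $\alpha_j^n$ summing to $1$, and using that $(k_n)$ is nondecreasing, on $C$ we get $h_n = \sum_{j=n}^{N(n)} \alpha_j^n k_j \geq k_n$, so $h_n \ra \infty$ on $C$. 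This forces $h = \infty$ on a set of positive probability, contradicting $h \in K \subset L^0_{++}(\mc{F}_\tau)$, since elements of $L^0$ are finite-valued. Therefore $k^* < \infty$ almost surely.

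The argument has essentially no obstacle; the only point demanding a modicum of care is the first step, namely verifying that the conditional lattice operation preserves membership in $K$ so that a genuine monotone sequence can be extracted — but this follows at once from $\mc{F}_\tau$-convexity with the indicator weight $g = 1_A$. Everything else is a direct transcription of the proof of Lemma~\ref{awayzero}, with the roles of ``small'' and ``large'' interchanged.
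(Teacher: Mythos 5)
Your argument is correct and takes essentially the same route as the paper: use $\mc{F}_\tau$-convexity with indicator weights to show $K$ is upwards directed, extract a monotone sequence $k_n \uparrow k^*$, pass to forward $\mc{F}_\tau$-convex combinations $h_n$, and observe that $h_n \geq k_n$ forces the limit $h \in K$ to coincide with $k^*$, which must therefore be finite. The only cosmetic difference is that you phrase the conclusion as a direct divergence on $C = \{k^* = \infty\}$, while the paper phrases it as $h = k^*$ via preservation of a.s.\ convergence under forward convex combinations; these are the same observation.
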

\begin{proof}  By $\mc{F}_\tau$-convexity, the set $\{k \ : \ k \in K\}$ is upwards directed.  Hence, we can take a sequence $k_n \uparrow k^*$.  Suppose that $k^*$ were not finite.  By passing to $\mc{F}_\tau$-forward convex combinations of the of the $k_n$, we can assume that they converge almost surely to some random variable $k \in K$.  But almost sure convergence is preserved under convex combinations, and so $k = k^*$, and this is a contradiction, since elements of $K$ are real-valued random variables.
\end{proof}

\begin{lemma}\label{convexmin}  Let $K \subset L^0_{++}(\mc{F}_\tau)$, and let $(v_\tau)_*(\eta) = \underset{n}{\ei} v^n_\tau(\eta)$.  Then $(v_\tau)_{**} \triangleq \underset{\eta \in K}{\ei} (v_\tau)_*(\eta) > -\infty$.  
\end{lemma}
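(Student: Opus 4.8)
The plan is to bound $v_\tau^n(\eta)$ from below by an affine function of $\eta$ that does not depend on $n$, and then to dominate the resulting error term using the $\mc{F}_\tau$-convex compactness of $K$ through Lemma \ref{convexmax}.

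First I would invoke the Fenchel inequality built into the definition of $V$: for any fixed $x_0 > 0$ we have $V(y) \ge U(x_0) - x_0 y$ for every $y > 0$, since $V(y) = \sup_{x > 0}(U(x) - xy)$. Choosing $x_0 = 1$ and $y = \eta Y$ for an arbitrary terminal deflator value $Y \in \overline{\mc{Y}}^{\lambda_n}_\tau$ gives, pointwise, $V(\eta Y) \ge U(1) - \eta Y$. Taking $\mc{F}_\tau$-conditional expectations and using that $Y$ is the terminal value of a supermartingale deflator on $[\tau,T]$ starting from $1$ at time $\tau$ (so that, testing against the constant wealth process $1$, the deflator itself is a supermartingale on $[\tau,T]$ and $E[Y \mid \mc{F}_\tau] \le 1$), I obtain $E[V(\eta Y) \mid \mc{F}_\tau] \ge U(1) - \eta$. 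Since the right-hand side does not depend on $Y$, passing to the essential infimum over $Y \in \overline{\mc{Y}}^{\lambda_n}_\tau$ yields $v_\tau^n(\eta) \ge U(1) - \eta$, and then the essential infimum over $n$ gives $(v_\tau)_*(\eta) \ge U(1) - \eta$ for every $\eta \in L^0_{++}(\mc{F}_\tau)$.

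It remains to take the essential infimum over $\eta \in K$. Because $U(1)$ is a finite constant, $\underset{\eta \in K}{\ei}\,(U(1) - \eta) = U(1) - \underset{\eta \in K}{\es}\,\eta = U(1) - k^*$, where $k^* = \underset{\eta \in K}{\es}\,\eta$ is finite almost surely by Lemma \ref{convexmax}. Hence $(v_\tau)_{**} \ge U(1) - k^* > -\infty$ almost surely, which is the claim.

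The only step that genuinely uses the hypothesis that $K$ is $\mc{F}_\tau$-convexly compact — and hence the only place an obstacle could arise — is the last one: the crude bound $v_\tau^n(\eta) \ge U(1) - \eta$ is useless unless $\underset{\eta \in K}{\es}\,\eta$ is finite, and it is precisely convexity together with closedness under convergence in probability that forces this (Lemma \ref{convexmax}); for a general $K \subset L^0_{++}(\mc{F}_\tau)$ this essential supremum could be $+\infty$ on a set of positive measure. Everything else is routine: one needs only that $U$ is real-valued on $\mb{R}_+$ (so $U(1) \in \mb{R}$) and the supermartingale inequality $E[Y \mid \mc{F}_\tau] \le 1$ for terminal values of deflators on the random interval $[\tau, T]$, which is immediate from the definition of $\mc{Y}^{\lambda_n}_\tau$.
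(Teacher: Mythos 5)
Your proof is correct, but it takes a genuinely different --- and noticeably more elementary --- route than the paper's. You exploit the definition of $V$ as a convex conjugate to write down the explicit affine minorant $V(y) \geq U(1) - y$, and then use the fact that every $Y \in \overline{\mc{Y}}^{\lambda_n}_\tau$ satisfies $E[Y \mid \mc{F}_\tau] \leq 1$ (the supermartingale property applied to the unit wealth process) to obtain the uniform-in-$n$ affine lower bound $v_\tau^n(\eta) \geq U(1) - \eta$. The conclusion then reduces entirely to Lemma~\ref{convexmax}. In contrast, the paper proceeds by a general convex-analysis argument, a conditional analogue of Theorem~10.6 in \cite{MR1451876}: fix $\eta_0 \in K$, bound $(v_\tau)^*$ from above on a ``neighborhood'' $\mc{X} = \{\eta_0 + f : |f|\leq\epsilon\}$ via Lemma~\ref{above}, express $\eta_0$ as an $\mc{F}_\tau$-convex combination $(1-\lambda)z + \lambda\eta$ with $z \in \mc{X}$, and rearrange the convexity inequality $v_\tau^n(\eta_0) \leq (1-\lambda)v_\tau^n(z) + \lambda v_\tau^n(\eta)$ to bound $v_\tau^n(\eta)$ from below. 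Your argument is cleaner and sidesteps both Lemma~\ref{above} and the delicate ``neighborhood-without-interior'' construction that the paper's remark after the lemma flags as the main technical wrinkle; but it is specific to dual value functions built from a convex conjugate $V$ and from supermartingale deflators, whereas the paper's argument applies to any family of $\mc{F}_\tau$-convex maps that is finite at one point and uniformly bounded above on a suitable perturbation of it --- a generality the authors in any case need for the equi-Lipschitz estimate of Lemma~\ref{Lips}, so their choice here keeps the two proofs parallel.
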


\begin{proof}  Note that for fixed $\eta$, $(v_\tau)_*(\eta) > -\infty$, because $(v_\tau^n(\eta))_{n \geq 1}$ is cauchy for all $\eta$.  Fix some $\eta_0 \in K$, and let $\beta_1 = (v_\tau)_*(\eta_0)$.  Let $\epsilon = \epsilon(\omega) \in L^0_{++}(\mc{F}_\tau)$ be such that for any $f \in L^0(\mc{F}_\tau)$ with $|f| \leq \epsilon$, we have $\eta_0 + f \in L^0_{++}(\mc{F}_\tau)$. The collection $\mc{X} = \{\eta_0 + f \ : \ |f| \leq \epsilon\}$ is an $\mc{F}_\tau$-convexly compact set, using Komlos's Lemma.  Hence, by Lemma \ref{above}, let $\beta_2$ be an upper bound for $(v_\tau)^*$ on this set.

Let $\eta$ be an arbitrary element of $K$, distinct from $\eta_0$.  We are trying to show that $(v_\tau)_*$ is uniformly bounded from below on the set $K$.  Since each $v_\tau$ has the local property, so too does $(v_\tau)_*$.  In particular, this means that on the set $\{\eta_0 = \eta\}$, $(v_\tau)_*(\eta_0)$ agrees with $(v_\tau)_*(\eta)$.  Thus, we may assume without loss of generality that $\eta_0 \neq \eta$.

Let $$z = \eta_0 + \frac{\epsilon}{|\eta_0 - \eta|}(\eta_0 - \eta), \quad \text{and}\quad \lambda = \frac{\epsilon}{\epsilon + |\eta_0 - \eta|}.$$  A simple calculation shows that $\eta_0 = (1-\lambda)z + \lambda \eta$.  Furthermore, the $\mc{F}_\tau$-measurable random variable $\lambda$ lives between $0$ and $1$, and $z \in \mc{X}$.  Thus, we have, for any $n$, $$\beta_1 \leq v_\tau^n(\eta_0) \leq (1-\lambda)v_\tau^n(z) + \lambda v_\tau^n(\eta) \leq (1-\lambda)\beta_2  + \lambda v_\tau^n(\eta) \leq \beta_2 + \lambda v_\tau^n(\eta).$$  Consequently, we obtain $v_\tau^n(\eta) \geq \frac{\beta_1 - \beta_2}{\lambda} = \frac{(\epsilon + |\eta_0 - \eta|)(\beta_1 - \beta_2)}{\epsilon}$.  Taking infimums over all $n$, we obtain $$(v_\tau)_*(\eta) \geq \frac{(\epsilon + |\eta_0 - \eta|)(\beta_1 - \beta_2)}{\epsilon} \geq \frac{(\epsilon + \eta_0 + \eta)(\beta_1 - \beta_2)}{\epsilon},$$ noting that the quantity $\beta_1 - \beta_2$ is negative.  Hence, by Lemma \ref{convexmax}, it follows that $(v_\tau)_*$ is uniformly bounded from below on $K$.
\end{proof}

\begin{remark}  Compare Lemma \ref{convexmin} to Theorem 10.6 in \cite{MR1451876}.  The ``neighborhood'' we use does not contain any open sets.  Furthermore, the domain for $v_\tau$, which is $L^0_{++}(\mc{F}_\tau)$, is not an open set in $L^0(\mc{F}_\tau)$.
\end{remark}

\begin{lemma}\label{Lips}  For all $n$, and for all $x,y \in K$, we have $|v_\tau^n(x) - v_\tau^n(y)| \leq \alpha |x -y|$, where $\alpha \in L^0_{++}(\mc{F}_\tau)$ does not depend on $n$.
\end{lemma}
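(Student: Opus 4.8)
The plan is to transcribe the classical fact that a convex function is locally Lipschitz, the oscillation bound now being supplied by the uniform estimates of Lemmas~\ref{awayzero}--\ref{convexmin}; the only genuinely new ingredient is to produce an enlargement of $K$ that is again $\mc{F}_\tau$-convexly compact and still contained in $L^0_{++}(\mc{F}_\tau)$, so that the uniform upper bound of Lemma~\ref{above} applies to the auxiliary points generated by the convexity sandwich.

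\emph{Step 1: two-sided uniform control.} By Lemma~\ref{awayzero}, $\underset{X\in K}{\es}\,X^{-1}$ is finite, so $\rho := \bigl(2\,\underset{X\in K}{\es}\,X^{-1}\bigr)^{-1}$ belongs to $L^0_{++}(\mc{F}_\tau)$ and $X \ge 2\rho$ for every $X \in K$; by Lemma~\ref{convexmax}, $k^* := \underset{k\in K}{\es}\,k$ is finite. Put $K' := \{w \in L^0(\mc{F}_\tau) : \rho \le w \le k^* + \rho\}$. Then $K'$ is $\mc{F}_\tau$-convex and closed in probability, and, since every element of $K'$ is dominated by the fixed random variable $k^*+\rho$, Komlos's Lemma~\ref{komlos} together with the fact that the bounds $\rho \le \cdot \le k^*+\rho$ are preserved under $\mc{F}_\tau$-convex combinations and under almost sure limits shows that $K'$ is $\mc{F}_\tau$-convexly compact; moreover $K \subseteq K' \subset L^0_{++}(\mc{F}_\tau)$. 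Applying Lemma~\ref{above} to $K'$ yields a finite $\mc{F}_\tau$-measurable random variable $\beta$ with $v_\tau^n(w) \le v_\tau^*(w) \le \beta$ for all $n$ and all $w \in K'$, while Lemma~\ref{convexmin} yields a finite $\mc{F}_\tau$-measurable $m := (v_\tau)_{**}$ with $v_\tau^n(\eta) \ge m$ for all $n$ and all $\eta \in K$; in particular $\beta - m \ge 0$.

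\emph{Step 2: the convexity sandwich.} Fix $x,y \in K$. On $A := \{x=y\} \in \mc{F}_\tau$ the local property of $v_\tau^n$ gives $v_\tau^n(x)1_A = v_\tau^n(y)1_A$, so the asserted bound holds trivially there and it is enough to argue on $A^c$. Define, on $A^c$,
\[
z := y + \frac{\rho}{|x-y|}\,(y-x), \qquad \mu := \frac{|x-y|}{\rho+|x-y|},
\]
and $z := y$ on $A$. These are $\mc{F}_\tau$-measurable; on $A^c$ one has $0 < \mu < 1$, a direct computation gives $y = (1-\mu)x + \mu z$, and $|z-y| = \rho$ forces $\rho \le z \le k^* + \rho$ (using $y \ge 2\rho$ and $y \le k^*$), so $z \in K'$. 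By the $\mc{F}_\tau$-convexity of $v_\tau^n$ (cf. the proof of Lemma~\ref{convexmin}),
\[
v_\tau^n(y) - v_\tau^n(x) \le \mu\bigl(v_\tau^n(z) - v_\tau^n(x)\bigr) \le \mu\,(\beta - m) \le \frac{\beta-m}{\rho}\,|x-y|,
\]
the last inequality because $\mu \le |x-y|/\rho$ and $\beta - m \ge 0$. Interchanging the roles of $x$ and $y$ (now using $x \ge 2\rho$ to keep the corresponding auxiliary point in $K'$) gives the reverse inequality, so $|v_\tau^n(x) - v_\tau^n(y)| \le \frac{\beta-m}{\rho}\,|x-y|$ on $A^c$, hence everywhere. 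Taking $\alpha := 1 + \frac{\beta-m}{\rho} \in L^0_{++}(\mc{F}_\tau)$, which depends on $K$ but not on $n$ or on $x,y$, completes the argument.

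The step I expect to need the most care is Step~1: one must verify that the enlargement $K'$ is genuinely $\mc{F}_\tau$-convexly compact and remains inside $L^0_{++}(\mc{F}_\tau)$, which is exactly where the $\mc{F}_\tau$-measurable ``lower bound'' $\rho$ from Lemma~\ref{awayzero} and ``upper bound'' $k^*$ from Lemma~\ref{convexmax} enter, and where it matters that Lemma~\ref{above} is available for arbitrary convexly compact subsets of $L^0_{++}(\mc{F}_\tau)$. Everything else is the real-variable convexity estimate carried over to $\mc{F}_\tau$-measurable scalars, with the degenerate set $\{x=y\}$ absorbed by the local property. One could also bypass Lemma~\ref{above} and bound $v_\tau^n(z)$ directly via $v_\tau^n(w) \le E[V(wZ_T^{\lambda_n})\,|\,\mc{F}_\tau] \le C\rho^{-\gamma}Z_K + D$ for $w \ge \rho$, as in the proof of Lemma~\ref{above}, with $Z_K := \underset{n}{\es}\,E[V(Z_T^{\lambda_n})\,|\,\mc{F}_\tau] < \infty$.
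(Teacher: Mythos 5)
Your proof is correct and follows essentially the same route as the paper: enlarge $K$ to a convexly compact superset of $L^0_{++}(\mc{F}_\tau)$ with an $\mc{F}_\tau$-measurable ``margin'' (you take the order interval $K'=\{\rho\le w\le k^*+\rho\}$, the paper takes the tube $\mc{X}=\{k+f:k\in K,\,|f|\le\epsilon\}$ — functionally interchangeable), invoke Lemmas~\ref{above} and~\ref{convexmin} for uniform two-sided bounds, and then run the standard convexity-sandwich estimate with the auxiliary point $z$ on the far side of $y$ from $x$. The only cosmetic deviation is that you apply the lower bound only on $K$ rather than on the enlarged set, which is all the sandwich actually needs, and you spell out the $\mc{F}_\tau$-convex compactness of the enlargement (via Komlos) where the paper asserts it; both are fine.
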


\begin{proof}  In Lemma \ref{awayzero}, the proven result is equivalent to the fact that $\underset{k \in K}{\ei} k > 0$.  Thus, there exists a random variable $\epsilon \in L^0_{++}(\mc{F}_\tau)$ such that the set $\mc{X} = \{k + f \ : k \in K, \ |f| \leq \epsilon\}$ is contained in $L^0_{++}(\mc{F}_\tau)$.  It is also clear that $\mc{X}$ is $\mc{F}_\tau$-convexly compact.  According to the Lemmas \ref{above} and \ref{convexmin}, let $\alpha_1,\alpha_2 \in L^0(\mc{F}_\tau)$ be, respectively, upper and lower bounds for all of the $v_\tau^n$ on $\mc{X}$.

Let $x$ and $y$ be two distinct points in $K$, and let $z = y + \frac{\epsilon}{|y - x|}(y-x)$.  Then $z \in \mc{X}$, and $$y = (1-\lambda)x + \lambda z,$$ for $\lambda = \frac{|y-x|}{\epsilon + |y-x|}$.  Let $n$ be arbitrary but fixed.  By the $\mc{F}_\tau$-convexity of $v_\tau^n$, we obtain $$v_\tau^n(y) \leq (1-\lambda)v_\tau^n(x) + \lambda v_\tau^n(z) = v_\tau^n(x) + \lambda(v_\tau^n(z) - v_\tau^n(x)),$$ and consequently, $v_\tau^n(y) - v_\tau^n(x) \leq \lambda(\alpha_2 - \alpha_1) \leq \alpha |y -x|$, for $\alpha = \frac{\alpha_2 - \alpha_1}{\epsilon}$.  By switching the places of $x$ and $y$, we can obtain $|v_\tau^n(y) - v_\tau^n(x)| \leq \lambda(\alpha_2 - \alpha_1) \leq \alpha |y - x|$.  
\end{proof}

\begin{corollary}\label{equicont}  The $v_\tau^n$ are equi-uniformly-continuous on $K$ as above.
\end{corollary}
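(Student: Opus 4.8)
The plan is to deduce Corollary~\ref{equicont} directly from the Lipschitz estimate of Lemma~\ref{Lips}. That lemma already produces a single random variable $\alpha \in L^0_{++}(\mc{F}_\tau)$, \emph{independent of $n$}, with $|v_\tau^n(x) - v_\tau^n(y)| \leq \alpha|x-y|$ for all $n$ and all $x,y \in K$. Since $K$ carries the topology of convergence in probability, the content of ``equi-uniformly-continuous'' is: whenever $x_j, y_j \in K$ satisfy $x_j - y_j \to 0$ in probability, then $\sup_n |v_\tau^n(x_j) - v_\tau^n(y_j)| \to 0$ in probability. The only thing left to verify is that multiplication by a fixed a.s.-finite random variable is continuous for convergence in probability.

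Concretely, I would argue as follows. Fix $\epsilon > 0$ and a sequence $x_j, y_j \in K$ with $x_j - y_j \to 0$ in probability. For every $n$,
\[
P\bigl(|v_\tau^n(x_j) - v_\tau^n(y_j)| > \epsilon\bigr) \leq P\bigl(\alpha|x_j - y_j| > \epsilon\bigr).
\]
Splitting on $\{\alpha \leq N\}$ and $\{\alpha > N\}$, the right-hand side is at most $P(|x_j-y_j| > \epsilon/N) + P(\alpha > N)$; choosing $N$ large (using $\alpha < \infty$ a.s.) and then $j$ large makes this as small as desired, and crucially the bound does not involve $n$. Hence $\sup_n P(|v_\tau^n(x_j) - v_\tau^n(y_j)| > \epsilon) \to 0$ as $j \to \infty$, which is exactly the asserted equicontinuity of the family $\{v_\tau^n\}$ on $K$.

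There is essentially no obstacle at this stage: all the genuine work has gone into Lemma~\ref{Lips}, which in turn rests on the convex-compactness estimates of Lemmas~\ref{awayzero}, \ref{above}, \ref{convexmin}, and \ref{convexmax}. The one point I would be careful to state explicitly is the definition of equi-uniform-continuity for maps from $L^0$ to $L^0$, since the modulus of continuity here is governed by the \emph{random} constant $\alpha$ rather than a deterministic one, and $K$ is topologized by convergence in probability; once that is pinned down, the corollary follows immediately from the displayed inequality.
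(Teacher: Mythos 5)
Your proof is correct and follows exactly the route the paper intends: the corollary is stated without proof immediately after Lemma~\ref{Lips} precisely because it is an immediate consequence of the $n$-independent random Lipschitz constant $\alpha$, and your splitting on $\{\alpha \leq N\}$ versus $\{\alpha > N\}$ is the standard way to convert that bound into equicontinuity in the topology of convergence in probability. You also correctly identify the only subtlety worth stating, namely that the modulus of continuity is governed by an a.s.\ finite random variable rather than a deterministic constant, which is what makes the truncation step necessary.
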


\subsection{Another Kind of Compactness}\label{sec:akc}

Note that if $K \subset L^0_{++}(\mc{F}_\tau)$ is $\mc{F}_\tau$-convexly compact, then $K$ is closed under pointwise minimization (and maximization).  Indeed, let $k_1,k_2 \in K$.  Let $A = \{k_1 \leq k_2\} \in \mc{F}_\tau$.  Then $k_1 \wedge k_2 = 1_A k_1 + 1_{A^c} k_2 \in K$.

In this section, we want to prove the following result:

\begin{proposition}\label{subcompact}  Let $\{k_1,k_2,\ldots \}$ be a sequence in $K$.  There exist $y^n_i \in K$ such that $y^n_i \leq k_i$ and $\mc{F}_\tau$-measurable partitions $\pi^n = \{A^n_n,A^n_{n+1},\ldots,A^n_{J(n)}\}$ such that for $$f_n \triangleq \sum_{j=n}^{J(n)} 1_{A_j^n} y^n_j,$$ there exists a random variable $f \in K$ such that $f_n \ra f$ almost surely.
\end{proposition}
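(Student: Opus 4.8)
The plan is to reduce the statement to $\mathcal{F}_\tau$-convex compactness of $K$ together with one elementary covering observation, and then to re-encode a forward $\mathcal{F}_\tau$-convex combination as a patching that sits below the $k_j$'s. First, since $K$ is $\mathcal{F}_\tau$-convexly compact, there are $\mathcal{F}_\tau$-measurable weights $\alpha^n_j\ge 0$, supported on finitely many indices $j\in\{n,\dots,J(n)\}$ with $\sum_{j=n}^{J(n)}\alpha^n_j=1$, such that the forward convex combinations $g_n\triangleq\sum_{j=n}^{J(n)}\alpha^n_j k_j$ converge almost surely to some $g\in K$. Each $g_n$ lies in $K$, being a finite $\mathcal{F}_\tau$-convex combination of elements of $K$.

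The only real content is the covering fact that, for every $n$, one has $\Omega=\bigcup_{j=n}^{J(n)}\{k_j\ge g_n\}$: at each $\omega$ the value $g_n(\omega)$ is a convex average of the numbers $k_j(\omega)$, $n\le j\le J(n)$, hence it cannot be strictly larger than every one of them. Disjointify these sets, setting $A^n_n=\{k_n\ge g_n\}$ and $A^n_j=\{k_j\ge g_n\}\setminus\bigcup_{i=n}^{j-1}\{k_i\ge g_n\}$ for $n<j\le J(n)$; then $\pi^n=\{A^n_n,\dots,A^n_{J(n)}\}$ is an $\mathcal{F}_\tau$-measurable partition of $\Omega$ with $A^n_j\subseteq\{k_j\ge g_n\}$. (Here we use that the elements of $K$, and hence $g_n$, are $\mathcal{F}_\tau$-measurable, so all the sets in question lie in $\mathcal{F}_\tau$.)

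Now put $y^n_j\triangleq g_n\wedge k_j$ for $n\le j\le J(n)$. Since $g_n,k_j\in K$ and $K$ is closed under pointwise minimization (as recorded at the start of Subsection~\ref{sec:akc}), we get $y^n_j\in K$, and plainly $y^n_j\le k_j$. On $A^n_j$ we have $k_j\ge g_n$, so $y^n_j=g_n$ there, whence
\[
f_n=\sum_{j=n}^{J(n)}1_{A^n_j}y^n_j=\sum_{j=n}^{J(n)}1_{A^n_j}g_n=g_n\longrightarrow g\quad\text{a.s.},
\]
and taking $f\triangleq g\in K$ finishes the proof. The argument is short once the covering observation is isolated; the only things to watch are that convex compactness really furnishes combinations involving finitely many indices $\ge n$ (so that the index range of $\pi^n$ matches the statement), that $\mathcal{F}_\tau$-measurability of the $k_j$ makes $\pi^n\subset\mathcal{F}_\tau$, and that closure of $K$ under pasting keeps both $y^n_j$ and $f_n$ inside $K$. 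The point of phrasing the conclusion as a patching below the $k_j$'s, rather than as a plain convex combination, is that it later allows an expansion $V(f_n)=\sum_j 1_{A^n_j}V(y^n_j)$ with each term controlled by $V(y^n_j)\ge V(k_j)$.
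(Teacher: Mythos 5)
Your proof is correct, and it takes a genuinely different and noticeably more elementary route than the paper's. The paper's argument mimics the proof of Komlos's lemma: it introduces the nested sets $K_n^\Pi$ of all $\mc{F}_\tau$-partition subcombinations built from $\{k_n,k_{n+1},\ldots\}$, observes that $\underset{k\in K_n^\Pi}{\es}\, v_\tau^\lambda(k)$ decreases and is bounded below (Lemmas \ref{above} and \ref{convexmin}), selects near-maximizers $x_n\in K_n^\Pi$ of $v_\tau^\lambda$, and then uses the strict monotonicity of $V$ to force $(x_n)$ to be Cauchy in probability, with a.s.\ convergence after a further subsequence. Your proof dispenses with the value function entirely: you take the forward $\mc{F}_\tau$-convex combinations $g_n=\sum_{j=n}^{J(n)}\alpha^n_j k_j\ra g$ handed to you directly by $\mc{F}_\tau$-convex compactness, exploit the pointwise fact that a convex average cannot strictly exceed every constituent (so $\Omega=\bigcup_{j=n}^{J(n)}\{k_j\ge g_n\}$, and all these sets lie in $\mc{F}_\tau$ since $K\subset L^0_{++}(\mc{F}_\tau)$), disjointify into $\pi^n$, and set $y^n_j=g_n\wedge k_j$, which lies in $K$ by the closure of $K$ under pointwise minimization noted at the start of Subsection~\ref{sec:akc} and satisfies $y^n_j\le k_j$; on $A^n_j$ one has $y^n_j=g_n$, so $f_n$ coincides with $g_n$ outright and one may take $f=g$. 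This is cleaner in several respects: it uses only the lattice and $\mc{F}_\tau$-convexity structure of $K$, it needs no properties of $V$ nor the a priori bounds on $v_\tau^\lambda$, it avoids the Cauchy-in-probability selection step, and it produces the approximants $f_n$ explicitly rather than by extraction. The only feature the paper's construction carries that yours does not is extremality of $x_n$ for $v_\tau^\lambda$ within $K_n^\Pi$; this is not used where the proposition is invoked (the $\mc{F}_\tau$-partition sub-convex $r$-net lemma), so nothing is lost.
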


To set notation, we call such an $f_n$ above an $\mc{F}_\tau$-partition subcombination of $\{k_1,\ldots\}$ in $K$. 

\noindent \textbf{Proof of Proposition~\ref{subcompact}}

The proof is inspired by the proof of Komlos's Lemma in Section $9.8$ of \cite{MR2200584}.  For a positive integer $n$, define $$K_n^\Pi = \left \{\sum_{j=n}^N 1_{A_j} y_j \ : \ N \in \mb{N},\ \{A_n,\ldots,A_N\} \text{ is an } \mc{F}_\tau-\text{partition of } \Omega, y_j \in K,y_j \leq k_j \right \}.$$ 

Since $K$ is $\mc{F}_\tau$-convex, it follows that $K_n^\Pi \subset K$ for each $n$.  We know that $\underset{k \in K}{\es} v_\tau^\lambda(k) < \infty$, by Lemma \ref{above}.  It thus follows that $g_n \triangleq \underset{k \in K_n^\Pi}{\es} v_\tau^\lambda(k) < \infty$.  Furthermore, the sequence $(g_n)$ is decreasing since $K_n^\Pi \supset K_{n+1}^\Pi$ for all $n$.  Since $\underset{k \in K}{\ei} v_\tau^\lambda(k) > -\infty$ by Lemma \ref{convexmin} (this result is rather stronger than what we need), it follows that $ \underset{n \ra \infty}{\lim} g_n \triangleq g \in L^0(\mc{F}_\tau)$.

Note that each $K_n^\Pi$ inherits $\mc{F}_\tau$-convexity from $K$.  This and the locality of $v_\tau^\lambda$ imply that the sets $\{v_\tau^\lambda(k) \ : \ k \in K_n^\Pi\}$ are upwards directed.  Hence, for each $n$, we take a sequence $(g_n^j)$ in $v_\tau^\lambda(K_n^\Pi)$ such that $g_n^j \uparrow g_n$ almost surely.  Extract a ``diagonal'' $g_n^{j_n} \in K_n^\Pi$ so that $g_n^{j_n} \ra g$ in probability.  Here, $(j_n)$ is just some increasing subsequence of the natural numbers.

Given that $g_n^{j_n} \in v_\tau^\lambda(K_n^\Pi)$, take $x_n \in K_n^\Pi$ such that $v_\tau^\lambda(x_n) = g_n^{j_n}$.  We claim that the sequence $(x_n)$ is Cauchy in probability.  Recall the function $V:\mb{R}_+ \ra \mb{R}$ from which $v_\tau^\lambda$ is defined.  Note that $V$ is strictly decreasing.  This means that for $\alpha > 0$, there is a $\beta > 0$ such that for $x,y \in \mb{R}_+$, if $x-y > \alpha$ and $x \wedge y \leq \alpha^{-1}$, we have $V(y) - V(x) > \beta$.  For fixed $n \geq m$ and $\alpha > 0$, let $A^+ = A^+(n,m,\alpha) = \{x_n - x_m \geq \alpha,x_n \wedge x_m \leq \alpha^{-1}\}$, and let $A^- = A^-(n,m,\alpha) = \{x_n - x_m \leq -\alpha,x_n \wedge x_m \leq \alpha^{-1}\}$.

Thus, we have $v_\tau^\lambda(x_n \wedge x_m) \geq v_\tau^\lambda(x_n) + \beta 1_{A^-}$ and $v_\tau^\lambda(x_n \wedge x_m) \geq v_\tau^\lambda(x_m) + \beta 1_{A^-}$, using the locality of $v_\tau^\lambda$.  Combining these two inequalities, we obtain $v_\tau^\lambda(x_n \wedge x_m) \geq \frac{1}{2}(v_\tau^\lambda(x_n) + v_\tau^\lambda(x_m)) + \beta 1_A$, where $A$ is defined as $A^+ \cup A^-$.  This implies that $$\beta 1_A \leq v_\tau^\lambda(x_n \wedge x_m) - \frac{1}{2}(v_\tau^\lambda(x_n) + v(x_m)).$$  One verifies directly that $x_n \wedge x_m \in K_n^\Pi$, and so the quantity above is less than or equal to $g_n - \frac{1}{2}(v_\tau^\lambda(x_n) + v_\tau^\lambda(x_m)) = g_n - \frac{1}{2}(g_n^{j_n} + g_m^{j_m})$.  Letting $n$ and $m$ go to infinity, it follows that for any $\alpha>0$, $P(A(n,m,\alpha)) \ra 0$ as $n$ and $m$ go to infinity.

Given that $K$ is itself bounded in probability by Lemma \ref{convexmax}, it follows that the $x_n$ are also bounded in probability, and from this fact and the above paragraph we deduce that the collection $(x_n)$ is indeed Cauchy in probability.  Thus, there exists $x \in K$ which is the limit in probability of the $(x_n)$.  We obtain almost sure convergence by passing to a subsequence.
\hfill $\square$

\subsection{Nets and Convergence of Derivatives}

When dealing with compact sets, $\epsilon$-nets are useful tools.  In the less restrictive setting of convexly compact sets, we have to use different kinds of nets.

\begin{definition}  Let $K \subset L^0(\mc{F}_\tau)$, let $\mc{K} = \{k_1,\ldots,k_n\}$ be a finite collection of points in $K$, and let $r>0$.  We say that $\mc{K}$ is a \textbf{$\mc{F}_\tau$-convex $r$-net} of $K$ if for any $x \in K$, there exists $y \in \conv_{\mc{F}_\tau}(\mc{K})$ such that $d(x,y) \leq r$, where $d$ is a distance function compatible with convergence in probability.
\end{definition}

\begin{lemma}\label{convexnet}  Let $K \subset L^0(\mc{F}_\tau)$.  Then for any $r>0$, there is a finite $\mc{F}_\tau$-convex $r$-net of $K$.
\end{lemma}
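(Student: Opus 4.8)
The plan is to avoid transcribing the classical ``compact $\Rightarrow$ totally bounded'' argument and instead to exhibit an explicit two-element net.  The underlying fact is that an $\mc{F}_\tau$-convexly compact $K$ must sit inside an order interval whose two endpoints already belong to $K$.  Put $b \triangleq \underset{k \in K}{\es}\, k$ and $a \triangleq \underset{k \in K}{\ei}\, k$.  Since $K$ is $\mc{F}_\tau$-convex it is closed under $\vee$ and $\wedge$ --- for $k_1,k_2 \in K$, $A = \{k_1 \geq k_2\} \in \mc{F}_\tau$ and $k_1 \vee k_2 = 1_A k_1 + 1_{A^c} k_2 \in K$ --- so the family $\{k : k \in K\}$ is directed both ways and there are $k_n \uparrow b$ and $k_n' \downarrow a$ in $K$.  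Applying convex compactness of $K$ to $(k_n)$ produces $h_m \in \conv_{\mc{F}_\tau}(\{k_m, k_{m+1}, \dots\})$ with $h_m \to h \in K$ a.s.; but $k_m \leq h_m \leq \underset{j \geq m}{\sup}\, k_j = b$, and $k_m \uparrow b$ squeezes $h_m \to b$, so $b = h \in K$ and in particular $b$ is real-valued.  The same reasoning applied to $(k_n')$ gives $a \in K$ (real-valued, indeed $a \geq 0$ since $K \subseteq L^0_+$); this is the argument already used in Lemma~\ref{convexmax}.

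The one substantive step is the inclusion $K \subseteq \conv_{\mc{F}_\tau}(\{a,b\})$.  Any $k \in K$ obeys $a \leq k \leq b$, and since $a$, $b$ and $k$ are all $\mc{F}_\tau$-measurable the random variable $g \triangleq \frac{b-k}{b-a}\,1_{\{a < b\}}$ (with $g \triangleq 0$ on $\{a = b\}$, where perforce $k = a = b$) is $\mc{F}_\tau$-measurable, lies in $[0,1]$, and satisfies $k = g\,a + (1-g)\,b$.  Hence every element of $K$ is an $\mc{F}_\tau$-convex combination of the two points $a,b \in K$, so $\{a,b\}$ is a finite $\mc{F}_\tau$-convex $0$-net of $K$ --- a fortiori a finite $\mc{F}_\tau$-convex $r$-net for every $r > 0$, which proves the lemma.

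It is worth recording why the ``classical'' route is not the one to take, since that is precisely what forces the argument above.  Negating the conclusion, one could build $(k_n) \subseteq K$ greedily with $d\big(k_{n+1}, \conv_{\mc{F}_\tau}(\{k_1,\dots,k_n\})\big) > r$ --- legitimate once one notes, using Komlos's Lemma~\ref{komlos}, that each $\conv_{\mc{F}_\tau}(\{k_1,\dots,k_n\})$ is closed in probability so the distance is genuinely positive --- and then apply convex compactness to get $h_m \in \conv_{\mc{F}_\tau}(\{k_m,k_{m+1},\dots\})$, $h_m \to h \in K$ a.s.  Because $h_m \in \conv_{\mc{F}_\tau}(\{k_1,\dots,k_n\})$ for $n$ large one obtains $d(k_n,h_m) > r$ eventually, hence $\underset{n \to \infty}{\liminf}\, d(k_n,h) \geq r$; and there the argument stalls, because in the (non-strictly-convex) metric of convergence in probability a sequence of $\mc{F}_\tau$-convex combinations of points all at distance $\geq r$ from a fixed $h$ can nonetheless converge to $h$.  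So the main obstacle is exactly this gap, and the order-interval argument sidesteps it entirely --- the only structural input used is convex compactness, and only in order to trap $K$ between two of its own elements.
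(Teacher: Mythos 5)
Your proposal is correct, and it takes a genuinely different --- and in fact sharper --- route than the paper.

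The paper's own argument is the standard ``compact $\Rightarrow$ totally bounded'' reduction: assume no finite net exists, extract a sequence $(k_n)$ with each $k_{n+1}$ at distance $>r$ from $\conv_{\mc{F}_\tau}(k_1,\dots,k_n)$, apply convex compactness to get forward convex combinations $f_n \to f$ a.s., and then claim that $d(f_{n+1},f_n)\to 0$ is ``of course'' a contradiction. As you point out in your final paragraph, this last step is not justified: $f_{n+1}$ is an $\mc{F}_\tau$-convex combination of $k_{n+1},k_{n+2},\dots$, not one of the $k_j$'s themselves, and nothing in the construction forces $f_{n+1}$ to be far from $f_n$ (or from $f$); the metric of convergence in probability has no strict convexity that would let one transport the lower bound $d(k_{n+1},\conv_{\mc{F}_\tau}(k_1,\dots,k_n))>r$ to combinations of far-away points. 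Your observation that this is exactly where the ``classical'' route breaks down is accurate, and it is precisely the gap the paper introduces Proposition~\ref{subcompact} and the $\mc{F}_\tau$-partition sub-convex nets to circumvent at the level of the value function.

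Your route avoids the issue entirely and exploits a hypothesis the paper leaves unused: $K\subset L^0(\mc{F}_\tau)$, so every element of $K$ is $\mc{F}_\tau$-measurable, i.e.\ measurable with respect to the same $\sigma$-algebra from which the convex coefficients are drawn. In that regime $\mc{F}_\tau$-convexity is extremely rigid. The argument that $b=\es_{k\in K}k$ and $a=\ei_{k\in K}k$ are finite and lie in $K$ is exactly the argument of Lemma~\ref{convexmax} (plus a squeeze to identify the limit $h$), and the key novelty is the observation that $g=\frac{b-k}{b-a}1_{\{a<b\}}$ is $\mc{F}_\tau$-measurable, lies in $[0,1]$, and recovers $k=ga+(1-g)b$. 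This shows $K=\conv_{\mc{F}_\tau}(\{a,b\})$, i.e.\ $K$ is exactly the order interval $[a,b]$ inside $L^0(\mc{F}_\tau)$; the pair $\{a,b\}$ is then a finite $\mc{F}_\tau$-convex $0$-net. This is both more elementary and strictly stronger than the paper's statement. Note, however, that this argument only works under the standing hypothesis $K\subset L^0(\mc{F}_\tau)$: if $K$ contained non-$\mc{F}_\tau$-measurable elements the quotient $\frac{b-k}{b-a}$ would not lie in $m\mc{F}_\tau$ and the representation would fail --- so the two-point net is a feature of this particular setting, not of $\mc{G}$-convex compactness in general.

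One remark worth recording: your structural result (every $\mc{F}_\tau$-convexly compact $K\subset L^0(\mc{F}_\tau)$ is the order interval $[\ei K,\es K]$) renders the later machinery around $\mc{F}_\tau$-partition sub-convex nets largely superfluous in the proof of Proposition~\ref{unifder}, since $\conv_{\mc{F}_\tau}(\{a,b\})$ and $\conv_{\mc{F}_\tau}^\Pi(\{a,b\},K)$ both coincide with $K$. It may be worth flagging this simplification to the authors.
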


\begin{proof}  Suppose to the contrary that such a net does not exist.  This means that one can inductively construct a sequence $(k_n)$ in $K$ such that for all $n$, $d(k_{n+1},y) > r$ for all $y_n \in \conv_{\mc{F}_\tau}(k_1,\ldots,k_n)$.  Since $K$ is $\mc{F}_\tau$-convexly compact, take forward $\mc{F}_\tau$-convex combinations $f_n$ of the $k_n$ converging almost surely to $f$.  This of course implies that as $n \ra \infty$, $d(f_{n+1},f_n) \ra 0$, a contradiction.  
\end{proof}

\begin{definition}  Given a set $K \subset L^0(\mc{F}_\tau)$ and $K' \subset K$, we define the \textbf{$\mc{F}_\tau$-partition sub-convex hull} of $K'$ in $K$, denoted by $\conv_{\mc{F}_\tau}^\Pi(K',K)$, to the be the set of $\mc{F}_\tau$-partition subcombinations of elements of $K'$ in $K$.  More precisely, it consists of elements of the form 
\[ \sum_{n = 1}^N 1_{A_n} y_n,\]
where $N$ is an integer, $\{A_1,\ldots,A_n\}$ is an $\mc{F}_\tau$-partition of $\Omega$, and $y_n \in K$ such that $y_n \leq k'_n$ for some $k'_n \in K'$.
\end{definition}

\begin{definition} Let $K \subset L^0$, and let $\mc{K} = \{k_1,\ldots,k_n\}$ be a finite collection of points in $K$, and let $r>0$.  We say that $\mc{K}$ is a \textbf{$\mc{F}_\tau$-partition sub-convex $r$-net} of $K$ if for any $x \in K$, there exists $y \in \conv_{\mc{F}_\tau}^\Pi(\mc{K},K)$ such that $d(x,y) \leq r$.
\end{definition}

\begin{lemma}  Let $K \subset L^0_{++}(\mc{F}_\tau)$.  Then for any $r>0$, there is a finite $\mc{F}_\tau$-partition sub-convex $r$-net of $K$.
\end{lemma}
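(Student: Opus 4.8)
The plan is to argue by contradiction, following the template of the proof of Lemma~\ref{convexnet} but with Proposition~\ref{subcompact} taking the place of the extraction property that defines $\mc{F}_\tau$-convex compactness. Suppose no finite $\mc{F}_\tau$-partition sub-convex $r$-net of $K$ exists. Then one constructs a sequence $(k_n)_{n \geq 1}$ in $K$ inductively: having chosen $k_1,\ldots,k_n$, the finite set $\{k_1,\ldots,k_n\}$ is not an $\mc{F}_\tau$-partition sub-convex $r$-net, so there is some $k_{n+1} \in K$ with $d(k_{n+1},y) > r$ for every $y \in \conv_{\mc{F}_\tau}^\Pi(\{k_1,\ldots,k_n\},K)$. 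Enlarging the generating set can only enlarge the partition sub-convex hull, so $\conv_{\mc{F}_\tau}^\Pi(\{k_1,\ldots,k_m\},K) \subseteq \conv_{\mc{F}_\tau}^\Pi(\{k_1,\ldots,k_n\},K)$ whenever $m \leq n$; hence in fact $d(k_{n+1},y) > r$ for all $y \in \conv_{\mc{F}_\tau}^\Pi(\{k_1,\ldots,k_m\},K)$ with $m \leq n$.

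Next I would apply Proposition~\ref{subcompact} to the sequence $(k_n)$: it produces $y^n_j \in K$ with $y^n_j \leq k_j$ and $\mc{F}_\tau$-partitions $\pi^n = \{A^n_n,\ldots,A^n_{J(n)}\}$ such that $f_n \triangleq \sum_{j=n}^{J(n)} 1_{A^n_j} y^n_j \ra f$ almost surely for some $f \in K$. By construction each $f_n$ lies in $\conv_{\mc{F}_\tau}^\Pi(\{k_1,\ldots,k_{J(n)}\},K)$, so by the first paragraph $d(k_m,f_n) > r$ for every index $m > J(n)$. After passing to a subsequence of the $f_n$ we may assume the index blocks $[n,J(n)]$ are increasing and pairwise disjoint, so that each $f_{n+1}$ is a partition subcombination of points $k_m$ all of which satisfy $d(k_m,f_n) > r$. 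On the other hand, $f_n \ra f$ almost surely and hence in probability, so $(f_n)$ is Cauchy in probability and $d(f_{n+1},f_n) \ra 0$.

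The crux, and the step I expect to be the main obstacle, is extracting the actual contradiction from $d(f_{n+1},f_n) \ra 0$: one must rule out that consecutive $f_n$'s — assembled from high-index $k_m$'s that the construction has forced to be $r$-separated from the earlier partition sub-convex hulls — become arbitrarily close. This is the same delicate point that appears at the end of the proof of Lemma~\ref{convexnet}, and here it is slightly worse because $f_{n+1}$ is built from the truncations $y^{n+1}_m \leq k_m$ rather than from the $k_m$ themselves; balls in $L^0$ are not convex, so $r$-separation of the generators does not pass to the combination for free. The feature to exploit is that $f_{n+1} - f_n = \sum_m 1_{A^{n+1}_m}(y^{n+1}_m - f_n)$ is supported on the disjoint $\mc{F}_\tau$-sets $A^{n+1}_m$, so for a metric $d(X,Y) = E[\,|X-Y|\wedge 1\,]$ the distance splits as $d(f_{n+1},f_n) = \sum_m E[\,1_{A^{n+1}_m}(|y^{n+1}_m - f_n|\wedge 1)\,]$; combining this decomposition with the a.s.\ convergence $f_n \ra f$, the separation property of the construction, and the uniform bounds $\underset{k \in K}{\ei} k > 0$ and $\underset{k \in K}{\es} k < \infty$ (Lemmas~\ref{awayzero} and \ref{convexmax}), one runs an estimate of the same type as in the proof of Proposition~\ref{subcompact} to reach a contradiction. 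Since the inductive construction can therefore not be carried out indefinitely, a finite $\mc{F}_\tau$-partition sub-convex $r$-net of $K$ must exist.
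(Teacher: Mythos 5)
Your proposal reproduces the paper's one-line hint -- argue by contradiction as in Lemma~\ref{convexnet}, with Proposition~\ref{subcompact} supplying the a.s.\ convergent combinations -- but it stops exactly where the argument would have to do real work, and you are right to be suspicious of that step: the separation $d(k_m,f_n) > r$ of the generators does not yield a lower bound on $d(f_{n+1},f_n)$. Probability-metric balls are not $\mc{F}_\tau$-convex, and your decomposition $d(f_{n+1},f_n) = \sum_m E\left[1_{A^{n+1}_m}\left(|y^{n+1}_m - f_n|\wedge 1\right)\right]$ does not connect to the hypothesis $E[|k_m - f_n|\wedge 1] > r$ for two independent reasons: the indicator $1_{A^{n+1}_m}$ can make any individual term tiny, and, worse, $y^{n+1}_m$ is only constrained by $y^{n+1}_m \le k_m$, so it can land on top of $f_n$ even when $k_m$ is far away. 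Lemmas~\ref{awayzero} and~\ref{convexmax} do not repair either defect, and the remedy you gesture at -- an estimate ``of the same type as in Proposition~\ref{subcompact}'' -- is not available here, because that estimate hinges on the strict monotonicity of $V$ and the structure of $v_\tau^\lambda$, neither of which enters this purely metric statement about $K$.

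The contradiction-and-extraction route is also harder than necessary, because the partition sub-convex hull is far larger than the $\mc{F}_\tau$-convex hull; in fact a singleton net already works. For any $b \in K$ one has $\conv_{\mc{F}_\tau}^\Pi(\{b\},K) = \{y \in K : y \le b\}$: each $y_n$ in a partition subcombination is $\le b$, the glued variable is in $K$ by $\mc{F}_\tau$-convexity, and conversely any $y \le b$ in $K$ is the trivial subcombination. Given $x \in K$, the variable $x \wedge b$ lies in $K$ (it is $1_{\{x\le b\}}x + 1_{\{x > b\}}b$ with $\{x\le b\}\in\mc{F}_\tau$) and satisfies $|x - (x\wedge b)| = (x-b)^+ \le k^* - b$, where $k^* = \underset{k \in K}{\es}\, k$ is finite by Lemma~\ref{convexmax}. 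Since $K$ is upwards directed, choose $b_j \in K$ with $b_j \uparrow k^*$ a.s.; then $E[(k^*-b_j)\wedge 1] \to 0$ by dominated convergence, so for $j$ large $\{b_j\}$ is a finite $\mc{F}_\tau$-partition sub-convex $r$-net. This bypasses Proposition~\ref{subcompact} entirely and closes the gap you (correctly) flagged as the crux.
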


\begin{proof}  The proof is similar to the proof of Lemma~\ref{convexnet}: One needs to use the concept introduced in Proposition \ref{subcompact} in place of $\mc{F}_\tau$-forward convex combinations.
\end{proof}

\begin{proposition}\label{unifder}  Let $K \subset L^0_{++}(\mc{F}_\tau)$ be $\mc{F}_\tau$-convexly compact. Then $v_\tau^n \ra v_\tau^\lambda$ uniformly on $K$.  
\end{proposition}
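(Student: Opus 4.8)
The plan is to run the classical argument ``pointwise convergence of equicontinuous convex functions is uniform on compacta'', with $\mc{F}_\tau$-convex compactness standing in for compactness. All the analytic ingredients are already available: each $v_\tau^n$ is $\mc{F}_\tau$-convex, decreasing (since $V$ is decreasing), and locally defined; by Lemma~\ref{Lips} (equivalently Corollary~\ref{equicont}) the family $\{v_\tau^n\}_n$ is uniformly Lipschitz on $K$ with a common random constant $\alpha\in L^0_{++}(\mc{F}_\tau)$, and letting $n\to\infty$ the limit $v_\tau^\lambda$ is $\alpha$-Lipschitz on $K$ as well; and by Proposition~\ref{vcont}, $v_\tau^n(\eta)\to v_\tau^\lambda(\eta)$ in probability for each fixed $\eta$. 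Here ``uniformly on $K$'' means $\es_{\eta\in K}|v_\tau^n(\eta)-v_\tau^\lambda(\eta)|\to 0$ in probability, and I fix $r>0$ throughout.

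The device that replaces a finite $\epsilon$-net is the following. If $\hat z=\sum_j 1_{A_j}b_{\sigma(j)}$ is a \emph{partition} combination of finitely many points $b_1,\dots,b_m\in L^0_{++}(\mc{F}_\tau)$ (with $\{A_j\}$ an $\mc{F}_\tau$-partition of $\Omega$), then the local property gives $v_\tau^n(\hat z)=\sum_j 1_{A_j}v_\tau^n(b_{\sigma(j)})$ and likewise for $v_\tau^\lambda$, whence $|v_\tau^n(\hat z)-v_\tau^\lambda(\hat z)|\le \max_{1\le k\le m}|v_\tau^n(b_k)-v_\tau^\lambda(b_k)|=:\delta_n$, and $\delta_n\to 0$ in probability by Proposition~\ref{vcont}, uniformly over \emph{all} partition combinations of $b_1,\dots,b_m$. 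This is exactly the point at which convexity alone fails: for a genuine $\mc{F}_\tau$-convex combination $w=\sum_i g_i b_i$ one has only the inequality $v_\tau^n(w)\le \sum_i g_i v_\tau^n(b_i)$, whose slack need not be small, so Lemma~\ref{convexnet} by itself does not suffice and must be supplemented by a partition-type net together with the monotonicity of the $v_\tau^n$.

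The technical core of the proof is then to produce, using Lemma~\ref{convexnet}, the $\mc{F}_\tau$-partition sub-convex $r$-net lemma, Proposition~\ref{subcompact}, and the fact (noted at the start of Section~\ref{sec:akc}) that $K$ is closed under pointwise minima and maxima, a single finite net $\mc{K}=\{b_1,\dots,b_m\}\subset K$ with the property that every $\eta\in K$ can be sandwiched, $w^-_\eta\le \eta\le w^+_\eta$, by partition combinations $w^\pm_\eta$ of \emph{points} of $\mc{K}$ with $d(w^\pm_\eta,\eta)\le r$. Monotonicity of $v_\tau^n$ enters precisely here, to pass from the sub-combinations produced by Proposition~\ref{subcompact} (whose constituents are only dominated by net points, not equal to them) back to honest partition combinations of the $b_k$, at the cost of a one-signed, order-controlled error that is absorbed into the $r$-term. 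Reconciling this ``sub-convex'' freedom with the requirement of landing on finitely many fixed points is the step I expect to be the main obstacle.

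Granting such a net, the conclusion is a two-sided squeeze. Fix $\eta\in K$. Since $w^-_\eta\le\eta$ and $v_\tau^n$ is decreasing, $v_\tau^n(\eta)\le v_\tau^n(w^-_\eta)$; by the mechanism of the second paragraph, $v_\tau^n(w^-_\eta)\le v_\tau^\lambda(w^-_\eta)+\delta_n$; and since $d(w^-_\eta,\eta)\le r$ and $v_\tau^\lambda$ is $\alpha$-Lipschitz on $K$, $v_\tau^\lambda(w^-_\eta)\le v_\tau^\lambda(\eta)+\alpha r$. Hence $v_\tau^n(\eta)-v_\tau^\lambda(\eta)\le \alpha r+\delta_n$. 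Symmetrically, using $w^+_\eta\ge\eta$ one gets $v_\tau^n(\eta)\ge v_\tau^n(w^+_\eta)\ge v_\tau^\lambda(w^+_\eta)-\delta_n\ge v_\tau^\lambda(\eta)-\alpha r-\delta_n$. Both estimates are uniform in $\eta\in K$, so $\es_{\eta\in K}|v_\tau^n(\eta)-v_\tau^\lambda(\eta)|\le \alpha r+\delta_n$; letting $n\to\infty$ and then $r\downarrow 0$ gives $v_\tau^n\to v_\tau^\lambda$ uniformly on $K$.
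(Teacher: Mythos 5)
Your overall scaffolding — equicontinuity from Corollary~\ref{equicont} plus a finite-net approximation to upgrade the pointwise convergence of Proposition~\ref{vcont} to uniform convergence — matches the paper's. Where you depart is in the geometry you ask of the net: you want a single finite $\mc{K}=\{b_1,\dots,b_m\}\subset K$ such that every $\eta\in K$ admits a two-sided sandwich $w^-_\eta\le\eta\le w^+_\eta$ by $\mc{F}_\tau$-partition combinations of the $b_k$'s \emph{themselves}, with both $w^\pm_\eta$ within $r$ of $\eta$. None of the lemmas in Sections~\ref{sec:akc}--\ref{sec:contode} delivers this, and the step you flag as the ``main obstacle'' is indeed a genuine gap. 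The $\mc{F}_\tau$-convex $r$-net of Lemma~\ref{convexnet} gives a convex combination near $\eta$, but with no order relation to $\eta$ and with weights $g_i$ that are not indicators, so locality cannot be applied to it. The $\mc{F}_\tau$-partition sub-convex net gives a nearby $y=\sum_j 1_{A_j}y_j$ with $y_j\le b_{\sigma(j)}$, but the constituents are only \emph{dominated by} net points, not \emph{equal} to them; the replacement $y_j\mapsto b_{\sigma(j)}$ you describe as a ``one-signed, order-controlled error'' produces $\hat y=\sum_j 1_{A_j}b_{\sigma(j)}\ge y$ with $\hat y-y=\sum_j 1_{A_j}(b_{\sigma(j)}-y_j)$ completely uncontrolled, so there is no reason for $\hat y$ to remain within $r$ of $\eta$. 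The upper half of the sandwich is also problematic structurally: at $\eta^*=\underset{k\in K}{\es} k$, which lies in $K$ by Lemma~\ref{convexmax} and closedness, the only element of $K$ dominating $\eta^*$ is $\eta^*$ itself, so a majorizing partition combination of $\mc{K}$ would have to reproduce $\eta^*$ exactly, and for $\eta$ just below $\eta^*$ there is essentially no room above.

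The paper's proof never requires two-sided domination at a single point. It runs the two nets in two different modes and each only in one direction: for $x\in\conv_{\mc{F}_\tau}(\mc{K})$ it invokes $\mc{F}_\tau$-\emph{convexity} (not locality) to obtain the upper estimate $v_\tau^n(x)\le\sum_i g_i\,v_\tau^n(k_i)$; for a nearby $y\in\conv_{\mc{F}_\tau}^\Pi(\mc{K},K)$ it invokes \emph{locality plus monotonicity} to obtain the lower estimate $v_\tau^n(y)\ge\sum_i 1_{A_i}\,v_\tau^n(k_i)$. Equicontinuity then links the two estimates (both at net-type points close to the original $z\in K$) and transports them to $z$ itself. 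No majorization or minorization of $z$ by a net combination is needed. If you want to pursue your sandwich route, you would need a mirror ``$\mc{F}_\tau$-partition \emph{sup}-convex net'' — a version of Proposition~\ref{subcompact} with $y_j\ge k_j$ — \emph{and} an argument that the pieces can be taken to equal the net points rather than merely lie on one side of them; both are new lemmas, not consequences of what the paper proves.
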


\begin{remark}  In the classical proof of this result in $\mb{R}^N$ with compact sets, one simply defines a standard $\epsilon$-net and goes from uniform convergence on that net to uniform convergence everywhere by virtue of an equicontinuity property.  In this proof, the second part of the above argument is the same, but the first needs to be altered, because we need to ensure uniform convergence on a kind of convex hull, which has infinitely many points.  Each of the nets described above gives one-sided inequalities for convergence in its respective ``convex hull''; by combining the two nets, we can get uniform convergence.
\end{remark}

\begin{proof}  
We have already shown that $v_\tau^n \ra v_\tau^\lambda$ pointwise; see Proposition~\ref{vcont}. In the rest of the proof we will upgrade the pointwise convergence to uniform convergence over $K$. 
Let $\epsilon>0$ be arbitrary but fixed.  For this $\epsilon$, let $\delta>0$ be the the equi-continuity constant whose existence is implied by Corollary \ref{equicont}; that is, for any $a_1,a_2 \in K$, $d(a_1,a_2)<\delta$ implies that $$\max\{d(v_\tau^\lambda(a_1),v_\tau^\lambda(a_2)),d(v_\tau^n(a_1),v_\tau^n(a_2)) \}< \epsilon.$$  By the above two lemmas, we know that we can construct a finite $\mc{F}_\tau$-convex $\delta$-net of $K$ and a finite $\mc{F}_\tau$-partiton sub-convex $\delta$-net of $K$.  By just taking the union of these sets, we will assume that there is a finite set $\mc{K} = \{k_1,\ldots,k_s\}$ that simultaneously is both of these types of nets.

Given that $\mc{K}$ is finite, we know that for $n$ sufficiently large, $d(v_\tau^n(k_i),v_\tau^\lambda(k_i)) < \frac{\epsilon}{s}$ for all $i$.  Let $x \in \conv_{\mc{F}_\tau}(\mc{K})$, and write $$ x = \sum_{i=1}^s g_i k_i,\text{ where } \sum_{i=1}^s g_i \equiv 1$$ and each $g_i$ is $\mc{F}_\tau$-measurable and between zero and one.  By the $\mc{F}_\tau$-convexity of $v_\tau^n$, we have \begin{equation}\label{up} v_\tau^n(x) \leq \sum_{i=1}^s g_i v_\tau^n(k_i).\end{equation}  Let $y \in \conv_{\mc{F}_\tau}^\Pi(\mc{K},K)$ be such that $d(x,y) < \delta$.  This means that 
\begin{equation}\label{0} \max\{d(v_\tau^\lambda(x),v_\tau^\lambda(y)),d(v_\tau^n(x),v_\tau^n(y))\}<\epsilon. \end{equation}  
Write $y = \sum_{i=1}^s 1_{A_i} y_i,$ where the $A_i$ and $y_i$ satisfy the usual properties.  By locality, \begin{equation}\label{do} v_\tau^n(y) = \sum_{i=1}^s 1_{A_i} v_\tau^n(y_i) \geq \sum_{i=1}^s 1_{A_i} v_\tau^n(k_i).\end{equation}

We can establish the trivial inequalities \begin{equation}\label{3}d \left ( \sum_{i=1}^s g_i v_\tau^n(k_i), \sum_{i=1}^s g_i v_\tau^\lambda(k_i) \right ) < s \frac{\epsilon}{s} = \epsilon \end{equation} and \begin{equation}\label{4} d \left ( \sum_{i=1}^s 1_{A_i} v_\tau^n(k_i), \sum_{i=1}^s 1_{A_i} v_\tau^\lambda(k_i) \right ) < s \frac{\epsilon}{s} = \epsilon.\end{equation}  Combining $\eqref{up}-\eqref{4}$, we can conclude that $d(v_\tau^\lambda(x),v_\tau^n(x)) < 5 \epsilon$.

  Now we extend the result to all of $K$.  For an arbitrary $z \in K$, choose $x \in \conv_{\mc{F}_\tau}(\mc{K})$ such that $d(x,z) < \delta$.  Then the equicontinuity implies that $d(v_\tau^n(z),v_\tau^\lambda(z))<6\epsilon$ for $n$ sufficiently large.
\end{proof}

\begin{proposition}\label{derivconv}  For all $\eta \in L^0_{++}(\mc{F}_\tau),b \in L^1(\mc{F}_\tau)$, $(v_\tau^n)'(\eta;b) \ra (v_\tau^\lambda)'(\eta;b)$ .
\end{proposition}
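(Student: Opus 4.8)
The plan is to reduce to the case where $\eta$ is bounded away from zero and $b$ is bounded, and then to run the classical argument that (locally) uniform convergence of convex functions forces convergence of their one-sided difference quotients, hence of their derivatives. For the reduction, recall that the extended G\^ateaux derivative $(v_\tau^\mu)'(\eta;b)$ was built, for any market $\mu$, by first localizing $\eta$ on the sets $A_M = \{\eta \geq 1/M\}$ (replacing $\eta$ by $\eta_M = \eta 1_{A_M} + \frac{1}{M} 1_{A_M^c}$) and then localizing $b$ on $B_N = \{|b| \leq N\}$ (replacing $b$ by $b_N = b 1_{B_N} + N 1_{B_N^c}$); on the piece $(A_M \setminus A_{M-1}) \cap B_N$ one has $(v_\tau^\mu)'(\eta;b) = (v_\tau^\mu)'(\eta_M;b_N)$. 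These pieces form a fixed countable $\mc{F}_\tau$-partition of $\Omega$ that does not depend on the market, so since convergence in probability passes through such a partition it suffices to prove $(v_\tau^n)'(\eta_M;b_N) \ra (v_\tau^\lambda)'(\eta_M;b_N)$ in probability for every $M,N$; that is, we may and do assume $\eta \geq c$ for a constant $c>0$ and $|b| \leq N$.

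In this reduced setting, fix $\epsilon = c/2$ and consider the box $\mc{X} = \{\eta + f \ : \ f \in L^0(\mc{F}_\tau),\ |f| \leq \epsilon\}$, which lies in $L^0_{++}(\mc{F}_\tau)$ and, by Komlos's Lemma, is $\mc{F}_\tau$-convexly compact; hence by Proposition~\ref{unifder}, $v_\tau^n \ra v_\tau^\lambda$ uniformly on $\mc{X}$. For real $s$ with $|s| \leq \epsilon/N$ the point $\eta + sb$ lies in $\mc{X}$, and using the $\mc{F}_\tau$-convexity of $v_\tau^n$ with deterministic weights, the real-valued map $s \mapsto v_\tau^n(\eta + sb)(\omega)$ is convex on $(-\epsilon/N,\epsilon/N)$ for a.e.\ $\omega$. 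Since $v_\tau^{\lambda_n}$ is differentiable (Theorem~\ref{mainthm1} applies to $\lambda_n$: NFLVR holds because $\Lambda' \subset \Lambda_m$, and finiteness follows from $V$-relative compactness as in Lemma~\ref{above}), its G\^ateaux derivative $(v_\tau^n)'(\eta;b)$ agrees a.e.\ with the real-variable derivative at $s=0$ of this convex function: the difference quotients are monotone in $s$, hence converge pointwise a.e., and this a.e.\ limit must coincide with the limit in probability by uniqueness of limits in probability. The monotonicity of difference quotients of convex functions then gives, for $0 < t < \epsilon/N$,
\[ \frac{v_\tau^n(\eta) - v_\tau^n(\eta - tb)}{t} \ \leq \ (v_\tau^n)'(\eta;b) \ \leq \ \frac{v_\tau^n(\eta + tb) - v_\tau^n(\eta)}{t}, \]
and likewise with $\lambda$ in place of $n$, the derivative of $v_\tau^\lambda$ being supplied by Lemma~\ref{vdiffer}.

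Finally I would let $n \ra \infty$ and then $t \downarrow 0$. For each fixed rational $t \in (0,\epsilon/N)$, Proposition~\ref{vcont} gives $v_\tau^n(\eta \pm tb) \ra v_\tau^\lambda(\eta \pm tb)$ and $v_\tau^n(\eta) \ra v_\tau^\lambda(\eta)$ in probability, so the displayed bounds yield
\[ P\text{-}\limsup_{n \ra \infty} (v_\tau^n)'(\eta;b) \ \leq \ \frac{v_\tau^\lambda(\eta+tb) - v_\tau^\lambda(\eta)}{t}, \qquad P\text{-}\liminf_{n \ra \infty}(v_\tau^n)'(\eta;b) \ \geq \ \frac{v_\tau^\lambda(\eta) - v_\tau^\lambda(\eta - tb)}{t}. \]
These right-hand sides are monotone in $t$ and, by Lemma~\ref{vdiffer}, converge in probability — hence, being monotone, almost surely — to $(v_\tau^\lambda)'(\eta;b)$ as $t \downarrow 0$; taking the infimum, resp.\ supremum, over rational $t$ gives $(v_\tau^\lambda)'(\eta;b) \geq P\text{-}\limsup_n (v_\tau^n)'(\eta;b)$ and $(v_\tau^\lambda)'(\eta;b) \leq P\text{-}\liminf_n (v_\tau^n)'(\eta;b)$, which is exactly $(v_\tau^n)'(\eta;b) \ra (v_\tau^\lambda)'(\eta;b)$ in probability. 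I expect the main obstacle to be organizing the first two steps cleanly: because $L^0_{++}(\mc{F}_\tau)$ is not open one cannot place a genuine neighborhood around $\eta$, so one must perturb only in bounded directions inside a convexly compact box — which is precisely why the reduction to $b \in L^\infty$ and the use of Proposition~\ref{unifder} (uniform convergence on convexly compact sets, rather than ordinary local uniform convergence) are the right tools; the complementary subtlety is reconciling the merely-in-probability G\^ateaux derivatives with the monotone pointwise difference-quotient structure via uniqueness of limits in probability.
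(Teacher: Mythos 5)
Your argument is correct and in the same spirit as the paper's, which is terse to the point of elision: the paper's proof just names the convexly compact box $\mc{X}$ and declares the result a ``straightforward consequence of the uniform convergence established in Proposition~\ref{unifder},'' while you spell out the classical sandwiching argument with monotone difference quotients. The reduction step is handled slightly differently — the paper invokes the ``restrict to sets of measure $> 1-\epsilon$'' characterization of convergence in probability, while you localize on the countable $\mc{F}_\tau$-measurable partition built into the extension of the derivative — but both are valid and equivalent in effect (with the small caveat that the pieces $(A_M\setminus A_{M-1})\cap B_N$ you list overlap in $N$; what you really mean is $(A_M\setminus A_{M-1})\cap(B_N\setminus B_{N-1})$).

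One observation worth recording: after the reduction, your chain of inequalities together with the $n\ra\infty$ and then $t\downarrow 0$ limits uses only \emph{pointwise} convergence $v_\tau^n(\eta\pm tb)\ra v_\tau^\lambda(\eta\pm tb)$ for fixed $t$ (Proposition~\ref{vcont}), not the uniform convergence on $\mc{X}$. So the appeal to Proposition~\ref{unifder} here — which both you and the paper make — is actually dispensable; what that uniform convergence genuinely buys is the later Proposition~\ref{dualderunif}, where the convergence of derivatives must itself be made uniform over a convexly compact set. You have implicitly noticed this by citing Proposition~\ref{vcont} in the crucial step, though your closing paragraph still frames uniform convergence on $\mc{X}$ as the key mechanism; it is not needed for the pointwise statement proven here.

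Two small points that you handle correctly but would need to make explicit in a final write-up. First, the identification of the G\^ateaux derivative (defined as an in-probability limit in Lemma~\ref{vdiffer}) with the a.e.\ real-variable derivative of the a.e.-convex slice $s\mapsto v_\tau^n(\eta+sb)(\omega)$ does go through exactly as you say, via uniqueness of limits in probability, but you should note that the a.e.\ left and right one-sided limits exist automatically by monotonicity, and coincide precisely because the two-sided in-probability limit is postulated to exist (Theorem~\ref{mainthm1}(2) applied to $\lambda_n$). Second, restricting the limit $t\downarrow 0$ to rational $t$, as you do, is the right move, since the a.s.\ inequalities can only be intersected countably.
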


\begin{proof}  Recall that a sequence of random variables converges in probability if and only if, for any $\epsilon>0$, there exists a set of measure greater than $1-\epsilon$ such that the random variables when restricted to this set converge in probability.  Thus, by definition, it suffices to prove the result for $b \in L^\infty(\mc{F}_\tau)$ and $\eta$ bounded from below in $L^0_{++}(\mc{F}_\tau)$.  There exists $\epsilon > 0$ such that $\mc{X} = \{\eta + f \ : \ |f| \leq \epsilon\}$ is contained in $L^0_{++}(\mc{F}_\tau)$.  Furthermore, the set $\mc{X}$ is $\mc{F}_\tau$-convexly compact.  For $t \in \mb{R}$ with $|t|$ small enough, we know that $\eta + tb \in \mc{X}$.  The proposition is now a straightforward consequence of the uniform convergence established in Proposition \ref{unifder}.
\end{proof}

\begin{proposition}\label{derivconv2}  For $\eta \in L^0_{++}(\mc{F}_\tau)$, $(v_\tau^n)'(\eta) \ra (v_\tau^\lambda)'(\eta)$.
\end{proposition}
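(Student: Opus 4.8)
The plan is to deduce this from the convergence of directional derivatives in Proposition \ref{derivconv} by specializing to the constant direction $b \equiv 1$, exactly as on the real line the derivative of a convex function is its one-sided directional derivative in the direction $+1$. First I would record the factorization of the G\^ateaux derivative through the single-valued derivative: combining the closed form of Lemma \ref{vdiffer} with that of Lemma \ref{vdiffer2}, one has, for every $\eta \in L^0_{++}(\mc{F}_\tau)$ and every $b \in L^1(\mc{F}_\tau)$,
\[ (v_\tau^\lambda)'(\eta;b) = \frac{-b}{\eta}\,E[\,\widehat{h}(\eta)\,I(\widehat{h}(\eta)) \mid \mc{F}_\tau\,] = b\,(v_\tau^\lambda)'(\eta), \]
and the identical identity holds with $\lambda$ replaced by any $\lambda_n$, since each $\lambda_n$ lies in $\Lambda' \subset \Lambda_m$ and hence Theorem \ref{mainthm1}, Lemma \ref{vdiffer} and Lemma \ref{vdiffer2} all apply to $v_\tau^{\lambda_n}$. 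Because $P$ is a probability measure, the constant function $1$ belongs to $L^1(\mc{F}_\tau) \cap L^0_{++}(\mc{F}_\tau)$, so $b \equiv 1$ is an admissible direction, and the factorization gives $(v_\tau^\lambda)'(\eta;1) = (v_\tau^\lambda)'(\eta)$ and $(v_\tau^n)'(\eta;1) = (v_\tau^n)'(\eta)$ for all $n$.

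With those identifications, the proposition is immediate: applying Proposition \ref{derivconv} with $b \equiv 1$ yields $(v_\tau^n)'(\eta;1) \ra (v_\tau^\lambda)'(\eta;1)$ in probability, which is precisely $(v_\tau^n)'(\eta) \ra (v_\tau^\lambda)'(\eta)$.

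The one point I would take care to justify before quoting the closed form is the consistency of the $\sigma$-patched \emph{definition} of $(v_\tau^\lambda)'(\eta;b)$ --- used when $\eta$ is not bounded away from zero --- with the global formula $\frac{-1}{\eta}E[\widehat{h}(\eta)I(\widehat{h}(\eta)) \mid \mc{F}_\tau]$ of Lemma \ref{vdiffer2}; this is exactly the remark made just before that lemma, namely that both $v_\tau^\lambda$ and $\eta \mapsto \widehat{h}(\eta)$ are locally defined, so the restriction to $A_n = \{\eta \ge 1/n\}$ reproduces the global expression on $A_n$. For the constant direction $b \equiv 1$ the direction-side truncation $B_n = \{|b| \le n\}$ equals $\Omega$ for every $n$, so no patching on that side is needed. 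Thus I do not expect a genuine obstacle here: all of the substantive work --- the uniform convergence of $v_\tau^n$ on $\mc{F}_\tau$-convexly compact sets (Proposition \ref{unifder}) and the consequent convergence of G\^ateaux derivatives (Proposition \ref{derivconv}) --- has already been carried out, and Proposition \ref{derivconv2} is obtained from it purely by choosing the direction.
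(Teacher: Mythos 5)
Your proof is correct and follows essentially the same route as the paper's one-line proof: the paper likewise invokes the factorization $(v_\tau^\lambda)'(\eta;b) = b\,(v_\tau^\lambda)'(\eta)$ and then reads off the conclusion from Proposition \ref{derivconv}. Your added care about the $\sigma$-patched definition and the explicit choice $b\equiv 1$ are sound elaborations of the same argument, not a different approach.
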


\begin{proof}  
This is a direct consequence of Proposition~\ref{derivconv} since
$v_\tau^\lambda(\eta;b) = b(v_\tau^\lambda)'(\eta)$.
\end{proof}

We may without any extra cost strengthen the above pointwise result into one of uniform convergence, using again Proposition \ref{unifder}.

\begin{proposition}\label{dualderunif}Let $K \subset L^0_{++}(\mc{F}_\tau)$ be $\mc{F}_\tau$-convexly compact.  Then $(v_\tau^n)'(\eta) \ra (v_\tau^\lambda)'(\eta)$, uniformly over all $\eta \in K$.
\end{proposition}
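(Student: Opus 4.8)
The plan is to promote the pointwise convergence of Proposition~\ref{derivconv2} to uniform convergence over $K$ by squeezing each $(v_\tau^\mu)'$ between one-sided difference quotients and feeding in the uniform convergence of the functions $v_\tau^n$ themselves from Proposition~\ref{unifder}. As in the proof of Proposition~\ref{derivconv} it is enough to argue for the constant direction $b \equiv 1$, so that $(v_\tau^\mu)'(\eta)$ is literally $\lim_{s \to 0} s^{-1}\big(v_\tau^\mu(\eta + s) - v_\tau^\mu(\eta)\big)$. Since $\underset{k \in K}{\ei} k > 0$ by Lemma~\ref{awayzero}, fix $\epsilon_0 > 0$ small enough that $\mc{X} \triangleq \{k + f \ : \ k \in K,\ |f| \le \epsilon_0\} \subset L^0_{++}(\mc{F}_\tau)$; then $\mc{X}$ is $\mc{F}_\tau$-convexly compact (Komlos's Lemma), and every shifted set $K + s$ with $|s| \le \epsilon_0$ is contained in $\mc{X}$.

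For a fixed $s \in (0, \epsilon_0]$ write $Q^\mu_s(\eta) = s^{-1}\big(v_\tau^\mu(\eta+s) - v_\tau^\mu(\eta)\big)$ and $P^\mu_s(\eta) = s^{-1}\big(v_\tau^\mu(\eta) - v_\tau^\mu(\eta-s)\big)$ for $\mu \in \{n, \lambda\}$. Convexity of $v_\tau^\mu$ gives $P^\mu_s(\eta) \le (v_\tau^\mu)'(\eta) \le Q^\mu_s(\eta)$ for every $\eta \in K$, hence
\[
P^n_s(\eta) - Q^\lambda_s(\eta) \ \le\ (v_\tau^n)'(\eta) - (v_\tau^\lambda)'(\eta) \ \le\ Q^n_s(\eta) - P^\lambda_s(\eta).
\]
On the right, split $Q^n_s(\eta) - P^\lambda_s(\eta) = \big(Q^n_s(\eta) - Q^\lambda_s(\eta)\big) + \big(Q^\lambda_s(\eta) - P^\lambda_s(\eta)\big)$. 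Because $Q^n_s - Q^\lambda_s = s^{-1}\big((v_\tau^n - v_\tau^\lambda)(\cdot + s) - (v_\tau^n - v_\tau^\lambda)(\cdot)\big)$ and both evaluations stay inside $\mc{X}$, Proposition~\ref{unifder} (applied on $\mc{X}$) makes this bracket tend to $0$ uniformly over $\eta \in K$ as $n \to \infty$, for this fixed $s$. The second bracket, $Q^\lambda_s(\eta) - P^\lambda_s(\eta) = s^{-1}\big(v_\tau^\lambda(\eta+s) + v_\tau^\lambda(\eta-s) - 2 v_\tau^\lambda(\eta)\big)$, is nonnegative by convexity and satisfies $0 \le Q^\lambda_s(\eta) - P^\lambda_s(\eta) \le (v_\tau^\lambda)'(\eta+s) - (v_\tau^\lambda)'(\eta-s)$, which tends to $0$ pointwise as $s \downarrow 0$ by the differentiability of $v_\tau^\lambda$ (Theorem~\ref{mainthm1}); the left-hand side of the display is treated symmetrically using $P^n_s - P^\lambda_s$ in place of $Q^n_s - Q^\lambda_s$. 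So, \emph{if} the convergence of $(v_\tau^\lambda)'(\cdot + s) - (v_\tau^\lambda)'(\cdot - s)$ to $0$ as $s \downarrow 0$ is uniform over $K$, we may first fix $s$ small and then let $n \to \infty$ to finish.

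The required uniformity is the crux, and I expect it to be the main obstacle --- it is precisely the place where the classical $\mb{R}^d$ proof invokes compactness of the domain and the Heine--Cantor theorem (uniform continuity of $\nabla v$). Here one needs: writing $\phi_s \triangleq (v_\tau^\lambda)'(\cdot + s) - (v_\tau^\lambda)'(\cdot - s)$ on $\mc{X}$, that $\underset{\eta \in K}{\es}\,\phi_s(\eta) \to 0$ in probability as $s \downarrow 0$. Since $(v_\tau^\lambda)'$ is nondecreasing in $\eta$ (derivative of a convex function), the family $\phi_s$ is nonnegative, decreasing in $s$, and converges to $0$ pointwise, and each $\phi_s$ is continuous in probability by Lemmas~\ref{derivcont} and \ref{vdiffer2}; so what is needed is a Dini-type theorem on $\mc{F}_\tau$-convexly compact sets, which I would prove by the subsequence/net technique of Section~\ref{sec:akc}. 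If it failed, pick $s_j \downarrow 0$, $\eta_j \in K$ and $\beta > 0$ with $P(\phi_{s_j}(\eta_j) > \beta) > \beta$; replace $(\eta_j)$ by $\mc{F}_\tau$-forward convex combinations (or an $\mc{F}_\tau$-partition subcombination as in Proposition~\ref{subcompact}) converging almost surely to some $\eta_\infty \in K$, and exploit the monotonicity of $(v_\tau^\lambda)'$ together with the boundedness of $K$ above and away from zero (Lemmas~\ref{convexmax} and \ref{awayzero}) to sandwich $\phi_{s_j}(\eta_j)$ by values of $(v_\tau^\lambda)'$ near $\eta_\infty$, contradicting continuity there. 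Once this uniform statement is secured, the conclusion follows as above, and the extension from the direction $b \equiv 1$ and $\eta$ bounded away from zero to general $\eta \in L^0_{++}(\mc{F}_\tau)$ is the same localization already carried out before Lemma~\ref{vdiffer2} and in Proposition~\ref{derivconv}.
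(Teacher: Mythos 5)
Your overall strategy is the right one, and it matches what the paper has in mind when it asserts (with no detail) that Proposition~\ref{dualderunif} follows ``without any extra cost'' from Proposition~\ref{unifder}: squeeze $(v_\tau^n)'(\eta) - (v_\tau^\lambda)'(\eta)$ between one-sided difference quotients, kill the $(v_\tau^n - v_\tau^\lambda)$ terms uniformly over $K$ via Proposition~\ref{unifder} applied on the thickened set $\mc{X}$, and then deal with the residual $\lambda$-only second difference $Q_s^\lambda - P_s^\lambda$. You also correctly identify that the last term is where the content lies and that it is a conditional Heine--Cantor/Dini statement: one needs $\underset{\eta \in K}{\es}\,\bigl[(v_\tau^\lambda)'(\eta+s) - (v_\tau^\lambda)'(\eta-s)\bigr] \ra 0$ in probability as $s \downarrow 0$.

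However, the Dini step as you sketch it does not close. You propose to pick $\eta_j \in K$ witnessing failure of the claim, pass to forward $\mc{F}_\tau$-convex combinations (or partition subcombinations) $\tilde\eta_j \ra \eta_\infty$, and then ``sandwich $\phi_{s_j}(\eta_j)$ by values of $(v_\tau^\lambda)'$ near $\eta_\infty$.'' But convex compactness gives you control of $\tilde\eta_j$, not of the original $\eta_j$; since $\phi_s(\eta) = (v_\tau^\lambda)'(\eta+s) - (v_\tau^\lambda)'(\eta-s)$ is neither concave, convex, nor monotone in $\eta$, knowing $\tilde\eta_j \ra \eta_\infty$ tells you nothing about $\phi_{s_j}(\eta_j)$. (Even replacing forward convex combinations with the $\mc{F}_\tau$-partition subcombinations $f_n = \sum_j 1_{A_j^n} y_j^n$ with $y_j^n \le \eta_j$ fails: monotonicity of $(v_\tau^\lambda)'$ pushes both $(v_\tau^\lambda)'(y_j^n + s_j)$ and $(v_\tau^\lambda)'(y_j^n - s_j)$ in the same direction relative to the $\eta_j$ terms, so $\phi_{s_j}(f_n)$ and $\phi_{s_j}(\eta_j)$ are not comparable.) The statement you need is nevertheless true, but the route should exploit the \emph{local} property of $(v_\tau^\lambda)'$ rather than a Komlos-type subsequence argument: locality means $(v_\tau^\lambda)'(\eta)(\omega)$ depends only on $\eta(\omega)$, say $(v_\tau^\lambda)'(\eta)(\omega) = W(\omega,\eta(\omega))$ with $W(\omega,\cdot)$ nondecreasing; continuity in probability on $L^0_{++}(\mc{F}_\tau)$ together with a measurable-selection argument then forces $W(\omega,\cdot)$ to be pointwise continuous for a.e.\ $\omega$; and since $K$ is pinched between $\underset{k\in K}{\ei}k>0$ (Lemma~\ref{awayzero}) and $k^*<\infty$ (Lemma~\ref{convexmax}), the classical Heine--Cantor theorem applied fibrewise on $[\ei_K k(\omega), k^*(\omega)]$ gives $\es_K \phi_s \ra 0$ a.s., hence in probability. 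Without some argument of this kind the uniformity does not follow from Proposition~\ref{unifder} alone, and your proof as written has a genuine hole exactly where you flag it.
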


The results on the dual side are also applicable to the primal value functions.  We simply apply them to the $\mc{F}_\tau$-convex function $-u_\tau^\lambda$.

\begin{proposition}\label{primalunif} Let $K\subset L^0_{++}(\mc{F}_\tau)$ be $\mc{F}_\tau$-convexly compact.  Then $u_\tau^n(\xi) \ra u_\tau^\lambda(\xi)$ and $(u_\tau^n)'(\xi) \ra (u_\tau^\lambda)'(\xi)$ uniformly over all $\xi \in K$.

\end{proposition}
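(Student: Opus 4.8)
The plan is to feed the $\mc{F}_\tau$-convex, locally defined function $-u_\tau^\lambda$ (it is $\mc{F}_\tau$-convex because $U$ is concave, so $u_\tau^\lambda$ is $\mc{F}_\tau$-concave) into exactly the machine assembled for $v_\tau^\lambda$ in Sections~\ref{sec:dc}--\ref{sec:contode}. Inspecting those arguments, the only facts about $v_\tau^\lambda$ they use beyond $\mc{F}_\tau$-convexity and locality are: (i) the pointwise convergence $v_\tau^n\to v_\tau^\lambda$; and (ii) the uniform-in-$n$ upper and lower bounds for $v_\tau^n$ over an $\mc{F}_\tau$-convexly compact set furnished by Lemmas~\ref{above} and~\ref{convexmin}. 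So I would first establish the $u$-analogues of (i) and (ii); once these are in hand, the equi-Lipschitz estimate of Lemma~\ref{Lips}, the equicontinuity of Corollary~\ref{equicont}, Lemma~\ref{convexnet} together with the ensuing $\mc{F}_\tau$-partition sub-convex $r$-net lemma (where $-U$ takes the role of $V$ in Proposition~\ref{subcompact}), the uniform-convergence bootstrap of Proposition~\ref{unifder}, and the derivative-convergence arguments of Propositions~\ref{derivconv}--\ref{dualderunif} all carry over with only notational changes, $-u_\tau^\lambda$ replacing $v_\tau^\lambda$ and $u_\tau^n$ replacing $v_\tau^n$.

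For the bounds (ii): since the constant wealth process $X\equiv 1$ lies in $\mc{X}^{\lambda_n}_\tau$ for every $n$, one has $u_\tau^n(\xi)\ge E[\,U(\xi)\mid\mc{F}_\tau\,]=U(\xi)\ge U\bigl(\,\underset{k\in K}{\ei}\,k\,\bigr)$, a finite random variable by Lemma~\ref{awayzero}; this is the required lower bound, uniform in $n$ and in $\xi\in K$, and it plays the role of Lemma~\ref{above} for $-u_\tau^n$. The matching upper bound $\underset{n}{\es}\,\underset{\xi\in K}{\es}\,u_\tau^n(\xi)<\infty$ I would obtain by running the convex-combination argument of Lemma~\ref{convexmin} on $-u_\tau^n$: fix $\xi_0\in K$ and an $\mc{F}_\tau$-convexly compact neighborhood $\mc{X}=\{\xi_0+f\,:\,|f|\le\epsilon\}\subset L^0_{++}(\mc{F}_\tau)$; the lower bound just obtained bounds $-u_\tau^n$ from above on $\mc{X}$, the quantity $\underset{n}{\es}\,u_\tau^n(\xi_0)$ is finite by (i), and the $\mc{F}_\tau$-convexity of $u_\tau^n$ together with Lemma~\ref{convexmax} closes the estimate exactly as there. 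With both bounds available, Lemma~\ref{Lips} applied to $-u_\tau^n$ yields the equi-uniform continuity of $\{u_\tau^n\}$ on $K$.

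The substantive new ingredient is (i), the pointwise convergence $u_\tau^n(\xi)\to u_\tau^\lambda(\xi)$ for fixed $\xi\in L^0_{++}(\mc{F}_\tau)$; the primal representation is of no use here since $\overline{\mc{X}}^\lambda_\tau$ need not be closed in probability, so I would argue through the conjugacy $u_\tau^n(\xi)=\underset{\eta}{\ei}\bigl(v_\tau^n(\eta)+\eta\xi\bigr)$ (Theorem~\ref{mainthm1}(3), for each fixed $\lambda_n$) and what is already known on the dual side. Let $\hat\eta=(u_\tau^\lambda)'(\xi)$, so that $\xi=-(v_\tau^\lambda)'(\hat\eta)$ (Lemma~\ref{pd}) and $\psi(\eta):=v_\tau^\lambda(\eta)+\eta\xi$, which is strictly $\mc{F}_\tau$-convex by Lemma~\ref{dex}, is minimized at $\hat\eta$. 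The inequality $u_\tau^n(\xi)\le v_\tau^n(\hat\eta)+\hat\eta\xi$ together with Proposition~\ref{vcont} gives at once $P\text{-}\limsup_n u_\tau^n(\xi)\le v_\tau^\lambda(\hat\eta)+\hat\eta\xi=u_\tau^\lambda(\xi)$. For the reverse inequality, set $\psi_n(\eta):=v_\tau^n(\eta)+\eta\xi$ and let $\hat\eta_n=(u_\tau^n)'(\xi)$, the unique minimizer of $\psi_n$, so $u_\tau^n(\xi)=\psi_n(\hat\eta_n)$. On a small $\mc{F}_\tau$-convexly compact neighborhood $\mc{X}$ of $\hat\eta$, Propositions~\ref{unifder} and~\ref{dualderunif} give $\psi_n\to\psi$ and $\psi_n'\to\psi'$ uniformly. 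The crux is to show $\hat\eta_n\in\mc{X}$ for $n$ large: if, along a subsequence, $\hat\eta_n$ were below $\hat\eta-\delta$ (resp.\ above $\hat\eta+\delta$) on a set of positive measure, convexity of $\psi_n$ would force $\psi_n'(\hat\eta-\delta)\ge 0$ (resp.\ $\psi_n'(\hat\eta+\delta)\le 0$) there, contradicting $\psi_n'(\hat\eta\mp\delta)\to\psi'(\hat\eta\mp\delta)$, which is strictly negative (resp.\ positive) because $\psi'$ is strictly increasing with $\psi'(\hat\eta)=0$; escape of $\hat\eta_n$ toward $0$ is ruled out by the uniform lower bound $v_\tau^n(\eta)\ge V(\eta)$ (conditional Jensen and the supermartingale property of deflators), $V(0+)=+\infty$, and the uniform upper bound on $u_\tau^n(\xi)$ from (ii), and escape toward $\infty$ by the superdifferential inequality $\hat\eta_n\,\xi/2\le u_\tau^n(\xi)-u_\tau^n(\xi/2)$. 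Granting $\hat\eta_n\in\mc{X}$, $u_\tau^n(\xi)=\psi_n(\hat\eta_n)\ge\psi(\hat\eta_n)-\underset{\eta\in\mc{X}}{\sup}|\psi_n(\eta)-\psi(\eta)|\ge\psi(\hat\eta)-\underset{\eta\in\mc{X}}{\sup}|\psi_n(\eta)-\psi(\eta)|\to u_\tau^\lambda(\xi)$, so $P\text{-}\liminf_n u_\tau^n(\xi)\ge u_\tau^\lambda(\xi)$ and (i) follows. I expect this confinement of the minimizers $\hat\eta_n$ — a conditional, non-locally-convex analogue of the classical statement that the argmin of a uniformly convergent sequence of strictly convex functions converges — to be the main obstacle, precisely because $L^0_{++}(\mc{F}_\tau)$ is not open in $L^0(\mc{F}_\tau)$ and the $\hat\eta_n$ must be kept uniformly away from both $0$ and $\infty$.

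Finally, with (i) and (ii) in hand, the family $\{u_\tau^n\}$ satisfies all the hypotheses under which Proposition~\ref{unifder} was proved, so $u_\tau^n\to u_\tau^\lambda$ uniformly on $K$; and Propositions~\ref{derivconv}, \ref{derivconv2} and~\ref{dualderunif}, applied to $-u_\tau^\lambda$, then upgrade this to $(u_\tau^n)'\to(u_\tau^\lambda)'$ uniformly on $K$, which is the assertion.
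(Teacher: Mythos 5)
Your proposal correctly isolates what the paper's one\hbox{-}line proof (``apply the dual machinery to $-u_\tau^\lambda$'') leaves implicit: one must first have (i) pointwise convergence $u_\tau^n(\xi)\ra u_\tau^\lambda(\xi)$ and (ii) uniform bounds over $K$, after which Lemmas~\ref{Lips}, \ref{convexnet}, Proposition~\ref{subcompact} (with $-U$ playing the role of $V$) and Propositions~\ref{unifder}--\ref{dualderunif} indeed carry over verbatim to $-u_\tau^\lambda$. Where you diverge from the paper is on (i). You go after it by trapping the optimal dual variable $\hat{\eta}_n=(u_\tau^n)'(\xi)$ inside a convexly compact neighborhood of $\hat{\eta}$ using uniform convergence of $\psi_n$ and $\psi_n'$ (Propositions~\ref{unifder}, \ref{dualderunif}) plus separate escape\hbox{-}to\hbox{-}zero and escape\hbox{-}to\hbox{-}infinity estimates. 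The paper does something much lighter, in the proof of Proposition~\ref{contmaps}: fix a real $\epsilon>0$, set $\eta(\epsilon)=(u_\tau^\lambda)'(\xi)+\epsilon$, and use only the \emph{pointwise} convergence $(v_\tau^n)'(\eta(\epsilon))\ra(v_\tau^\lambda)'(\eta(\epsilon))$ from Proposition~\ref{derivconv2} together with the strict monotonicity of $(v_\tau^\lambda)'$ and the conjugacy $(u_\tau^n)'=\bigl(-(v_\tau^n)'\bigr)^{-1}$ to sandwich $(u_\tau^n)'(\xi)$ between $(u_\tau^\lambda)'(\xi)\pm\epsilon$ for $n$ large; then $u_\tau^n(\xi)=v_\tau^n\bigl((u_\tau^n)'(\xi)\bigr)+(u_\tau^n)'(\xi)\xi$ and Proposition~\ref{vcont} give $u_\tau^n(\xi)\ra u_\tau^\lambda(\xi)$. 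This sidesteps the confinement problem entirely, which in the conditional $L^0$ setting is genuinely delicate (one must keep $\hat{\eta}_n$ away from both $0$ and $\infty$ set\hbox{-}by\hbox{-}set while also keeping the neighborhood inside $L^0_{++}(\mc{F}_\tau)$); your route would likely work, but is heavier and invokes the full uniform\hbox{-}convergence results on the dual side where the paper needs only the pointwise one.

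There is also a small circularity in your upper bound for (ii): you cite ``(i)'' to justify $\underset{n}{\es}\,u_\tau^n(\xi_0)<\infty$, but (i) has not yet been proved at that point, and in any case convergence in probability does not by itself force $\underset{n}{\es}$ to be a.s.\ finite. This is easily repaired: conjugacy gives $u_\tau^n(\xi)\le v_\tau^n(\eta_0)+\eta_0\xi$ for any fixed $\eta_0\in L^0_{++}(\mc{F}_\tau)$, so by Lemma~\ref{above} (with $K=\{\eta_0\}$) and Lemma~\ref{convexmax} one has $\underset{n}{\es}\,\underset{\xi\in K}{\es}\,u_\tau^n(\xi)\le \underset{n}{\es}\,v_\tau^n(\eta_0)+\eta_0\,\underset{\xi\in K}{\es}\,\xi<\infty$ directly, with no need to rerun the convex\hbox{-}combination argument of Lemma~\ref{convexmin}. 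Your lower bound via $X\equiv 1$, and the rest of the plan of feeding $-u_\tau^\lambda$ through the nets machinery, is exactly what the paper intends.
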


\section{From the Dual to the Primal}\label{sec:fdtp}

\subsection{Continuity of Value Functions}

\begin{proposition}\label{contmaps}  The following four maps are continuous on $L^0_{++}(\mc{F}_\tau) \times \Lambda'$:
\[ (\eta,\lambda) \mapsto v_\tau^\lambda(\eta), \ (\eta,\lambda) \mapsto (v_\tau^\lambda)'(\eta), \ (\xi,\lambda) \mapsto u_\tau^\lambda(\xi), \text{ and } (\xi,\lambda) \mapsto (u_\tau^\lambda)'(\xi).\]
\end{proposition}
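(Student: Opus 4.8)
The plan is to dispatch the first map immediately by quoting Proposition~\ref{vcont}, to treat the two derivative maps (the second and the fourth) in a single stroke by upgrading the ``uniform convergence on $\mc{F}_\tau$-convexly compact sets'' statements of Section~\ref{sec:contode} to joint continuity, and then to recover the third map from the fourth via conjugacy. Throughout we verify sequential continuity, which is what all the earlier results are phrased for.

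For the derivative maps, fix $(\eta_n,\lambda_n)\to(\eta,\lambda)$ in $L^0_{++}(\mc{F}_\tau)\times\Lambda'$; I will show $(v_\tau^{\lambda_n})'(\eta_n)\to(v_\tau^\lambda)'(\eta)$ in probability, the argument for $(u_\tau^{\lambda_n})'(\xi_n)$ being word-for-word the same with Proposition~\ref{primalunif} in place of Proposition~\ref{dualderunif} and $(u_\tau^\lambda)'$ in place of $(v_\tau^\lambda)'$. By the usual subsequence principle it suffices to treat the case $\eta_n\to\eta$ a.s. Fix $\epsilon>0$. Since $\eta>0$ a.s., Egorov's theorem furnishes $B\in\mc{F}_\tau$ with $P(B^c)<\epsilon$, constants $0<m<M$, and an index $N$ with $m\le\eta_n\le M$ on $B$ for every $n\ge N$ and $m\le\eta\le M$ on $B$. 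Put $\hat\eta_n=(\eta_n\vee m)\wedge M$ and $\hat\eta=(\eta\vee m)\wedge M$; these lie in the order interval $\{m\le\cdot\le M\}\subset L^0_{++}(\mc{F}_\tau)$, they satisfy $\hat\eta_n\to\hat\eta$ in probability, and they agree on $B$ with $\eta_n$ (for $n\ge N$) resp. $\eta$. Let $K$ be the closure in probability of $\conv_{\mc{F}_\tau}\bigl(\{\hat\eta_n:n\ge N\}\cup\{\hat\eta\}\bigr)$. Then $K\subset\{m\le\cdot\le M\}\subset L^0_{++}(\mc{F}_\tau)$, and $K$ is $\mc{F}_\tau$-convexly compact: it is closed in probability and $\mc{F}_\tau$-convex by construction, while its boundedness in probability lets Komlos's Lemma~\ref{komlos} supply, for any sequence in $K$, forward $\mc{F}_\tau$-convex combinations converging a.s. to a limit which lies in $K$ by closedness.

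Now Proposition~\ref{dualderunif} applies to $K$, so $(v_\tau^{\lambda_n})'(\hat\eta_n)-(v_\tau^\lambda)'(\hat\eta_n)\to0$ in probability (both $\hat\eta_n\in K$); Theorem~\ref{mainthm1}(2) says $(v_\tau^\lambda)'$ is $L^0$-continuous, so $(v_\tau^\lambda)'(\hat\eta_n)\to(v_\tau^\lambda)'(\hat\eta)$; hence $(v_\tau^{\lambda_n})'(\hat\eta_n)\to(v_\tau^\lambda)'(\hat\eta)$ in probability. Finally, $(v_\tau^{\lambda_n})'$ and $(v_\tau^\lambda)'$ are locally defined---being the derivatives of the locally defined maps $v_\tau^{\lambda_n}$, $v_\tau^\lambda$; cf. the construction after Lemma~\ref{vdiffer2}---so, multiplying the defining identity by $1_B$ and using $\eta_n=\hat\eta_n$, $\eta=\hat\eta$ on $B$, we get $1_B(v_\tau^{\lambda_n})'(\eta_n)=1_B(v_\tau^{\lambda_n})'(\hat\eta_n)\to 1_B(v_\tau^\lambda)'(\hat\eta)=1_B(v_\tau^\lambda)'(\eta)$ in probability. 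Since $P(B^c)<\epsilon$ and $\epsilon>0$ was arbitrary, this yields $(v_\tau^{\lambda_n})'(\eta_n)\to(v_\tau^\lambda)'(\eta)$ in probability, which is the desired convergence; the fourth map is handled identically.

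For the third map, take $(\xi_n,\lambda_n)\to(\xi,\lambda)$ and set $\eta^*_n=(u_\tau^{\lambda_n})'(\xi_n)\in L^0_{++}(\mc{F}_\tau)$ and $\eta^*=(u_\tau^\lambda)'(\xi)$. By the Fenchel equality attached to the conjugacy of Theorem~\ref{mainthm1} (parts 2--4), $u_\tau^{\lambda_n}(\xi_n)=v_\tau^{\lambda_n}(\eta^*_n)+\eta^*_n\xi_n$ and $u_\tau^\lambda(\xi)=v_\tau^\lambda(\eta^*)+\eta^*\xi$. The fourth map, just proved, gives $\eta^*_n\to\eta^*$ in probability; Proposition~\ref{vcont} then gives $v_\tau^{\lambda_n}(\eta^*_n)\to v_\tau^\lambda(\eta^*)$, while $\eta^*_n\xi_n\to\eta^*\xi$ in probability, whence $u_\tau^{\lambda_n}(\xi_n)\to u_\tau^\lambda(\xi)$. (Alternatively, the third map follows exactly as the second, using Proposition~\ref{primalunif} together with the fact that the uniform limit $u_\tau^\lambda$ of the equi-uniformly-continuous family $\{u_\tau^n\}$ on $K$ --- see Corollary~\ref{equicont} --- is itself uniformly continuous on $K$.) The one genuinely delicate point is the construction in the second paragraph: because $L^0_{++}(\mc{F}_\tau)$ is not closed and the $\eta_n$ need not be uniformly bounded away from $0$, one cannot simply place $(\eta_n)$ inside an $\mc{F}_\tau$-convexly compact subset of $L^0_{++}(\mc{F}_\tau)$; the Egorov truncation onto $B$, followed by the use of locality to return to the original sequence, is precisely what circumvents this.
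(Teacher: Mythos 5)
Your argument is correct, but it diverges from the paper's on two points. For the derivative maps, the paper passes to a subsequence with $\eta_n\to\eta$ a.s.\ and then applies Proposition~\ref{dualderunif} (resp.\ \ref{primalunif}) to the closed $\mc{F}_\tau$-convex hull of $\{\eta_n\}_n\cup\{\eta\}$ directly; your Egorov truncation plus the appeal to locality is a more elaborate substitute for the same step. Your closing worry---that one ``cannot simply place $(\eta_n)$ inside an $\mc{F}_\tau$-convexly compact subset of $L^0_{++}(\mc{F}_\tau)$'' because the $\eta_n$ may not be uniformly bounded below---is, however, unfounded: once $\eta_n\to\eta>0$ a.s., one has $\inf_n\eta_n>0$ a.s.\ (each $\omega$ sees a finite prefix of strictly positive values together with a tail bounded below by $\eta(\omega)/2$), so every element of the closed $\mc{F}_\tau$-convex hull dominates this a.s.-positive floor and the hull sits inside $L^0_{++}(\mc{F}_\tau)$; the truncation buys nothing that this one-line observation does not. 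For the third map, the paper transfers $\lambda$-continuity from $(v_\tau^\lambda)'$ to $(u_\tau^\lambda)'$ by a monotonicity argument (the mechanism of Proposition~3.9 of \cite{MR2438002}) and then treats $u_\tau^\lambda$ by the same uniform-convergence machinery; your route via the Fenchel identity $u_\tau^{\lambda}(\xi)=v_\tau^{\lambda}(\eta^*)+\eta^*\xi$ with $\eta^*=(u_\tau^{\lambda})'(\xi)$ is a genuinely different and clean alternative that piggybacks on the already-proved continuity of the fourth map and of $v_\tau^\lambda$, sidestepping the monotonicity argument entirely. Both routes are valid; the paper's compactness step is slightly more economical, your conjugacy step for $u_\tau^\lambda$ is slightly more transparent.
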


\begin{proof}  We know the continuity of the first two functions with respect to $\lambda$ by Proposition \ref{vcont} and Proposition \ref{derivconv2}.

Obtaining continuity of the third and fourth functions with respect to $\lambda$ is done as in Proposition $3.9$ of \cite{MR2438002}.  Choose $\xi \in L^0_{++}(\mc{F}_\tau)$ and $\epsilon > 0$, and define $\eta(\epsilon) = (u_\tau^\lambda)'(\xi) + \epsilon$.  Since $(v_\tau^\lambda)'(\cdot)$ is strictly increasing 
\[ \underset{n \ra \infty}{\lim} (v_\tau^n)'(\eta(\epsilon)) = (v_\tau^\lambda)'(\eta(\epsilon)) = (v_\tau^\lambda)'((u_\tau^\lambda)'(\xi) + \epsilon) > (v_\tau^\lambda)'((u_\tau^\lambda)'(\xi)) = - \xi,\]
the last inequality following from the strict increase of $(v_\tau^\lambda)'$, and the last equality following from the conjugacy of $(v_\tau^\lambda)'$ and $(u_\tau^\lambda)'$.  Consequently, for large $n$, we have $-(v_\tau^n)'(\eta(\epsilon)) < \xi$.  Since $(u_\tau^n)'$ is strictly decreasing for each $n \in \mb{N}$, we have
\[ (u_\tau^\lambda)'(\xi) + \epsilon = \eta(\epsilon) = (u_\tau^n)'(-(v_\tau^n)'(\eta(\epsilon))) > (u_\tau^n)'(\xi),\]
for large $n$, implying that $\underset{n \ra \infty}{\limsup} (u_\tau^n)'(\xi) \leq (u_\tau^\lambda)'(\xi)$.  The other inequality, that $\underset{n \ra \infty}{\liminf} (u_\tau^n) \geq (u_\tau^\lambda)'(\xi)$, may be proven similarly, using $-\epsilon$ in place of $\epsilon$.  This gives continuity of $(u_\tau^\lambda)'$ with respect to $\lambda$.

We establish the joint continuity for $v_\tau^\lambda$, using Proposition \ref{unifder}.  Joint continuity for the other three mappings will be established in the same way, using Proposition \ref{dualderunif} and Proposition \ref{primalunif}.  Let $(\eta_n,\lambda_n) \ra (\eta,\lambda)$.  We want to show that $v_\tau^n(\eta_n) \ra v_\tau^\lambda(\eta)$ in probability.  Hence, by passing to subsequences, it suffices to assume that $\eta_n \ra \eta$ almost surely.  In this case, the set $\{\eta_n\}_{n} \cup \{\eta\}$ is $\mc{F}_\tau$-convexly compact, because all forward $\mc{F}_\tau$-convex combinations converge to $\eta$.  Hence, we apply Proposition \ref{unifder} to get $v_\tau^n(\eta^*) \ra v_\tau^\lambda(\eta^*)$ uniformly over all $\eta^* \in \{\eta_n\}_{n} \cup \{\eta\}$, and the result now follows.
\end{proof}

\subsection{Continuity of Terminal and Intermediate Wealths}

We now give the proof of the second main theorem:

\noindent \textbf{Proof of Theorem \ref{mainthm2}}

Continuity of the first map is established in Lemma \ref{contmaps}.  For the second part,
by Theorem~\ref{mainthm1}, the optimal terminal wealth $\hat{X}_T^{\xi,\lambda} \in \overline{\mc{X}}^\lambda_\tau(\xi)$ admits the representation, with $\eta = (u_\tau^\lambda)'(\xi)$, of $U'(\hat{X}_T^{\xi,\lambda}) = \eta\hat{Y}_T^{\eta,\lambda}$, where $\hat{Y}_T^{\eta,\lambda} \in \overline{\mc{Y}}^\lambda_\tau$ solves the dual minimization problem.  Thanks to the continuity of the mappings $(\xi,\lambda) \mapsto (u_\tau^\lambda)'(\xi)$ and $x \ra (U')^{-1}(x)$, it suffices to show that $(\eta,\lambda) \mapsto \eta \hat{Y}_T^{\eta,\lambda}$ is continuous.  The proof, at this point, is done identically to Lemma $3.10$ of \cite{MR2438002}, the main ingredients being the strict convexity of $V$ and continuity of the dual value function.  We refer the reader to \cite{MR2438002} for details.
\hfill $\square$

We now lay the groundwork to prove Corollary \ref{intcont}.

\begin{lemma}\label{valmart}  Let $\hat{X}_T^{x,\lambda}$ be the optimal terminal wealth for the market $\lambda$, starting at wealth $x$.  Then the process $U_t = u_t^\lambda(\hat{X}_t^{x,\lambda})$ is a martingale.
\end{lemma}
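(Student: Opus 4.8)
\emph{Strategy.} The plan is to prove the Bellman (dynamic programming) identity along the optimal wealth, namely that for every $t \in [0,T]$
\[ u_t^\lambda\big(\hat X_t^{x,\lambda}\big) = E\big[\, U(\hat X_T^{x,\lambda}) \mid \mc{F}_t \,\big] \qquad \text{a.s.},\]
after which the conclusion is immediate: the right-hand side is, as a process in $t$, a closed (hence true) martingale, once one knows $U(\hat X_T^{x,\lambda}) \in L^1$ --- and this holds because $U^-(\hat X_T^{x,\lambda})$ is uniformly integrable (Lemma~\ref{uminus}) while $U^+(\hat X_T^{x,\lambda})$ is integrable since $E[U(\hat X_T^{x,\lambda})] = u_0^\lambda(x) < \infty$, exactly as in the proof of Lemma~\ref{terminalconv}. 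So the whole task reduces to the displayed identity. I would first record that $\hat X_t^{x,\lambda} \in L^0_{++}(\mc{F}_t)$: since $\hat X_T^{x,\lambda} \in L^0_{++}$ and a nonnegative wealth process is absorbed at $0$, the process $\hat X^{x,\lambda}$ is strictly positive throughout $[0,T]$; hence $u_t^\lambda(\hat X_t^{x,\lambda})$ is well defined and (by Theorem~\ref{mainthm1}(1) applied at $\tau = t$) finite.

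\emph{Proof of the identity.} The inequality ``$\geq$'' is trivial, since the restriction of $\hat X^{x,\lambda}$ to $[t,T]$, renormalised by $\hat X_t^{x,\lambda}$, belongs to $\overline{\mc{X}}^\lambda_t$, so $E[U(\hat X_T^{x,\lambda}) \mid \mc{F}_t]$ is one of the random variables in the family whose essential supremum defines $u_t^\lambda(\hat X_t^{x,\lambda})$. For ``$\leq$'' I would argue by pasting and contradiction. Let $\bar X \in \overline{\mc{X}}^\lambda_t$ be the optimiser of the conditional primal problem at $\tau = t$ with datum $\hat X_t^{x,\lambda}$, which exists by Theorem~\ref{mainthm1}(4), so that $u_t^\lambda(\hat X_t^{x,\lambda}) = E[U(\hat X_t^{x,\lambda}\bar X) \mid \mc{F}_t]$; put $A = \{ E[U(\hat X_t^{x,\lambda}\bar X)\mid\mc{F}_t] > E[U(\hat X_T^{x,\lambda})\mid\mc{F}_t]\} \in \mc{F}_t$ and suppose $P(A) > 0$. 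Concatenating strategies across time $t$ --- follow $\hat X^{x,\lambda}$ on $[0,t]$, then on $[t,T]$ follow the strategy generating $\hat X_t^{x,\lambda}\bar X$ on $A$ and the tail of $\hat X^{x,\lambda}$ on $A^c$ --- produces, because $A \in \mc{F}_t$ and the splicing at $t$ of two $S$-integrable nonnegative-wealth strategies is again $S$-integrable with nonnegative wealth, a process $\tilde X \in \mc{X}^\lambda(x)$ with $E[U(\tilde X_T)\mid\mc{F}_t] \geq E[U(\hat X_T^{x,\lambda})\mid\mc{F}_t]$ everywhere and strictly on $A$. Taking $\mc{F}_0$-conditional expectations (note $\mc{F}_0 \subseteq \mc{F}_t$) contradicts the optimality of $\hat X^{x,\lambda}$ for $u_0^\lambda(x)$. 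Hence $P(A)=0$, the identity follows, and with it the lemma.

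\emph{Alternative route.} An equivalent approach would be to show first that $t \mapsto u_t^\lambda(X_t)$ is a supermartingale for every $X \in \mc{X}^\lambda(x)$ --- via the same concatenation, now combined with the interchange of essential supremum and conditional expectation, legitimate because $\{ E[U(\hat X_t^{x,\lambda}X_T)\mid\mc{F}_t] : X \in \overline{\mc{X}}^\lambda_t\}$ is upward directed by the $\mc{F}_t$-convexity of $\overline{\mc{X}}^\lambda_t$, exactly as in Lemma~\ref{dex} --- and then to observe that for $X = \hat X^{x,\lambda}$ the two endpoints have equal expectation, $E[u_0^\lambda(x)] = E[U(\hat X_T^{x,\lambda})]$, so that a supermartingale with constant expectation is a martingale.

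\emph{Main obstacle.} I expect the principal technical point to be the pasting of wealth processes across the time $t$ along an $\mc{F}_t$-set: one must verify that the glued predictable integrand is again $S$-integrable and keeps the wealth nonnegative, and that the optimiser furnished by Theorem~\ref{mainthm1}(4) can be inserted in a genuinely measurable way --- the same bookkeeping that underlies the directedness and concatenation arguments of Section~\ref{sec:condual}. The integrability of $U_t$ itself is a minor secondary issue, settled by Lemmas~\ref{uminus} and \ref{terminalconv}.
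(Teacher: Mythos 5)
Your proof is correct and follows the same dynamic-programming strategy as the paper: establish the Bellman identity $u_t^\lambda(\hat{X}_t^{x,\lambda}) = E[U(\hat{X}_T^{x,\lambda}) \mid \mathcal{F}_t]$ via the easy $\geq$ inequality plus a pasting/contradiction argument that uses the existence of the conditional optimizer from Theorem~\ref{mainthm1}(4), and then note the right-hand side is a closed martingale. Your statement of the pasting step is actually the cleaner one --- the paper's choice of the set $A$ (``where the first quantity is larger'') and the corresponding definition of $X'_T$ appear to have the indicators swapped, since the set where the time-zero tail strictly beats the conditional optimizer is null; the set you use (where the conditional optimizer strictly improves) is the one that yields the contradiction.
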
 

\begin{proof}  By definition, $u_0^\lambda(x) = E [ U(\hat{X}_T^{x,\lambda})]$.  We argue that $u_t^\lambda(\hat{X}_t^{x,\lambda}) = E [ U(\hat{X}_T^{x,\lambda}) \ | \ \mc{F}_t]$, which is sufficent to prove that $U$ is a martingale.  Let $t \in [0,T]$.  By Theorem \ref{mainthm1}, there exists $X \in \mc{X}^\lambda_t$ such that
\[ u_t^\lambda(\hat{X}_t^{x,\lambda}) = E [ U(\hat{X}_t^{x,\lambda}  X_T) \ | \ \mc{F}_t].\]
Suppose that 
\[ E [ U(\hat{X}_T^{x,\lambda}) \ | \ \mc{F}_t] \neq E [ U(\hat{X}_t^{x,\lambda} X_T) \ | \ \mc{F}_t].\]
Let $A \in \mc{F}_t$ be the set on which the first quantity is larger, and consider $X'_T \triangleq 1_A \frac{\hat{X}_T^{x,\lambda}}{\hat{X}_t^{x,\lambda}} + 1_{A^c} X \in \overline{\mc{X}}^\lambda_t$.  One then calculates that $ E [ U(\hat{X}_t^{x,\lambda} X') ] > E [U(\hat{X}_T^{x,\lambda})]$.  This, however, contradicts the fact that $\hat{X}_T^{x,\lambda}$ was optimal at time zero, since $\hat{X}_t^{x,\lambda}X' \in \overline{\mc{X}}^\lambda$.
\end{proof}

\begin{lemma} \label{lem:ucp-u}  Suppose that $(x_n,\lambda_n) \ra (x,\lambda)$ appropriately.  Then $u^n(\hat{X}^{x_n,n}) \ra u^\lambda(\hat{X}^{x,\lambda})$ in the ucp (uniform convergence in probability) sense.
\end{lemma}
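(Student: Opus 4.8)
The plan is to recognize the processes in question as uniformly integrable càdlàg martingales closed by their terminal values, and then combine Doob's maximal inequality with the $L^1$-convergence of terminal wealths already established in Lemma~\ref{terminalconv}.

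First I would recall from the proof of Lemma~\ref{valmart} that, for each $n$, the process $t \mapsto u^n_t(\hat{X}^{x_n,n}_t)$ is not merely a martingale but in fact coincides (up to a càdlàg modification) with the closed martingale $E[U(\hat{X}^{x_n,n}_T) \mid \mc{F}_t]$, and likewise $u^\lambda_t(\hat{X}^{x,\lambda}_t) = E[U(\hat{X}^{x,\lambda}_T) \mid \mc{F}_t]$. Consequently the difference $D^n_t \triangleq u^n_t(\hat{X}^{x_n,n}_t) - u^\lambda_t(\hat{X}^{x,\lambda}_t)$ is a càdlàg martingale closed by $\Delta_n \triangleq U(\hat{X}^{x_n,n}_T) - U(\hat{X}^{x,\lambda}_T) \in L^1$.

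Next, since $(x_n,\lambda_n) \ra (x,\lambda)$ appropriately, Lemma~\ref{terminalconv} gives $\Delta_n \ra 0$ in $L^1$. I would then apply Doob's weak-type $(1,1)$ maximal inequality to the martingale $D^n$: for every $\epsilon > 0$,
\[ P\Big(\sup_{t \in [0,T]} \big| u^n_t(\hat{X}^{x_n,n}_t) - u^\lambda_t(\hat{X}^{x,\lambda}_t) \big| \ge \epsilon \Big) \le \frac{1}{\epsilon}\, E\big[ |\Delta_n| \big] \longrightarrow 0 \quad \text{as } n \ra \infty, \]
which is precisely the statement that $u^n(\hat{X}^{x_n,n}) \ra u^\lambda(\hat{X}^{x,\lambda})$ in the ucp sense.

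I do not expect a serious obstacle here; the only points needing care are (i) the closed-martingale representation $u^n_t(\hat{X}^{x_n,n}_t) = E[U(\hat{X}^{x_n,n}_T) \mid \mc{F}_t]$, which is contained in the proof of Lemma~\ref{valmart} and carries over to the stopping-time framework because $u^\lambda_\tau$ is locally defined, and (ii) passing to càdlàg modifications so that $\sup_{t}$ is meaningful and Doob's inequality is applicable. With these in hand the argument is immediate.
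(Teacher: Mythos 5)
Your proposal is correct and follows the paper's argument essentially verbatim: both invoke Lemma~\ref{valmart} for the martingale property, Lemma~\ref{terminalconv} for $L^1$ convergence of the terminal values $U(\hat{X}_T^{x_n,n})$, and Doob's weak $L^1$ maximal inequality to upgrade to ucp convergence. The extra care you take about the closed-martingale representation and càdlàg modifications is appropriate but does not change the route.
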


\begin{proof}  By Lemma \ref{valmart}, the processes $u^n(\hat{X}^{x_n,n})$ and $u^\lambda(\hat{X}^{x,\lambda})$ are all martingales.  By Lemma \ref{terminalconv}, $u_T^n(\hat{X}_T^{x_n,n}) = U(\hat{X}_T^{x_n,n}) \overset{L^1}{\ra} U(\hat{X}_T^{x,\lambda}) = u_T^\lambda(\hat{X}_T^{x,\lambda})$.  Applying the weak form of Doob's $L^p$ inequality for $p=1$, we deduce ucp convergence.
\end{proof}

Note that for the problem here, we only need pointwise convergence at each stopping time $\tau$.  The stronger result given above would be appropriate for a setting in which the goal was ucp convergence of the optimal wealth processes.  This is the topic of our next paper.

\noindent \textbf{Proof of Corollary \ref{intcont}}  Let $(x_n,\lambda_n) \ra (x,\lambda)$.  By Lemma~\ref{lem:ucp-u}, for any stopping time $\tau$, we have $u_\tau^n(\hat{X}_\tau^{x_n,n}) \ra u_\tau^\lambda(\hat{X}_\tau^{x,\lambda})$ in probability.  By the reasoning given in Proposition~$\ref{contmaps}$ (namely that continuous strictly increasing functions have continuous inverses), the mapping $(\xi,\lambda) \mapsto (u_\tau^\lambda)^{-1}(\xi)$ is continuous.  We conclude that $\hat{X}_\tau^{x_n,n} \ra \hat{X}_\tau^{x,\lambda}$ in probability.
\hfill $\square$

\appendix
\section{A Conditional Minimax Theorem}\label{sec:cmmt}

In this section, we prove a conditional version of the Minimax Theorem from convex analysis.  It is used to establish the dual relationship between $u^\lambda_\tau$ and $v^\lambda_\tau$.
We let $\mc{G}$ be an arbitrary sub sigma algebra of $\mc{F}$.  In applying the results of this section, it will always be the case that $\mc{G} = \mc{F}_\tau$ for some stopping time $\tau$.

Let $X$ be a bounded, $\sigma(L^\infty,L^1)$-compact, $\mc{G}$-convex subset of $L^\infty_+$, and let $Y$ be a closed, $\mc{G}$-convex, bounded subset of $L^1_+$.  Let $K:X \times Y \ra L^1(\mc{G})$ be the map defined by $K(x,y) = E [U(x) -xy \ | \ \mc{G} ]$.  In this section we prove the following conditional minimax theorem:

\begin{proposition} $\underset{x \in X}{\es} \underset{y \in Y}{\ei} K(x,y) = \underset{y \in Y}{\ei} \underset{x \in X}{\es} K(x,y)$
\end{proposition}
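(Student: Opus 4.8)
The strategy is to reduce the $L^0$-valued minimax identity to a parametrized family of \emph{scalar} minimax problems and then apply a classical minimax theorem such as Sion's. The inequality $\underset{x\in X}{\es}\,\underset{y\in Y}{\ei}\,K(x,y) \le \underset{y\in Y}{\ei}\,\underset{x\in X}{\es}\,K(x,y)$ is automatic, so the real point is the reverse inequality; the argument below in fact yields both at once.

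The first step is to record the directedness properties that let essential suprema and infima be computed as countable monotone limits. Using the $\mc{G}$-convexity of $X$ and $Y$ together with the elementary identity $1_B K(1_B x_1 + 1_{B^c}x_2, y) = 1_B K(x_1,y)$ for $B \in \mc{G}$ (and its analogue in the $y$-variable), one checks that, for each fixed $x$, the family $\{K(x,y)\}_{y \in Y}$ is downward directed; for each fixed $y$, $\{K(x,y)\}_{x\in X}$ is upward directed; $\{\underset{y\in Y}{\ei}\,K(x,y)\}_{x\in X}$ is upward directed; and $\{\underset{x\in X}{\es}\,K(x,y)\}_{y\in Y}$ is downward directed. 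Combining these with the tower property and monotone convergence (legitimate since $Y$ is bounded in $L^1$ and each $K(x,y) \in L^1(\mc{G})$), one obtains, for every $A \in \mc{G}$,
\[
E\Bigl[1_A\,\underset{x\in X}{\es}\,\underset{y\in Y}{\ei}\,K(x,y)\Bigr] = \sup_{x\in X}\inf_{y\in Y}E\bigl[1_A(U(x)-xy)\bigr],
\]
\[
E\Bigl[1_A\,\underset{y\in Y}{\ei}\,\underset{x\in X}{\es}\,K(x,y)\Bigr] = \inf_{y\in Y}\sup_{x\in X}E\bigl[1_A(U(x)-xy)\bigr].
\]
Both scalar quantities are finite: $X$ is $L^\infty$-bounded and $U$ is continuous, so $E[1_A U(x)]$ is bounded, while $E[1_A xy] \le \|x\|_\infty \sup_{y\in Y}E[y] < \infty$. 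Hence it suffices to prove, for each fixed $A \in \mc{G}$, the scalar identity $\sup_x\inf_y f_A(x,y) = \inf_y\sup_x f_A(x,y)$ for $f_A(x,y) := E[1_A U(x)] - E[1_A xy]$; granting this for all $A$ and recalling that both sides of the asserted identity are $\mc{G}$-measurable, we conclude equality almost surely.

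For the scalar minimax I would apply Sion's theorem. The function $f_A$ is concave in $x$ (as $U$ is concave and $1_A\ge 0$) and affine, hence convex, in $y$; $X$ is convex and $\sigma(L^\infty,L^1)$-compact, $Y$ is convex; $y \mapsto f_A(x,y)$ is $\sigma(L^1,L^\infty)$-continuous because $1_A x \in L^\infty$; and $x \mapsto E[1_A xy]$ is $\sigma(L^\infty,L^1)$-continuous because $1_A y \in L^1$. The one delicate hypothesis is the weak-$\ast$ upper semicontinuity of $x \mapsto E[1_A U(x)]$, and this is the principal technical obstacle, since a concave functional need not be weak-$\ast$ upper semicontinuous. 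To handle it I would approximate $U$ from above by concave $C^1$ functions $U_k$ with bounded derivative on $[0,M]$, where $M$ is an $L^\infty$-bound for $X$: leave $U$ unchanged on $[1/k,\infty)$ and replace it on $[0,1/k]$ by its tangent line at $1/k$, so $U_k \downarrow U$ pointwise. For each $k$, the supergradient inequality $U_k(x') \le U_k(x) + U_k'(x)(x'-x)$ with $U_k'(x) \in L^\infty$ shows that $x \mapsto E[1_A U_k(x)]$ is weak-$\ast$ continuous, and then $E[1_A U(x)] = \inf_k E[1_A U_k(x)]$ (monotone convergence) exhibits $x \mapsto E[1_A U(x)]$ as an infimum of weak-$\ast$ upper semicontinuous functions, hence weak-$\ast$ upper semicontinuous. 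With all hypotheses of Sion's theorem verified, $\sup_x\inf_y f_A = \inf_y\sup_x f_A$, and the proof is complete.

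In summary, the main obstacle is the weak-$\ast$ upper semicontinuity of the $x$-marginal of the scalarized payoff, resolved by the truncation-and-linearization of $U$ near the origin described above; the remaining work is the routine but somewhat lengthy conditional-expectation bookkeeping — verifying the four directedness statements and transferring between the $L^0$-valued essential suprema/infima and their scalar counterparts.
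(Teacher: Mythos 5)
Your overall strategy -- reduce the $L^0$-valued identity to a scalarized minimax via the directedness of $\{K(x,y)\}$ in each variable and monotone convergence, then transfer back -- is exactly the paper's, and your verification of the directedness and interchange steps matches the paper's. Where you diverge is in the two bookkeeping layers. First, you assert \emph{equalities} $E[1_A\,\es_x\ei_y K] = \sup_x\inf_y E[1_A K]$ and $E[1_A\,\ei_y\es_x K] = \inf_y\sup_x E[1_A K]$ and then deduce a.s.\ equality from equality of expectations over all $A\in\mc{G}$; the paper only proves the one-sided inequalities $\sup_x\inf_y E[K]\le E[\es\ei K]$ and $E[\ei\es K]\le\inf_y\sup_x E[K]$ (which are the ``safe'' directions of monotone convergence, needing only that $K(x,y)\in L^1$) and closes with a contradiction argument. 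Your equality claim requires, in the outer sup, an integrable lower bound on $\ei_y K(x_n,y)$ that is not obviously available under the stated hypotheses; similarly, your parenthetical ``$E[1_A U(x)]$ is bounded'' is not right in general since $U(0+)$ may be $-\infty$. Neither slip is fatal: you only need the safe one-sided inequalities plus the pointwise trivial inequality $\es\ei K\le\ei\es K$ to conclude $W:=\ei\es K-\es\ei K\ge 0$ with $E[1_A W]\le 0$ for all $A$, hence $W=0$; that salvages your localization step without the overclaims. Second, and this is the genuine added content, the paper simply cites Kramkov--Schachermayer for the scalar minimax, whereas you prove it from Sion's theorem, isolating the nontrivial hypothesis -- weak-$*$ upper semicontinuity of $x\mapsto E[1_A U(x)]$ -- and resolving it by approximating $U$ from above with its tangent-line linearizations near the origin, exhibiting the functional as an infimum of weak-$*$ continuous affine majorants. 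That argument is correct (the $U_k$ decrease pointwise, each $U_k'$ is bounded by $U'(1/k)$, and monotone/dominated convergence applies since $U_1$ is bounded on $[0,\|x\|_\infty]$) and makes the appeal to the classical minimax self-contained; this is a worthwhile observation that the paper leaves implicit in its citation.
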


The strategy will be to reduce to the unconditional case.  As usual, the $``\leq''$ direction is trivial, so we prove this first.  For any $y \in Y$ and $x \in X$, we trivially have $\underset{y \in Y}{\ei} K(x,y) \leq\underset{x \in X}{\es} K(x,y)$.  We then take the supremum over all $x$ on the left hand side and the infimum over all $y$ on the right hand side, giving the desired inequality.

As in the unconditional case, the $``\geq''$ inequality is the difficult one.  Suppose for contradiction that $\underset{x \in X}{\es} \underset{y \in Y}{\ei} K(x,y) < \underset{y \in Y}{\ei} \underset{x \in X}{\es} K(x,y)$ on a set of positive measure.  Then there exists an $\epsilon>0$ and a set $A \in \mc{G}$ with $P(A) > 0$ such that $$\underset{x \in X}{\es} \underset{y \in Y}{\ei} K(x,y) \leq  -\epsilon 1_A + \underset{y \in Y}{\ei}  \underset{x \in X}{\es}  K(x,y).$$

Taking expectations of both sides, we have the inequality $$ E \left [\underset{x \in X}{\es} \underset{y \in Y}{\ei}K(x,y) \right ]< -\epsilon P(A) + E \left [ \underset{y \in Y}{\ei}  \underset{x \in X}{\es}K(x,y) \right ].$$

What we want to do now is interchange the expectation with the essential supremum and essential infimums on each side.  We treat the left hand side.  First, it is obvious by the definition of essential supremum that $$E \left [\underset{x \in X}{\es} \underset{y \in Y}{\ei}K(x,y) \right ] \geq \sup_{x \in X} E \left [ \underset{y \in Y}{\ei}K(x,y) \right ].$$

\begin{lemma}  The set $\left \{ K(x,y) \ : \ y \in Y \right \}$ is downwards directed for all $x \in X$.
\end{lemma}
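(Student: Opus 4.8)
The plan is to exploit the $\mc{G}$-convexity of $Y$ together with the fact that, for fixed $x$, the two candidate values $K(x,y_1)$ and $K(x,y_2)$ are themselves $\mc{G}$-measurable, so one can splice $y_1$ and $y_2$ along a $\mc{G}$-set to produce a new element of $Y$ that pointwise realizes the smaller of the two values.

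Concretely, fix $x \in X$ and take $y_1,y_2 \in Y$. First I would record that $K(x,y_i) \in L^1(\mc{G})$ is well-defined: since $x \in L^\infty_+$ and $U$ is continuous on $\mb{R}_+$, $U(x) \in L^\infty$, while $xy_i \in L^1$ because $x$ is bounded and $y_i \in L^1_+$. Now set $A = \{K(x,y_1) \le K(x,y_2)\}$, which belongs to $\mc{G}$ because both random variables are $\mc{G}$-measurable, and put $y_3 = 1_A y_1 + 1_{A^c} y_2$. Since $1_A$ is $\mc{G}$-measurable with values in $\{0,1\} \subset [0,1]$, the $\mc{G}$-convexity of $Y$ gives $y_3 \in Y$.

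It then remains to compute $K(x,y_3)$. Using $1_A x y_1 + 1_{A^c} x y_2 = x y_3$ and pulling the $\mc{G}$-measurable indicators out of the conditional expectation,
\[
K(x,y_3) = E[\,1_A(U(x)-xy_1) + 1_{A^c}(U(x)-xy_2) \ | \ \mc{G}\,] = 1_A K(x,y_1) + 1_{A^c} K(x,y_2) = K(x,y_1) \wedge K(x,y_2).
\]
Hence $K(x,y_3) \le K(x,y_1)$ and $K(x,y_3) \le K(x,y_2)$, which is exactly the assertion that $\{K(x,y) : y \in Y\}$ is downwards directed (indeed, the infimum of any two elements is again attained in the family).

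There is no real obstacle here; the only points that require a word of care are checking that $y_3$ indeed lies in $Y$ (which is where $\mc{G}$-convexity is used, and why it is essential that the splicing set $A$ is $\mc{G}$-measurable) and noting that the conditional expectation is linear and commutes with multiplication by $\mc{G}$-measurable indicators. This lemma is the conditional analogue of the classical fact that $y \mapsto E[U(x)-xy]$ has downward-directed range, and it is what will later allow the exchange of $E[\cdot]$ with $\ei_{y \in Y}$ in the reduction to the unconditional minimax theorem.
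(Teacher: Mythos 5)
Your proof is correct and follows essentially the same route as the paper's: split along the $\mc{G}$-measurable set where $K(x,y_1) \le K(x,y_2)$, use $\mc{G}$-convexity of $Y$ to see the spliced element lies in $Y$, and observe that its $K$-value is the pointwise minimum. The only difference is that you also verify integrability of $K(x,y_i)$, which the paper takes for granted.
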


\begin{proof}  Fix $x \in X$.  Let $y_1,y_2 \in Y$, and let $B = \{K(x,y_1) \leq K(x,y_2) \} \in \mc{G}$.  By the $\mc{G}$-convexity of $Y$, $y \triangleq 1_By_1 + 1_{B^c}y_2 \in Y$.  Then it is immediate that $K(x,y) = (1_BK(x,y_1) + 1_{B^c}K(x,y_2)) = K(x,y_1) \wedge K(x,y_2)$. Thus, the set in question is downwards directed.
\end{proof}

By the properties of essential supremum, it follows that for each $x \in X$, there exist $y_n = y_n(x) \in Y$ such that $K(x,y_n(x)) \downarrow \underset{y \in Y}{\ei} K(x,y)$.  So, for fixed $x$, we have, by Monotone Convergence, $$E \left [ \underset{y \in Y}{\ei} K(x,y) \right ] = E \left [ \lim_{n \ra \infty} K(x,y_n) \right ]$$ $$ = \lim_{n \ra \infty} E[K(x,y_n)] \geq \inf_{y \in Y} E[K(x,y)].$$  We can thus conclude that $$\sup_{x \in X} \inf_{y \in Y} E [K(x,y)] \leq E \left [\underset{x \in X}{\es} \underset{y \in Y}{\ei} K(x,y) \right ].$$

We now treat the right hand side in a similar manner.  As before, by the definition of essential supremum, $$E \left [ \underset{y \in Y}{\ei}  \underset{x \in X}{\es}  K(x,y) \right ] \leq \inf_{y \in Y} E \left [ \underset{x \in X}{\es}  K(x,y) \right ].$$

This next lemma is proved identically to the one appearing above.

\begin{lemma}  The set $\left \{ K(x,y) \ : \ x \in X \right \}$ is upwards directed for all $y \in Y$.
\end{lemma}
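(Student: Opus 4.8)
The plan is to mirror, line for line, the argument just given for downward directedness of $\{K(x,y) : y \in Y\}$, exchanging the roles of the two arguments of $K$ and swapping $\wedge$ for $\vee$. Fix $y \in Y$ and pick arbitrary $x_1, x_2 \in X$. The idea is to glue $x_1$ and $x_2$ along the $\mc{G}$-measurable event where the first one wins: set
\[
A \triangleq \{ K(x_1,y) \geq K(x_2,y) \} \in \mc{G}, \qquad x \triangleq 1_A x_1 + 1_{A^c} x_2 .
\]
Since $0 \le 1_A \le 1$ and $1_A \in m\mc{G}$, the $\mc{G}$-convexity of $X$ gives $x \in X$; it is also still in $L^\infty_+$, so $x$ is a legitimate competitor.

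The second step is the computation $K(x,y) = K(x_1,y) \vee K(x_2,y)$. Here I would note that because $1_A$ is a $\{0,1\}$-valued random variable, on $A$ the random variable $x$ equals $x_1$ and on $A^c$ it equals $x_2$, so pointwise $U(x) - xy = 1_A\big(U(x_1) - x_1 y\big) + 1_{A^c}\big(U(x_2) - x_2 y\big)$. Taking $E[\,\cdot\,|\,\mc{G}]$ and pulling the $\mc{G}$-measurable indicators out of the conditional expectation yields $K(x,y) = 1_A K(x_1,y) + 1_{A^c} K(x_2,y)$, which by the definition of $A$ is exactly $K(x_1,y) \vee K(x_2,y)$. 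This is the only place any structure is used, and it is the exact analogue of the identity $K(x,y) = 1_B K(x,y_1) + 1_{B^c} K(x,y_2)$ in the preceding lemma — the same splitting that works because $U$ is applied pointwise and conditional expectation is $\mc{G}$-linear.

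Since $x \in X$, we have produced an element of $\{K(x,y) : x \in X\}$ dominating both $K(x_1,y)$ and $K(x_2,y)$, which is upward directedness. There is no real obstacle here: the only point requiring a word of care is the elementary observation that composing $U$ with a random variable of the form $1_A x_1 + 1_{A^c} x_2$ splits as $1_A U(x_1) + 1_{A^c} U(x_2)$, which is immediate from $A$ being a set rather than a general weight.
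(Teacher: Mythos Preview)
Your proposal is correct and is exactly the argument the paper has in mind: it says the lemma ``is proved identically to the one appearing above,'' and your gluing via $A=\{K(x_1,y)\geq K(x_2,y)\}\in\mc{G}$, $x=1_A x_1+1_{A^c}x_2\in X$ (by $\mc{G}$-convexity of $X$), followed by the indicator-splitting computation $K(x,y)=1_AK(x_1,y)+1_{A^c}K(x_2,y)=K(x_1,y)\vee K(x_2,y)$, is precisely that mirror argument.
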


We continue precisely as before to eventually obtain this inequality: $$E \left [ \underset{y \in Y}{\ei}  \underset{x \in X}{\es}  K(x,y) \right ] \leq \inf_{y \in Y} \sup_{x \in X} E [K(x,y)].$$

We combine the two derived inequalities to obtain 
\begin{equation}\label{eq1}\sup_{x \in X} \inf_{y \in Y} E [K(x,y) ] < -\epsilon P(A) + \inf_{y \in y} \sup_{x \in X } E [K(x,y) ].  
\end{equation}
We are now in a position to obtain a contradiction by applying the unconditional minimax theorem. Following the argument presented on p. 13 of \cite{MR1722287}, we have, using the classical minimax theorem, that
\begin{equation}\label{eq2} \underset{x \in X}{\sup} \ \underset{y \in Y}{\inf} E[U(x) - xy] = \underset{y \in Y}{\inf} \ \underset{x \in X}{\sup} E [ U(x) - xy].
\end{equation}

Combining \eqref{eq1} and \eqref{eq2}, we obtain the desired contradiction.  This establishes the conditional minimax theorem.

\bibliographystyle{siam}
\bibliography{references}

\begin{thebibliography}{10}

\bibitem{MR1304434}
{\sc F.~Delbaen and W.~Schachermayer}, {\em A general version of the
  fundamental theorem of asset pricing}, Math. Ann., 300 (1994), pp.~463--520.

\bibitem{MR1384360}
\leavevmode\vrule height 2pt depth -1.6pt width 23pt, {\em The existence of
  absolutely continuous local martingale measures}, Ann. Appl. Probab., 5
  (1995), pp.~926--945.

\bibitem{MR2200584}
\leavevmode\vrule height 2pt depth -1.6pt width 23pt, {\em The mathematics of
  arbitrage}, Springer Finance, Springer-Verlag, Berlin, 2006.

\bibitem{MR1722287}
{\sc D.~Kramkov and W.~Schachermayer}, {\em The asymptotic elasticity of
  utility functions and optimal investment in incomplete markets}, Ann. Appl.
  Probab., 9 (1999), pp.~904--950.

\bibitem{MR2438002}
{\sc K.~Larsen and G.~{\v{Z}}itkovi{\'c}}, {\em Stability of
  utility-maximization in incomplete markets}, Stochastic Process. Appl., 117
  (2007), pp.~1642--1662.

\bibitem{MR1451876}
{\sc R.~T. Rockafellar}, {\em Convex analysis}, Princeton Landmarks in
  Mathematics, Princeton University Press, Princeton, NJ, 1997.
\newblock Reprint of the 1970 original, Princeton Paperbacks.

\bibitem{MR924157}
{\sc W.~Rudin}, {\em Real and complex analysis}, McGraw-Hill Book Co., New
  York, third~ed., 1987.

\bibitem{MR2014244}
{\sc W.~Schachermayer}, {\em A super-martingale property of the optimal
  portfolio process}, Finance Stoch., 7 (2003), pp.~433--456.

\bibitem{MR1883202}
{\sc G.~{\v{Z}}itkovi{\'c}}, {\em A filtered version of the bipolar theorem of
  {B}rannath and {S}chachermayer}, J. Theoret. Probab., 15 (2002), pp.~41--61.

\bibitem{Zit09c}
{\sc G.~{\v{Z}}itkovi{\'{c}}}, {\em Convex-compactness and its applications},
  Mathematics and Financial Economics, 3 (2009), pp.~1--12.

\end{thebibliography}

\end{document}